\documentclass[12pt]{article}
\usepackage{jheppub}
\pdfoutput=1
\usepackage{amsmath,bbm,array,amsfonts,graphicx,wrapfig,lscape,float,mathtools,multirow,longtable,amsthm}
\usepackage[dvipsnames]{xcolor}
\usepackage{stackrel}
\usepackage[all]{xy}
\usepackage{graphicx}
\usepackage{caption}
\usepackage{subcaption}
\captionsetup{font=footnotesize}
\usepackage[bottom]{footmisc}
\usepackage{multirow}
\usepackage{mathrsfs}
\usepackage{tikz}
\usepackage{tkz-graph}
\usepackage{bm}
\usepackage{color}
\usepackage{enumerate}
\usetikzlibrary{decorations.pathreplacing}
\usepackage{diagbox}
\numberwithin{equation}{section}
\numberwithin{figure}{section}
\numberwithin{table}{section}
\usepackage{float}
\usepackage{pgfplots}
\pgfplotsset{compat=1.14}
\usepackage{longtable}
\usepackage{makecell}
\usepackage{pdflscape}
\usepackage[utf8]{inputenc}

\usepackage[autostyle=true]{csquotes}
\newtheorem{definition}{Definition}[section]
\newtheorem{theorem}{Theorem}[section]
\newtheorem{corollary}{Corollary}[theorem]

\newtheorem{proposition}[theorem]{Proposition}

\usepackage[toc,page]{appendix}

\usepackage{tocloft}
\renewcommand\cftsecafterpnum{\vskip0.5pt}

	\title{Quiver Gauge Theories: Beyond Reflexivity}
	
	\author[a]{Jiakang Bao,}
	\author[b]{Grace Beaney Colverd,}
	\author[a,c,d]{Yang-Hui He}
	
	\affiliation[a]{
		Department of Mathematics, City, University of London, EC1V 0HB, UK}
	\affiliation[b]{
		The Queen's College, University of Oxford, OX1 4AW, UK}
	\affiliation[c]{
		Merton College, University of Oxford, OX1 4JD, UK}
	\affiliation[d]{
		School of Physics, NanKai University, Tianjin, 300071, P.R. China}
	
	\emailAdd{jiakang.bao@city.ac.uk}
	\emailAdd{grace.bcolverd@gmail.com}
	\emailAdd{hey@maths.ox.ac.uk}
	
	\preprint{
		\begin{flushright}
			
		\end{flushright}
	}

	\abstract{Reflexive polygons have been extensively studied in a variety of contexts in mathematics and physics. We generalize this programme by looking at the 45 different lattice polygons with two interior points up to SL(2,$\mathbb{Z}$) equivalence. Each corresponds to some affine toric 3-fold as a cone over a Sasaki-Einstein 5-fold. We study the quiver gauge theories of D3-branes probing these cones, which coincide with the mesonic moduli space. The minimum of the volume function of the Sasaki-Einstein base manifold plays an important role in computing the R-charges. We analyze these minimized volumes with respect to the topological quantities of the compact surfaces constructed from the polygons. Unlike reflexive polytopes, one can have two fans from the two interior points, and hence give rise to two smooth varieties after complete resolutions, leading to an interesting pair of closely related geometries and gauge theories.
	}

\begin{document}
	\maketitle

\section{Introduction}\label{intro}
The worldvolume theory of a stack of D3-branes probing a toric Calabi-Yau (CY) cone-type singularity is a 4d $\mathcal{N}=1$ supersymmetric gauge theory. Such gauge theories can be represented by quivers in which the bifundamental matter contents and the superpotentials are encoded\footnote{Saying this, we should bear in mind that the superpotential is generally additional data for defining a theory, unless we are considering periodic quivers for toric theories.} \cite{Feng:2000mi}. Each toric CY$_3$ corresponds to a toric diagram which is a 2-dimensional lattice polytope, viz, a lattice polygon. The geometry of the CY$_3$'s can thus be studied via their toric diagrams.

Hence, it is natural to expect that there are some connections between the quivers and toric diagrams. From one diagram, we can find the other following the approaches in \cite{Feng:2004uq,Gulotta:2008ef}. Given a quiver diagram, the process of finding the toric diagram is called the \emph{forward} algorithm. Conversely, obtaining quivers from a toric diagram is known as the \emph{inverse} algorithm. Generally speaking, the correspondence between the two kinds of diagrams is often one-to-many. A toric diagram may give rise to more than one quivers while many quivers can have the same toric diagram. As a matter of fact, these quiver theories are related by \emph{toric} duality, which can be understood as \emph{Seiberg} duality in the toric phases \cite{Feng:2000mi,Feng:2001bn}.

If we consider the back reaction to the geometry from D3s, then we get an AdS near-horizon geometry. As a result, the gauge/gravity duality \cite{Maldacena:1997re} gives another point of view to the above problem. The 4d $\mathcal{N}=4$ SYM theory is related to the string theory in AdS$\times S^5$. If we replace the 5-sphere with a Sasaki-Einstein manifold $Y$ of real dimension 5, then the SUSY is broken down to $\mathcal{N}=1$ \cite{Acharya:1998db,Morrison:1998cs}.

In fact, we can use type IIB brane configurations to study this. Consider D5-branes suspended between an NS5-brane wrapping a holomorphic surface $\Sigma$ as tabulated in Table \ref{D5NS5}. Then the Newton polynomial of the toric diagram defines this holomorphic surface. The system is compactified along directions 5 and 7 on a torus $\mathbb{T}^2$. After performing a T-duality on each of these two directions, the D5s would be mapped back to D3s probing the CY 3-fold.
\begin{table}[h]
	\centering
	\begin{tabular}{c|cccccccccc}
		 & 0 & 1 & 2 & 3 & 4 & 5 & 6 & 7 & 8 & 9 \\
		\hline
		D5 & $\times$ & $\times$ & $\times$ & $\times$ & &  $\times$ & &  $\times$   & & \\
		\hline
		NS5 & $\times$ & $\times$ & $\times$ & $\times$ & \multicolumn{4}{c}{----- \ $\Sigma$ \ -----} & & 
	\end{tabular}
    \caption{}\label{D5NS5}
\end{table}

We can draw a 5-brane web diagram on $\mathbb{T}^2$. The dual graph of the web diagram is then a bipartite periodic graph on the torus. Such dual graphs are known as \emph{dimers/brane tilings} \cite{Hanany:2005ve,Franco:2005rj,Feng:2005gw,1997AIHPB..33..591K,2003math.....10326K}. With the help of brane tilings, we are able to bridge the toric diagrams and the quivers.

Similar stories also happen in other dimensions. Under $n$ T-dualities, the system of D($7-n$)-branes suspended between an NS5 wrapping a holomorphic $n$-cycle, where the branes meet in a $\mathbb{T}^n$, corresponds to D($7-2n$)-branes probing CY$_{n+1}$ \cite{He:2017gam}. These are related to various topics in different dimensions, such as Chern-Simons theory \cite{Aharony:2008ug,Bagger:2006sk,Gustavsson:2007vu,Hanany:2008cd,Hanany:2008fj}, brane brick models \cite{Franco:2015tna,Franco:2015tya,Franco:2016qxh,Franco:2016nwv}, triality \cite{Gadde:2013lxa}, quadrality \cite{Franco:2016tcm} and so forth.

For reflexive polytopes, the cases are very well-studied in \cite{Hanany:2012hi,Hanany:2012vc,He:2017gam}. In this paper, we will try to extend these to non-reflexive cases, in particular, polygons with two interior points. Up to SL(2,$\mathbb{Z}$) equivalence, there are 45 such polygons (5 triangles, 19 quadrilaterals, 16 pentagons and 5 hexagons). They are found in \cite{WeiDing} and we list them in Appendix \ref{poly45}, as well as their volume functions in Appendix \ref{vol}. Hence, we will apply the inverse algorithm to get the corresponding gauge theories. Most of the toric varieties are related to known families including $\mathbb{C}^3$, (generalized) conifolds ($\mathcal{C}$) \cite{Park:1999ep,Uranga:1998vf}, suspended pinch point (SPP), $Y^{p,q}$ \cite{Gauntlett:2004zh,Gauntlett:2004yd,Benvenuti:2004dy,Benvenuti:2004wx}, $L^{a,b,c}$ \cite{Franco:2005sm}, $X^{p,q}$ \cite{Hanany:2005hq} and (pseudo) del Pezzos ((P)dP) \cite{Feng:2004uq,Feng:2001xr,Feng:2002fv,Feng:2002zw}. When orbifolding a space, the orbifold action can be determined via Hermite normal forms and barycentric coordinates \cite{Hanany:2010ne,Davey:2010px}. In particular, some of the quivers and superpotentials are studied in previous literature, such as $Y^{3,0}$ in \cite{Davey:2009bp} and toric diagrams up to (normalized) area 8 in \cite{Franco:2017jeo}. In \cite{Closset:2018bjz,Closset:2019juk,Saxena:2019wuy}, some of the toric diagrams are studied from 5d SCFT perspective. The number of interior points is the rank of the 5d SCFT, which sheds light onto the classification of 5d SCFTs.

We start by briefly reviewing the relevant background of quivers and volume minimizations in \S\ref{inverse}. Then in \S\ref{triangles}-\S\ref{hexagons}, we report the gauge theories obtained from inverse algorithm. Since many toric diagrams correspond to a large number of quivers, we will present only one toric quiver for each polytope. Some more toric quivers in different phases are presented in Appendix \ref{more}. In \S\ref{XDelta}, we will turn to the compact surfaces constructed from these toric diagrams. The relevant topology can be related to the volume minization which plays an important role especially in R-symmetry. Finally, we will make a summary and discuss possible future directions in \S\ref{conclusion}.

\section*{Nomenclature}
\begingroup
\setlength{\tabcolsep}{10pt}
\renewcommand{\arraystretch}{1.5}

\endgroup

\section{Quiver Gauge Theories and the Inverse Algorithm}\label{inverse}
We begin with a lightning review of the key requisite concepts, from toric CY cones to quiver gauge theories.

\subsection{Lattice Polytopes}\label{polytopes}
A \emph{lattice polytope} $\Delta$ is a convex hull of a finite number of points in $\mathbb{Z}^n$, and its vertices form the set $\Delta\cap\mathbb{Z}^n$. A polytope is said to be \emph{reflexive} if its dual polyotpe
\begin{equation}
	\Delta^\circ=\{\bm{v}\in\mathbb{Z}^n:\bm{u}\cdot\bm{v}\geq-1,\forall\bm{u}\in\Delta\}
\end{equation}
is also a lattice polytope in $\mathbb{Z}^n$. For $n=2$, it is not hard to show that $\Delta$ is reflexive iff there is only one interior point\footnote{We acknowledge Alexander Kasprzyk for pointing out that this statement (namely the ``if'' part, in other words, the ``$\Leftarrow$'' direction) is not generally true when $n\neq2$.}. Hence, we can always choose this unique interior point as the origin.

However, in this paper, we will contemplate 2d polytopes with two interior points. Hence, they are not reflexive, and we have two choices of origins. This would lead to a different discussion on the compact toric surface $X(\Delta)$ in \S\ref{XDelta}. Here, we will first focus on the \emph{rational polyhedral cone} generated by the vertices of the polytope/toric diagram $\Delta$ in 3d\footnote{Notice that this construction can be done in any dimension, but here we are just talking about lattice polygons.}.

\paragraph{The affine toric CY 3-fold}
We take the origin (0,0,0)$\in\mathbb{Z}^3=:M$, and let the vertices in the polygon be $\bm{u}_i'=$($\bm{u}_i$,1)$\in\mathbb{Z}^3$. Then these vectors generate a cone $\sigma$ with the origin as the apex to the vertices of $\Delta$:
\begin{equation}
	\sigma=\left\{\sum_i\lambda_i\bm{u}_i':\lambda_i\geq0\right\}\subset M\otimes_{\mathbb{Z}}\mathbb{R}=:M_{\mathbb{R}}.
\end{equation}
The dual cone lives in the dual lattice $N_{\mathbb{R}}$ where $N:=\text{Hom}(M,\mathbb{Z})$:
\begin{equation}
	\sigma^\vee=\left\{\bm{w}\in N_{\mathbb{R}}:\bm{w}\cdot\bm{u}\geq0,\forall\bm{u}\in\sigma\right\}.
\end{equation}
Then we have the algebra $\mathbb{C}[\sigma^\vee\cap N]$ spanned over $\mathbb{C}$ by the points in $\sigma\cap M$. We can therefore define an affine toric variety $\mathcal{X}$ to be the maximal spectrum of this semigroup ring:
\begin{equation}
	\mathcal{X}\cong\text{Spec}_\text{max}\mathbb{C}[\sigma^\vee\cap N].
\end{equation}
Since the endpoints of $\sigma$ live on the same (hyper)plane, $\mathcal{X}$ is a Gorenstein singularity, and hence can be resolved to a CY 3-fold, although being co-hyperplanar makes it non-compact \cite{He:2017gam,fulton1993introduction,cox2011toric}.

\paragraph{The Higgs-Kibble mechanism}
The Higgs(-Kibble) mechanism \cite{kibble,higgs,englert} has a natural interpretation in the toric diagrams. \emph{Blowing down} points of the polytopes corresponds to \emph{higgsing} while \emph{blowing up} points is \emph{unhiggsing}. All the 45 toric diagrams (and corresponding quiver gauge theories) can be obtained by higgsing the same parent theory. This is analyzed in Appendix \ref{parent}.

\subsection{Brane Tilings}\label{tiling}
As mentioned in \S\ref{intro}, the junction of $N$ D5-branes and one NS5-brane can be plotted on the torus. Given a toric diagram, we can draw the outer normal vector to each segment separated by the perimeter points of the polytope. Then we put these vectors on the torus, which will divide the torus into different regions. Each region is a bound state of 5-branes, including ($N$,0) and ($N$,$\pm$1) 5-branes. Every time when we move from one region to another, we will cross a vector. If we cross the vector from left (right) to right (left), then the NS5 charge is increased (decreased) by 1. For instance, the NS5 cycles of $\mathbb{C}^3/\mathbb{Z}_5$ (1,2,2) which we will study later in \S\ref{p2} is (figure taken from \cite{Yamazaki:2008bt}, Figure 29):
\begin{equation}
\includegraphics[width=7cm]{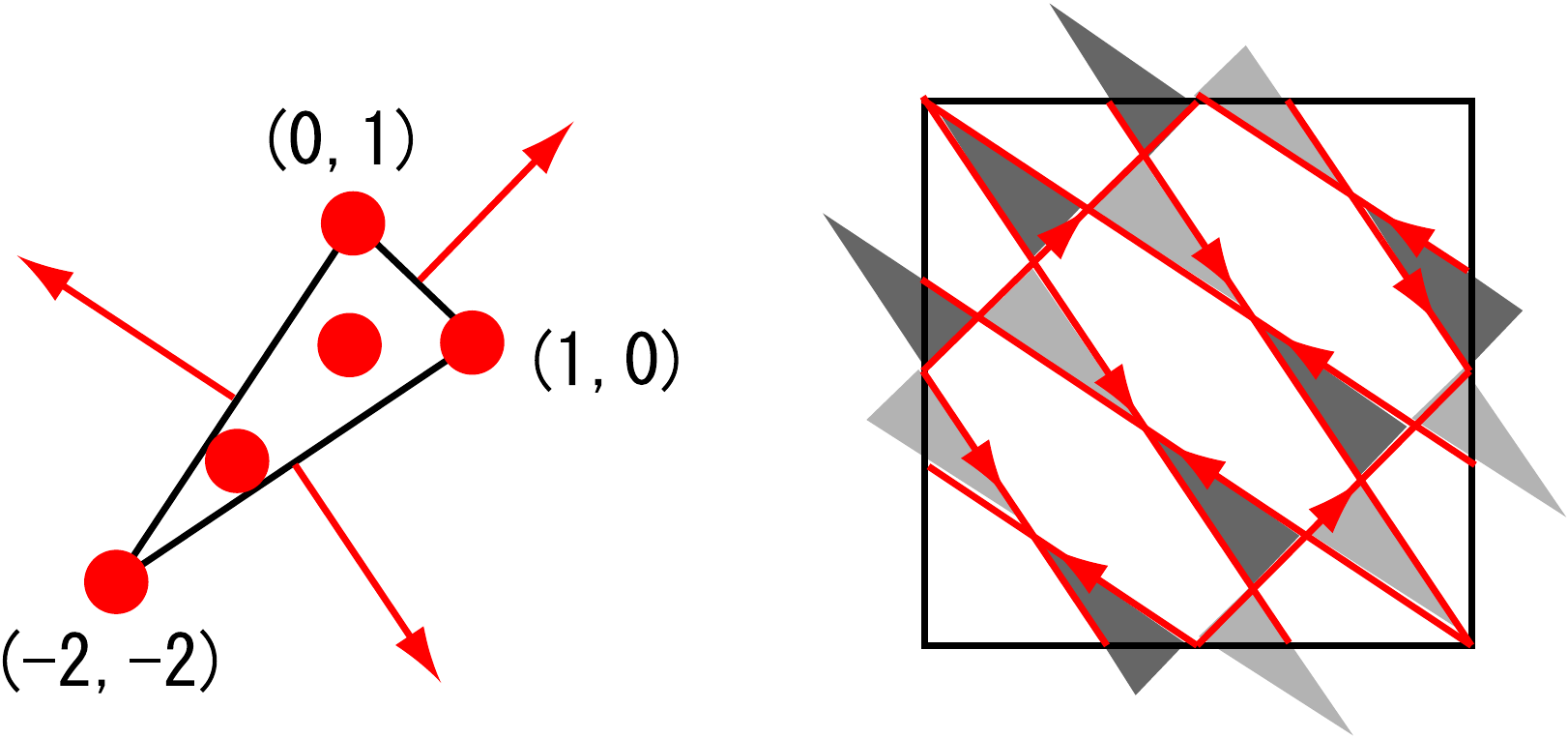}.
\end{equation}

Then we can obtain a bipartite graph by taking the ($N$,$\pm$1) regions to be white/black nodes. The ($N$,0) regions give faces in the tiling. The intersection points of the branes, for which we have massless open strings, correspond to edges in the tiling. As the open strings/bifundamentals are oriented, every loop surrounding the white/black node is clockwise/counterclockwise, which gives a sign in the corresponding superpotential term. For instance, the above example leads to the brane tiling in (\ref{tilingexample}). Since the bipartite graph is periodic, the fundamental region is in a red box. From fivebrane diagrams/brane tilings, we can read off the quivers. This is summarized in Table \ref{threediags}. Readers are referred to \cite{Yamazaki:2008bt,Bao:2020sqg} for a detailed discussion.
\begin{table}[h]
	\centering
	\begin{tabular}{|c|c|c|}
		\hline
	   Fivebrane diagram & Brane Tiling & Quiver \\ \hline\hline
		($N$,1) brane& white node & superpotential term ($+$) \\ \hline
		($N$,$-1$) brane & black node & superpotential term ($-$) \\ \hline
		($N$,0) brane & face & gauge node/group \\ \hline
		open string & edge & bifundamental \\ \hline
	\end{tabular}
    \caption{}\label{threediags}
\end{table}

\paragraph{Quivers}
In our context, our quivers only have two objects: round nodes and arrows. Each round node corresponds to a gauge group, which is always unitary here. Also, as we are contemplating toric quivers, viz, quivers in the toric phases, the ranks of nodes in one quiver are always the same. Each arrow connects two gauge nodes. These arrows correspond to the matter fields transform under fundamental and anti-fundamental representations under the two gauge groups. We can write a $G\times E$ matrix, where $G$ is the number of gauge nodes\footnote{As we will see shortly, the number of nodes $G$ is always equal to the number of unit simplices under full triangulation of the toric diagram. This in turn equals twice the area of the toric diagram where the area of a unit triangle is \emph{not} normalized here, i.e., equals 1/2.} and $E$ is the number of edges/bifundamentals, called \emph{incidence matrix} $d$ to encode the quiver data. If the arrow leaves the node $i$, viz, the bifundamental $X_{ij}$, then the corresponding entry is assigned 1. Likewise, if the arrow comes into the node $i$, viz, the bifundamental $X_{ji}$, then the entry is $-1$. Otherwise, the entry is 0.

\paragraph{Perfect matchings and charges}
It is always to possible to find a set $p_\alpha$ of bifundamentals that connect all the nodes in the brane tiling precisely once. This set $p_\alpha$ is known as a \emph{perfect matching}. A new basis of fields in the language of gauged linear sigma model (GLSM) \cite{Witten:1993yc} can be naturally defined from the bifundamental fields \cite{Feng:2004uq}. The number of GLSM fields is the number of perfect matchings $c$. Then we can write the $P_{E\times c}$ perfect matching matrix $P$ which encodes the relation between the two sets of matter fields. For instance, the first row in (\ref{perfectmatchingexample}) indicates that
\begin{equation}
	X_{12}^1=q_1s_2s_4s_6r_5p_2.
\end{equation}

As the F-terms come from $\partial W/\partial X_{ij}=0$, where $W$ is the superpotential and $X_{ij}$'s are the bifundamentals, one can show that the charges of GLSM fields under the F-term constraints are given by the \emph{F-term charge matrix} of size $(c-G-2)\times c$:
\begin{equation}
	Q_F=\text{ker}(P).
\end{equation}

From \cite{Witten:1993yc}, we know that the D-terms in terms of the bifundamentals $X_a$'s are
\begin{equation}
	D_i=-e^2(\sum_ad_{ia}|X_a|^2-\zeta_i),
\end{equation}
where $e$ is the gauge coupling and $d$ is the incidence matrix. The $\zeta_i$'s are Fayet-Iliopoulos (FI) parameters. In fact, as shown in \cite{Feng:2000mi,Feng:2004uq}, the FI parameters encode the resolutions of toric singularities. In the matrix form, this reads
\begin{equation}
	\delta\cdot|X_a|^2=\bm{\zeta},
\end{equation}
where $\delta$ is the reduced quiver matrix\footnote{In \cite{Feng:2000mi,Feng:2004uq}, the reduced quiver matrix was originally denoted by $\Delta$. However, as $\Delta$ represents polyotpes here, we use $\delta$ for the matrix to avoid any possible confusion.} of size $(G-1)\times E$. This can be related to perfect matching matrix via \cite{Feng:2004uq,Hanany:2012hi}
\begin{equation}
	\delta=Q_DP^\text{T},
\end{equation}
where $Q_D$ is a $(G-1)\times c$ matrix. As $Q_D$ encodes the GLSM charges under D-term constraints, this is known as the \emph{D-term matrix}.

In light of GLSM, the F- and D-terms can be treated on an equal footing. Hence, the two charge matrices can be concatenated to a $(c-3)\times c$ matrix, known as the \emph{total charge matrix} \cite{Feng:2004uq}:
\begin{equation}
	Q_t=\left(
	\begin{array}{c}
		Q_F\\
		Q_D
		\end{array}
	\right).
\end{equation}
As the F-terms must vanish while the D-terms are adjusted by the FI parameters, the last column is always in the form $(\bm{0},\bm{\zeta})^\text{T}$. Hence, we will always omit the last column. Then taking the kernel yields
\begin{equation}
	G_t=\text{ker}(Q_t).
\end{equation}
This matrix $G_t$ exactly encodes the information of the toric diagrams. Each column is the coordinate of a vertex in the polytope (thus, the last row of $G_t$ is (1,$\dots$,1)). Therefore, every vertex is assigned to some GLSM field(s). Each corner (aka \emph{extremal}) point always correspond to one GLSM field with non-zero R-charge. On the other hand, \emph{non-extremal} points corresponds to multiple GLSM fields all with zero R-charges.

\paragraph{Toric/Seiberg duality}
The toric/Seiberg duality \cite{Seiberg:1994pq,Beasley:2001zp,Feng:2001bn} is a duality among theories that have the same IR fixed point under RG flow. As we are always staying in the toric phases, there will be no fractional branes, and hence our theories keep superconformal and the quivers have nodes of the same rank as aforementioned. The dual quiver gauge theories all have the same moduli space/Higgs branch, which is exactly the toric CY cone corresponding to the toric diagram.

Therefore, we can use toric duality to obtain different quivers of the same toric diagram with the following steps:
\begin{enumerate}
	\item As Seiberg duality takes SU($N_c$) gauge group with $N_f$ fundamentals and $N_f$ bifundamentals to SU($N_f-N_c$) gauge group, in the toric phase, only nodes satisfying $N_f=2N_c$ can be dualized\footnote{As we will take only U(1) nodes for simplicity, this means we can only choose nodes with two arrows in and two arrows out. However, we should remember that any node can be dualized if we do not restrict to staying in toric phases.}. We first scale the gauge couplings of gauge groups other than the chosen node $i$ to zero, and the fields not connected to $i$ decouple. Then the bifundamentals connected to $i$ is reduced to (anti-)fundamentals under the flavour symmetry. Since duality requires the dual quarks to transform in the conjugate (flavour) representations to the original ones, the directions of the $2N_f$ arrows should be reversed. The overall result is that every time we perform such duality, we flip one node $i$ in the quiver so that the arrows connecting to it are all reversed.
	\item To be anomaly-free, new arrows needs to be added among pairs of nodes adjacent to dualized node $i$ so as to keep them balanced. This is just the quarks-to-meson map $Q_i\tilde{Q}^j\rightarrow M_i^j$. As the flavours groups are gauged back, these mesons are promoted to bifundamentals. Overall, we are adding $N_f$ arrows to the pairs of unbalanced nodes after we flip the dualized node.
\end{enumerate}
In cluster algebra, the whole process is known as the quiver mutations \cite{2011arXiv1102.4844M}. For the superpotential, the composite singlets are replaced with the new mesons, and adding new cubic terms couples the mesons to magnetic flavours. This may make some fields massive, so we need to integrate them out as they become non-dynamical when flowing to IR. In terms of brane tilings, the technique called \emph{urban renewal} can be applied to obtain dual tilings. For more details in Seiberg duality in quiver gauge theories, one is referred to, for example, \cite{Franco:2005rj,Hanany:2011bs,Franco:2003ja,Hanany:2012mb}.

\subsection{The Moduli Spaces}\label{moduli}
The \emph{master space} $\mathcal{F}^\flat$ \cite{Forcella:2008bb,Forcella:2008eh} is a combination of baryonic and mesonic moduli spaces defined as the \emph{symplectic quotient} of the perfect matching ring\footnote{Strictly speaking, this is the largest irreducible component, known as the coherent component, of the master space rather than $\mathcal{F}^\flat$ itself. Nevertheless, we will solely focus on the coherent component and make this abuse.}:
\begin{equation}
	\mathcal{F}^\flat=\mathbb{C}^c[p_1,\dots,p_c]//Q_F.
\end{equation}

\paragraph{The global symmetry}
The master space has global symmetry that can be divided into two parts:
\begin{itemize}
	\item The \emph{mesonic symmetry} is U(1)$^3$ or its enhancement with rank 3. It may be enhanced to SU(2)$\times$U(1)$^2$, SU(2)$^2\times$U(1) or SU(3)$\times$U(1). The enhancement is determined by the duplicated columns in $Q_t$. In particular, there is always a U(1) which is the R-symmetry.
	\item The \emph{baryonic symmetry} is U(1)$^{G-1}$ or its enhancement with rank ($G-1$). It consists of \emph{non-anomalous} and \emph{anomalous} symmetries. The non-anomalous symmetry is always U(1)$^{N_P-3}$, where $N_P$ is the number of perimeter points in the polytope. The anomalous symmetry is U(1)$^{2I}$ or an enhancement of rank $2I$, where $I$ is the number of interior points. The enhancement is determined by the repeated columns in $Q_F$. The non-abelian enhancement of anomalous symmetry is also known as \emph{hidden symmetry}.
\end{itemize}
Notice that the combination in the baryonic symmetry is actually the \emph{Pick's theorem}:
\begin{equation}
	\frac{G}{2}=I+\frac{N_P}{2}-1=A,
\end{equation}
where $A$ is the (unnormalized) area of the toric diagram.

\paragraph{The mesonic moduli space and Hilbert series}
The \emph{mesonic moduli space} $\mathcal{M}$ is a subspace of $\mathcal{F}^\flat$:
\begin{equation}
	\mathcal{M}=\mathcal{F}^\flat//Q_D=(\mathbb{C}^c[p_1,\dots,p_c]//Q_F)//Q_D.
\end{equation}
We can use the (mesonic) \emph{Hilbert series} (aka \emph{Hilbert-Poincar\'{e} series}) to desribe the moduli space. The Hilbert series is a generating function that enumerates the invariant monomials under the group action. Physically, it counts the gauge invariant operators of each degree in the chiral ring. As aforementioned, the moduli space coincides the toric CY 3-fold $\mathcal{X}$. Hence, we can use the following formula to compute the Hilbert series. The (refined) Hilbert series for a toric CY $n$-fold cone can be computed as \cite{Martelli:2005tp,Martelli:2006yb}
\begin{equation}
 	HS=\sum_{i=1}^r\prod_{j=1}^n\left(1-\bm{t}^{\bm{u_{i,j}}}\right)^{-1}.
\end{equation}
The number $r$ is the number of ($n-1$)-dimensional simplices under triangulation. The index $j$ runs over the $n$ faces of each simplex. The vector $\bm{u_{i,j}}$ is an $n$-vector inner normal to the $j^{\text{th}}$ face of the $i^{\text{th}}$ simplex, and $\bm{t}$ are the fugacities $t_1$,$\dots$,$t_n$. Then $\bm{t}^{\bm{u_{i,j}}}=\prod\limits_{k=1}^nt_k^{\bm{u}_{i,j}(k)}$, multiplied by the $k^{\text{th}}$ component of $\bm{u}$. One can also use \emph{Molien-Weyl integral} to compute Hilbert series of the Higgs branch \cite{Benvenuti:2006qr}. The two results should be the same under some fugacity map.

\subsection{Volume Minimization}\label{volmin}
As $\mathcal{X}$ of complex dimension $n$ is the K\"{a}hler cone over the Sasaki-Einstein manifold $Y=\mathcal{X}|_{r=1}$ of real dimension ($2n-1$):
\begin{equation}
	\text{d}s^2(\mathcal{X})=\text{d}r^2+r^2\text{d}s^2(Y),
\end{equation}
the volume of $Y$ is then \cite{Martelli:2005tp,Martelli:2006yb}
\begin{equation}
	\text{vol}(Y)=2n\int_0^1\text{d}r\ r^{2n-1}\text{vol}(Y)=2n\ \text{vol}(\mathcal{X}|_{r\leq1})=2n\int_{r\leq1}\frac{\omega^n}{n!},
\end{equation}
where $\omega$ is the K\"{a}hler form of $\mathcal{X}$. We are now going to see that the volume of the Sasaki-Einstein base is closely related to the R-charges of the fields in our theory.

The \emph{Reeb vector} $K:=\mathcal{J}(r\partial/\partial r)$ is the Killing vector of $Y$, where $\mathcal{J}$ is the complex structure of $\mathcal{X}$. Since the torus action $\mathbb{T}^n$ of the toric $\mathcal{X}$ leaves $\omega$ invariant, we can take the vector fields $\partial/\partial\phi_i$ to be the generators of the action with $\phi_i\sim\phi_i+2\pi$. Then the reeb vector reads $K=b_i\partial/\partial \phi_i$, where the components $b_i$'s are algebraic numbers, with the last component $b_n$ set to be $n$.

In \cite{Martelli:2005tp,Martelli:2006yb}, the \emph{volume function} of $Y$, which is shown to be related to the Reeb vector components, is introduced to be
\begin{equation}
	V(b_i;Y)=\frac{\text{vol}(Y)}{\text{vol}(S^{2n-1})}
\end{equation}
such that the volume of the ($2n-1$)-sphere,
\begin{equation}
	\text{vol}(S^{2n-1})=\frac{2\pi^n}{(n-1)!},
\end{equation}
is normalized. Then the volume function is related to the Hilbert series of $\mathcal{X}$ via\footnote{If we are taking outer normal vectors to the faces of simplices when computing the Hilbert series, the Hilbert series would just change by the fugacity map $t_i\rightarrow1/t_i$. As a result, the volume function would only differ by a minus sign.}
\begin{equation}
	V(b_i;Y)=\lim_{\mu\rightarrow0}\mu^n\ HS(t_i=\exp(-\mu b_i);\mathcal{X}).
\end{equation}
It is known that $V$ always admits precisely one positive minimum $V_\text{min}$. Since the Reeb vector is algebraic, $V_\text{min}$ is also an algebraic number.

For toric threefolds, in \cite{Gubser:1998vd}, it was shown that the $a$-function, in terms of the volume function, can be expressed as
\begin{equation}
	a(R)=\frac{1}{4V},
\end{equation}
where $R$ denotes the R-charges of the superconformal theory. A procedure known as \emph{$a$-maximization} can be used to determine the R-charges \cite{Intriligator:2003jj,Butti:2005vn,Butti:2005ps}. The central charges $a$ and $c$ of the SCFT in 4d are
\begin{equation}
	a(R)=\frac{3}{32}(3\text{Tr}R^3-\text{Tr}R),\ c=\frac{1}{32}(9\text{Tr}R^3-5\text{Tr}R),
\end{equation}
where Tr$R^3$ and Tr$R$ are 't Hooft anomalies. In general, as we have flavour symmetries in IR, a possible candidate is
\begin{equation}
	R_t=R_0+\sum_it_iF_i,
\end{equation}
where $F_i$'s are the charges of global non-R symmetries and $R_t$ is called the \emph{trial} R-charge. According to \cite{Intriligator:2003jj}, the U(1) R-symmetry should satisfy
\begin{equation}
	9\text{Tr}(R^2F_i)=\text{Tr}F_i,\ \text{Tr}(RF_iF_j)<0,
\end{equation}
which can be translated into the maximization of $a(R_t)$. When the trial $a$-function is maximized, only the R-charge $R_0$ will make contribution. Thus, we see that $V_\text{min}$ plays a crucial role in determining the R-charges.

In light of quiver diagrams, let $X_I$ be the R-charges of the bifundamentals. Then the vanishing $\beta$-function from the theory being conformal yields
\begin{equation}
	\sum_IX_I=2,\ \sum_I(1-X_I)=2,\label{bifundconditions}
\end{equation}
where the first sum is taken in each superpotential term and the second sum is taken with respect to each gauge node. Let $N_W$ be the number of superpotential terms, then we have ($G+N_W$) equations for $E$ parameters in all, which in general are not all independent though $G+N_W=E$ as the bipartite graph is embedded on a torus. With these conditions, the $a$-function can be written as\footnote{Notice that this expression itself, which is generally true when we assume all the gauge groups have the same rank $N$ and normalize by $N^2$, does not require that the dimer is embedded on $\mathbb{T}^2$.}
\begin{equation}
	a=\frac{3}{32}\left(2G+\sum_I(3(X_I-1)^3-(X_I-1))\right).
\end{equation}
Anomaly cancellation implies $a=c$, viz, Tr$R$=0 \cite{Henningson:1998gx,Freedman:1999gp}\footnote{The relevant anomalies are the ones of the R-symmetry current with itself or with the stress tensor, namely $\langle RRR\rangle$ or $\langle RTT\rangle$.}. Thus, we have
\begin{equation}
	a=\frac{9}{32}\left(G+\sum_I(X_I-1)^3\right).
\end{equation}
As we have seen, this is equivalent to minimizing $V$, together with (\ref{bifundconditions}), we can solve for the R-charges of the bifundamentals, and hence the R-charges of GLSM fields as well.

\paragraph{Example}
Let us consider the abelian orbifold $\mathbb{C}^n/\mathbb{Z}_n$ with orbifold action (1,$\dots$,1) as an example. The Hilbert series reads
\begin{eqnarray}
	HS&=&\left(\left(1-t_n^{-s}\prod_{i=1}^{n-1}t_i^s\right)\prod_{j=1}^{n-1}\left(1-t_j^s\right)\right)^{-1}\nonumber\\
	&&+\sum_{i=1}^{n-1}\left(\left(1-t_i^{-s}\right)\left(1-t_i^{sn}t_n^s\prod_{j=1}^{n-1}t_j^s\right)\prod_{\substack{k=1\\k\neq i}}^{n-1}\left(1-t_k^st_i^{-s}\right)\right)^{-1},
\end{eqnarray}
where $s=(-1)^n$. As the limit picks out the leading order of $\mu$, the volume function is
\begin{equation}
	V=\frac{(-1)^nn^{n-1}}{\prod\limits_{j=1}^{n-1}\left(\sum\limits_{i=1}^{n-1}b_i-nb_j-b_n\right)}.
\end{equation}
Then taking $b_n=n$, we find that $V_\text{min}=1/n$ at $b_1=\dots=b_{n-1}=0$. In quiver gauge theories, we have a unique toric quiver for each $n$. The R-charges of all the bifundamentals are $2/n$. Hence, the R-charges of the $n$ GLSM fields corresponding to extremal points are all $2/n$, with others vanishing. Interestingly, the Sasaki-Einstein base of $\mathbb{C}^n$ (whose toric diagram is the unit simplex) is the ($2n-1$)-sphere. Hence, the volume function equals 1. As we will see in \S\ref{topo}, it is not a coincidence to have $1/n=V(S^{2n-1})/|\mathbb{Z}_n|$ here.

\section{Five Triangles}\label{triangles}
Having warmed up with an explicit example of an orbifold, and illustrating it with all the relevant concepts, let us now proceed to study the polygons of our concern. As aforementioned, there are 45 lattice polygons investigated here, which are collected in Appendix \ref{poly45} (one can explicitly see the 2 interior points, one of which could be taken as the origin). We begin with the five triangles.

\subsection{Polytope 1: $\mathbb{C}^3/\mathbb{Z}_6$ (1,1,4)}\label{p1}
The polytope is
\begin{equation}
\tikzset{every picture/.style={line width=0.75pt}} 
.\label{p1p}
\end{equation}
The brane tiling and the corresponding quiver are\footnote{Notice that the numbers in the nodes are labels, not ranks.}
\begin{equation}
\includegraphics[width=4cm]{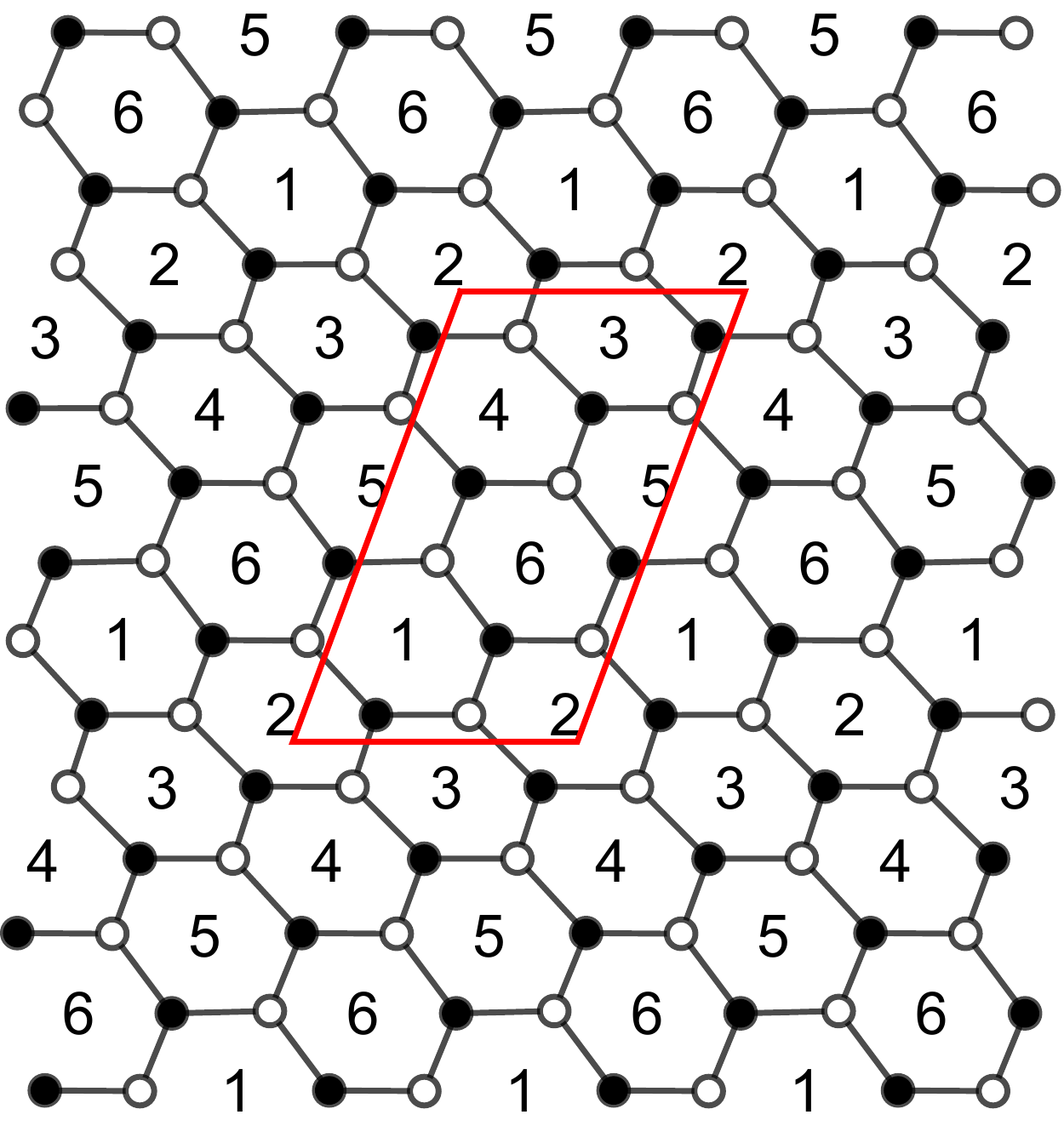};
\includegraphics[width=4cm]{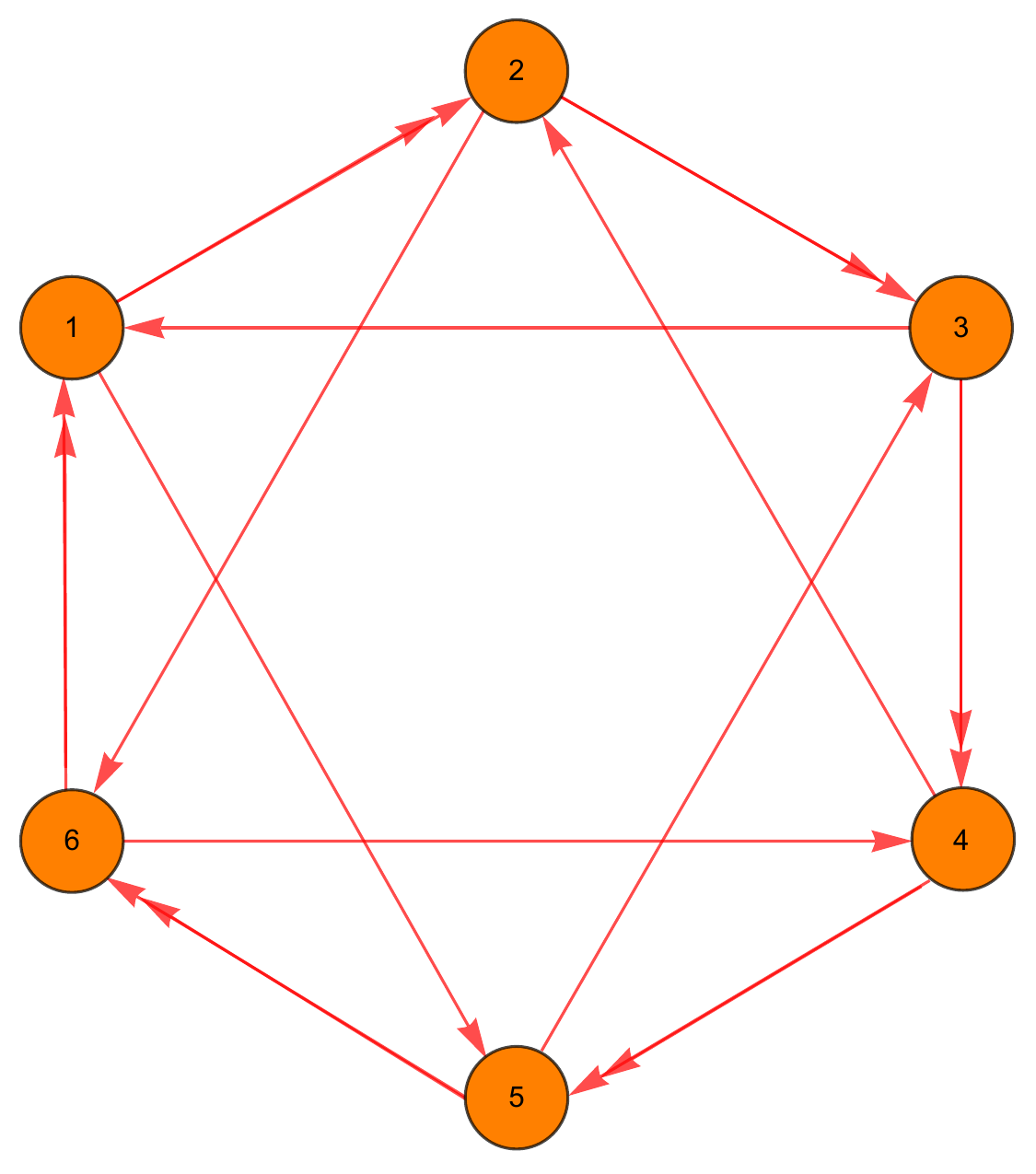}.
\end{equation}
The superpotential is\footnote{There is always a trace on the right hand side. For brevity, we will just omit it here. Alternatively, we can also think of it as the repeated lower indices being traced. Moreover, the upper indices are the labels of multiple bifundamentals between two nodes (rather than powers).}
\begin{eqnarray}
	W&=&X^1_{23}X^2_{34}X_{42}+X^1_{45}X^2_{56}X_{64}+X^1_{61}X^2_{12}X_{26}+X^1_{56}X^2_{61}X_{15}+X^1_{12}X^2_{23}X_{31}+X^1_{34}X^2_{45}X_{53}\nonumber\\
	&&-X^2_{23}X^1_{34}X_{42}-X^2_{45}X^1_{56}X_{64}-X^2_{61}X^1_{12}X_{26}-X^2_{56}X^1_{61}X_{15}-X^2_{12}X^1_{23}X_{31}-X^2_{34}X^1_{45}X_{53}\nonumber.\\
\end{eqnarray}
The perfect matching matrix is
\begin{equation}
	P=\left(
	\tiny{
}
	\right).
\end{equation}
From $G_t$, we can get the GLSM fields associated to each point as shown in (\ref{p1p}), where
\begin{equation}
	q=\{q_1,q_2\},\ r=\{r_1,\dots,r_6\},\ s=\{s_1,\dots,s_9\}.
\end{equation}
From $Q_t$ (and $Q_F$), the mesonic symmetry reads SU(2)$\times$U(1)$\times$U(1)$_\text{R}$ and the baryonic symmetry reads U(1)$^4_\text{h}\times$U(1), where the subscripts ``R'' and ``h'' indicate R- and hidden symmetries respectively.

The Hilbert series of the toric cone is
\begin{eqnarray}
	HS&=&\frac{1}{\left(1-\frac{t_2}{t_1}\right) \left(1-\frac{t_2^2}{t_1}\right)
		\left(1-\frac{t_1^2 t_3}{t_2^3}\right)}+\frac{1}{(1-t_1)
		\left(1-\frac{t_1}{t_2}\right) \left(1-\frac{t_2
			t_3}{t_1^2}\right)} \nonumber\\
		&&+\frac{1}{\left(1-\frac{t_1}{t_2}\right) \left(1-\frac{t_1^2}{t_2
			t_3}\right) \left(1-\frac{t_2^2
			t_3^2}{t_1^3}\right)}+ \frac{1}{\left(1-\frac{t_1^3}{t_2^4}\right)
		\left(1-\frac{t_2}{t_1}\right) \left(1-\frac{t_2^3}{t_1^2
			t_3}\right)}\nonumber\\
		&&+\frac{1}{\left(1-\frac{t_1}{t_2^2}\right)
		\left(1-\frac{t_2}{t_1}\right) (1-t_2 t_3)}+\frac{1}{\left(1-\frac{1}{t_1}\right)
		\left(1-\frac{t_1}{t_2}\right) (1-t_2 t_3)}.
\end{eqnarray}
The volume function is then
\begin{equation}
	V=-\frac{18}{({b_2}+3) (-3 {b_1}+2 {b_2}+6) (-3 {b_1}+4 {b_2}-6)}.
\end{equation}
Minimizing $V$ yields $V_{\text{min}}=1/6$ at $b_1=b_2=0$. Thus, $a_\text{max}=3/2$. Together with the superconformal conditions, we can solve for the R-charges of the bifundamentals, which are $X_I=2/3$ for any $I$, viz, for all the bifundamentals. Hence, the R-charges of GLSM fields are $p_i=2/3$ with others vanishing\footnote{We will simply use $p_i$ to denote the R-charge of $p_i$. This should not cause any confusion based on the context.}.

\subsection{Polytope 2: $\mathbb{C}^3/\mathbb{Z}_5$ (1,2,2)}\label{p2}
The polytope is
\begin{equation}
    \tikzset{every picture/.style={line width=0.75pt}} 
.\label{p2p}
\end{equation}
The brane tiling and the corresponding quiver are
\begin{equation}
\includegraphics[width=4cm]{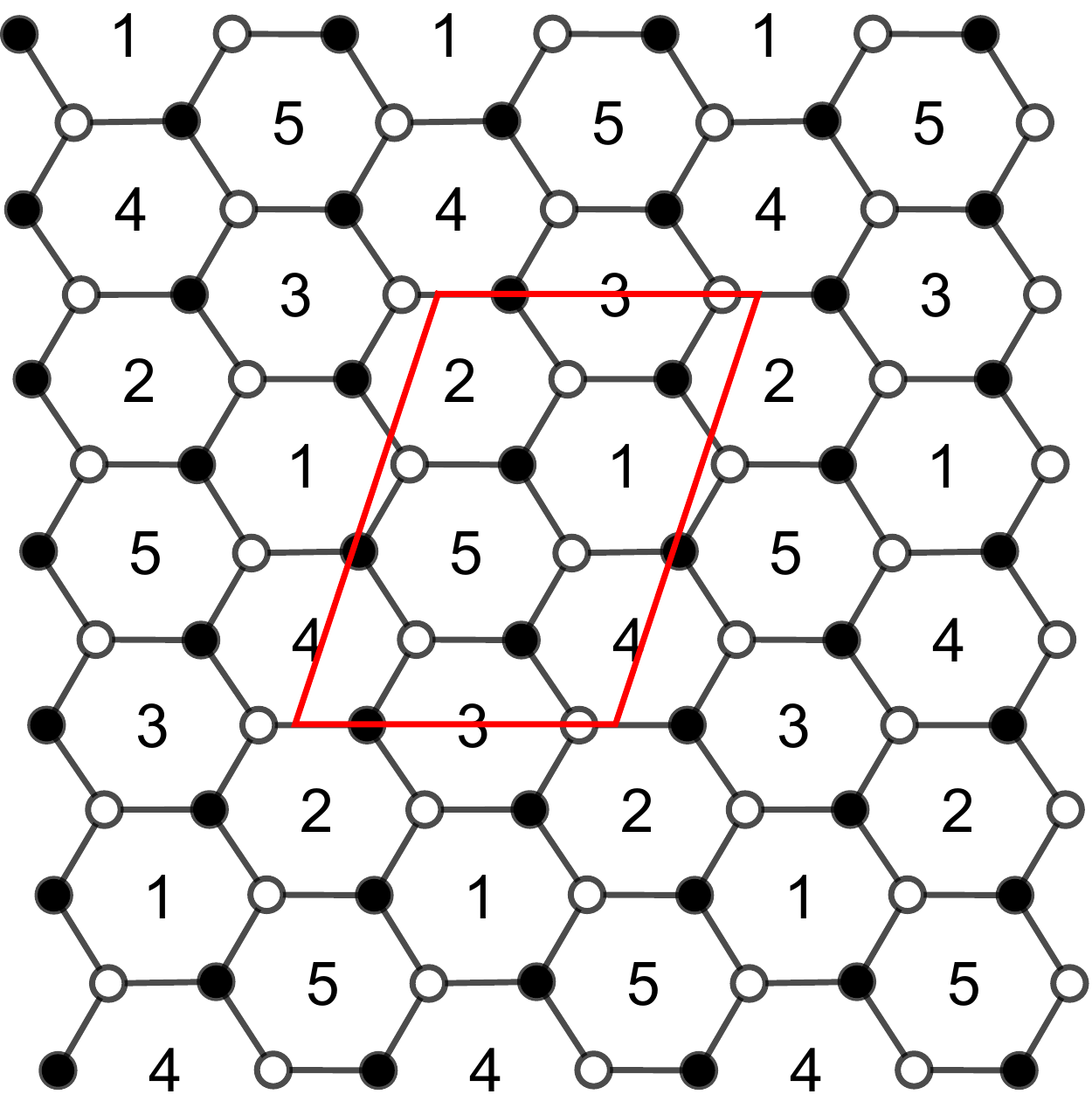};\label{tilingexample}
\includegraphics[width=4cm]{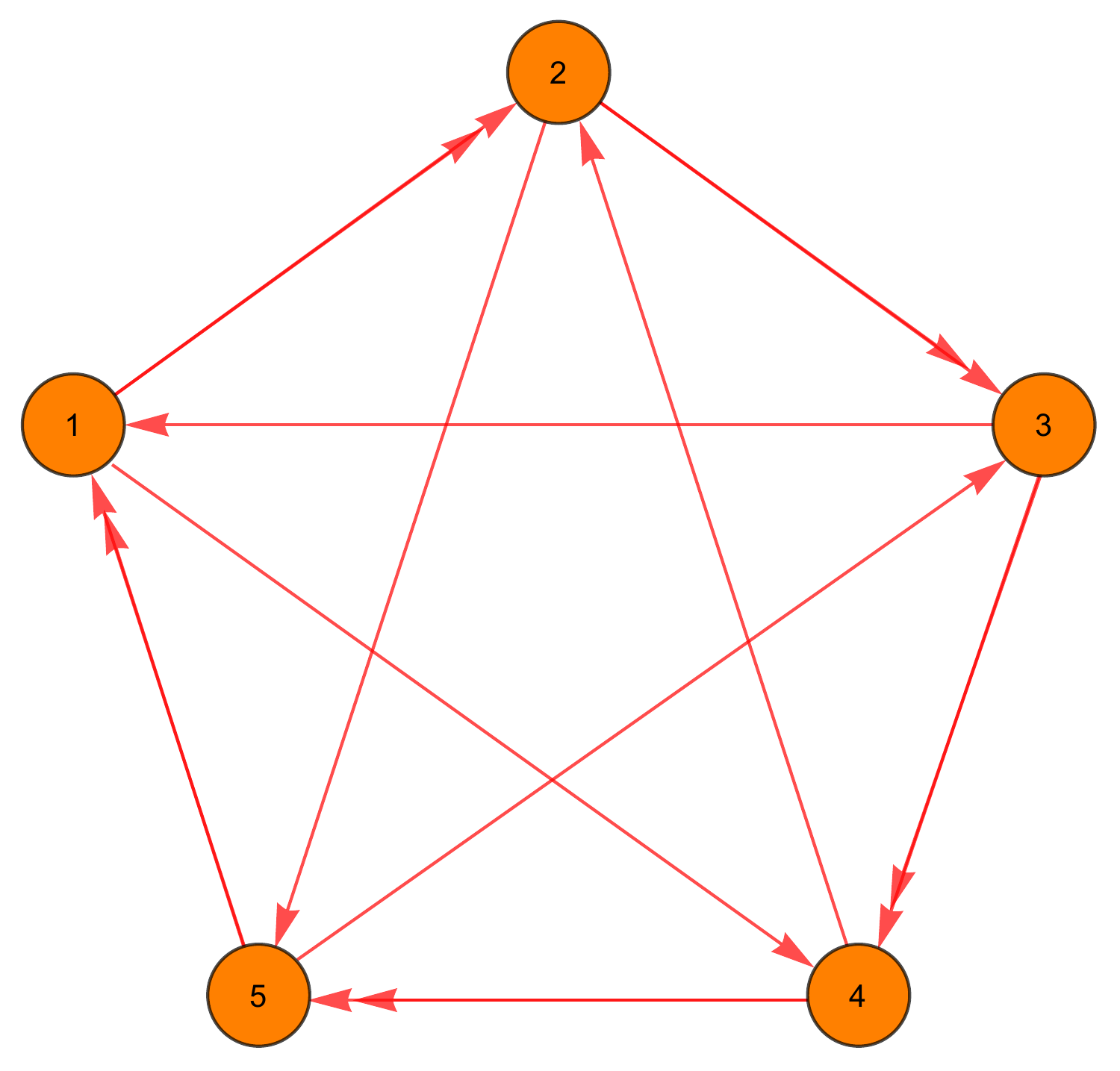}.
\end{equation}
The superpotential is
\begin{eqnarray}
	W&=&X^1_{12}X_{25}X^2_{51}+X^2_{12}X^1_{23}X_{31}+X^2_{23}X^1_{34}X_{42}+X^2_{34}X^1_{45}X_{53}+X^2_{45}X^1_{51}X_{14}\nonumber\\
	&&-X^2_{12}X_{25}X^1_{51}-X^1_{12}X^2_{23}X_{31}-X^1_{23}X^2_{34}X_{42}-X^1_{34}X^2_{45}X_{53}-X^1_{45}X^2_{51}X_{14}.
\end{eqnarray}
The perfect matching matrix is
\begin{equation}
P=\left(
\tiny{
}
\right).
\end{equation}
From $G_t$, we can get the GLSM fields associated to each point as shown in (\ref{p2p}), where
\begin{equation}
r=\{r_1,\dots,r_5\},\ s=\{s_1,\dots,s_5\}.
\end{equation}
From $Q_t$ (and $Q_F$), the mesonic symmetry reads SU(2)$\times$U(1)$\times$U(1)$_\text{R}$ and the baryonic symmetry reads U(1)$^4_\text{h}$, where the subscripts ``R'' and ``h'' indicate R- and hidden symmetries respectively.

The Hilbert series of the toric cone is
\begin{eqnarray}
HS&=&\frac{1}{\left(1-\frac{t_1}{t_3}\right) \left(1-\frac{t_1 t_2}{t_3}\right)
	\left(1-\frac{t_3^3}{t_1^2 t_2}\right)}+\frac{1}{(1-t_2)
	\left(1-\frac{t_2}{t_1}\right) \left(1-\frac{t_1
		t_3}{t_2^2}\right)}\nonumber\\
	&&+\frac{1}{\left(1-\frac{1}{t_2}\right) \left(1-t_1 t_2^2\right)
	\left(1-\frac{t_3}{t_1 t_2}\right)}+ \frac{1}{\left(1-\frac{1}{t_2}\right)
	\left(1-\frac{1}{t_1 t_2^2}\right) \left(1-t_1 t_2^3 t_3\right)}\nonumber\\
&&+\frac{1}{(1-t_2)
	\left(1-\frac{t_1}{t_2}\right) \left(1-\frac{t_3}{t_1}\right)}.
\end{eqnarray}
The volume function is then
\begin{equation}
V=-\frac{25}{({b_1}-2 {b_2}+3) (2 {b_1}+{b_2}-9) ({b_1}+3 {b_2}+3)}.
\end{equation}
Minimizing $V$ yields $V_{\text{min}}=1/5$ at $b_1=2,\ b_2=0$. Thus, $a_\text{max}=5/4$. Together with the superconformal conditions, we can solve for the R-charges of the bifundamentals, which are $X_I=2/3$ for any $I$, viz, for all the bifundamentals. Hence, the R-charges of GLSM fields are $p_i=2/3$ with others vanishing. Such result is expected as the theory is in the same family of McKay quivers as the one in \S\ref{p1}.

\subsection{Polytope 3: $\mathbb{C}^3/\mathbb{Z}_8$ (1,3,4)}\label{p3}
The polytope is
\begin{equation}
	\tikzset{every picture/.style={line width=0.75pt}} 
.\label{p3p}
\end{equation}
The brane tiling and the corresponding quiver are
\begin{equation}
\includegraphics[width=4cm]{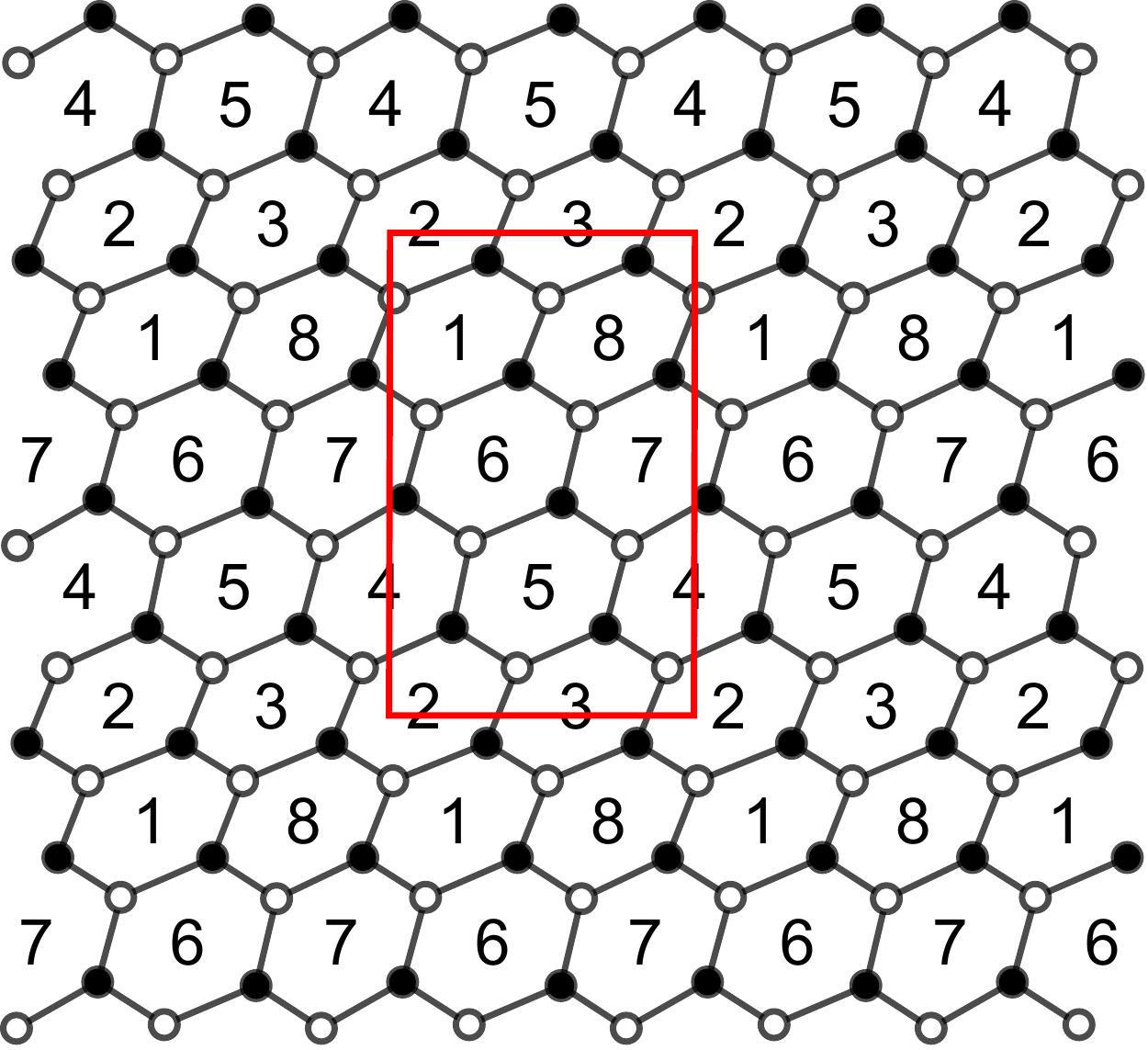};
\includegraphics[width=4cm]{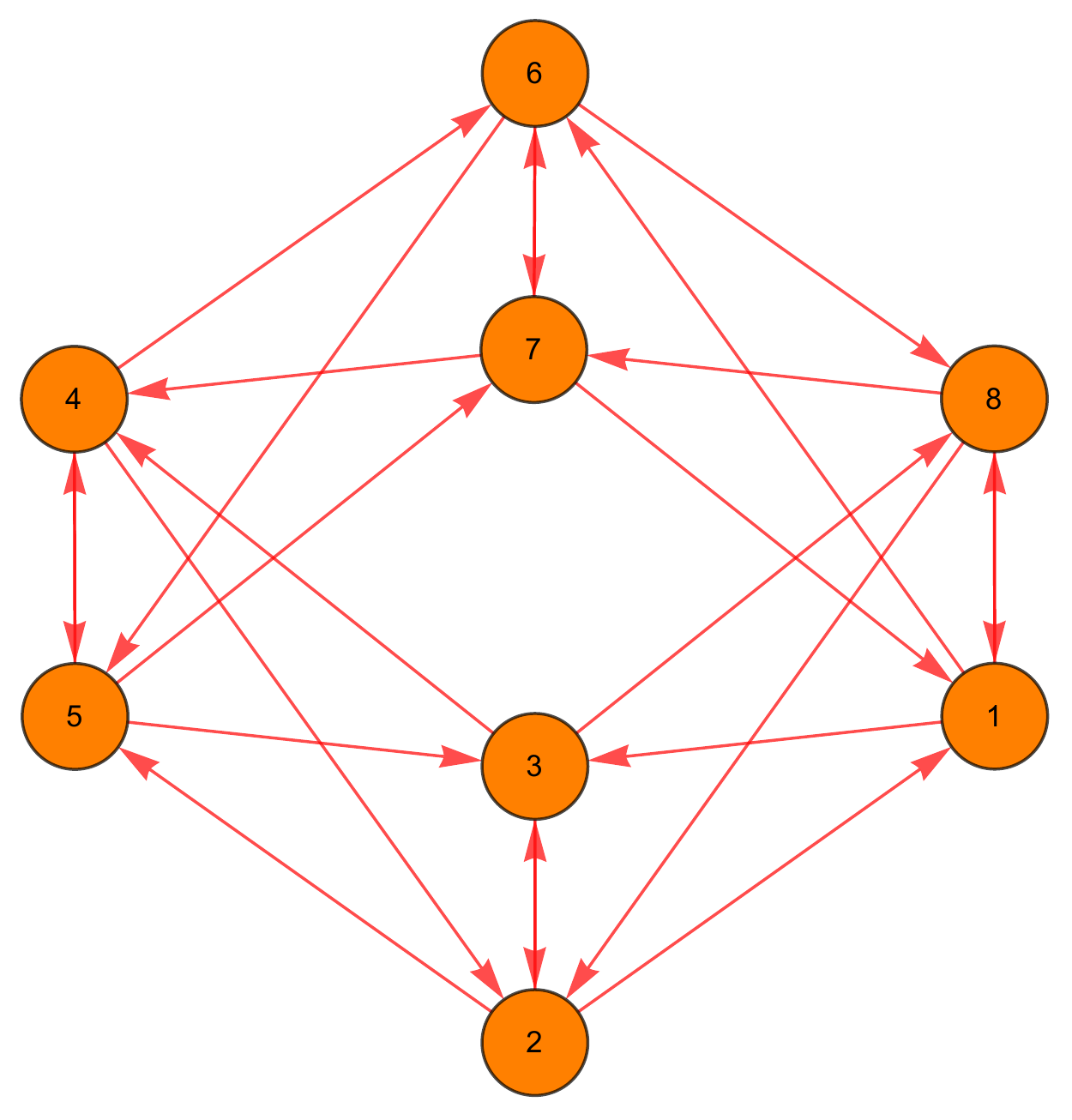}.
\end{equation}
The superpotential is
\begin{eqnarray}
	W&=&X_{82}X_{21}X_{18}+X_{13}X_{38}X_{81}+X_{34}X_{42}X_{23}+X_{25}X_{53}X_{32}+X_{46}X_{65}X_{54}+X_{57}X_{74}X_{45}\nonumber\\
	&&+X_{68}X_{87}X_{76}+X_{67}X_{71}X_{16}-X_{23}X_{38}X_{82}-X_{13}X_{32}X_{21}-X_{25}X_{54}X_{42}-X_{34}X_{45}X_{53}\nonumber\\
	&&-X_{57}X_{76}X_{65}-X_{46}X_{67}X_{74}-X_{16}X_{68}X_{81}-X_{18}X_{87}X_{71}.
\end{eqnarray}
The perfect matching matrix is
\begin{equation}
P=\left(
\tiny{
}
\right).
\end{equation}
From $G_t$, we can get the GLSM fields associated to each point as shown in (\ref{p3p}), where
\begin{eqnarray}
&&q=\{q_1,\dots,q_4\},\ r=\{r_1,\dots,r_8\},\ s=\{s_1,\dots,s_8\},\nonumber\\ &&t=\{t_1,\dots,t_6\},\ u=\{u_1,\dots,u_4\}.
\end{eqnarray}
From $Q_t$ (and $Q_F$), the mesonic symmetry reads U(1)$^2\times$U(1)$_\text{R}$ and the baryonic symmetry reads U(1)$^4_\text{h}\times$U(1)$^3$, where the subscripts ``R'' and ``h'' indicate R- and hidden symmetries respectively.

The Hilbert series of the toric cone is
\begin{eqnarray}
HS&=&\frac{1}{\left(1-\frac{{t_1} {t_2}}{{t_3}}\right) \left(1-\frac{{t_1}
		{t_2}^2}{{t_3}}\right) \left(1-\frac{{t_3}^3}{{t_1}^2
		{t_2}^3}\right)}+\frac{1}{\left(1-\frac{1}{{t_1}}\right)
	\left(1-\frac{{t_2}}{{t_1}}\right) \left(1-\frac{{t_1}^2
		{t_3}}{{t_2}}\right)}\nonumber\\
	&&+\frac{1}{(1-{t_2} {t_3}) \left(1-\frac{{t_1}
			{t_2}}{{t_3}}\right) \left(1-\frac{{t_3}}{{t_1}
			{t_2}^2}\right)}+\frac{1}{\left(1-\frac{1}{{t_1}}\right)
	\left(1-\frac{{t_1}}{{t_2}}\right) (1-{t_2}
	{t_3})}\nonumber
\end{eqnarray}
\begin{eqnarray}
	&&+\frac{1}{\left(1-\frac{{t_3}}{{t_1}}\right) (1-{t_2}
		{t_3}) \left(1-\frac{{t_1}}{{t_2}
			{t_3}}\right)}+\frac{1}{(1-{t_1}) \left(1-\frac{1}{{t_2}}\right)
	\left(1-\frac{{t_2} {t_3}}{{t_1}}\right)}\nonumber\\
	&&+\frac{1}{(1-{t_1})
		(1-{t_2}) \left(1-\frac{{t_3}}{{t_1}
			{t_2}}\right)}+ \frac{1}{\left(1-\frac{{t_1}}{{t_3}}\right) (1-{t_2}
	{t_3}) \left(1-\frac{{t_3}}{{t_1} {t_2}}\right)}.
\end{eqnarray}
The volume function is then
\begin{equation}
V=\frac{16}{({b_2}+3) (-2 {b_1}+{b_2}-3) (2 {b_1}+3 {b_2}-9)}.
\end{equation}
Minimizing $V$ yields $V_{\text{min}}=1/8$ at $b_1=2,\ b_2=-1$. Thus, $a_\text{max}=2$. Together with the superconformal conditions, we can solve for the R-charges of the bifundamentals, which are $X_I=2/3$ for any $I$, viz, for all the bifundamentals. Hence, the R-charges of GLSM fields are $p_i=2/3$ with others vanishing.

\subsection{Polytope 4: $\mathbb{C}^3/(\mathbb{Z}_2\times\mathbb{Z}_5)$ (1,0,1)(0,1,4)}\label{p4}
The polytope is
\begin{equation}
	\tikzset{every picture/.style={line width=0.75pt}} 
.\label{p4p}
\end{equation}
The brane tiling and the corresponding quiver are
\begin{equation}
\includegraphics[width=4cm]{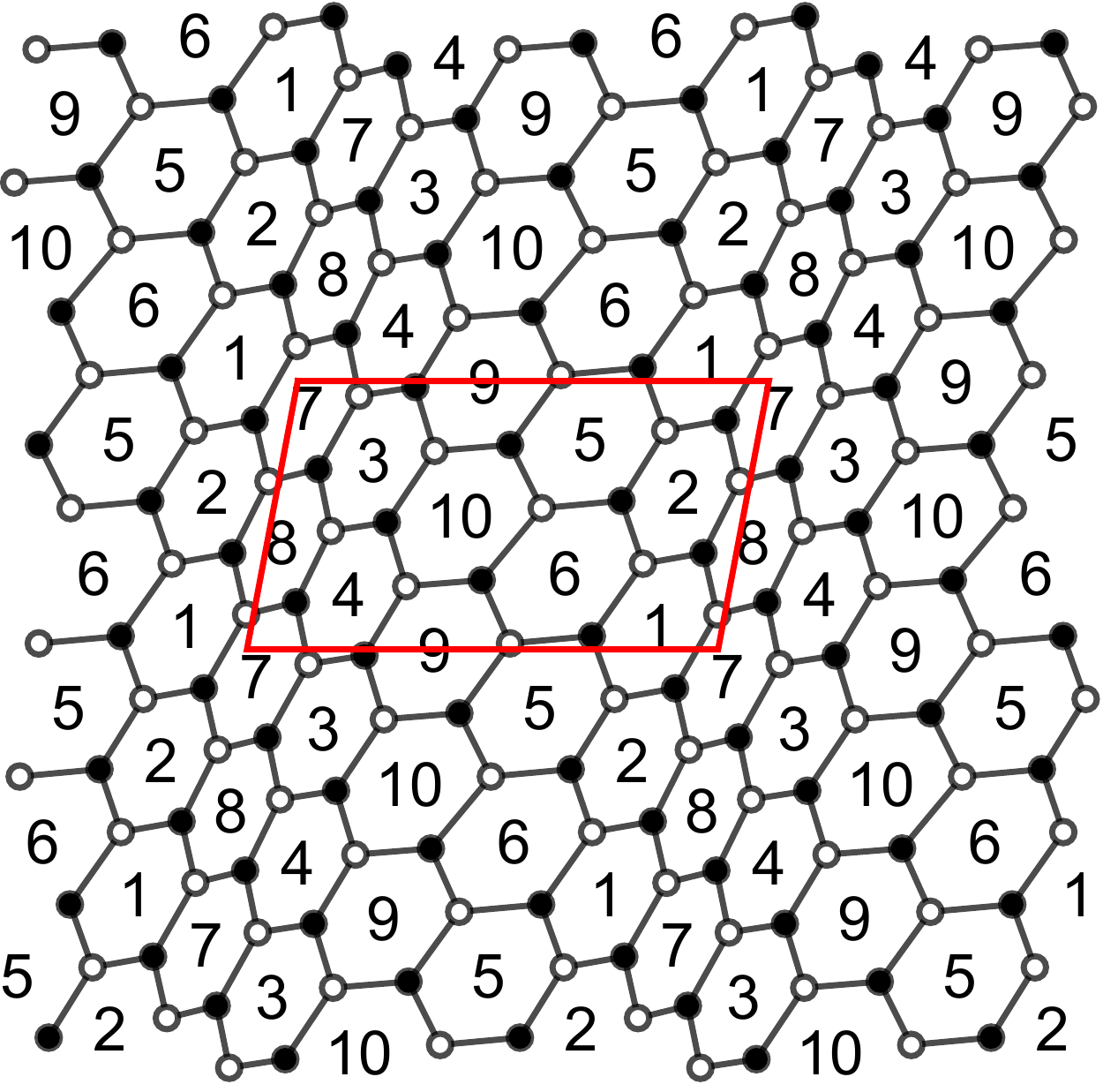};
\includegraphics[width=4cm]{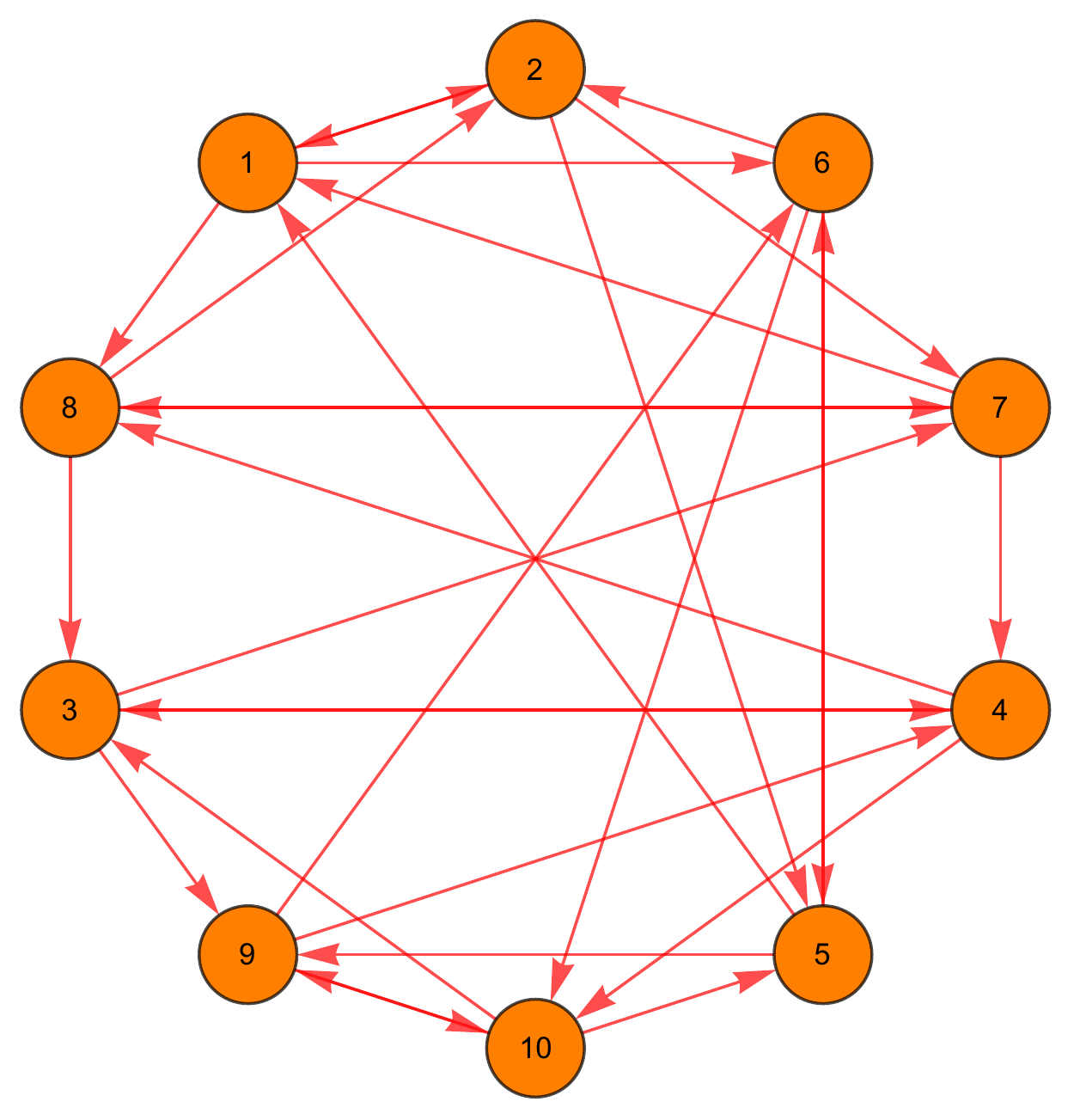}.
\end{equation}
The superpotential is
\begin{eqnarray}
	W&=&X_{16}X_{62}X_{21}+X_{25}X_{51}X_{12}+X_{59}X_{96}X_{65}+X_{56}X_{6,10}X_{10,5}+X_{4,10}X_{10,9}X_{94}\nonumber\\
	&&+X_{39}X_{9,10}X_{10,3}+X_{37}X_{74}X_{43}+X_{34}X_{48}X_{83}+X_{18}X_{87}X_{71}+X_{27}X_{78}X_{82}\nonumber\\
	&&-X_{16}X_{65}X_{51}-X_{25}X_{56}X_{62}-X_{6,10}X_{10,9}X_{96}-X_{59}X_{9,10}X_{10,5}-X_{39}X_{94}X_{43}\nonumber\\
	&&-X_{34}X_{4,10}X_{10,3}-X_{48}X_{87}X_{74}-X_{37}X_{78}X_{83}-X_{18}X_{82}X_{21}-X_{12}X_{27}X_{71}.\nonumber\\
\end{eqnarray}
The perfect matching matrix is
\begin{eqnarray}
P&=&\left(
\tiny{
}
\right).\nonumber\\
\end{eqnarray}
From $G_t$, we can get the GLSM fields associated to each point as shown in (\ref{p4p}), where
\begin{eqnarray}
&&q=\{q_1,q_2\},\ r=\{r_1,\dots,r_{10}\},\ s=\{s_1,\dots,s_{20}\},\ t=\{t_1,\dots,t_5\},\nonumber\\
&&u=\{u_1,\dots,u_{10}\},\ v=\{v_1,\dots,v_{10}\},\ w=\{w_1,\dots,w_5\}.
\end{eqnarray}
From $Q_t$ (and $Q_F$), the mesonic symmetry reads U(1)$^2\times$U(1)$_\text{R}$ and the baryonic symmetry reads U(1)$^4_\text{h}\times$U(1)$^5$, where the subscripts ``R'' and ``h'' indicate R- and hidden symmetries respectively.

The Hilbert series of the toric cone is
\begin{eqnarray}
HS&=&\frac{1}{\left(1-\frac{t_1 t_2}{t_3}\right) \left(1-\frac{t_1 t_2^2}{t_3}\right)
	\left(1-\frac{t_3^3}{t_1^2 t_2^3}\right)}+\frac{1}{(1-t_2 t_3) \left(1-\frac{t_1
		t_2}{t_3}\right) \left(1-\frac{t_3}{t_1 t_2^2}\right)}\nonumber\\
	&&+\frac{1}{(1-t_1) (1-t_2)
	\left(1-\frac{t_3}{t_1 t_2}\right)}+\frac{1}{\left(1-\frac{1}{t_1}\right) (1-t_2)
	\left(1-\frac{t_1 t_3}{t_2}\right)}\nonumber\\
    &&+\frac{1}{\left(1-\frac{1}{t_1}\right)
	\left(1-\frac{t_1}{t_2}\right) (1-t_2 t_3)}+\frac{1}{\left(1-\frac{t_3}{t_1}\right)
	(1-t_2 t_3) \left(1-\frac{t_1}{t_2 t_3}\right)}\nonumber\\
    &&+\frac{1}{\left(1-\frac{t_1}{t_3}\right)
	(1-t_2 t_3) \left(1-\frac{t_3}{t_1 t_2}\right)}+\frac{1}{\left(1-\frac{1}{t_1
		t_3}\right) (1-t_2 t_3) \left(1-\frac{t_1 t_3}{t_2}\right)}\nonumber\\
	&&+\frac{1}{(1-t_1)
		\left(1-\frac{1}{t_2}\right) \left(1-\frac{t_2
			t_3}{t_1}\right)}+\frac{1}{\left(1-\frac{1}{t_2}\right)
	\left(1-\frac{t_2}{t_1}\right) (1-t_1 t_3)}.
\end{eqnarray}
The volume function is then
\begin{equation}
V=\frac{10}{({b_2}+3) (-{b_1}+{b_2}-3) (2 {b_1}+3 {b_2}-9)}.
\end{equation}
Minimizing $V$ yields $V_{\text{min}}=1/10$ at $b_1=1,\ b_2=-1$. Thus, $a_\text{max}=5/2$. Together with the superconformal conditions, we can solve for the R-charges of the bifundamentals, which are $X_I=2/3$ for any $I$, viz, for all the bifundamentals. Hence, the R-charges of GLSM fields are $p_i=2/3$ with others vanishing.

\subsection{Polytope 5: $\mathbb{C}^3/(\mathbb{Z}_2\times\mathbb{Z}_6)$ (1,0,1)(1,0,5)}\label{p5}
The polytope is
\begin{equation}
	\tikzset{every picture/.style={line width=0.75pt}} 
.\label{p5p}
\end{equation}
The brane tiling and the corresponding quiver are
\begin{equation}
\includegraphics[width=4cm]{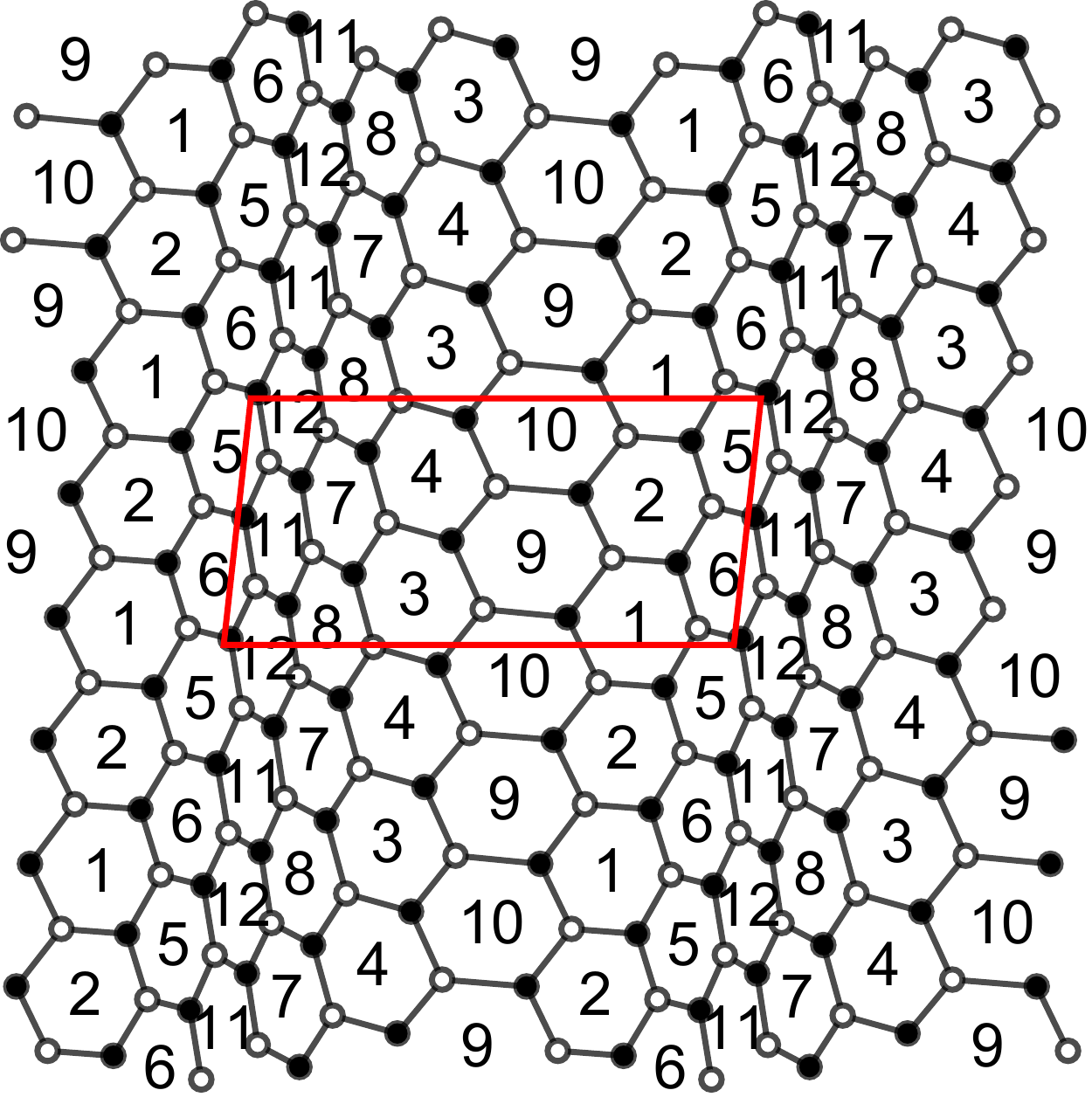};
\includegraphics[width=4cm]{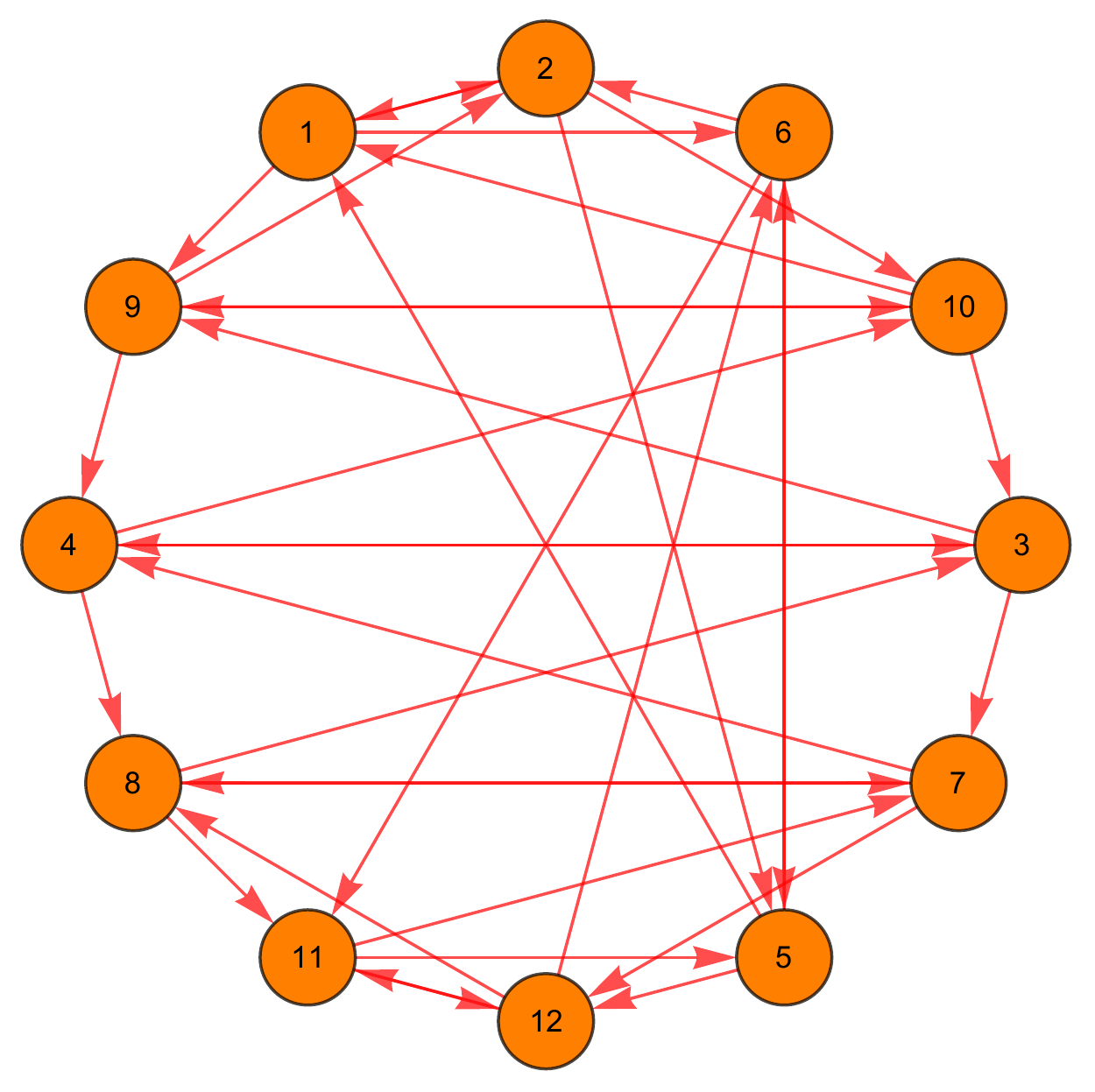}.
\end{equation}
The superpotential is
\begin{eqnarray}
	W&=&X_{1,9}X_{9,2}X_{2,1}+X_{2,10}X_{10,1}X_{1,2}+X_{10,3}X_{3,9}X_{9,10}+X_{9,4}X_{4,10}X_{10,9}\nonumber\\
	&&+X_{3,7}X_{7,4}X_{4,3}+X_{4,8}X_{8,3}X_{3,4}+X_{7,12}X_{12,8}X_{8,7}+X_{8,11}X_{11,7}X_{7,8}\nonumber\\
	&&+X_{11,5}X_{5,12}X_{12,11}+X_{12,6}X_{6,11}X_{11,12}+X_{5,1}X_{1,6}X_{6,5}+X_{6,2}X_{2,5}X_{5,6}\nonumber\\
	&&-X_{1,9}X_{9,10}X_{10,1}-X_{2,10}X_{10,9}X_{9,2}-X_{3,9}X_{9,4}X_{4,3}-X_{4,10}X_{10,3}X_{3,4}\nonumber\\
	&&-X_{8,3}X_{3,7}X_{7,8}-X_{7,4}X_{4,8}X_{8,7}-X_{11,7}X_{7,12}X_{12,11}-X_{12,8}X_{8,11}X_{11,12}\nonumber\\
	&&-X_{5,12}X_{12,6}X_{6,5}-X_{6,11}X_{11,5}X_{5,6}-X_{2,5}X_{5,1}X_{1,2}-X_{1,6}X_{6,2}X_{2,1}.
\end{eqnarray}
The number of perfect matchings is $c=129$, which leads to gigantic $P$, $Q_t$ and $G_t$. Hence, we will not list them here. The GLSM fields associated to each point are shown in (\ref{p5p}), where
\begin{eqnarray}
&&q=\{q_1,q_2\},\ r=\{r_1,\dots,r_{30}\},\ u=\{u_1,\dots,u_{6}\},\ v=\{v_1,v_{15}\},\ w=\{w_1,\dots,w_{20}\},\nonumber\\
&&t=\{t_1,t_2\},\ s=\{s_1,\dots,s_{30}\},\ y=\{y_1,\dots,y_{6}\},\ x=\{x_1,\dots,x_{15}\}.
\end{eqnarray}
The mesonic symmetry reads U(1)$^2\times$U(1)$_\text{R}$ and the baryonic symmetry reads U(1)$^4_\text{h}\times$U(1)$^7$, where the subscripts ``R'' and ``h'' indicate R- and hidden symmetries respectively.

The Hilbert series of the toric cone is
\begin{eqnarray}
HS&=&\frac{1}{(1-t_2) \left(1-\frac{t_1 t_2}{t_3}\right) \left(1-\frac{t_3^2}{t_1
		t_2^2}\right)}+\frac{1}{(1-t_2 t_3) \left(1-\frac{t_1 t_2}{t_3^2}\right)
	\left(1-\frac{t_3^2}{t_1 t_2^2}\right)}\nonumber\\
&&+\frac{1}{\left(1-\frac{1}{t_2}\right)
	\left(1-\frac{t_3^2}{t_1}\right) \left(1-\frac{t_1
		t_2}{t_3}\right)}+ \frac{1}{\left(1-\frac{t_1}{t_3^2}\right) (1-t_2 t_3)
	\left(1-\frac{t_3^2}{t_1 t_2}\right)}\nonumber\\
&&+\frac{1}{(1-t_1) (1-t_2) \left(1-\frac{t_3}{t_1
		t_2}\right)}+\frac{1}{\left(1-\frac{1}{t_1}\right) (1-t_2) \left(1-\frac{t_1
		t_3}{t_2}\right)}\nonumber\\
	&&+\frac{1}{(1-t_1) \left(1-\frac{1}{t_1 t_2}\right) (1-t_2
	t_3)}+ \frac{1}{(1-t_1 t_3) (1-t_2 t_3) \left(1-\frac{1}{t_1 t_2
		t_3}\right)}\nonumber\\
	&&+ \frac{1}{\left(1-\frac{t_1}{t_3}\right) (1-t_2 t_3) \left(1-\frac{t_3}{t_1
		t_2}\right)}+\frac{1}{\left(1-\frac{1}{t_1 t_3}\right) (1-t_2 t_3) \left(1-\frac{t_1
		t_3}{t_2}\right)}\nonumber\\
	&&+ \frac{1}{\left(1-\frac{1}{t_1}\right)
	\left(1-\frac{1}{t_2}\right) (1-t_1 t_2 t_3)}+\frac{1}{\left(1-\frac{1}{t_2}\right)
	(1-t_1 t_2) \left(1-\frac{t_3}{t_1}\right)}.
\end{eqnarray}
The volume function is then
\begin{equation}
V=\frac{6}{({b_2}+3) (-{b_1}+{b_2}-3) ({b_1}+2 {b_2}-6)}.
\end{equation}
Minimizing $V$ yields $V_{\text{min}}=1/12$ at $b_1=2,\ b_2=-1$. Thus, $a_\text{max}=3$. Together with the superconformal conditions, we can solve for the R-charges of the bifundamentals, which are $X_I=2/3$ for any $I$, viz, for all the bifundamentals. Hence, the R-charges of GLSM fields are $p_i=2/3$ with others vanishing. We find that all the triangles can give the same R-charge vectors.

\section{Nineteen Quadrilaterals}\label{quadrilaterals}
Now moving on to quadrilaterals, we should recall that each polytope in \S\ref{quadrilaterals}-\S\ref{hexagons} corresponds to more than one dimer models and toric quivers. In the main context, we will just list one for each polytope.

\subsection{Polytope 6: $L^{3,3,1}$}\label{p6}
The polytope is
\begin{equation}
	\tikzset{every picture/.style={line width=0.75pt}} 
.\label{p6p}
\end{equation}
The brane tiling are the corresponding quiver are
\begin{equation}
\includegraphics[width=4cm]{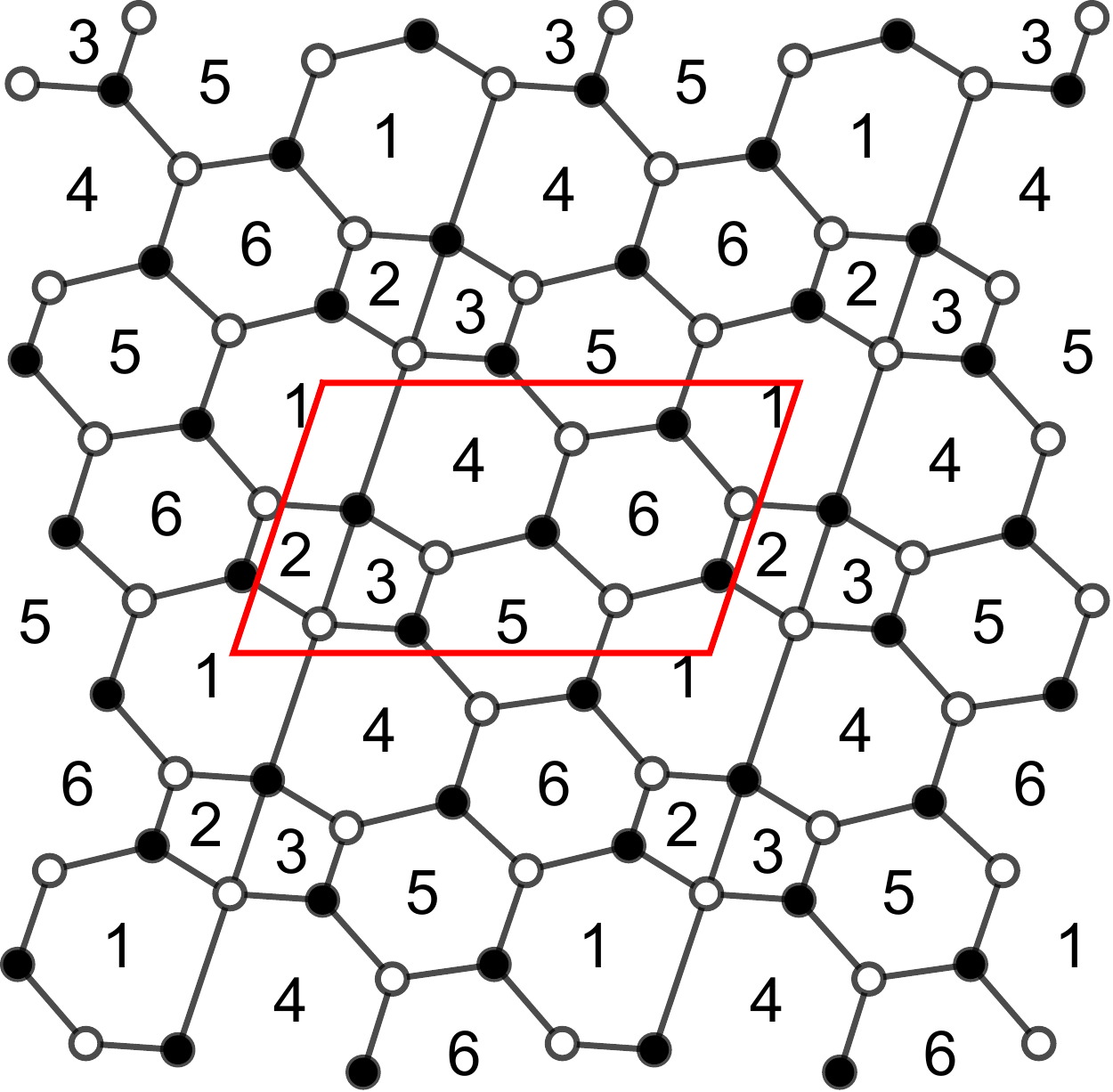};
\includegraphics[width=4cm]{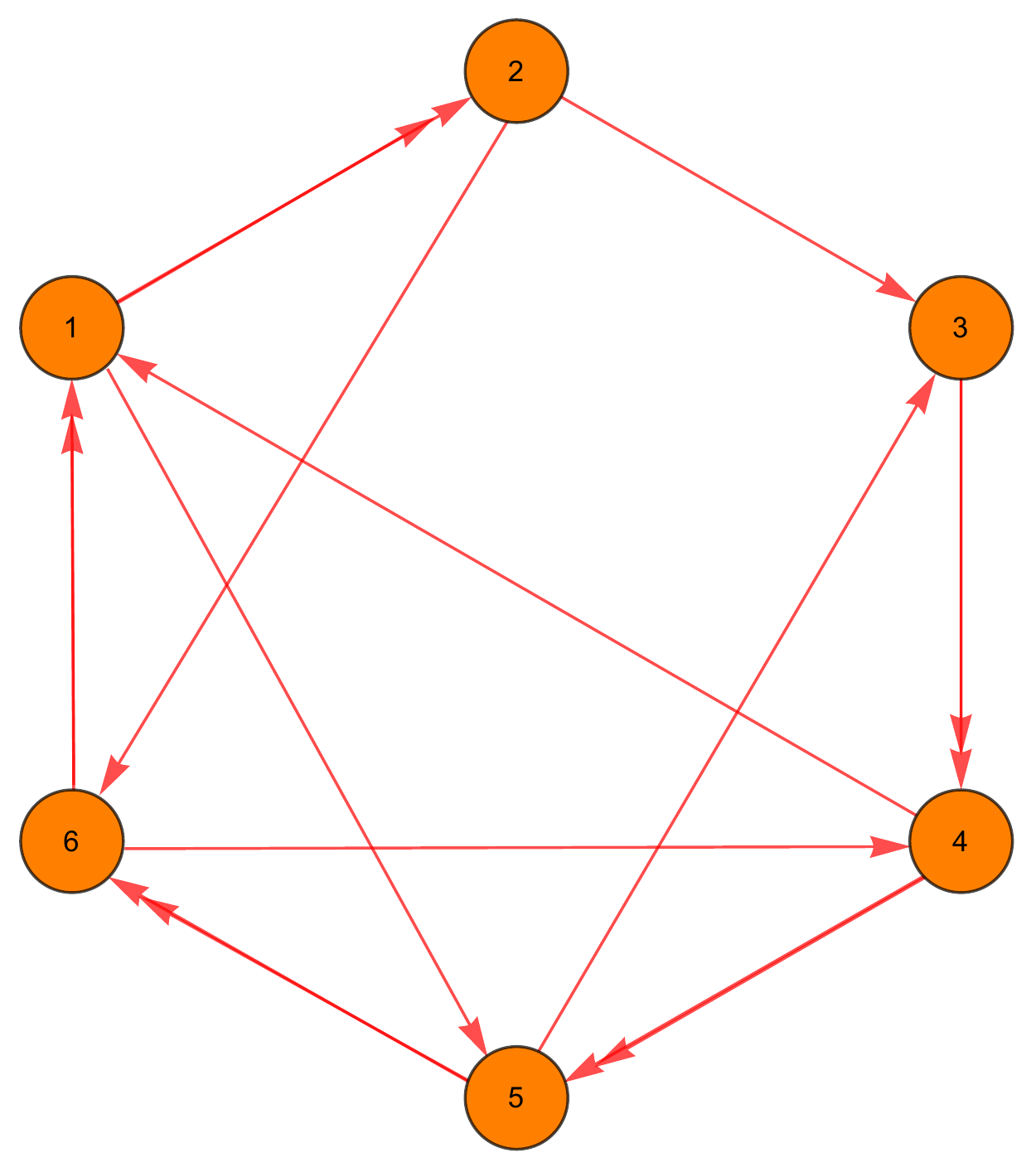}.
\end{equation}
The superpotential is
\begin{eqnarray}
	W&=&X_{15}X^1_{56}X^2_{61}+X_{26}X^1_{61}X^2_{12}+X_{64}X^1_{45}X^2_{56}+X_{53}X^1_{34}X^2_{45}+X^1_{12}X_{23}X^2_{34}X_{41}\nonumber\\
	&&-X^2_{61}X^1_{12}X_{26}-X^2_{56}X^1_{61}X_{15}-X^2_{45}X^1_{56}X_{64}-X^2_{34}X^1_{45}X_{53}-X_{23}X^1_{34}X_{41}X^2_{12}.\nonumber\\
\end{eqnarray}
The perfect matching matrix is
\begin{equation}
P=\left(
\tiny{
}
\right).
\end{equation}
From $G_t$, we can get the GLSM fields associated to each point as shown in (\ref{p6p}), where
\begin{equation}
r=\{r_1,\dots,r_6\},\ s=\{s_1,\dots,s_8\}.
\end{equation}
From $Q_t$ (and $Q_F$), the mesonic symmetry reads SU(2)$\times$U(1)$\times$U(1)$_\text{R}$ and the baryonic symmetry reads U(1)$^4_\text{h}\times$U(1), where the subscripts ``R'' and ``h'' indicate R- and hidden symmetries respectively.

The Hilbert series of the toric cone is
\begin{eqnarray}
HS&=&\frac{1}{(1-t_2) \left(1-\frac{t_1}{t_2^2 {t_3}^2}\right)
	\left(1-\frac{{t_2}
		{t_3}}{{t_1}}\right)}+\frac{1}{\left(1-\frac{1}{{t_2}}\right)
	\left(1-\frac{1}{{t_1} {t_2}}\right) \left(1-\frac{{t_1}
		{t_2}^2}{{t_3}}\right)}\nonumber\\
	&&+\frac{1}{\left(1-\frac{1}{{t_2}}\right)
	\left(1-\frac{{t_1} {t_2}^3}{{t_3}^2}\right)
	\left(1-\frac{{t_3}}{{t_1}
		{t_2}^2}\right)}+\frac{1}{\left(1-\frac{1}{{t_1}}\right) (1-{t_2})
	\left(1-\frac{{t_1}}{{t_2}
		{t_3}}\right)}\nonumber\\
	&&+\frac{1}{\left(1-\frac{1}{{t_2}}\right) (1-{t_1} {t_2})
	\left(1-\frac{{1}}{{t_1}{t_3}}\right)}+\frac{1}{(1-{t_1}) (1-{t_2})
	\left(1-\frac{1}{{t_1} {t_2}{t_3}}\right)}.
\end{eqnarray}
The volume function is then
\begin{equation}
V=\frac{3 (4 {b_1}+2 {b_2}+21)}{({b_1}+3) ({b_1}+{b_2}+3) ({b_1}+3
	{b_2}-6) ({b_1}-2 ({b_2}+3))}.
\end{equation}
Minimizing $V$ yields $V_{\text{min}}=\frac{4}{405}(9+4\sqrt{6})$ at $b_1=(-6+3\sqrt{6})/2$, $b_2=0$. Thus, $a_\text{max}=\frac{27}{16}(-9+4\sqrt{6})$. Together with the superconformal conditions, we can solve for the R-charges of the bifundamentals. Then the R-charges of GLSM fields should satisfy
\begin{eqnarray}
	&&\left(p_3+5 p_4\right) p_2^2+\left(p_3^2+6 p_4 p_3-2 p_3+5 p_4^2-10 p_4\right) p_2\nonumber\\
	&=&-3 p_4
	p_3^2-3 p_4^2 p_3+6 p_4 p_3-8 \sqrt{6}+18
\end{eqnarray}
constrained by $\sum\limits_{i=1}^4p_i=2$ and $0<p_i<2$, with others vanishing.

\subsection{Polytope 7: $L^{3,3,2}$}\label{p7}
The polytope is
\begin{equation}
	\tikzset{every picture/.style={line width=0.75pt}} 
.\label{p7p}
\end{equation}
The brane tiling and the corresponding quiver are
\begin{equation}
\includegraphics[width=4cm]{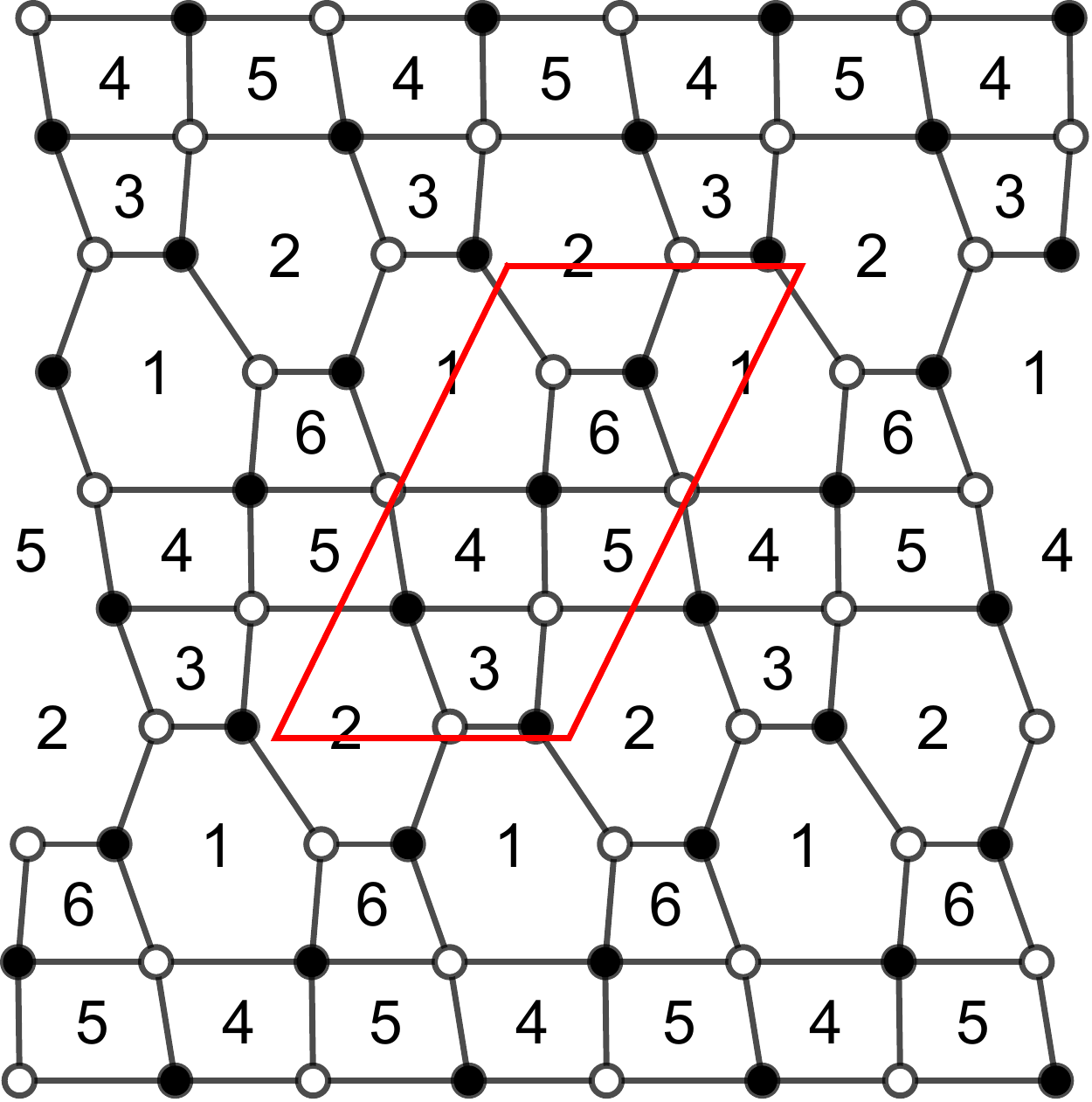};
\includegraphics[width=4cm]{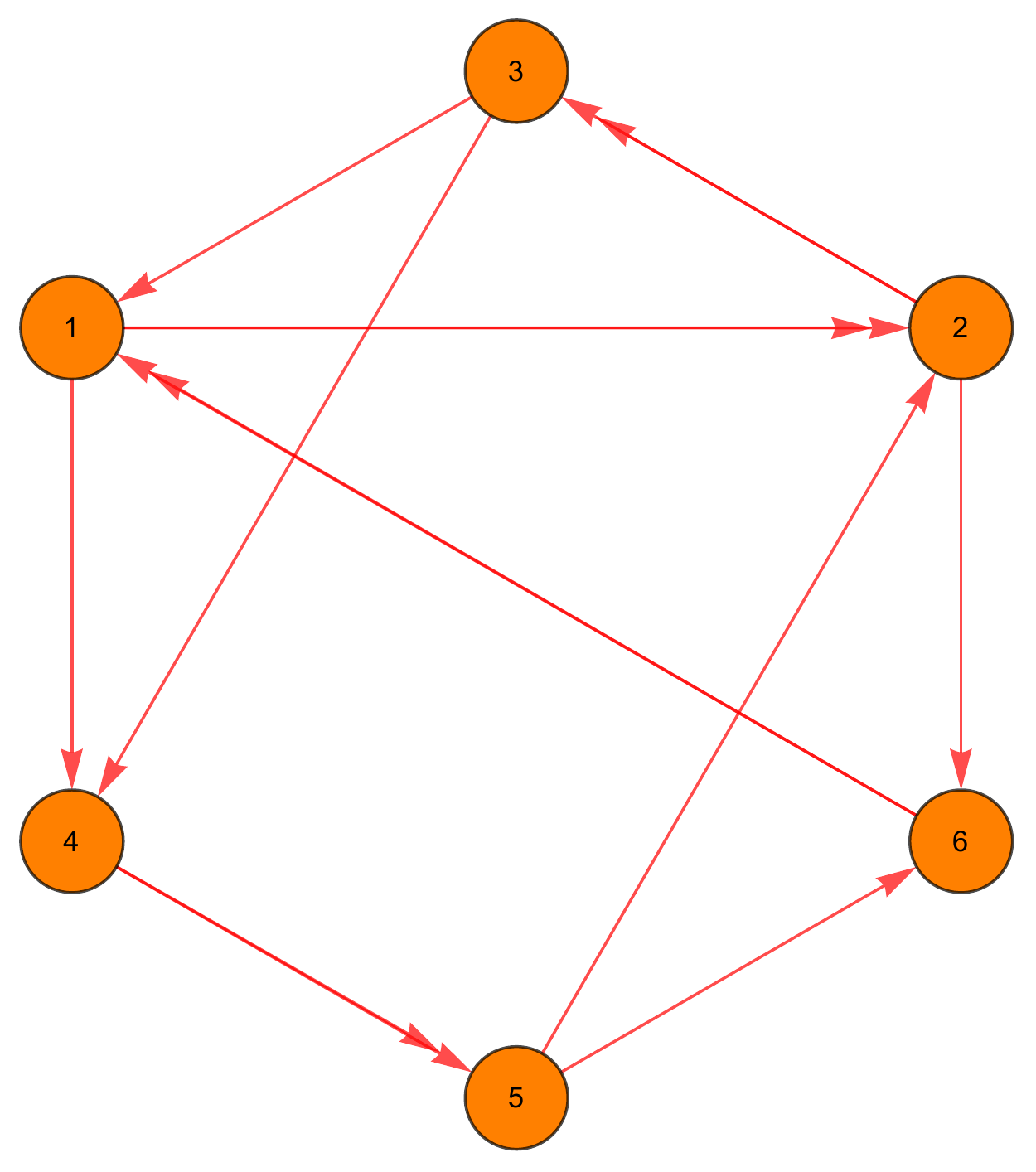}.
\end{equation}
The superpotential is
\begin{eqnarray}
	W&=&X^1_{12}X_{26}X^2_{61}+X^2_{12}X^1_{23}X_{31}+X_{14}X^2_{45}X_{56}X^1_{61}+X_{34}X^1_{45}X_{52}X^2_{23}\nonumber\\
	&&-X^2_{12}X_{26}X^1_{61}-X^1_{12}X^2_{23}X_{31}-X_{14}X^1_{45}X_{56}X^2_{61}-X_{34}X^2_{45}X_{52}X^1_{23}.
\end{eqnarray}
The perfect matching matrix is
\begin{equation}
P=\left(
\tiny{
}
\right).
\end{equation}
From $G_t$, we can get the GLSM fields associated to each point as shown in (\ref{p7p}), where
\begin{equation}
r=\{r_1,\dots,r_6\},\ s=\{s_1,\dots,s_7\}.
\end{equation}
From $Q_t$ (and $Q_F$), the mesonic symmetry reads SU(2)$\times$U(1)$\times$U(1)$_\text{R}$ and the baryonic symmetry reads U(1)$^4_\text{h}\times$U(1), where the subscripts ``R'' and ``h'' indicate R- and hidden symmetries respectively.

The Hilbert series of the toric cone is
\begin{eqnarray}
HS&=&\frac{1}{(1-t_2) (1-t_1 t_2) \left(1-\frac{t_3}{t_1
		t_2^2}\right)}+\frac{1}{\left(1-\frac{1}{t_2}\right) \left(1-\frac{1}{t_1
		t_2}\right) \left(1-t_1 t_2^2 t_3\right)}\nonumber\\
	&&+\frac{1}{(1-t_2) \left(1-\frac{t_1
		t_2^2}{t_3}\right) \left(1-\frac{t_3^2}{t_1
		t_2^3}\right)}+\frac{1}{\left(1-\frac{1}{t_2}\right)
	\left(1-\frac{t_1}{t_3}\right) \left(1-\frac{t_2 t_3^2}{t_1}\right)}\nonumber\\
	&&+ \frac{1}{(1-t_2)
	\left(1-\frac{1}{t_1 t_2}\right) (1-t_1 t_3)}+\frac{1}{\left(1-\frac{1}{t_2}\right)
	(1-t_1 t_2) \left(1-\frac{t_3}{t_1}\right)}.
\end{eqnarray}
The volume function is then
\begin{equation}
V=\frac{6 ({b_1}+{b_2}+12)}{({b_1}+3) ({b_1}-{b_2}-6) ({b_1}+2
	{b_2}+3) ({b_1}+3 {b_2}-6)}.
\end{equation}
Minimizing $V$ yields $V_{\text{min}}=\frac{1}{648}(63+11\sqrt{33})$ at $b_1=\frac{3}{2}(-5+\sqrt{33})$, $b_2=0$. Thus, $a_\text{max}=\frac{1}{4}(-1701+297\sqrt{33})$. Together with the superconformal conditions, we can solve for the R-charges of the bifundamentals. Then the R-charges of GLSM fields should satisfy
\begin{eqnarray}
&&\left(3 p_3+3 p_4\right) p_2^2+\left(3 p_3^2+8 p_4 p_3-6 p_3+3 p_4^2-6 p_4\right) p_2\nonumber\\
&=&-4 p_4p_3^2-4 p_4^2 p_3+8 p_4 p_3-88 \sqrt{33}+504
\end{eqnarray}
constrained by $\sum\limits_{i=1}^4p_i=2$ and $0<p_i<2$, with others vanishing.

\subsection{Polytope 8: $Y^{3,0}$}\label{p8}
The polytope is
\begin{equation}
	\tikzset{every picture/.style={line width=0.75pt}} 
}
\right).
\end{equation}
From $G_t$, we can get the GLSM fields associated to each point as shown in (\ref{p8p}), where
\begin{equation}
r=\{r_1,\dots,r_6\},\ s=\{s_1,\dots,s_6\}.
\end{equation}
From $Q_t$ (and $Q_F$), the mesonic symmetry reads SU(2)$\times$U(1)$\times$U(1)$_\text{R}$ and the baryonic symmetry reads U(1)$^4_\text{h}\times$U(1), where the subscripts ``R'' and ``h'' indicate R- and hidden symmetries respectively.

The Hilbert series of the toric cone is
\begin{eqnarray}
HS&=&\frac{1}{(1-t_2) (1-t_1 t_2) \left(1-\frac{t_3}{t_1
		t_2^2}\right)}+\frac{1}{\left(1-\frac{1}{t_2}\right) \left(1-t_1 t_2^2\right)
	\left(1-\frac{t_3}{t_1 t_2}\right)}\nonumber\\
&&+\frac{1}{(1-t_2) \left(1-\frac{t_1
		t_2^2}{t_3}\right) \left(1-\frac{t_3^2}{t_1
		t_2^3}\right)}+ \frac{1}{\left(1-\frac{1}{t_2}\right) \left(1-\frac{1}{t_1
		t_2^2}\right) \left(1-t_1 t_2^3 t_3\right)}\nonumber\\
	&&+ \frac{1}{\left(1-\frac{1}{t_2}\right)
	\left(1-\frac{t_3^2}{t_1}\right) \left(1-\frac{t_1 t_2}{t_3}\right)}+\frac{1}{(1-t_2)
	\left(1-\frac{1}{t_1 t_2}\right) (1-t_1 t_3)}.
\end{eqnarray}
The volume function is then
\begin{equation}
V=\frac{81}{({b_1}-6) ({b_1}+3) ({b_1}+3 {b_2}-6) ({b_1}+3
	{b_2}+3)}.
\end{equation}
Minimizing $V$ yields $V_{\text{min}}=16/81$ at $b_1=3/2$, $b_2=0$. Thus, $a_\text{max}=81/64$. Together with the superconformal conditions, we can solve for the R-charges of the bifundamentals, which are $X_I=1/2$ for any $I$, viz, for all the bifundamentals. Hence, the R-charges of GLSM fields are $p_i=1/2$ with others vanishing.

\subsection{Polytope 9: SPP$/(\mathbb{Z}_2\times\mathbb{Z}_2)$ (1,0,0,1)(0,1,1,0)}\label{p9}
The polytope is
\begin{equation}
	\tikzset{every picture/.style={line width=0.75pt}} 
.\label{p9p}
\end{equation}
The brane tiling and the corresponding quiver are
\begin{equation}
\includegraphics[width=4cm]{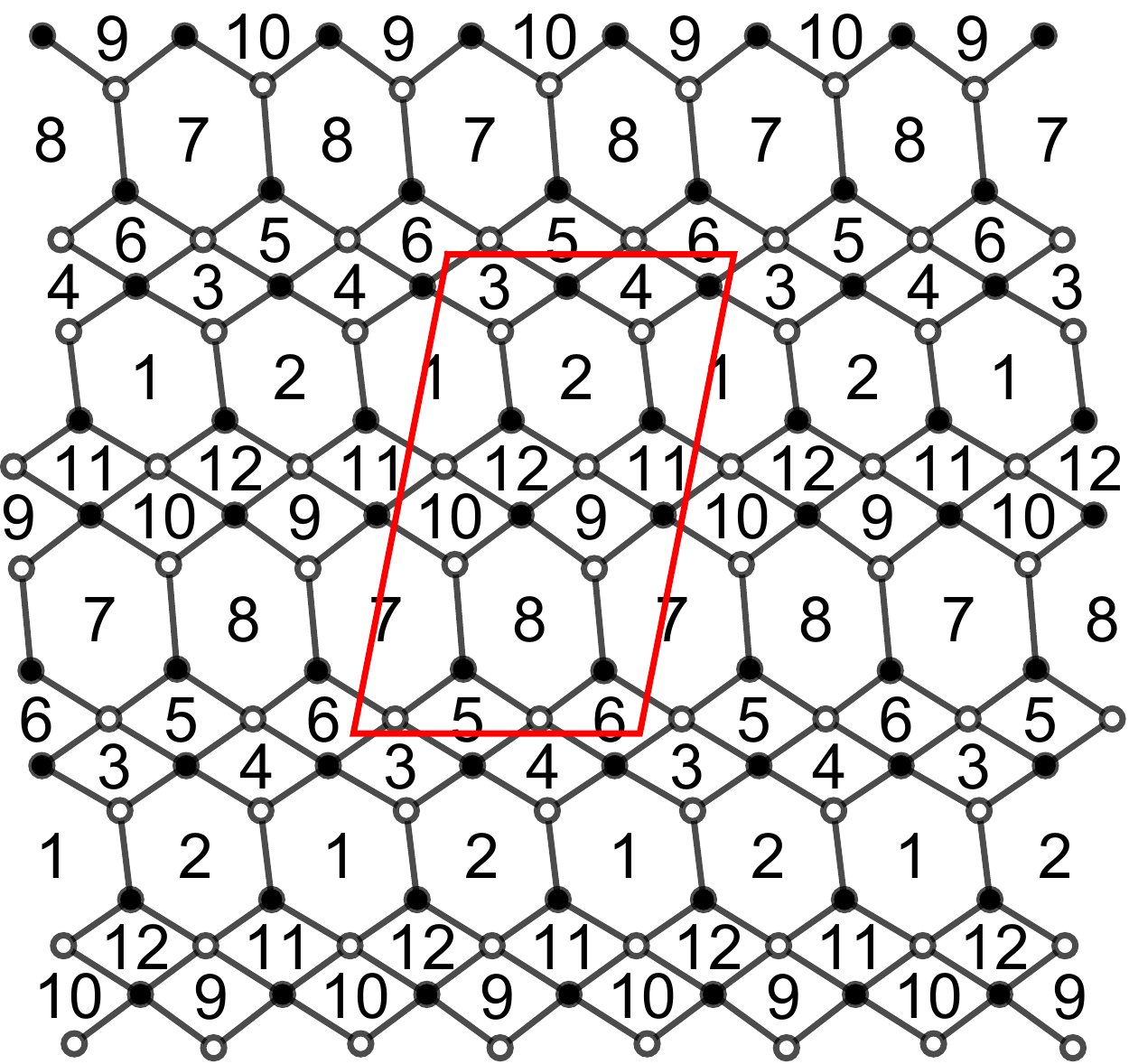};
\includegraphics[width=4cm]{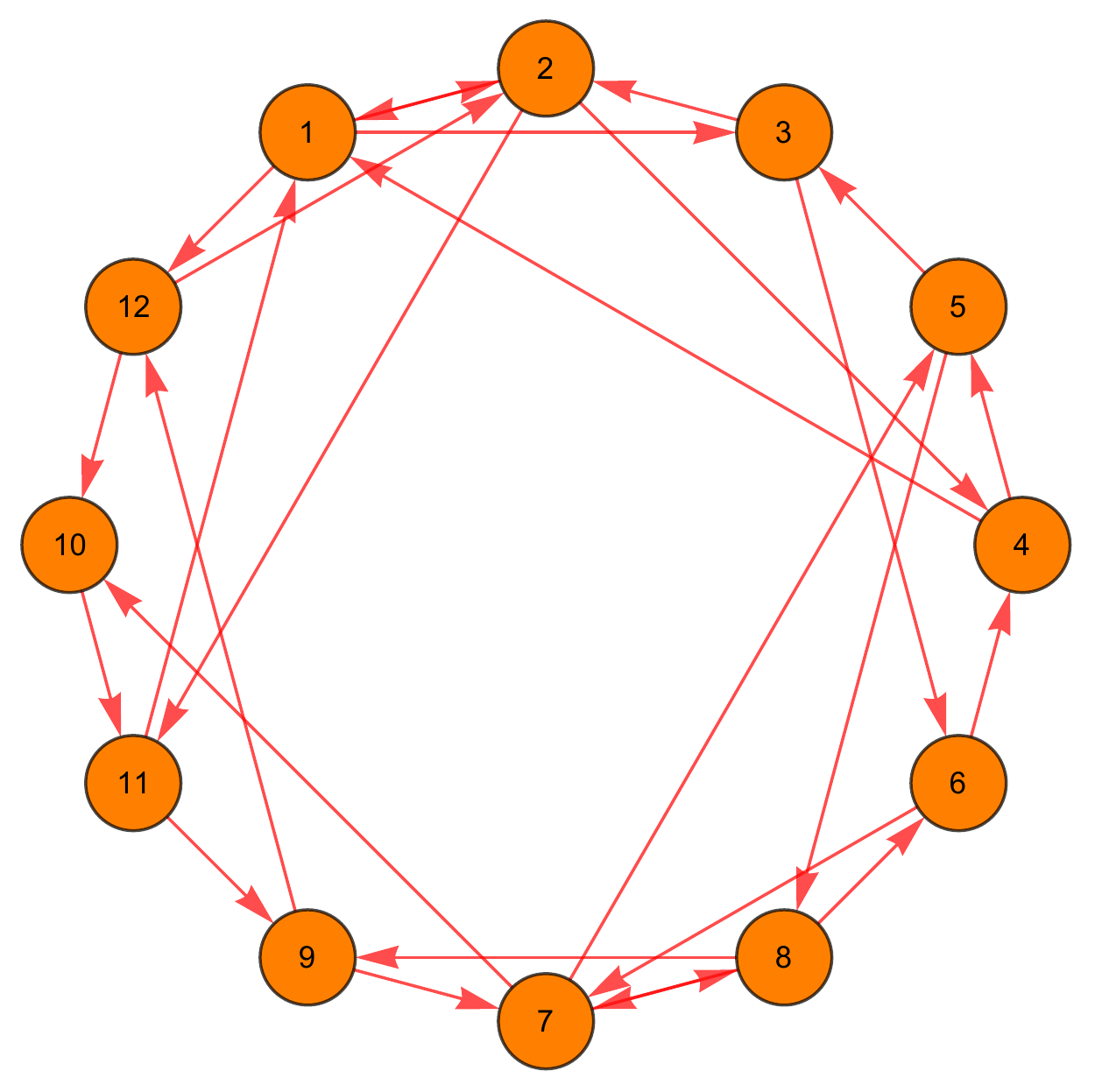}.
\end{equation}
The superpotential is
\begin{eqnarray}
W&=&X_{1,3}X_{3,2}X_{2,1}+X_{2,4}X_{4,1}X_{1,2}+X_{6,7}X_{7,5}X_{5,3}X_{3,6}+X_{5,8}X_{8,6}X_{6,4}X_{4,5}\nonumber\\
&&+X_{8,9}X_{9,7}X_{7,8}+X_{7,10}X_{10,8}X_{8,7}+X_{12,2}X_{2,11}X_{11,9}X_{9,12}+X_{10,11}X_{11,1}X_{1,12}X_{12,10}\nonumber\\
&&-X_{2,4}X_{4,5}X_{5,3}X_{3,2}-X_{1,3}X_{3,6}X_{6,4}X_{4,1}-X_{7,5}X_{5,8}X_{8,7}-X_{8,6}X_{6,7}X_{7,8}\nonumber\\
&&-X_{7,10}X_{10,11}X_{11,9}X_{9,7}-X_{8,9}X_{9,12}X_{12,10}X_{10,8}-X_{2,11}X_{11,1}X_{1,2}-X_{1,12}X_{12,2}X_{2,1}.\nonumber\\
\end{eqnarray}
The number of perfect matchings is $c=84$, which leads to gigantic $P$, $Q_t$ and $G_t$. Hence, we will not list them here. The GLSM fields associated to each point are shown in (\ref{p9p}), where
\begin{eqnarray}
&&q=\{q_1,q_2\},\ r=\{r_1,\dots,r_{30}\},\ s=\{s_1,\dots,s_{30}\},\ t=\{t_1,\dots,t_4\},\nonumber\\
&&u=\{u_1,\dots,u_6\},\ v=\{v_1,\dots,v_4\},\ w=\{w_1,w_2\},\ x=\{x_1,x_2\}.
\end{eqnarray}
The mesonic symmetry reads U(1)$^2\times$U(1)$_\text{R}$ and the baryonic symmetry reads U(1)$^4_\text{h}\times$U(1)$^7$, where the subscripts ``R'' and ``h'' indicate R- and hidden symmetries respectively.

The Hilbert series of the toric cone is
\begin{eqnarray}
HS&=&\frac{1}{(1-t_2) \left(1-\frac{t_1 t_2}{t_3}\right) \left(1-\frac{t_3^2}{t_1
		t_2^2}\right)}+\frac{1}{(1-t_2 t_3) \left(1-\frac{t_1 t_2}{t_3^2}\right)
	\left(1-\frac{t_3^2}{t_1 t_2^2}\right)}\nonumber\\
&&+\frac{1}{\left(1-\frac{1}{t_2}\right)
	\left(1-\frac{t_3^2}{t_1}\right) \left(1-\frac{t_1
		t_2}{t_3}\right)}+\frac{1}{\left(1-\frac{t_1}{t_3^2}\right) (1-t_2 t_3)
	\left(1-\frac{t_3^2}{t_1 t_2}\right)}\nonumber\\
&&+\frac{1}{(1-t_2) \left(1-\frac{1}{t_1
		t_2}\right) (1-t_1 t_3)}+\frac{1}{\left(1-\frac{1}{t_1}\right) (1-t_1 t_2)
	\left(1-\frac{t_3}{t_2}\right)}\nonumber\\ 
&&+\frac{1}{(1-t_1) (1-t_2) \left(1-\frac{t_3}{t_1
		t_2}\right)}+\frac{1}{(1-t_1) \left(1-\frac{1}{t_1 t_2}\right) (1-t_2
	t_3)}\nonumber\\
&&+\frac{1}{\left(1-\frac{t_1}{t_3}\right) (1-t_2 t_3) \left(1-\frac{t_3}{t_1
		t_2}\right)}+\frac{1}{\left(1-\frac{1}{t_1}\right) \left(1-\frac{1}{t_2}\right)
	(1-t_1 t_2 t_3)}+\nonumber\\ 
&&\frac{1}{\left(1-\frac{1}{t_1 t_3}\right)
	\left(1-\frac{t_3}{t_2}\right) (1-t_1 t_2 t_3)}+\frac{1}{\left(1-\frac{1}{t_2}\right)
	(1-t_1 t_2) \left(1-\frac{t_3}{t_1}\right)}.
\end{eqnarray}
The volume function is then
\begin{equation}
V=-\frac{2 ({b_2}-9)}{({b_2}-3) ({b_2}+3) ({b_1}+{b_2}+3) ({b_1}+2
	{b_2}-6)}.
\end{equation}
Minimizing $V$ yields $V_{\text{min}}=\sqrt{3}/18$ at $b_1=3\sqrt{3}-3$, $b_2=3-2\sqrt{3}$. Thus, $a_\text{max}=3\sqrt{3}/2$. Together with the superconformal conditions, we can solve for the R-charges of the bifundamentals. Then the R-charges of GLSM fields are
\begin{equation}
p_1=0.557091,\ p_2=p_3=0.5,\ p_4=0.442909
\end{equation}
with others vanishing.

\subsection{Polytope 10: $L^{2,3,2}/\mathbb{Z}_2$ (1,0,0,1)}\label{p10}
The polytope is
\begin{equation}
	\tikzset{every picture/.style={line width=0.75pt}} 
.\label{p10p}
\end{equation}
The brane tiling and the corresponding quiver are
\begin{equation}
\includegraphics[width=4cm]{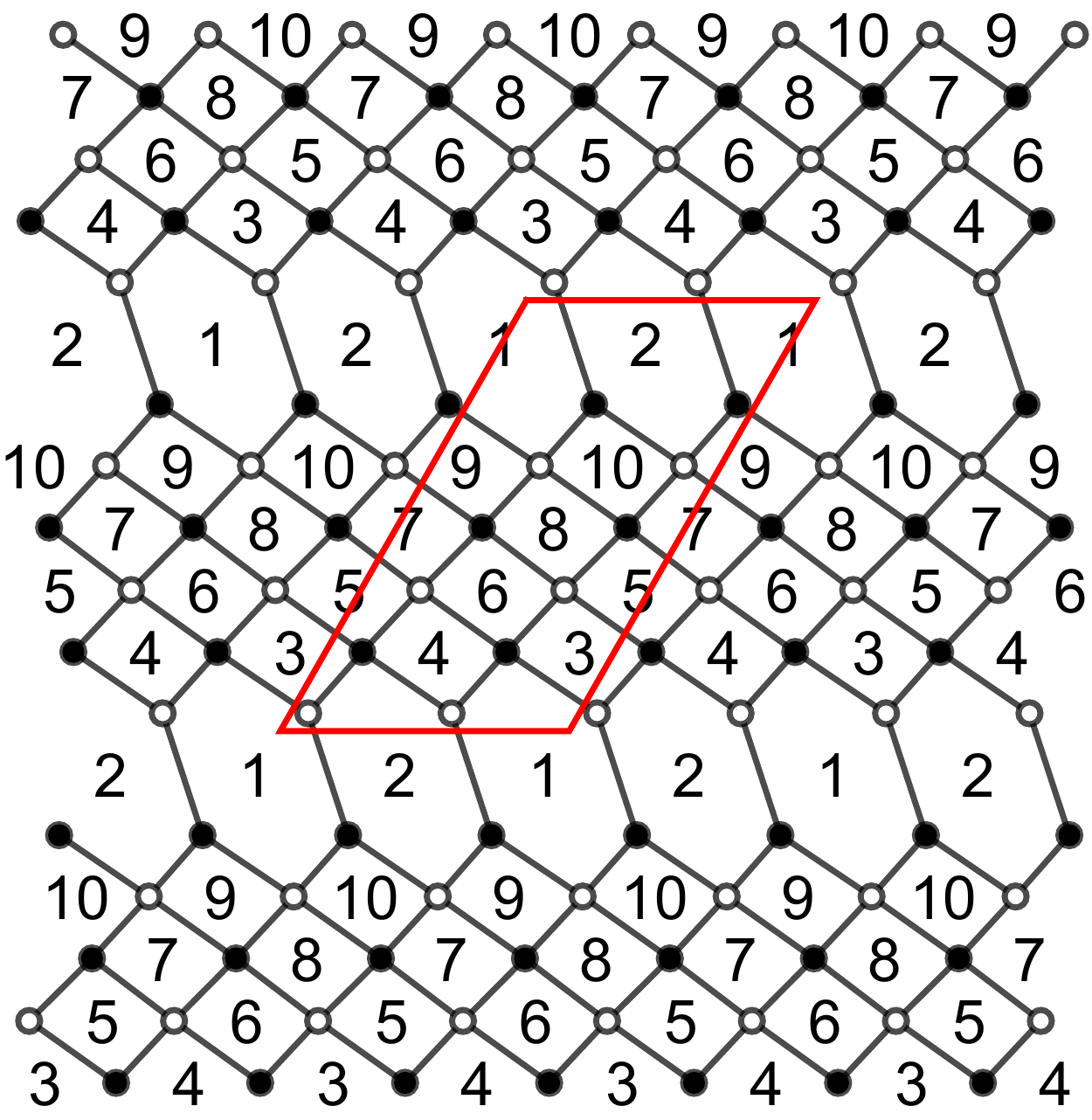};
\includegraphics[width=4cm]{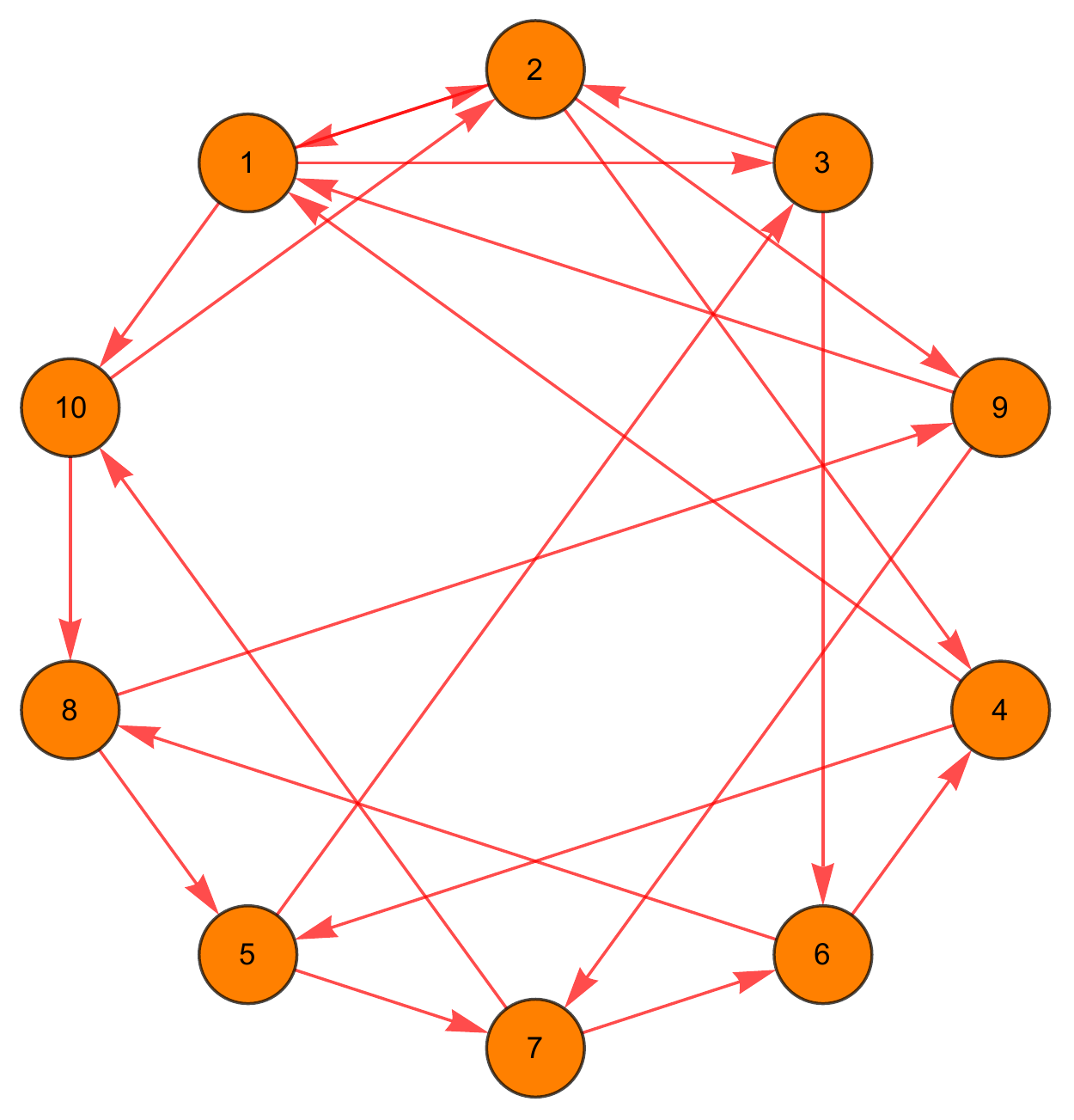}.
\end{equation}
The superpotential is
\begin{eqnarray}
W&=&X_{13}X_{32}X_{21}+X_{24}X_{41}X_{12}+X_{68}X_{85}X_{53}X_{36}\nonumber\\
&&+X_{57}X_{76}X_{64}X_{45}+X_{10,2}X_{29}X_{97}X_{7,10}+X_{91}X_{1,10}X_{10,8}X_{89}\nonumber\\
&&-X_{13}X_{36}X_{64}X_{41}-X_{24}X_{45}X_{53}X_{32}-X_{57}X_{7,10}X_{10,8}X_{85}\nonumber\\
&&-X_{68}X_{89}X_{97}X_{76}-X_{29}X_{91}X_{12}-X_{1,10}X_{10,2}X_{21}.
\end{eqnarray}
The perfect matching matrix is
\begin{eqnarray}
P&=&\left(
\tiny{
}
\right).
\end{eqnarray}
From $G_t$, we can get the GLSM fields associated to each point as shown in (\ref{p10p}), where
\begin{eqnarray}
&&q=\{q_1,q_2\},\ r=\{r_1,\dots,r_{19}\},\ t=\{t_1,\dots,t_{3}\},\nonumber\\
&&v=\{v_1,v_2\},\ s=\{s_1,\dots,s_{10}\},\ u=\{u_1,\dots,u_{3}\}.
\end{eqnarray}
From $Q_t$ (and $Q_F$), the mesonic symmetry reads U(1)$^2\times$U(1)$_\text{R}$ and the baryonic symmetry reads U(1)$^4_\text{h}\times$U(1)$^5$, where the subscripts ``R'' and ``h'' indicate R- and hidden symmetries respectively.

The Hilbert series of the toric cone is
\begin{eqnarray}
HS&=&\frac{1}{\left(1-\frac{t_1 t_2}{t_3}\right) \left(1-\frac{t_1 t_2^2}{t_3}\right)
	\left(1-\frac{t_3^3}{t_1^2 t_2^3}\right)}+\frac{1}{(1-t_2 t_3) \left(1-\frac{t_1
		t_2}{t_3}\right) \left(1-\frac{t_3}{t_1 t_2^2}\right)}\nonumber\\
	&&+\frac{1}{(1-t_2)
	\left(1-\frac{1}{t_1 t_2}\right) (1-t_1 t_3)}+\frac{1}{\left(1-\frac{1}{t_1}\right)
	(1-t_1 t_2) \left(1-\frac{t_3}{t_2}\right)}\nonumber\\
&&+\frac{1}{(1-t_1) (1-t_2)
	\left(1-\frac{t_3}{t_1 t_2}\right)}+\frac{1}{(1-t_1) \left(1-\frac{1}{t_1 t_2}\right)
	(1-t_2 t_3)}\nonumber\\
&&+\frac{1}{\left(1-\frac{t_1}{t_3}\right) (1-t_2 t_3) \left(1-\frac{t_3}{t_1
		t_2}\right)}+ \frac{1}{\left(1-\frac{1}{t_1}\right) \left(1-\frac{1}{t_2}\right)
	(1-t_1 t_2 t_3)}\nonumber\\
&&+\frac{1}{\left(1-\frac{1}{t_1 t_3}\right)
	\left(1-\frac{t_3}{t_2}\right) (1-t_1 t_2 t_3)}+\frac{1}{\left(1-\frac{1}{t_2}\right)
	(1-t_1 t_2) \left(1-\frac{t_3}{t_1}\right)}.
\end{eqnarray}
The volume function is then
\begin{equation}
V=-\frac{2 ({b_2}-15)}{({b_2}-3) ({b_2}+3) ({b_1}+{b_2}+3) (2
	{b_1}+3 {b_2}-9)}.
\end{equation}
Minimizing $V$ yields $V_{\text{min}}=(10+7\sqrt{7})/243$ at $b_1=(5\sqrt{7}-11)/2$, $b_2=5+2\sqrt{7}$. Thus, $a_\text{max}=(-10+7\sqrt{7})/4$. Together with the superconformal conditions, we can solve for the R-charges of the bifundamentals. Then the R-charges of GLSM fields should satisfy
\begin{eqnarray}
&&\left(81 p_2+81 p_4\right) p_3^2+\left(81 p_2^2+162 p_4 p_2-162 p_2+81 p_4^2-162 p_4\right)
p_3\nonumber\\
&=&-54 p_4 p_2^2-54 p_4^2 p_2+108 p_4 p_2-28 \sqrt{7}+40
\end{eqnarray}
constrained by $\sum\limits_{i=1}^4p_i=2$ and $0<p_i<2$, with others vanishing.

\subsection{Polytope 11: dP$_1/\mathbb{Z}_2$ (1,0,0,1)}\label{p11}
The polytope is
\begin{equation}
	\tikzset{every picture/.style={line width=0.75pt}} 
.\label{p11p}
\end{equation}
The brane tiling and the corresponding quiver are
\begin{equation}
\includegraphics[width=4cm]{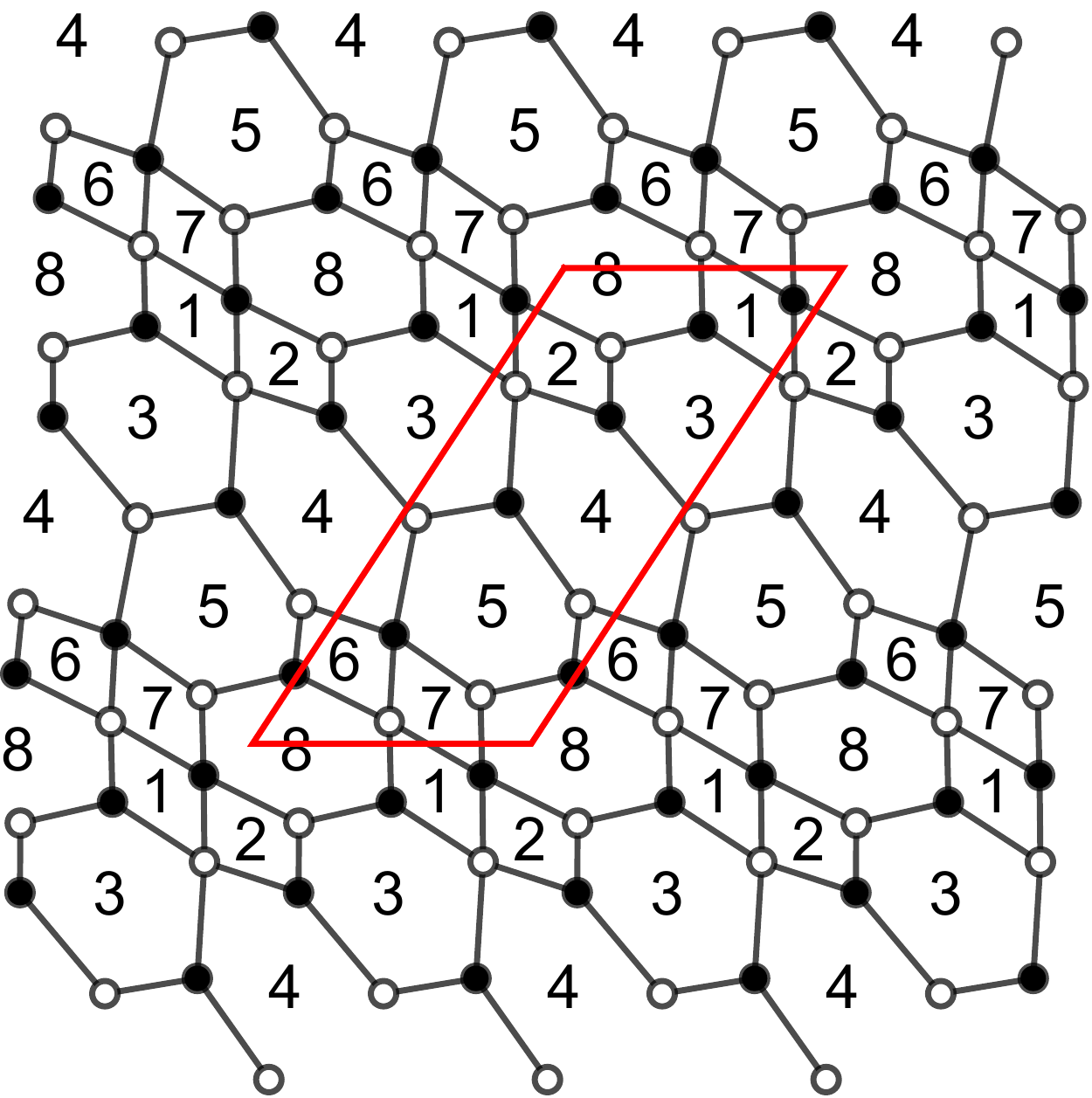};
\includegraphics[width=4cm]{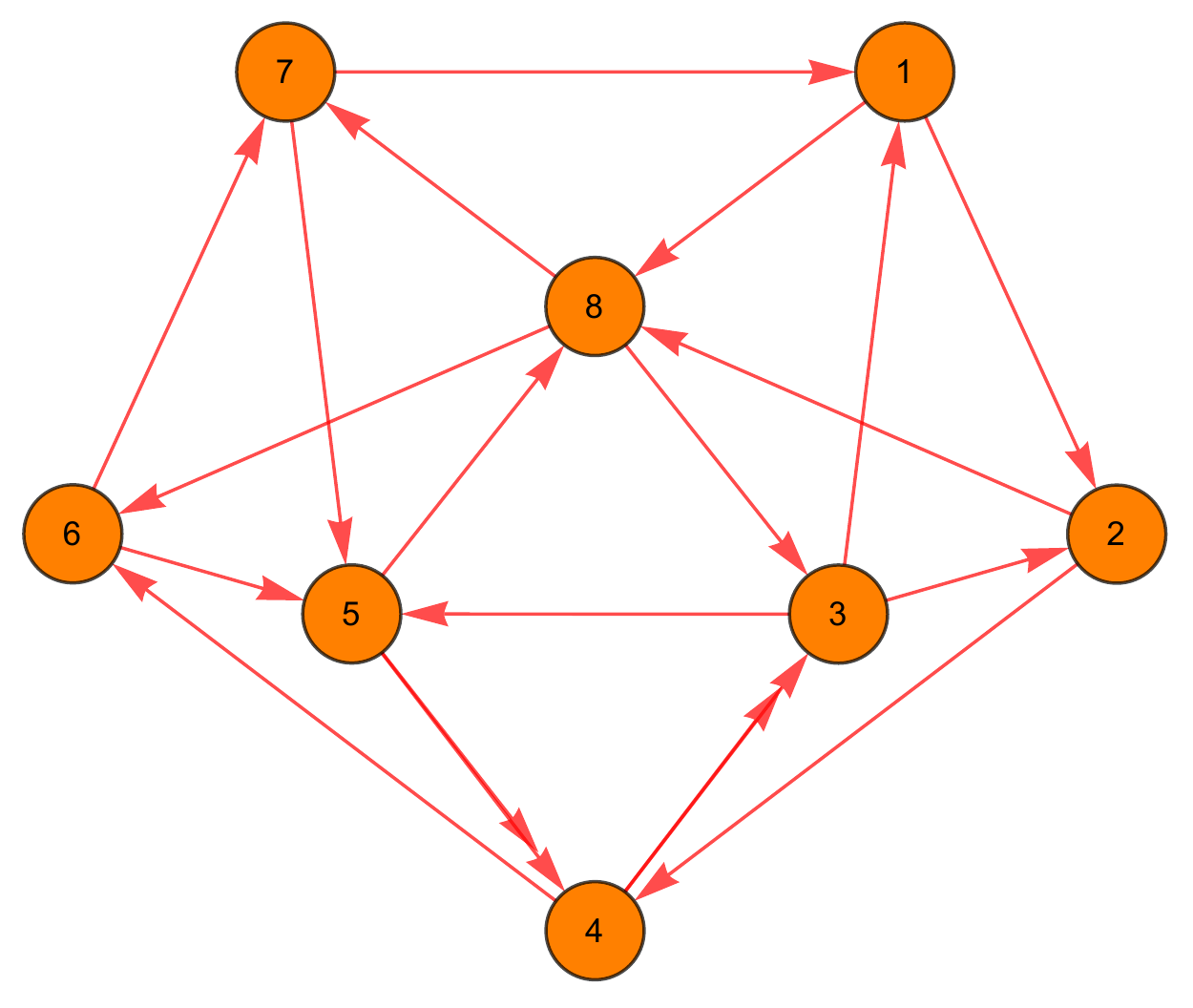}.
\end{equation}
The superpotential is
\begin{eqnarray}
W&=&X_{83}X_{32}X_{28}+X_{12}X_{24}X^1_{43}X_{31}+X_{35}X^1_{54}X^2_{43}\nonumber\\
&&+X_{46}X_{65}X^2_{54}+X_{58}X_{87}X_{75}+X_{67}X_{71}X_{18}X_{86}\nonumber\\
&&-X_{18}X_{83}X_{31}-X_{32}X_{24}X^2_{43}-X^1_{43}X_{35}X^2_{54}\nonumber\\
&&-X_{65}X_{58}X_{86}-X^1_{54}X_{46}X_{67}X_{75}-X_{87}X_{71}X_{12}X_{28}.
\end{eqnarray}
The perfect matching matrix is
\begin{equation}
P=\left(
\tiny{
}
\right).
\end{equation}
From $G_t$, we can get the GLSM fields associated to each point as shown in (\ref{p11p}), where
\begin{equation}
q=\{q_1,q_2\},\ r=\{r_1,\dots,r_{5}\},\ s=\{s_1,\dots,s_{15}\},\ t=\{t_1,t_2\}.
\end{equation}
From $Q_t$ (and $Q_F$), the mesonic symmetry reads U(1)$^2\times$U(1)$_\text{R}$ and the baryonic symmetry reads U(1)$^4_\text{h}\times$U(1)$^3$, where the subscripts ``R'' and ``h'' indicate R- and hidden symmetries respectively.

The Hilbert series of the toric cone is
\begin{eqnarray}
HS&=&\frac{1}{(1-t_2) \left(1-\frac{t_1 t_2}{t_3}\right) \left(1-\frac{t_3^2}{t_1
		t_2^2}\right)}+\frac{1}{\left(1-\frac{1}{t_2}\right) \left(1-\frac{t_1}{t_2
		t_3}\right) \left(1-\frac{t_2^2
		t_3^2}{t_1}\right)}\nonumber\\
	&&+\frac{1}{\left(1-\frac{1}{t_1}\right) (1-t_1 t_2)
	\left(1-\frac{t_3}{t_2}\right)}+\frac{1}{(1-t_1) (1-t_2) \left(1-\frac{t_3}{t_1
		t_2}\right)}\nonumber\\
	&&+\frac{1}{(1-t_1) \left(1-\frac{1}{t_2}\right) \left(1-\frac{t_2
		t_3}{t_1}\right)}+\frac{1}{\left(1-\frac{1}{t_1}\right)
	\left(1-\frac{1}{t_2}\right) (1-t_1 t_2 t_3)}\nonumber\\
&&+\frac{1}{\left(1-\frac{1}{t_1
		t_3}\right) \left(1-\frac{t_3}{t_2}\right) (1-t_1 t_2 t_3)}+\frac{1}{(1-t_2)
	\left(1-\frac{1}{t_1 t_2}\right) (1-t_1 t_3)}.
\end{eqnarray}
The volume function is then
\begin{equation}
V=\frac{2 ({b_1}+4 ({b_2}-6))}{({b_2}-3) ({b_1}+{b_2}+3) ({b_1}+2
	{b_2}-6) ({b_1}-2 ({b_2}+3))}.
\end{equation}
Minimizing $V$ yields $V_{\text{min}}=(46+13\sqrt{13})/648$ at $b_1=0$, $b_2=4-\sqrt{13}$. Thus, $a_\text{max}=-92+26\sqrt{13}$. Together with the superconformal conditions, we can solve for the R-charges of the bifundamentals. Then the R-charges of GLSM fields should satisfy
\begin{eqnarray}
&&\left(108 p_2+177 p_3\right) p_4^2+\left(108 p_2^2+108 p_3 p_2-216 p_2+177 p_3^2-354
p_3\right) p_4\nonumber\\
&=&-54 p_3 p_2^2-54 p_3^2 p_2+108 p_3 p_2-832 \sqrt{13}+2921
\end{eqnarray}
constrained by $\sum\limits_{i=1}^4p_i=2$ and $0<p_i<2$, with others vanishing.

\subsection{Polytope 12: $L^{1,4,1}/\mathbb{Z}_2$ (1,0,0,1)}\label{p12}
The polytope is
\begin{equation}
	\tikzset{every picture/.style={line width=0.75pt}} 
.\label{p12p}
\end{equation}
The brane tiling and the corresponding quiver are
\begin{equation}
\includegraphics[width=4cm]{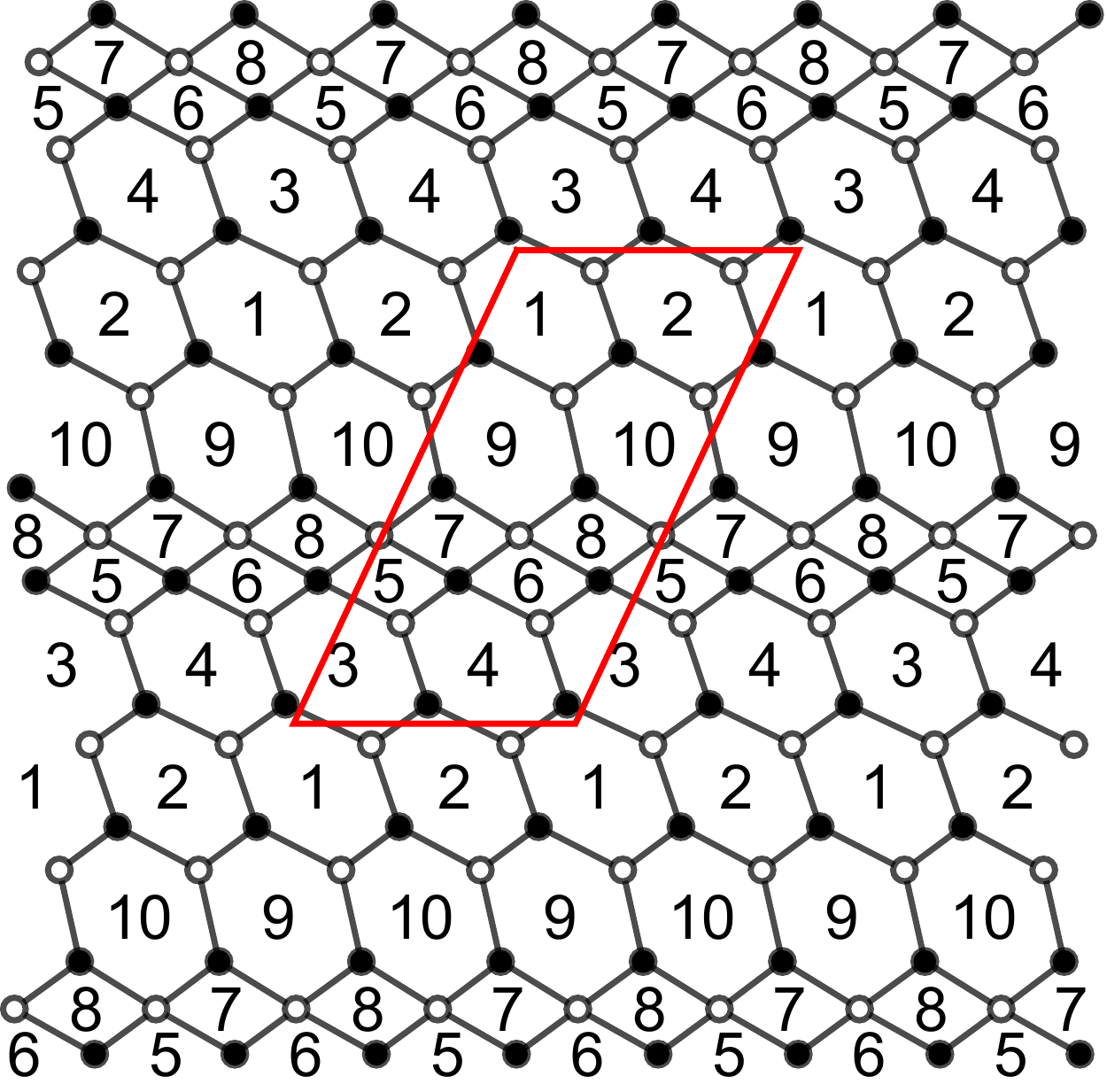};
\includegraphics[width=4cm]{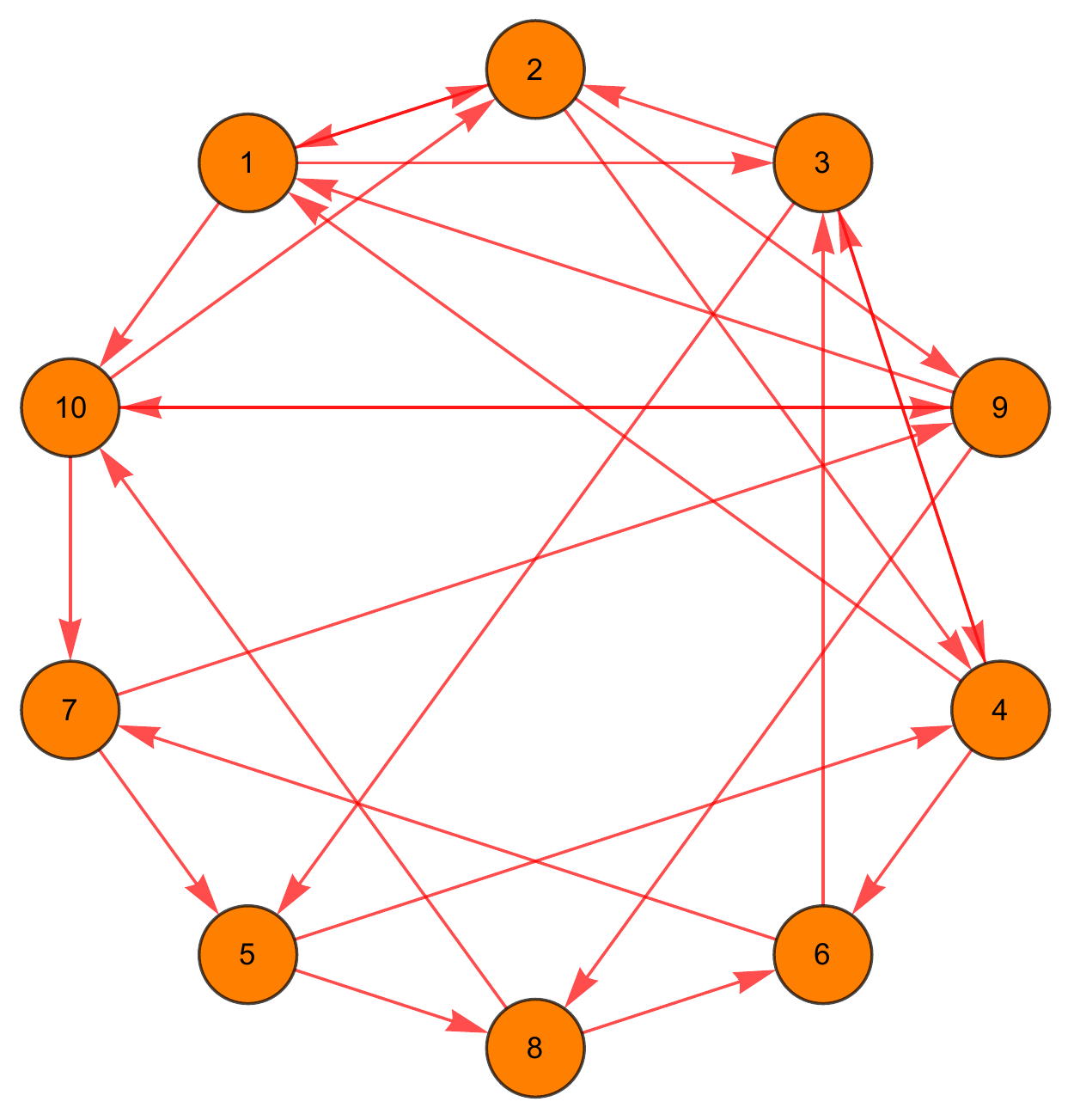}.
\end{equation}
The superpotential is
\begin{eqnarray}
W&=&X_{13}X_{32}X_{21}+X_{24}X_{41}X_{12}+X_{35}X_{54}X_{43}+X_{46}X_{63}X_{34}\nonumber\\
&&+X_{58}X_{8,10}X_{10,7}X_{75}+X_{67}X_{79}X_{98}X_{86}+X_{10,2}X_{29}X_{9,10}+X_{91}X_{1,10}X_{10,9}\nonumber\\
&&-X_{41}X_{13}X_{34}-X_{32}X_{24}X_{43}-X_{63}X_{35}X_{58}X_{86}-X_{54}X_{46}X_{67}X_{75}\nonumber\\
&&-X_{10,7}X_{79}X_{9,10}-X_{98}X_{8,10}X_{10,9}-X_{29}X_{91}X_{12}-X_{1,10}X_{10,2}X_{21}.
\end{eqnarray}
The number of perfect matchings is $c=50$, which leads to gigantic $P$, $Q_t$ and $G_t$. Hence, we will not list them here. The GLSM fields associated to each point are shown in (\ref{p12p}), where
\begin{eqnarray}
&&q=\{q_1,q_2\},\ r=\{r_1,\dots,r_{20}\},\ s=\{s_1,\dots,s_{10}\},\nonumber\\
&&t=\{t_1,\dots,t_4\},\ u=\{u_1,\dots,u_6\},\ v=\{v_1,\dots,v_4\}.
\end{eqnarray}
The mesonic symmetry reads U(1)$^2\times$U(1)$_\text{R}$ and the baryonic symmetry reads U(1)$^4_\text{h}\times$U(1)$^5$, where the subscripts ``R'' and ``h'' indicate R- and hidden symmetries respectively.

The Hilbert series of the toric cone is
\begin{eqnarray}
HS&=&\frac{1}{\left(1-\frac{1}{t_1}\right) \left(1-\frac{1}{t_1 t_2}\right) \left(1-t_1^2
	t_2 t_3\right)}+\frac{1}{(1-t_2) \left(1-\frac{t_1 t_2}{t_3}\right)
	\left(1-\frac{t_3^2}{t_1 t_2^2}\right)}\nonumber\\
&&+\frac{1}{(1-t_2 t_3) \left(1-\frac{t_1
		t_2}{t_3}\right) \left(1-\frac{t_3}{t_1
		t_2^2}\right)}+\frac{1}{\left(1-\frac{1}{t_2}\right) \left(1-\frac{t_1
		t_2^3}{t_3}\right) \left(1-\frac{t_3^2}{t_1 t_2^2}\right)}\nonumber\\
	&&+\frac{1}{(1-t_2 t_3)
	\left(1-\frac{t_1 t_2^2}{t_3}\right) \left(1-\frac{t_3}{t_1
		t_2^3}\right)}+\frac{1}{\left(1-\frac{1}{t_1}\right) (1-t_1 t_2)
	\left(1-\frac{t_3}{t_2}\right)}\nonumber\\
&&+\frac{1}{(1-t_1) (1-t_2) \left(1-\frac{t_3}{t_1
		t_2}\right)}+\frac{1}{(1-t_1) \left(1-\frac{1}{t_1 t_2}\right) (1-t_2
	t_3)}\nonumber\\
&&+\frac{1}{\left(1-\frac{t_1}{t_3}\right) (1-t_2 t_3) \left(1-\frac{t_3}{t_1
		t_2}\right)}+\frac{1}{\left(1-\frac{1}{t_2}\right) (1-t_1 t_2)
	\left(1-\frac{t_3}{t_1}\right)}.
\end{eqnarray}
The volume function is then
\begin{equation}
V=-\frac{6 ({b_2}-5)}{({b_2}-3) ({b_2}+3) (2 {b_1}+{b_2}+3) ({b_1}+2
	{b_2}-6)}.
\end{equation}
Minimizing $V$ yields $V_{\text{min}}=(13\sqrt{13}-35)/108$ at $b_1=(5\sqrt{13}+1)/6$, $b_2=(5-2\sqrt{13})/3$. Thus, $a_\text{max}=(13\sqrt{13}+35)/36$. Together with the superconformal conditions, we can solve for the R-charges of the bifundamentals. Then the R-charges of GLSM fields should satisfy
\begin{eqnarray}
&&\left(972 p_2+243 p_4\right) p_3^2+\left(972 p_2^2+1944 p_4 p_2-1944 p_2+243 p_4^2-486
p_4\right) p_3\nonumber\\
&=&-972 p_4 p_2^2-972 p_4^2 p_2+1944 p_4 p_2-52 \sqrt{13}-140
\end{eqnarray}
constrained by $\sum\limits_{i=1}^4p_i=2$ and $0<p_i<2$, with others vanishing.

\subsection{Polytope 13: PdP$_2/\mathbb{Z}_2$ (1,1,1,1)}\label{p13}
The polytope is
\begin{equation}
	\tikzset{every picture/.style={line width=0.75pt}} 
.\label{p13p}
\end{equation}
The brane tiling and the corresponding quiver are
\begin{equation}
\includegraphics[width=4cm]{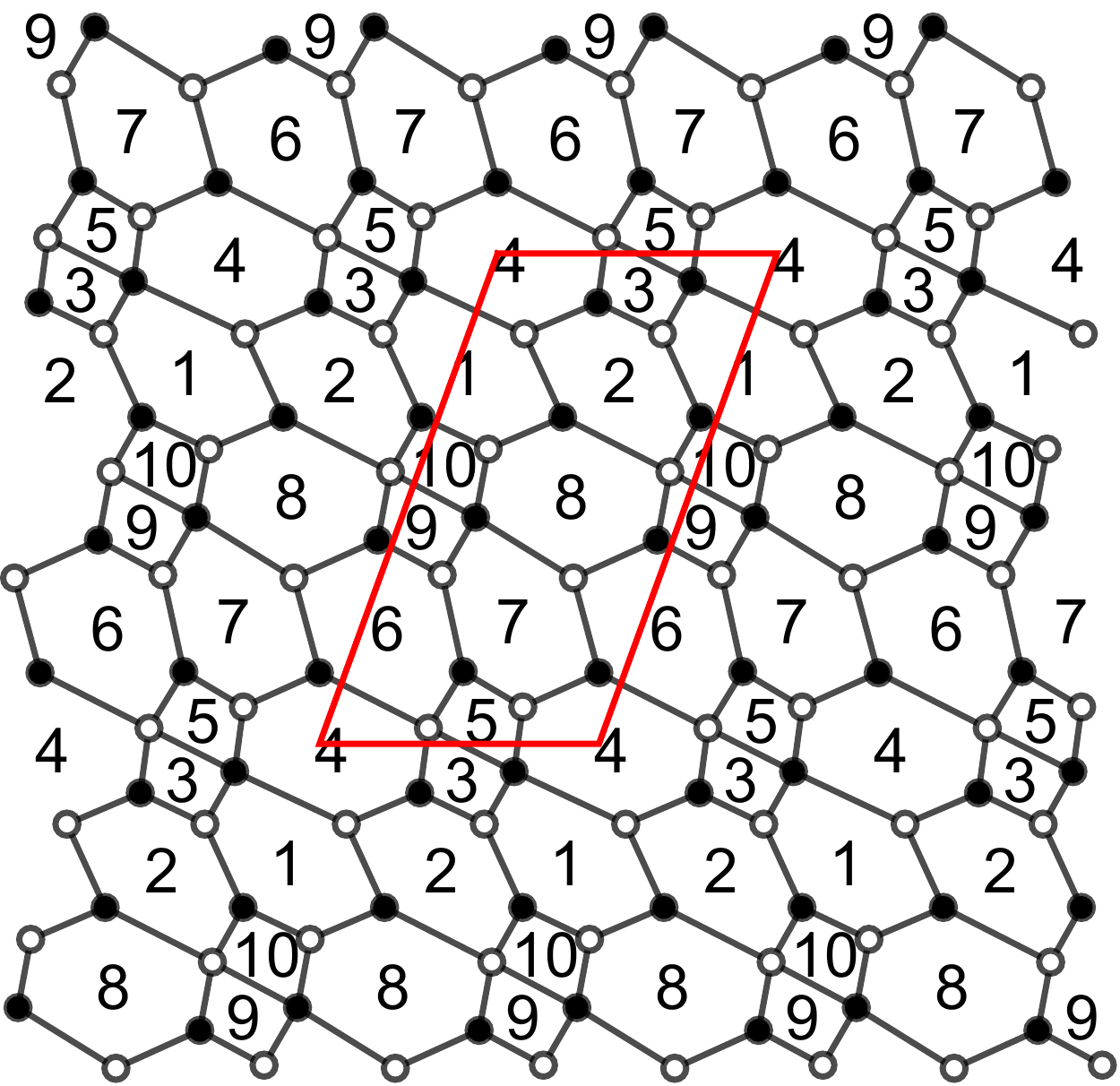};
\includegraphics[width=4cm]{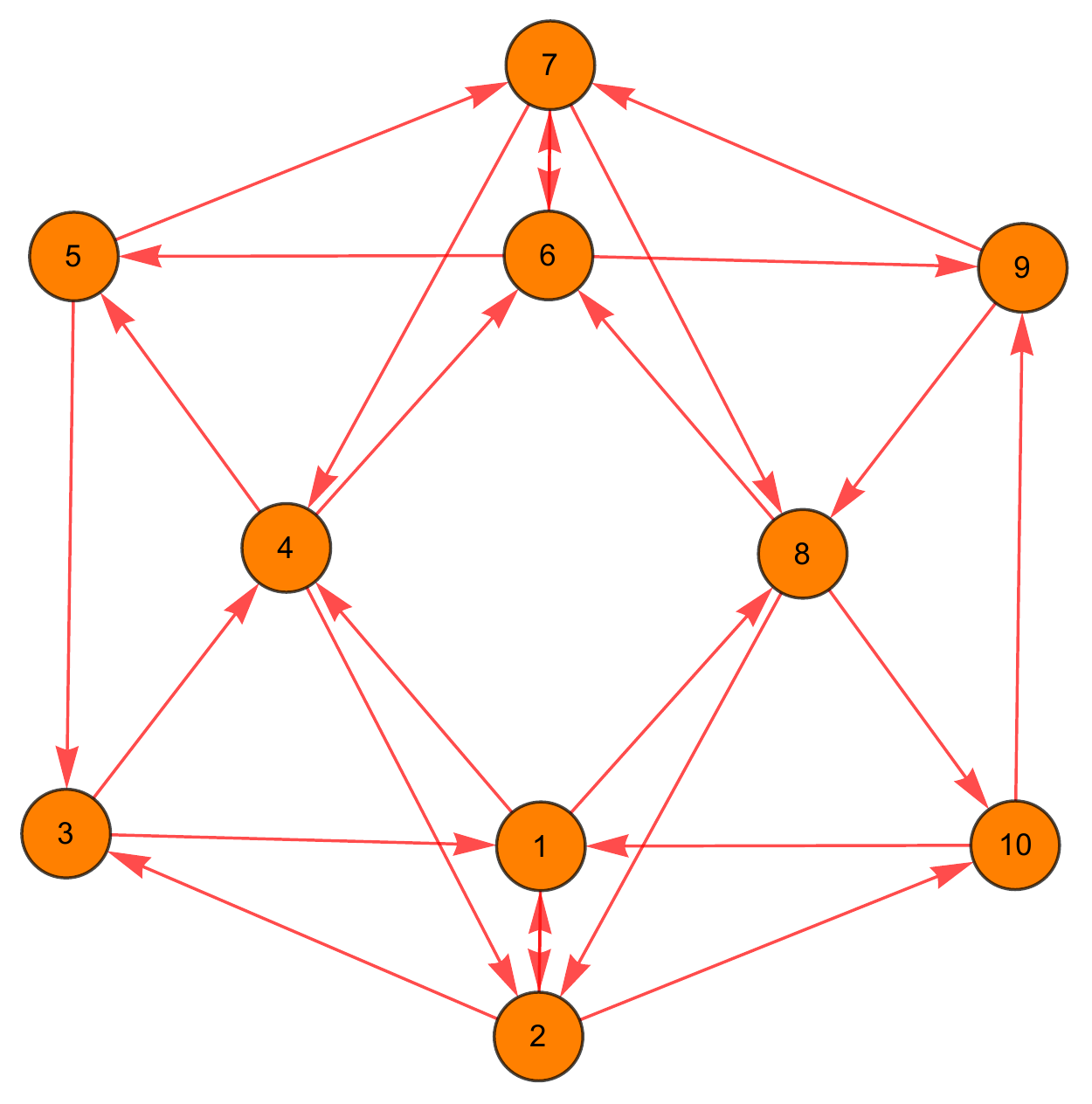}.
\end{equation}
The superpotential is
\begin{eqnarray}
W&=&X_{14}X_{42}X_{21}+X_{23}X_{31}X_{12}+X_{46}X_{65}X_{53}X_{34}+X_{57}X_{74}X_{45}\nonumber\\
&&+X_{69}X_{97}X_{76}+X_{78}X_{86}X_{67}+X_{82}X_{2,10}X_{10,9}X_{98}+X_{10,1}X_{18}X_{8,10}\nonumber\\
&&-X_{31}X_{14}X_{45}X_{53}-X_{23}X_{34}X_{42}-X_{46}X_{67}X_{74}-X_{65}X_{57}X_{76}\nonumber\\
&&-X_{69}X_{98}X_{86}-X_{97}X_{78}X_{8,10}X_{10,9}-X_{2,10}X_{10,1}X_{12}-X_{18}X_{82}X_{21}.
\end{eqnarray}
The number of perfect matchings is $c=53$, which leads to gigantic $P$, $Q_t$ and $G_t$. Hence, we will not list them here. The GLSM fields associated to each point are shown in (\ref{p13p}), where
\begin{eqnarray}
&&q=\{q_1,q_2\},\ r=\{r_1,\dots,r_{21}\},\ s=\{s_1,\dots,s_{12}\},\nonumber\\
&&t=\{t_1,\dots,t_4\},\ u=\{u_1,\dots,u_6\},\ v=\{v_1,\dots,v_4\}.
\end{eqnarray}
The mesonic symmetry reads U(1)$^2\times$U(1)$_\text{R}$ and the baryonic symmetry reads U(1)$^4_\text{h}\times$U(1)$^5$, where the subscripts ``R'' and ``h'' indicate R- and hidden symmetries respectively.

The Hilbert series of the toric cone is
\begin{eqnarray}
HS&=&\frac{1}{(1-t_2) \left(1-\frac{t_1 t_2}{t_3}\right) \left(1-\frac{t_3^2}{t_1
		t_2^2}\right)}+\frac{1}{(1-t_2 t_3) \left(1-\frac{t_1 t_2}{t_3^2}\right)
	\left(1-\frac{t_3^2}{t_1 t_2^2}\right)}\nonumber\\
&&+\frac{1}{\left(1-\frac{t_3^2}{t_1}\right)
	(1-t_2 t_3) \left(1-\frac{t_1}{t_2
		t_3^2}\right)}+\frac{1}{\left(1-\frac{t_1}{t_3^2}\right) (1-t_2 t_3)
	\left(1-\frac{t_3^2}{t_1 t_2}\right)}\nonumber\\
&&+\frac{1}{\left(1-\frac{1}{t_2}\right)
	\left(1-\frac{t_1}{t_3}\right) \left(1-\frac{t_2
		t_3^2}{t_1}\right)}+\frac{1}{\left(1-\frac{1}{t_1}\right) (1-t_2) \left(1-\frac{t_1
		t_3}{t_2}\right)}\nonumber\\
	&&+\frac{1}{\left(1-\frac{t_3}{t_1}\right) (1-t_2 t_3)
	\left(1-\frac{t_1}{t_2 t_3}\right)}+\frac{1}{(1-t_1) \left(1-\frac{1}{t_2}\right)
	\left(1-\frac{t_2 t_3}{t_1}\right)}\nonumber\\
&&+\frac{1}{\left(1-\frac{1}{t_1}\right)
	\left(1-\frac{1}{t_2}\right) (1-t_1 t_2 t_3)}+\frac{1}{(1-t_1) (1-t_2)
	\left(1-\frac{t_3}{t_1 t_2}\right)}.
\end{eqnarray}
The volume function is then
\begin{equation}
V=\frac{2 (2 {b_1}+{b_2}+15)}{({b_2}+3) (-{b_1}+{b_2}-3)
	({b_1}+{b_2}+3) ({b_1}+2 {b_2}-6)}.
\end{equation}
Minimizing $V$ yields $V_{\text{min}}=0.112571$ at $b_1=3.27464$, $b_2=-0.831239$. Thus, $a_\text{max}=2.220821$. Together with the superconformal conditions, we can solve for the R-charges of the bifundamentals. Then the R-charges of GLSM fields should satisfy\footnote{For these Sasaki-Einstein manifolds that are not (quasi-)regular, the minimized volumes, and hence the following calculations, are solved numerically. However, we can actually use roots of some polynomials to express the exact results. The case in this subsection is given as an example in Appendix \ref{vol}.}
\begin{eqnarray}
&&\left(6.75 p_3+1.6875 p_4\right) p_2^2+\left(6.75
p_3^2+6.75 p_4 p_3-13.5 p_3+1.6875 p_4^2-3.375 p_4\right) p_2\nonumber\\
&&=-3.375 p_4 p_3^2-3.375 p_4^2 p_3+6.75 p_4 p_3-2.22082
\end{eqnarray}
constrained by $\sum\limits_{i=1}^4p_i=2$ and $0<p_i<2$, with others vanishing.

\subsection{Polytope 14: $L^{1,3,1}/\mathbb{Z}_2$ (1,0,0,1)}\label{p14}
The polytope is
\begin{equation}
	\tikzset{every picture/.style={line width=0.75pt}} 
.\label{p14p}
\end{equation}
The brane tiling and the corresponding quiver are
\begin{equation}
\includegraphics[width=4cm]{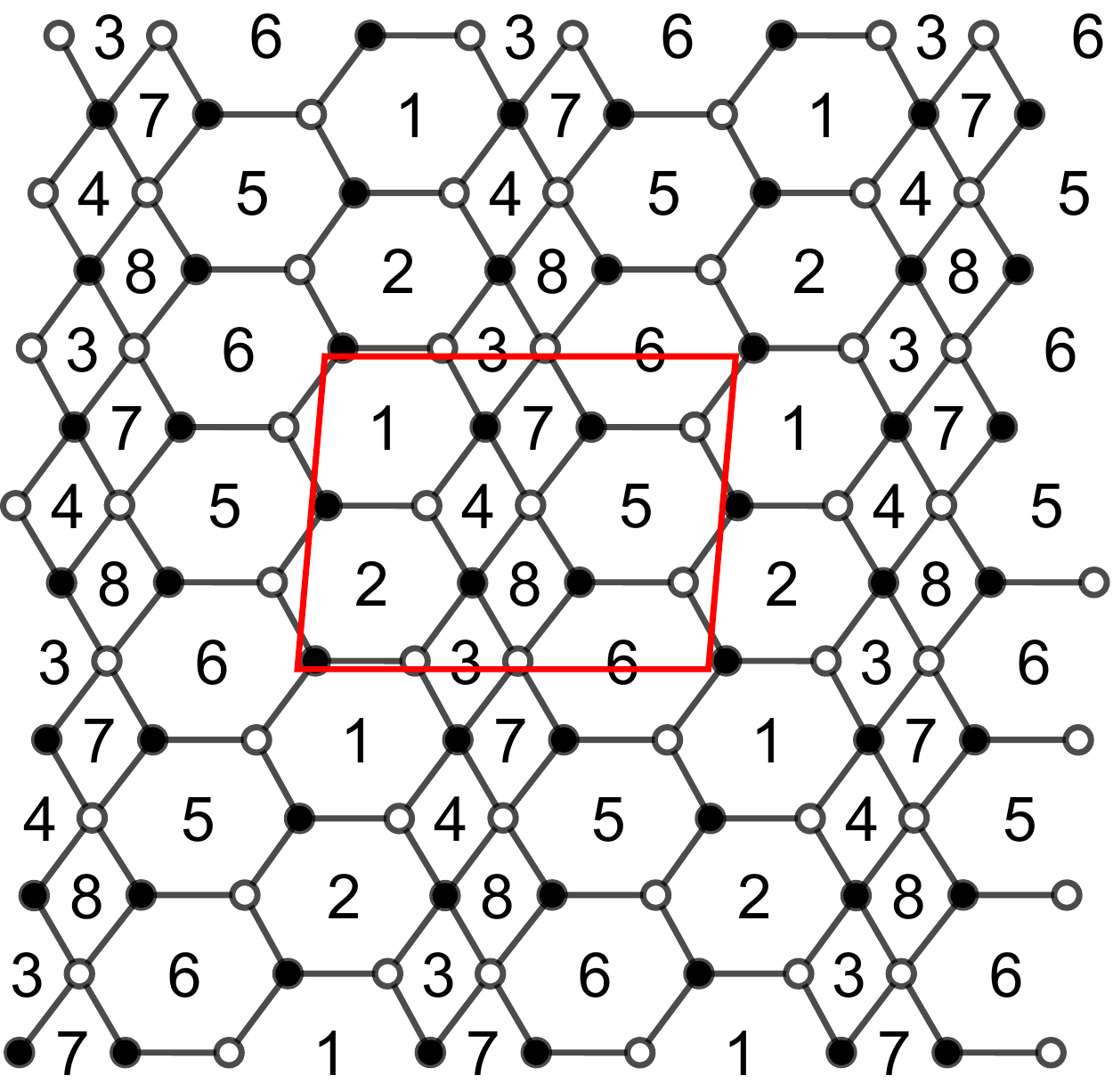};
\includegraphics[width=4cm]{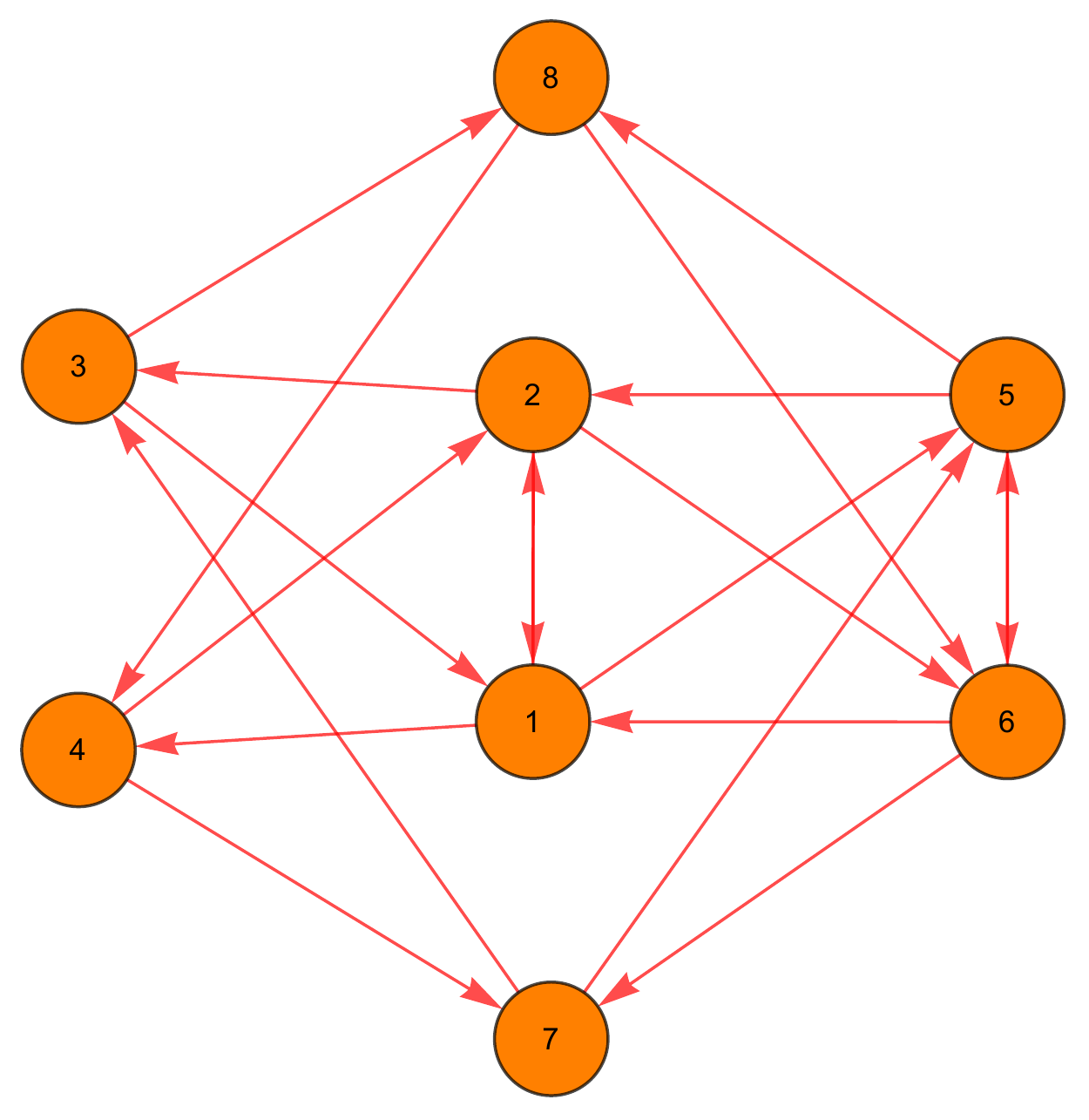}.
\end{equation}
The superpotential is
\begin{eqnarray}
W&=&X_{61}X_{15}X_{56}+X_{52}X_{26}X_{65}+X_{23}X_{31}X_{12}\nonumber\\
&&+X_{14}X_{42}X_{21}+X_{38}X_{86}X_{67}X_{73}+X_{47}X_{75}X_{58}X_{84}\nonumber\\
&&-X_{15}X_{52}X_{21}-X_{26}X_{61}X_{12}-X_{23}X_{38}X_{84}X_{42}\nonumber\\
&&-X_{14}X_{47}X_{73}X_{31}-X_{75}X_{56}X_{67}-X_{86}X_{65}X_{58}.
\end{eqnarray}
The perfect matching matrix is
\begin{equation}
P=\left(
\tiny{
}
\right).
\end{equation}
From $G_t$, we can get the GLSM fields associated to each point as shown in (\ref{p14p}), where
\begin{equation}
q=\{q_1,\dots,q_3\},\ r=\{r_1,\dots,r_{8}\},\ s=\{s_1,\dots,s_{8}\},\ t=\{t_1,\dots,t_3\}.
\end{equation}
From $Q_t$ (and $Q_F$), the mesonic symmetry reads U(1)$^2\times$U(1)$_\text{R}$ and the baryonic symmetry reads U(1)$^4_\text{h}\times$U(1)$^3$, where the subscripts ``R'' and ``h'' indicate R- and hidden symmetries respectively.

The Hilbert series of the toric cone is
\begin{eqnarray}
HS&=&\frac{1}{\left(1-\frac{t_1 t_2}{t_3}\right) \left(1-\frac{t_1 t_2^2}{t_3}\right)
	\left(1-\frac{t_3^3}{t_1^2 t_2^3}\right)}+\frac{1}{\left(1-\frac{1}{t_1}\right)
	\left(1-\frac{1}{t_1 t_2}\right) \left(1-t_1^2 t_2 t_3\right)}\nonumber\\
&&+\frac{1}{(1-t_2 t_3)
	\left(1-\frac{t_1 t_2}{t_3}\right) \left(1-\frac{t_3}{t_1
		t_2^2}\right)}+\frac{1}{(1-t_1) (1-t_2) \left(1-\frac{t_3}{t_1
		t_2}\right)}\nonumber\\
	&&+\frac{1}{\left(1-\frac{t_3}{t_1}\right) (1-t_2 t_3) \left(1-\frac{t_1}{t_2
		t_3}\right)}+\frac{1}{\left(1-\frac{t_1}{t_3}\right) (1-t_2 t_3) \left(1-\frac{t_3}{t_1
		t_2}\right)}+\nonumber\\ 
	&&\frac{1}{(1-t_1) \left(1-\frac{1}{t_2}\right) \left(1-\frac{t_2
		t_3}{t_1}\right)}+\frac{1}{\left(1-\frac{1}{t_1}\right) (1-t_1 t_2)
	\left(1-\frac{t_3}{t_2}\right)}.
\end{eqnarray}
The volume function is then
\begin{equation}
V=-\frac{8 ({b_2}-6)}{({b_2}-3) ({b_2}+3) (2 {b_1}+{b_2}+3) (2
	{b_1}+3 {b_2}-9)}.
\end{equation}
Minimizing $V$ yields $V_{\text{min}}=\frac{4}{243}(-10+7\sqrt{7})$ at $b_1=(2\sqrt{7}-1)/2$, $b_2=2-\sqrt{7}$. Thus, $a_\text{max}=(10+7\sqrt{7})/16$. Together with the superconformal conditions, we can solve for the R-charges of the bifundamentals. Then the R-charges of GLSM fields should satisfy
\begin{eqnarray}
&&\left(27 p_2+27 p_4\right) p_3^2+\left(27 p_2^2+54 p_4 p_2-54 p_2+27 p_4^2-54 p_4\right)
p_3\nonumber\\
&=&-81 p_4 p_2^2-81 p_4^2 p_2+162 p_4 p_2-7 \sqrt{7}-10
\end{eqnarray}
constrained by $\sum\limits_{i=1}^4p_i=2$ and $0<p_i<2$, with others vanishing.

\subsection{Polytope 15: $L^{3,5,2}$}\label{p15}
The polytope is
\begin{equation}
	\tikzset{every picture/.style={line width=0.75pt}} 
.\label{p15p}
\end{equation}
The brane tiling and the corresponding quiver are
\begin{equation}
\includegraphics[width=4cm]{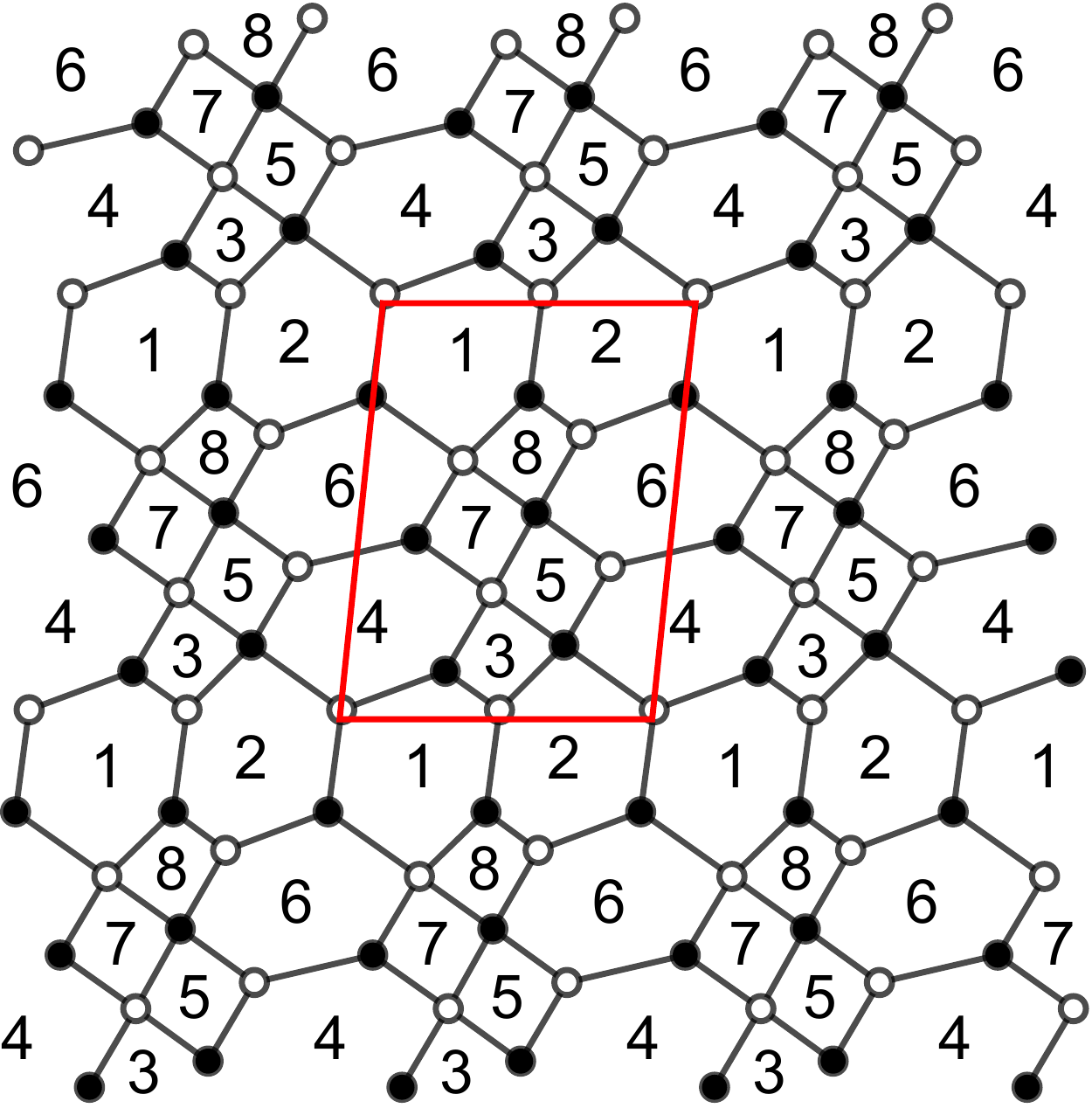};
\includegraphics[width=4cm]{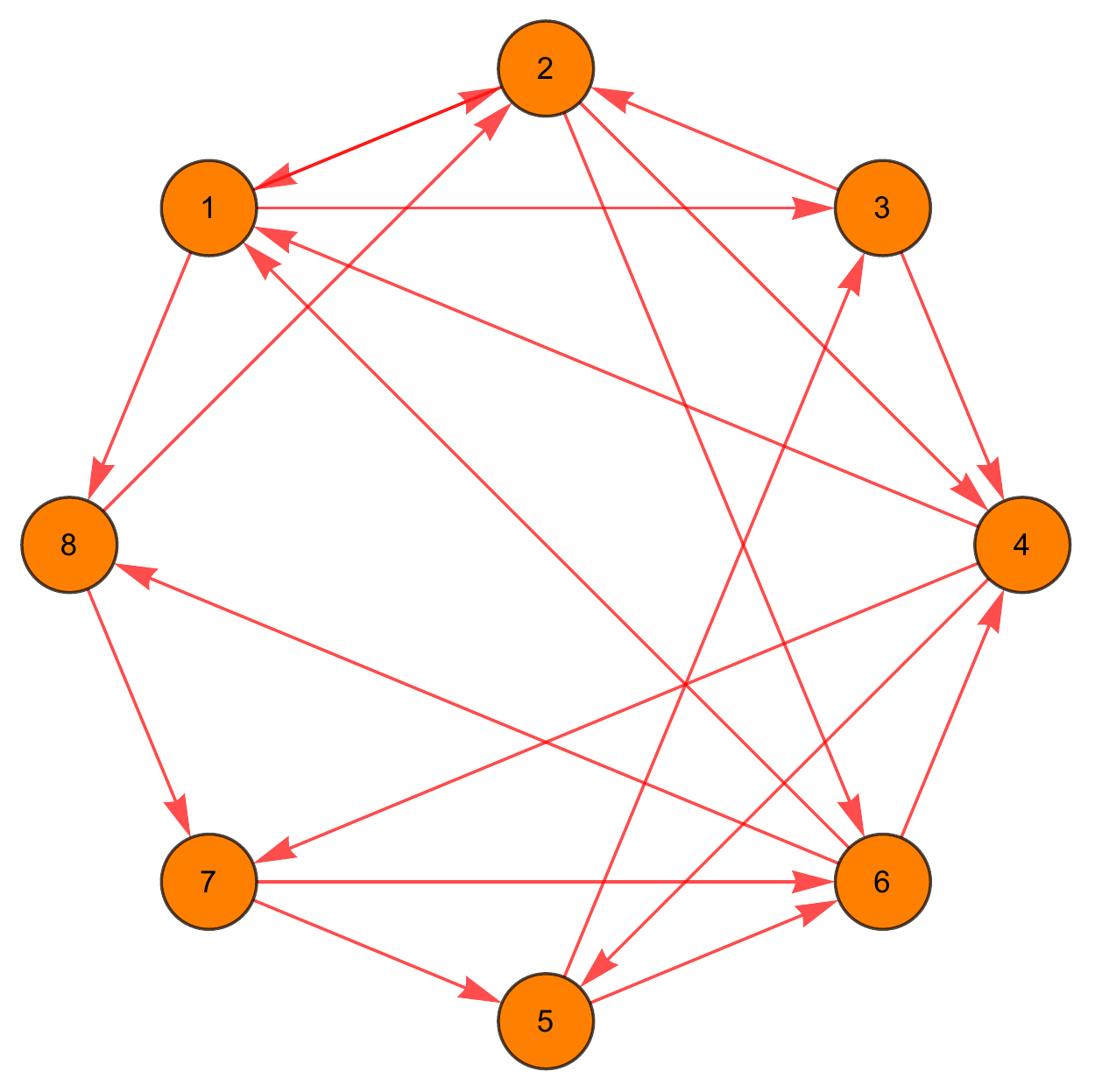}.
\end{equation}
The superpotential is
\begin{eqnarray}
W&=&X_{24}X_{41}X_{12}+X_{13}X_{32}X_{21}+X_{47}X_{75}X_{53}X_{34}\nonumber\\
&&+X_{56}X_{64}X_{45}+X_{61}X_{18}X_{87}X_{76}+X_{82}X_{26}X_{68}\nonumber\\
&&-X_{13}X_{34}X_{41}-X_{24}X_{45}X_{53}X_{32}-X_{47}X_{76}X_{64}\nonumber\\
&&-X_{56}X_{68}X_{87}X_{75}-X_{26}X_{61}X_{12}-X_{18}X_{82}X_{21}.
\end{eqnarray}
The perfect matching matrix is
\begin{equation}
P=\left(
\tiny{
}
\right).
\end{equation}
From $G_t$, we can get the GLSM fields associated to each point as shown in (\ref{p15p}), where
\begin{equation}
q=\{q_1,\dots,q_3\},\ r=\{r_1,\dots,r_{8}\},\ s=\{s_1,\dots,s_{9}\},\ t=\{t_1,\dots,t_3\}.
\end{equation}
From $Q_t$ (and $Q_F$), the mesonic symmetry reads U(1)$^2\times$U(1)$_\text{R}$ and the baryonic symmetry reads U(1)$^4_\text{h}\times$U(1)$^3$, where the subscripts ``R'' and ``h'' indicate R- and hidden symmetries respectively.

The Hilbert series of the toric cone is
\begin{eqnarray}
HS&=&\frac{1}{\left(1-\frac{t_1 t_2}{t_3}\right) \left(1-\frac{t_1 t_2^2}{t_3}\right)
	\left(1-\frac{t_3^3}{t_1^2 t_2^3}\right)}+\frac{1}{(1-t_2 t_3) \left(1-\frac{t_1
		t_2}{t_3}\right) \left(1-\frac{t_3}{t_1 t_2^2}\right)}\nonumber\\
	&&+\frac{1}{(1-t_1) (1-t_2)
	\left(1-\frac{t_3}{t_1 t_2}\right)}+\frac{1}{\left(1-\frac{1}{t_1}\right) (1-t_2)
	\left(1-\frac{t_1 t_3}{t_2}\right)}\nonumber\\
&&+\frac{1}{(1-t_1) \left(1-\frac{1}{t_1 t_2}\right)
	(1-t_2 t_3)}+\frac{1}{\left(1-\frac{t_1}{t_3}\right) (1-t_2 t_3) \left(1-\frac{t_3}{t_1
		t_2}\right)}\nonumber\\
	&&+\frac{1}{\left(1-\frac{1}{t_1}\right) \left(1-\frac{1}{t_2}\right)
	(1-t_1 t_2 t_3)}+\frac{1}{\left(1-\frac{1}{t_2}\right) (1-t_1 t_2)
	\left(1-\frac{t_3}{t_1}\right)}.
\end{eqnarray}
The volume function is then
\begin{equation}
V=\frac{2 (3 {b_1}+2 {b_2}+24)}{({b_2}+3) (-{b_1}+{b_2}-3)
	({b_1}+{b_2}+3) (2 {b_1}+3 {b_2}-9)}.
\end{equation}
Minimizing $V$ yields $V_{\text{min}}=0.142613$ at $b_1=2.194882$, $b_2=-0.760489$. Thus, $a_\text{max}=1.752996$. Together with the superconformal conditions, we can solve for the R-charges of the bifundamentals. Then the R-charges of GLSM fields should satisfy
\begin{eqnarray}
&&\left(6.77917 p_3+2.25972 p_4\right) p_2^2+(6.77917
p_3^2+6.77917 p_4 p_3-13.5583 p_3\nonumber\\
&&+2.25972p_4^2-4.51945 p_4) p_2=-3.38958 p_4 p_3^2-3.38958p_4^2 p_3+6.77917 p_4 p_3-2.34743\nonumber\\
\end{eqnarray}
constrained by $\sum\limits_{i=1}^4p_i=2$ and $0<p_i<2$, with others vanishing.

\subsection{Polytope 16; $L^{2,5,1}$}\label{p16}
The polytope is
\begin{equation}
	\tikzset{every picture/.style={line width=0.75pt}} 
.\label{p16p}
\end{equation}
The brane tiling and the corresponding quiver are
\begin{equation}
\includegraphics[width=4cm]{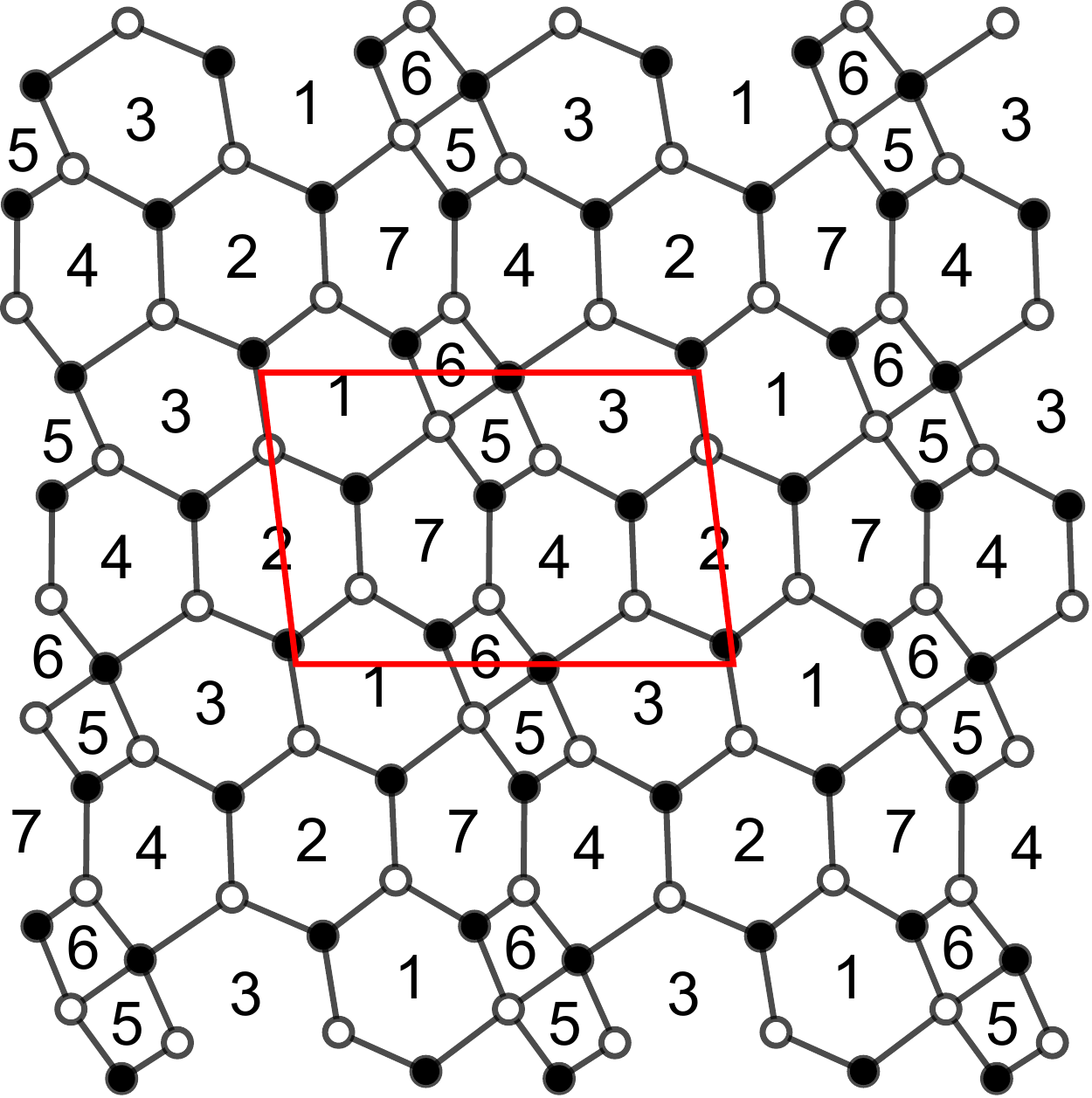};
\includegraphics[width=4cm]{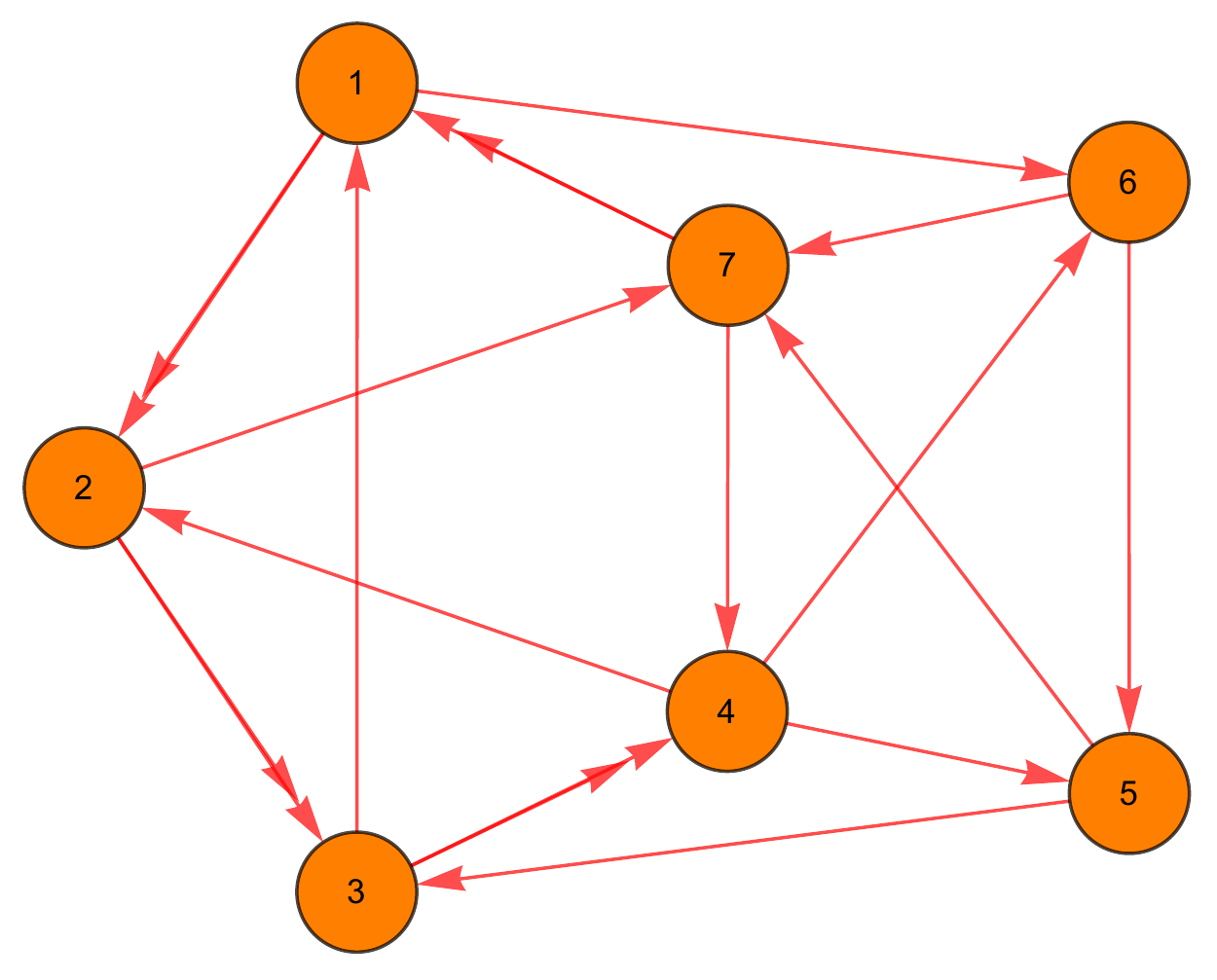}.
\end{equation}
The superpotential is
\begin{eqnarray}
W&=&X_{31}X^1_{12}X^2_{23}+X_{27}X^1_{71}X^2_{12}+X^2_{71}X_{16}X_{65}X_{57}\nonumber\\
&&+X_{74}X_{46}X_{67}+X_{53}X^1_{34}X_{45}+X_{42}X^1_{23}X^2_{34}\nonumber\\
&&-X^1_{12}X_{27}X^2_{71}-X^1_{23}X_{31}X^2_{12}-X^1_{71}X_{16}X_{67}\nonumber\\
&&-X_{57}X_{74}X_{45}-X_{46}X_{65}X_{53}X^2_{34}-X^1_{34}X_{42}X^2_{23}.
\end{eqnarray}
The perfect matching matrix is
\begin{equation}
P=\left(
\tiny{
}
\right).
\end{equation}
From $G_t$, we can get the GLSM fields associated to each point are shown in (\ref{p16p}), where
\begin{equation}
q=\{q_1,q_2\},\ r=\{r_1,\dots,r_{6}\},\ s=\{s_1,\dots,s_{11}\}.
\end{equation}
From $Q_t$ (and $Q_F$), the mesonic symmetry reads U(1)$^2\times$U(1)$_\text{R}$ and the baryonic symmetry reads U(1)$^4_\text{h}\times$U(1)$^2$, where the subscripts ``R'' and ``h'' indicate R- and hidden symmetries respectively.

The Hilbert series of the toric cone is
\begin{eqnarray}
HS&=&\frac{1}{(1-t_2) \left(1-\frac{1}{t_1 t_2^2}\right) (1-t_1 t_2
	t_3)}+\frac{1}{\left(1-\frac{1}{t_2}\right) \left(1-\frac{t_1}{t_2}\right)
	\left(1-\frac{t_2^2 t_3}{t_1}\right)}\nonumber\\
&&+\frac{1}{(1-t_2) \left(1-t_1 t_2^2\right)
	\left(1-\frac{t_3}{t_1 t_2^3}\right)}+\frac{1}{(1-t_2) \left(1-t_1 t_3^2\right)
	\left(1-\frac{1}{t_1 t_2 t_3}\right)}\nonumber\\ 
&&+\frac{1}{\left(1-\frac{1}{t_2}\right)
	\left(1-t_1 t_3^2\right) \left(1-\frac{t_2}{t_1
		t_3}\right)}+\frac{1}{\left(1-\frac{1}{t_1 t_3}\right) (1-t_2 t_3) \left(1-\frac{t_1
		t_3}{t_2}\right)}\nonumber\\
	&&+\frac{1}{\left(1-\frac{1}{t_2}\right)
	\left(1-\frac{t_2}{t_1}\right) (1-t_1 t_3)}.
\end{eqnarray}
The volume function is then
\begin{equation}
V=-\frac{{b_1}-12 ({b_2}+4)}{({b_1}+6) ({b_2}+3) ({b_1}-2 {b_2}-3)
	({b_1}+3 {b_2}-3)}.
\end{equation}
Minimizing $V$ yields $V_{\text{min}}=0.156243$ at $b_1=-2.854659$, $b_2=-0.172760$. Thus, $a_\text{max}=1.600072$. Together with the superconformal conditions, we can solve for the R-charges of the bifundamentals. Then the R-charges of GLSM fields should satisfy
\begin{eqnarray}
&&\left(0.843750 p_2+0.421875 p_3\right) p_4^2+(0.843750 p_2^2+1.6875 p_3
p_2-1.6875 p_2\nonumber\\
&&+0.421875 p_3^2-0.843750 p_3) p_4=-2.53125 p_3p_2^2-2.53125 p_3^2 p_2+5.0625 p_3 p_2-0.800036\nonumber\\
\end{eqnarray}
constrained by $\sum\limits_{i=1}^4p_i=2$ and $0<p_i<2$, with others vanishing.

\subsection{Polytope 17: $L^{5,6,1}$}\label{p17}
The polytope is
\begin{equation}
	\tikzset{every picture/.style={line width=0.75pt}} 
.\label{p17p}
\end{equation}
The brane tiling and the corresponding quiver are
\begin{equation}
\includegraphics[width=4cm]{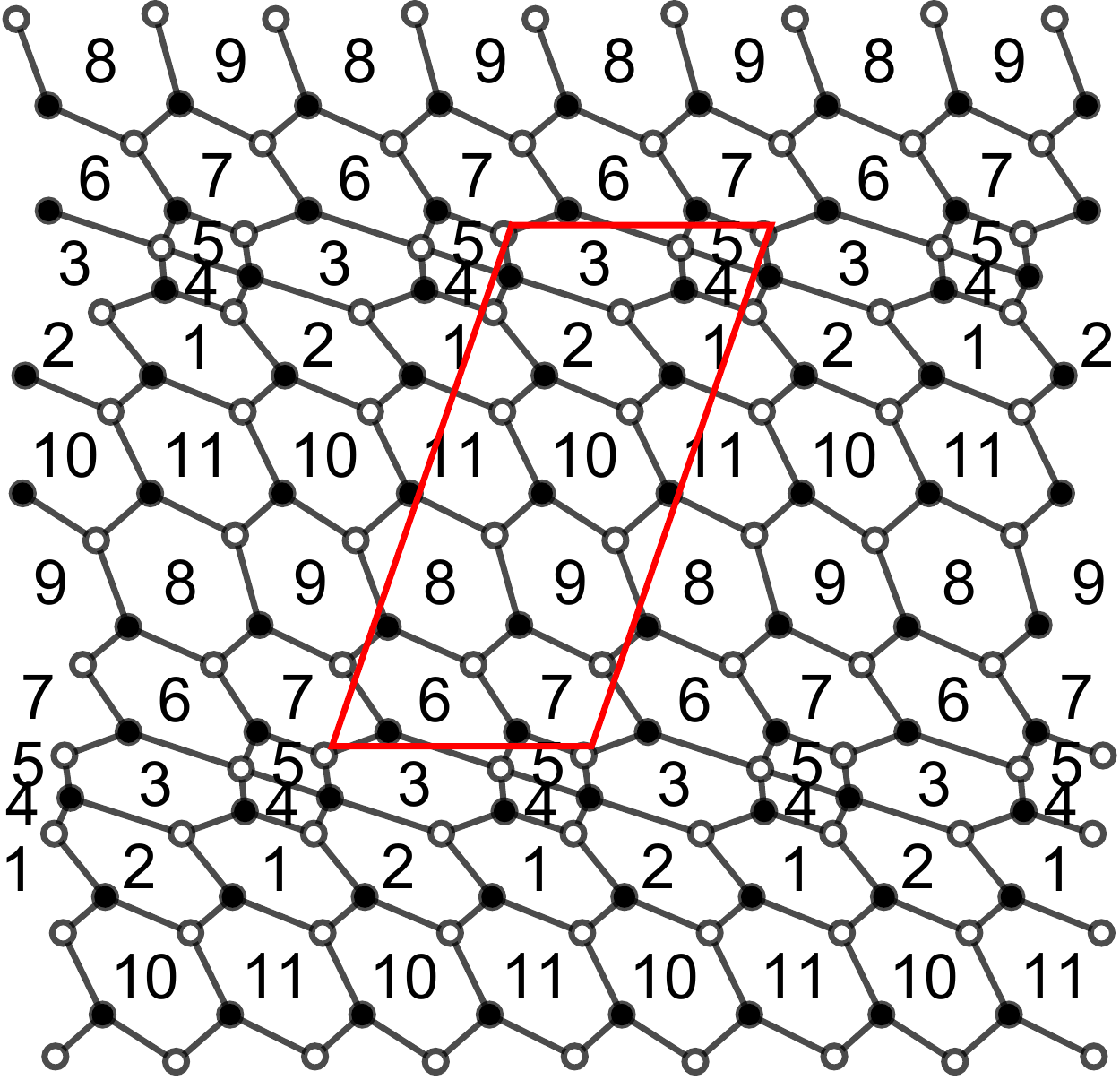};
\includegraphics[width=4cm]{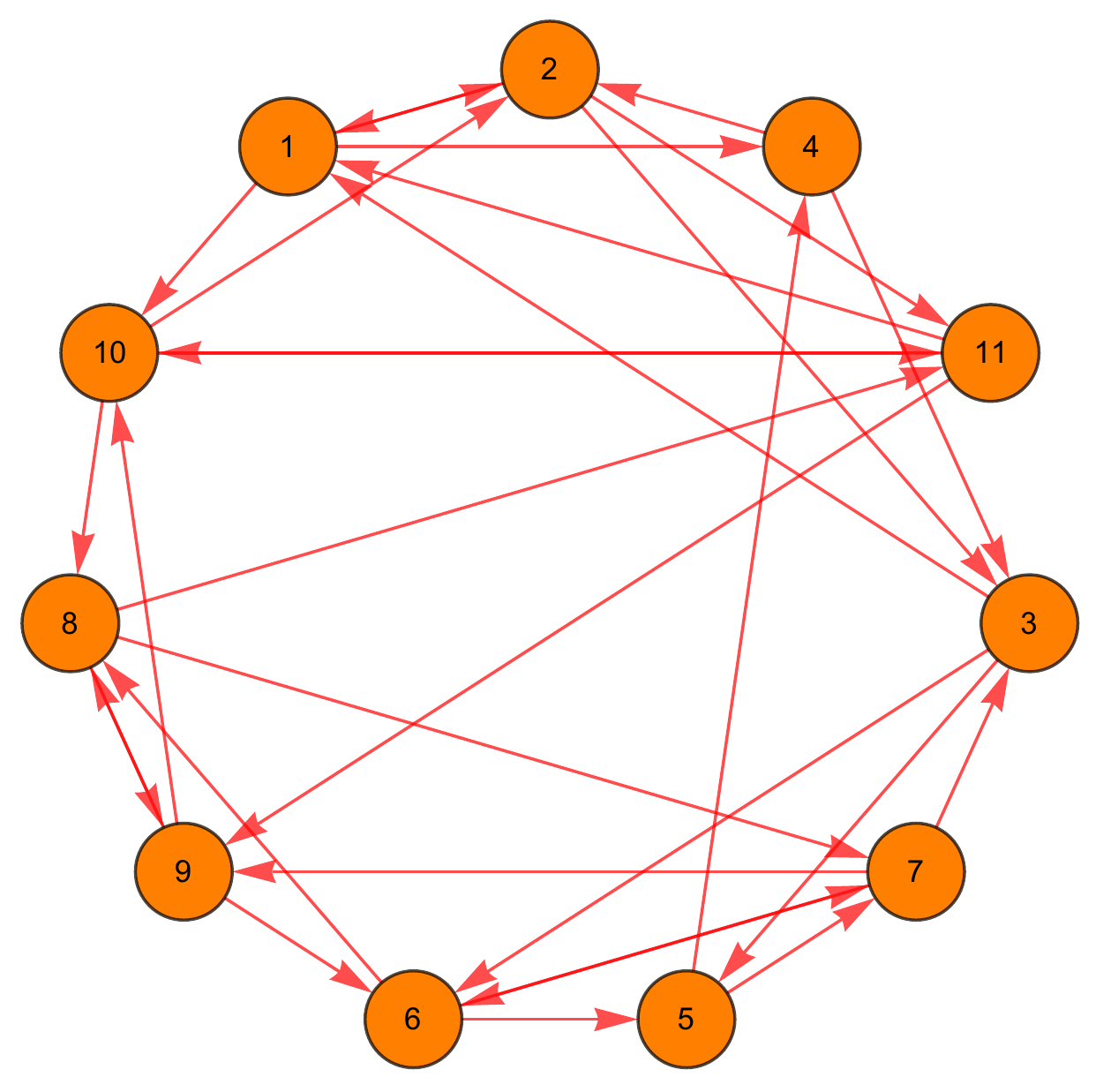}.
\end{equation}
The superpotential is
\begin{eqnarray}
W&=&X_{1,4}X_{4,2}X_{2,1}+X_{2,3}X_{3,1}X_{1,2}+X_{3,6}X_{6,5}X_{5,4}X_{4,3}+X_{5,7}X_{7,3}X_{3,5}+X_{6,8}X_{8,7}X_{7,6}\nonumber\\
&&+X_{7,9}X_{9,6}X_{6,7}+X_{9,10}X_{10,8}X_{8,9}+X_{8,11}X_{11,9}X_{9,8}+X_{11,1}X_{1,10}X_{10,11}+X_{10,2}X_{2,11}X_{11,10}\nonumber\\
&&-X_{2,1}X_{1,10}X_{10,2}-X_{3,1}X_{1,4}X_{4,3}-X_{4,2}X_{2,3}X_{3,5}X_{5,4}-X_{7,3}X_{3,6}X_{6,7}-X_{6,5}X_{5,7}X_{7,6}\nonumber\\
&&-X_{6,8}X_{8,9}X_{9,6}-X_{8,7}X_{7,9}X_{9,8}-X_{11,9}X_{9,10}X_{10,11}-X_{10,8}X_{8,11}X_{11,10}-X_{2,11}X_{11,1}X_{1,2}.\nonumber\\
\end{eqnarray}
The number of perfect matchings is $c=81$, which leads to gigantic $P$, $Q_t$ and $G_t$. Hence, we will not list them here. The GLSM fields associated to each point are shown in (\ref{p17p}), where
\begin{eqnarray}
&&q=\{q_1,q_2\},\ t=\{t_1,\dots,t_{5}\},\ r=\{r_1,\dots,r_{25}\},\ s=\{s_1,\dots,s_{20}\},\nonumber\\
&&u=\{u_1,\dots,u_{10}\},\ v=\{v_1,\dots,v_{10}\},\ w=\{w_1,\dots,w_5\}.
\end{eqnarray}
The mesonic symmetry reads U(1)$^2\times$U(1)$_\text{R}$ and the baryonic symmetry reads U(1)$^4_\text{h}\times$U(1)$^6$, where the subscripts ``R'' and ``h'' indicate R- and hidden symmetries respectively.

The Hilbert series of the toric cone is
\begin{eqnarray}
HS&=&\frac{1}{(1-t_2) \left(1-\frac{t_2}{t_1}\right) \left(1-\frac{t_1
		t_3}{t_2^2}\right)}+\frac{1}{\left(1-\frac{t_3^2}{t_1}\right) (1-t_2 t_3)
	\left(1-\frac{t_1}{t_2 t_3^2}\right)}\nonumber\\
&&+\frac{1}{(1-t_2) \left(1-\frac{t_1}{t_3}\right)
	\left(1-\frac{t_3^2}{t_1 t_2}\right)}+\frac{1}{\left(1-\frac{t_1}{t_3^2}\right) (1-t_2
	t_3) \left(1-\frac{t_3^2}{t_1 t_2}\right)}\nonumber\\
&&+\frac{1}{\left(1-\frac{1}{t_2}\right)
	\left(1-\frac{t_1}{t_3}\right) \left(1-\frac{t_2
		t_3^2}{t_1}\right)}+\frac{1}{\left(1-\frac{1}{t_2}\right)
	\left(1-\frac{t_2}{t_1}\right) (1-t_1 t_3)}\nonumber\\
&&+\frac{1}{\left(1-\frac{1}{t_1}\right)
	\left(1-\frac{t_1}{t_2}\right) (1-t_2 t_3)}+\frac{1}{\left(1-\frac{1}{t_1 t_3}\right)
	(1-t_2 t_3) \left(1-\frac{t_1 t_3}{t_2}\right)}\nonumber\\
&&+\frac{1}{(1-t_1)
	\left(1-\frac{1}{t_2}\right) \left(1-\frac{t_2 t_3}{t_1}\right)}+\frac{1}{(1-t_2)
	\left(1-\frac{t_1}{t_2}\right)
	\left(1-\frac{t_3}{t_1}\right)}\nonumber\\
&&+\frac{1}{\left(1-\frac{t_3}{t_1}\right) (1-t_2 t_3)
	\left(1-\frac{t_1}{t_2 t_3}\right)}.
\end{eqnarray}
The volume function is then
\begin{equation}
V=-\frac{5 {b_1}-7 {b_2}+24}{({b_2}+3) ({b_1}-2 {b_2}+3)
	({b_1}-{b_2}+3) ({b_1}+{b_2}-6)}.
\end{equation}
Minimizing $V$ yields $V_{\text{min}}=0.0974795$ at $b_1=1.8379935$, $b_2=-0.9546900$. Thus, $a_\text{max}=2.5646418$. Together with the superconformal conditions, we can solve for the R-charges of the bifundamentals. Then the R-charges of GLSM fields should satisfy
\begin{eqnarray}
&&\left(2.8125 p_2+0.46875 p_3\right) p_4^2+(2.8125 p_2^2+0.9375
p_3 p_2-5.625 p_2+0.46875 p_3^2-0.9375 p_3) p_4\nonumber\\
&&=-2.34375p_3 p_2^2-2.34375 p_3^2 p_2+4.6875 p_3 p_2-1.4248
\end{eqnarray}
constrained by $\sum\limits_{i=1}^4p_i=2$ and $0<p_i<2$, with others vanishing.

\subsection{Polytope 18: $L^{2,4,1}$}\label{p18}
The polytope is
\begin{equation}
\tikzset{every picture/.style={line width=0.75pt}} 
.\label{p18p}
\end{equation}
The brane tiling and the corresponding quiver are
\begin{equation}
\includegraphics[width=4cm]{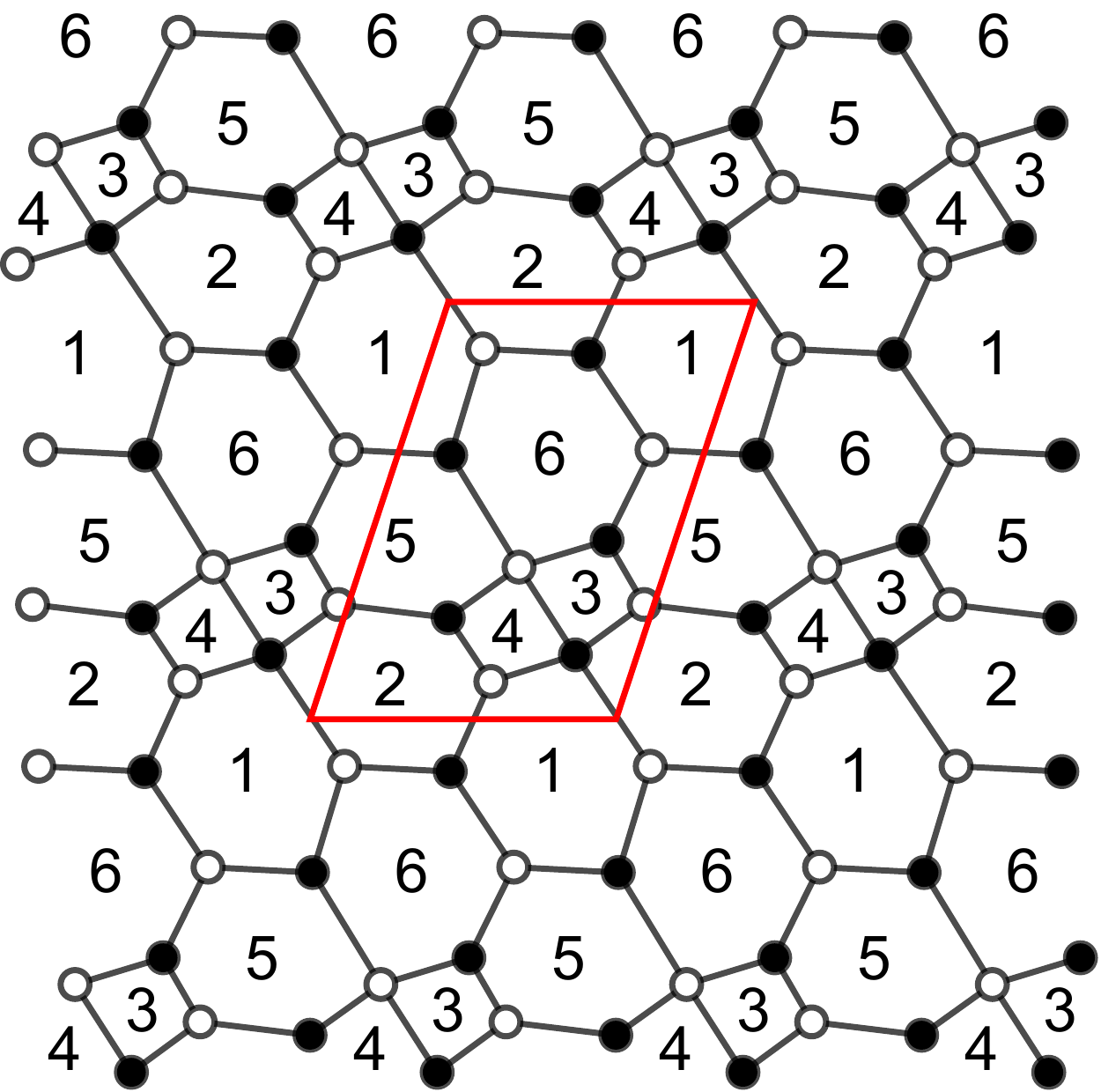};
\includegraphics[width=4cm]{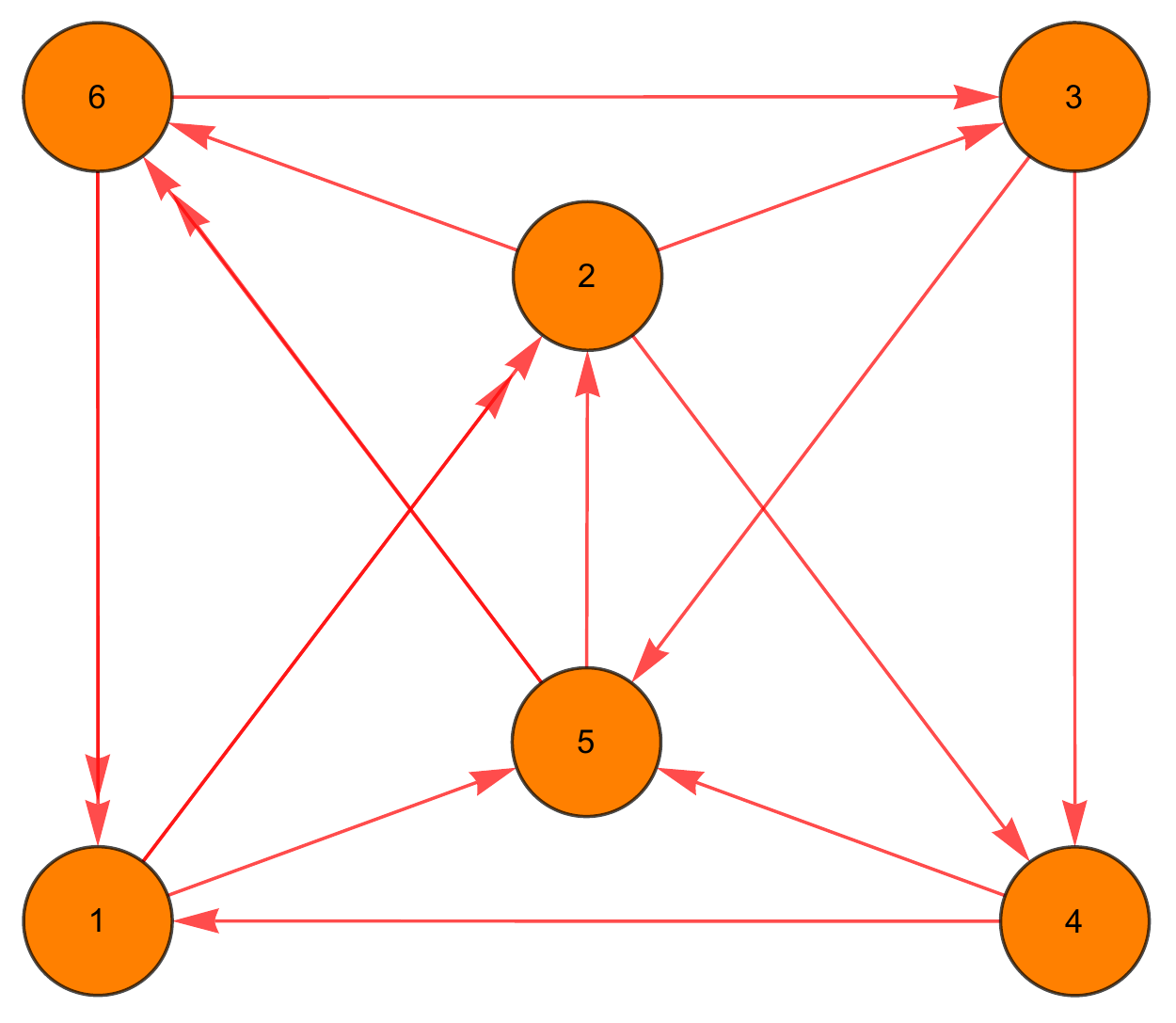}.
\end{equation}
The superpotential is
\begin{eqnarray}
W&=&X^1_{12}X_{26}X^2_{61}+X^1_{61}X_{15}X^2_{56}+X_{52}X_{23}X_{35}+X_{63}X_{34}X_{45}X^1_{56}+X_{41}X^2_{12}X_{24}\nonumber\\
&&-X^2_{12}X_{26}X^1_{61}-X^2_{61}X_{15}X^1_{56}-X^2_{56}X_{63}X_{35}-X_{23}X_{34}X_{41}X^1_{12}-X_{45}X_{52}X_{24}.\nonumber\\
\end{eqnarray}
The perfect matching matrix is
\begin{equation}
P=\left(
\tiny{
}
\right).
\end{equation}
From $G_t$, we can get the GLSM fields associated to each point as shown in (\ref{p18p}), where
\begin{equation}
r=\{r_1,\dots,r_{6}\},\ s=\{s_1,\dots,s_{6}\}.
\end{equation}
From $Q_t$ (and $Q_F$), the mesonic symmetry reads U(1)$^2\times$U(1)$_\text{R}$ and the baryonic symmetry reads U(1)$^4_\text{h}\times$U(1), where the subscripts ``R'' and ``h'' indicate R- and hidden symmetries respectively.

The Hilbert series of the toric cone is
\begin{eqnarray}
HS&=&\frac{1}{\left(1-\frac{1}{t_1}\right) \left(1-\frac{1}{t_1 t_2}\right) \left(1-t_1^2
	t_2 t_3\right)}+\frac{1}{(1-t_2) (1-t_1 t_2) \left(1-\frac{t_3}{t_1
		t_2^2}\right)}\nonumber\\
	&&+\frac{1}{(1-t_2) \left(1-\frac{t_1 t_2^2}{t_3}\right)
	\left(1-\frac{t_3^2}{t_1 t_2^3}\right)}+\frac{1}{\left(1-\frac{1}{t_2}\right)
	\left(1-\frac{t_1}{t_3}\right) \left(1-\frac{t_2 t_3^2}{t_1}\right)}\nonumber\\
&&+\frac{1}{(1-t_1)
	\left(1-\frac{1}{t_1 t_2}\right) (1-t_2 t_3)}+\frac{1}{\left(1-\frac{1}{t_2}\right)
	(1-t_1 t_2) \left(1-\frac{t_3}{t_1}\right)}.
\end{eqnarray}
The volume function is then
\begin{equation}
V=-\frac{2 ({b_1}-7 {b_2}-36)}{({b_2}+3) ({b_1}-{b_2}-6) (2
	{b_1}+{b_2}+3) ({b_1}+3 {b_2}-6)}.
\end{equation}
Minimizing $V$ yields $V_{\text{min}}=0.184633$ at $b_1=1.260879$, $b_2=-0.213490$. Thus, $a_\text{max}=1.354027$. Together with the superconformal conditions, we can solve for the R-charges of the bifundamentals. Then the R-charges of GLSM fields should satisfy
\begin{eqnarray}
&&\left(1.6875 p_3+4.21875 p_4\right) p_2^2+(1.6875p_3^2+8.4375 p_4 p_3-3.375p_3\nonumber\\
&&+4.21875p_4^2-8.4375 p_4) p_2=-3.375 p_4 p_3^2-3.375p_4^2 p_3+6.75 p_4 p_3-0.510277\nonumber\\
\end{eqnarray}
constrained by $\sum\limits_{i=1}^4p_i=2$ and $0<p_i<2$, with others vanishing.

\subsection{Polytope 19: $L^{5,4,1}$}\label{p19}
The polytope is
\begin{equation}
	\tikzset{every picture/.style={line width=0.75pt}} 
.\label{p19p}
\end{equation}
The brane tiling and the corrresponding quiver are
\begin{equation}
\includegraphics[width=4cm]{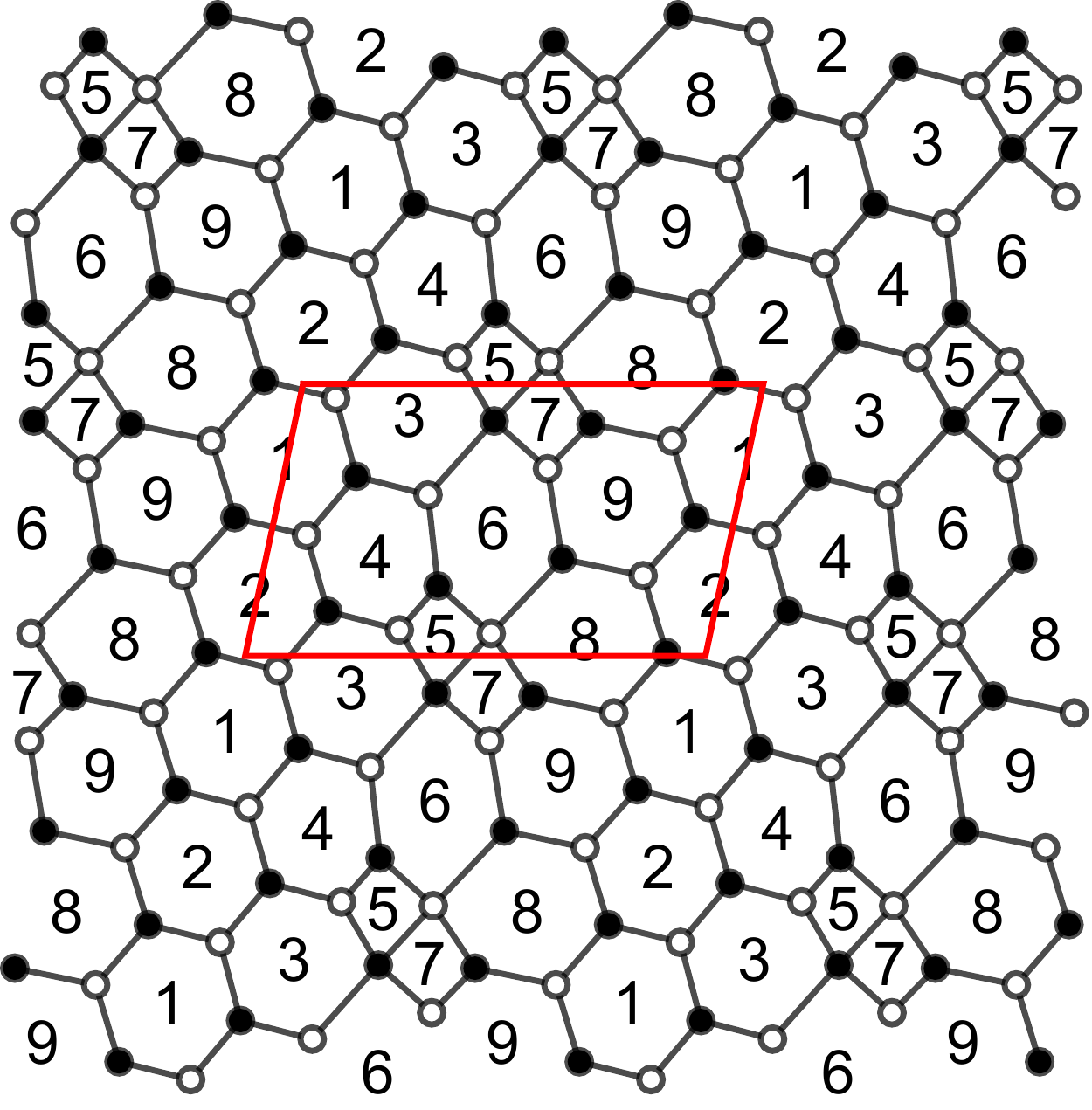};
\includegraphics[width=4cm]{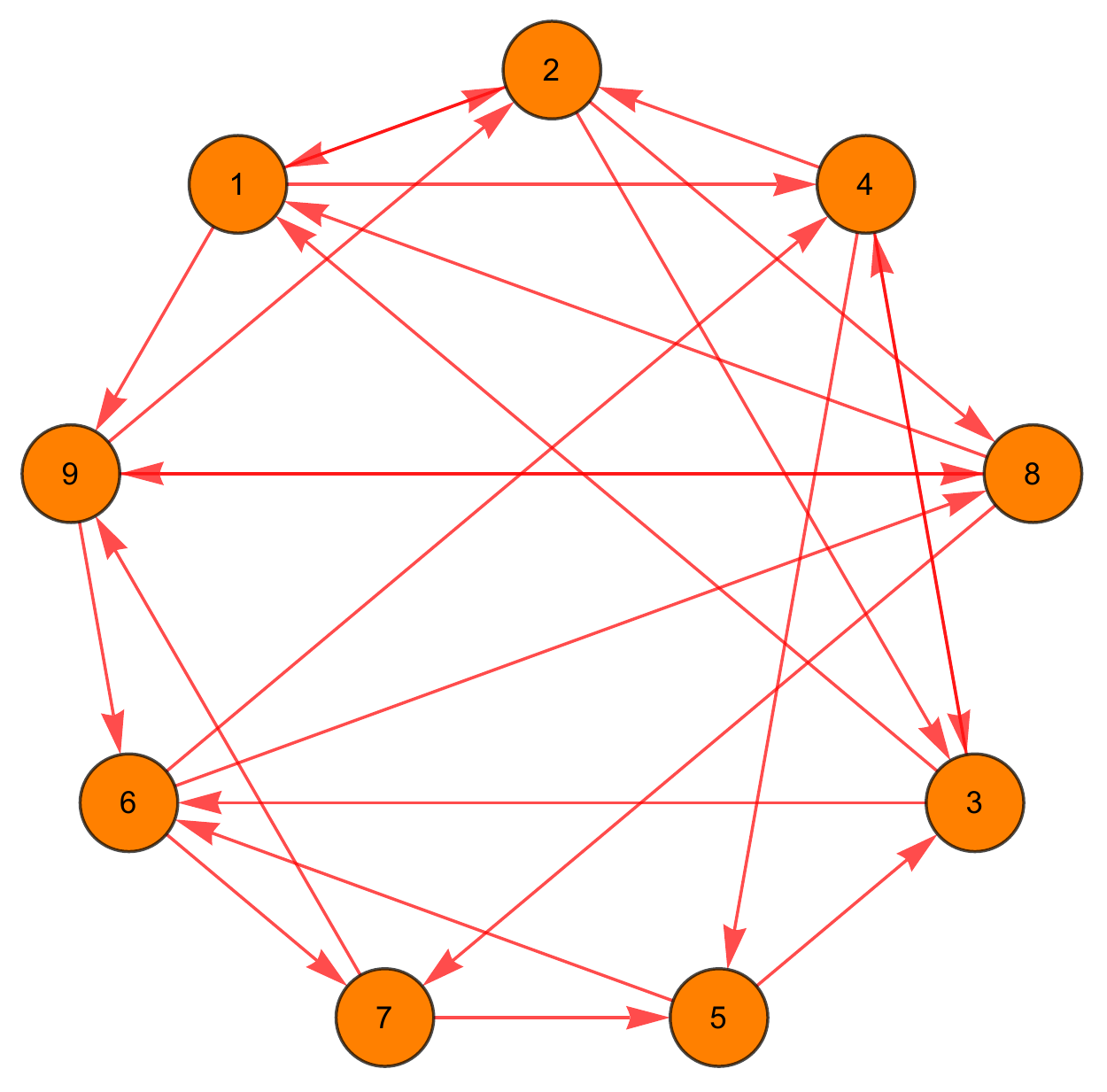}.
\end{equation}
The superpotential is
\begin{eqnarray}
W&=&X_{12}X_{23}X_{31}+X_{21}X_{14}X_{42}+X_{43}X_{36}X_{64}+X_{34}X_{45}X_{53}\nonumber\\
&&+X_{56}X_{68}X_{87}X_{75}+X_{79}X_{96}X_{67}+X_{81}X_{19}X_{98}+X_{92}X_{28}X_{89}\nonumber\\
&&-X_{31}X_{14}X_{43}-X_{42}X_{23}X_{34}-X_{64}X_{45}X_{56}-X_{53}X_{36}X_{67}X_{75}\nonumber\\
&&-X_{96}X_{68}X_{89}-X_{87}X_{79}X_{98}-X_{19}X_{92}X_{21}-X_{28}X_{81}X_{12}.
\end{eqnarray}
The number of perfect matchings is $c=41$, which leads to gigantic $P$, $Q_t$ and $G_t$. Hence, we will not list them here. The GLSM fields associated to each point are shown in (\ref{p19p}), where
\begin{eqnarray}
&&q=\{q_1,\dots,q_4\},\ t=\{t_1,\dots,t_{6}\},\ r=\{r_1,\dots,r_{9}\},\nonumber\\
&&s=\{s_1,\dots,s_{14}\},\ u=\{u_1,\dots,u_{4}\}.
\end{eqnarray}
The mesonic symmetry reads U(1)$^2\times$U(1)$_\text{R}$ and the baryonic symmetry reads U(1)$^4_\text{h}\times$U(1)$^4$, where the subscripts ``R'' and ``h'' indicate R- and hidden symmetries respectively.

The Hilbert series of the toric cone is
\begin{eqnarray}
HS&=&\frac{1}{\left(1-\frac{t_1}{t_3}\right) \left(1-\frac{t_1 t_2}{t_3}\right)
	\left(1-\frac{t_3^3}{t_1^2 t_2}\right)}+\frac{1}{(1-t_2)
	\left(1-\frac{t_2}{t_1}\right) \left(1-\frac{t_1 t_3}{t_2^2}\right)}\nonumber\\
&&+\frac{1}{(1-t_1)
	\left(1-\frac{1}{t_1 t_2}\right) (1-t_2 t_3)}+\frac{1}{(1-t_1 t_3) (1-t_2 t_3)
	\left(1-\frac{1}{t_1 t_2 t_3}\right)}\nonumber\\
&&+\frac{1}{\left(1-\frac{t_1}{t_3}\right) (1-t_2
	t_3) \left(1-\frac{t_3}{t_1 t_2}\right)}+\frac{1}{\left(1-\frac{1}{t_1 t_3}\right)
	(1-t_2 t_3) \left(1-\frac{t_1 t_3}{t_2}\right)}\nonumber\\
&&+\frac{1}{\left(1-\frac{1}{t_1}\right)
	\left(1-\frac{1}{t_2}\right) (1-t_1 t_2 t_3)}+\frac{1}{(1-t_2)
	\left(1-\frac{t_1}{t_2}\right)
	\left(1-\frac{t_3}{t_1}\right)}\nonumber\\
&&+\frac{1}{\left(1-\frac{1}{t_2}\right) (1-t_1 t_2)
	\left(1-\frac{t_3}{t_1}\right)}.
\end{eqnarray}
The volume function is then
\begin{equation}
V=-\frac{8 {b_1}-11 {b_2}+39}{({b_2}+3) ({b_1}-2 {b_2}+3)
	({b_1}-{b_2}+3) (2 {b_1}+{b_2}-9)}.
\end{equation}
Minimizing $V$ yields $V_{\text{min}}=0.120498$ at $b_1=0.834510$, $b_2=-0.936102$. Thus, $a_\text{max}=2.074723$. Together with the superconformal conditions, we can solve for the R-charges of the bifundamentals. Then the R-charges of GLSM fields should satisfy
\begin{eqnarray}
&&\left(6.75 p_2+4.21875 p_4\right) p_3^2+(6.75
p_2^2+13.5 p_4 p_2-13.5 p_2+4.21875 p_4^2\nonumber\\
&&-8.4375 p_4) p_3=-3.375 p_4 p_2^2-3.375 p_4^2 p_2+6.75 p_4 p_2-1.35403
\end{eqnarray}
constrained by $\sum\limits_{i=1}^4p_i=2$ and $0<p_i<2$, with others vanishing.

\subsection{Polytope 20: $L^{1,5,1}/\mathbb{Z}_2$ (1,0,0,1)}\label{p20}
The polytope is
\begin{equation}
	\tikzset{every picture/.style={line width=0.75pt}} 
.\label{p20p}
\end{equation}
The brane tiling and the corrresponding quiver are
\begin{equation}
\includegraphics[width=4cm]{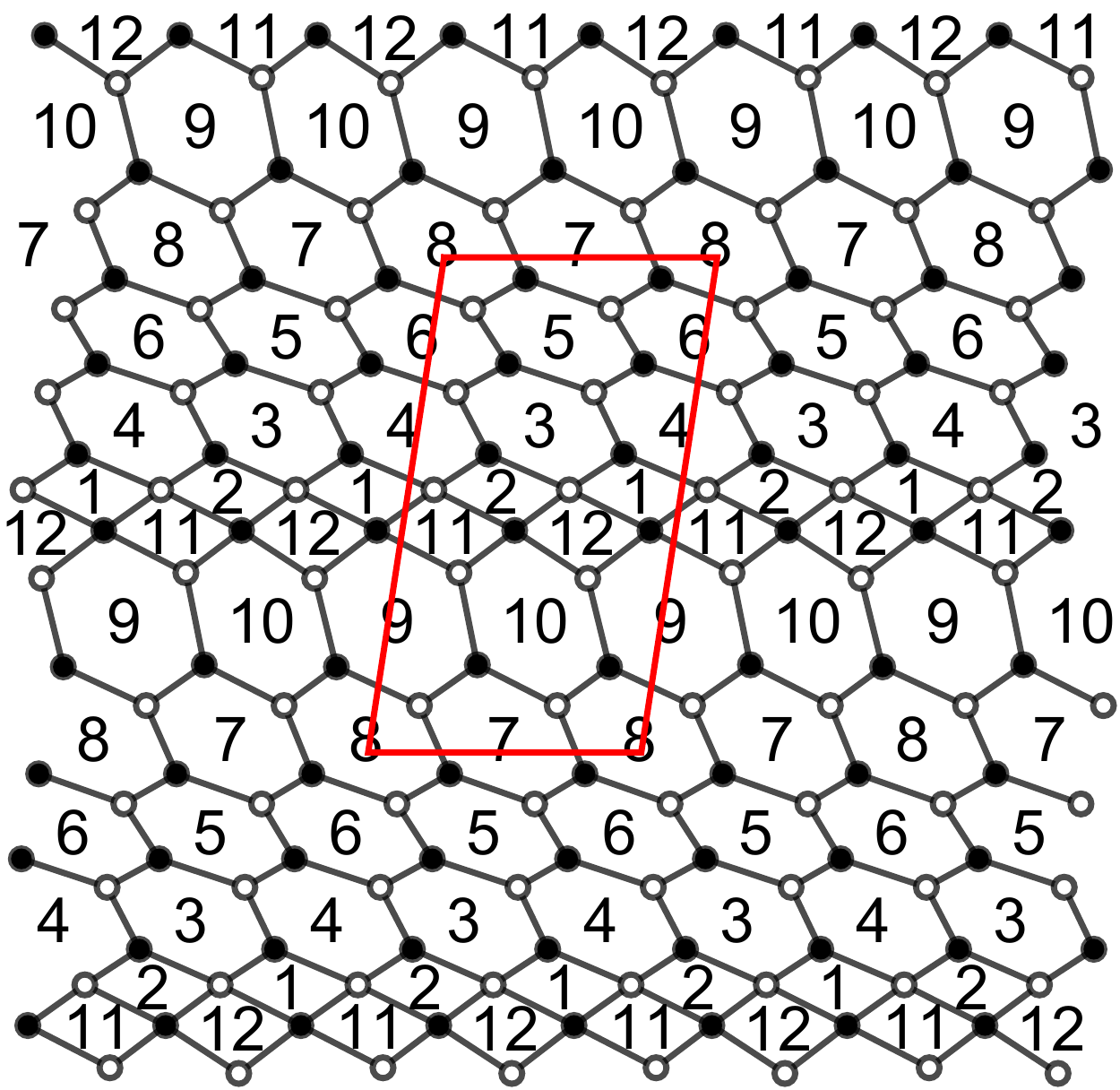};
\includegraphics[width=4cm]{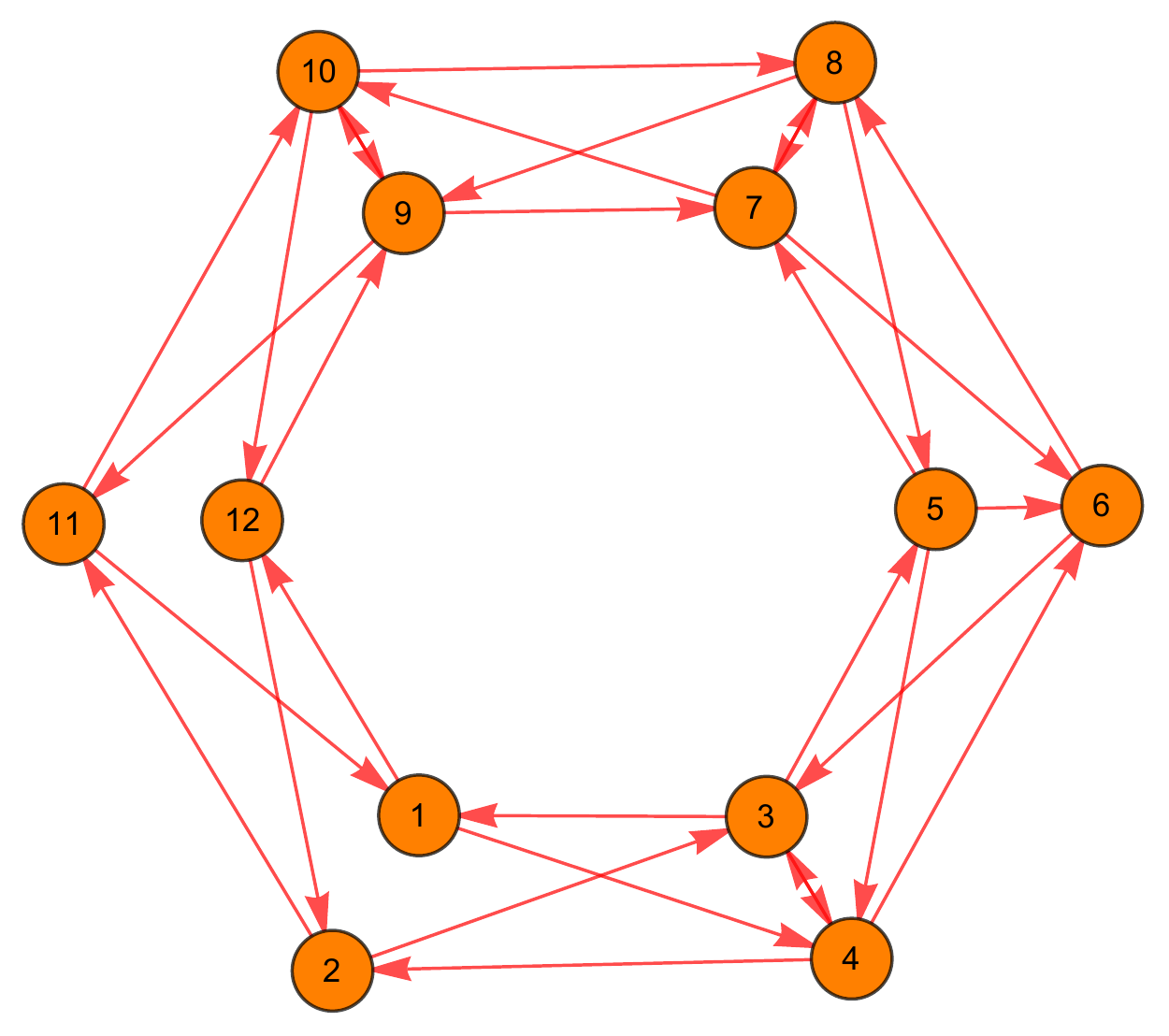}.
\end{equation}
The superpotential is
\begin{eqnarray}
W&=&X_{1,4}X_{4,2}X_{2,11}X_{11,1}+X_{2,3}X_{3,1}X_{1,12}X_{12,2}+X_{3,5}X_{5,4}X_{4,3}+X_{4,6}X_{6,3}X_{3,4}+X_{5,7}X_{7,6}X_{6,5}\nonumber\\
&&+X_{6,8}X_{8,5}X_{5,6}+X_{8,9}X_{9,7}X_{7,8}+X_{7,10}X_{10,8}X_{8,7}+X_{9,11}X_{11,10}X_{10,9}+X_{10,12}X_{12,9}X_{9,10}\nonumber\\
&&-X_{11,10}X_{10,12}X_{12,2}X_{2,11}-X_{3,1}X_{1,4}X_{4,3}-X_{4,2}X_{2,3}X_{3,4}-X_{5,4}X_{4,6}X_{6,5}-X_{6,3}X_{3,5}X_{5,6}\nonumber\\
&&-X_{8,5}X_{5,7}X_{7,8}-X_{7,6}X_{6,8}X_{8,7}-X_{9,7}X_{7,10}X_{10,9}-X_{10,8}X_{8,9}X_{9,10}-X_{12,9}X_{9,11}X_{11,1}X_{1,12}.\nonumber\\
\end{eqnarray}
The number of perfect matchings is $c=98$, which leads to gigantic $P$, $Q_t$ and $G_t$. Hence, we will not list them here. The GLSM fields associated to each point are shown in (\ref{p20p}), where
\begin{eqnarray}
&&q=\{q_1,q_2\},\ r=\{r_1,\dots,r_{30}\},\ u=\{u_1,\dots,u_{5}\},\ v=\{v_1,\dots,v_{10}\},\nonumber\\
&&t=\{t_1,t_{2}\},\ s=\{s_1,\dots,s_{30}\},\ x=\{x_1,\dots,x_{5}\},\ w=\{w_1,\dots,w_{10}\}.
\end{eqnarray}
The mesonic symmetry reads U(1)$^2\times$U(1)$_\text{R}$ and the baryonic symmetry reads U(1)$^4_\text{h}\times$U(1)$^7$, where the subscripts ``R'' and ``h'' indicate R- and hidden symmetries respectively.

The Hilbert series of the toric cone is
\begin{eqnarray}
HS&=&\frac{1}{\left(1-\frac{1}{t_2}\right) \left(1-\frac{t_3^2}{t_1}\right)
	\left(1-\frac{t_1 t_2}{t_3}\right)}+\frac{1}{(1-t_2) \left(1-\frac{t_1}{t_3}\right)
	\left(1-\frac{t_3^2}{t_1 t_2}\right)}\nonumber\\
&&+\frac{1}{\left(1-\frac{t_1}{t_3^2}\right) (1-t_2
	t_3) \left(1-\frac{t_3^2}{t_1 t_2}\right)}+\frac{1}{(1-t_1) (1-t_2) \left(1-\frac{t_3}{t_1
		t_2}\right)}\nonumber\\
	&&+\frac{1}{\left(1-\frac{1}{t_1}\right) (1-t_2) \left(1-\frac{t_1
		t_3}{t_2}\right)}+\frac{1}{(1-t_1) \left(1-\frac{1}{t_1 t_2}\right) (1-t_2
	t_3)}\nonumber\\
&&+\frac{1}{(1-t_1 t_3) (1-t_2 t_3) \left(1-\frac{1}{t_1 t_2
		t_3}\right)}+\frac{1}{\left(1-\frac{t_1}{t_3}\right) (1-t_2 t_3) \left(1-\frac{t_3}{t_1
		t_2}\right)}\nonumber\\
	&&+\frac{1}{\left(1-\frac{1}{t_1 t_3}\right) (1-t_2 t_3) \left(1-\frac{t_1
		t_3}{t_2}\right)}+\frac{1}{\left(1-\frac{1}{t_1}\right)
	\left(1-\frac{1}{t_2}\right) (1-t_1 t_2 t_3)}\nonumber\\
&&+\frac{1}{\left(1-\frac{1}{t_2}\right)
	(1-t_1 t_2) \left(1-\frac{t_3}{t_1}\right)}+\frac{1}{\left(1-\frac{t_3}{t_1}\right)
	\left(1-\frac{t_3}{t_2}\right) \left(1-\frac{t_1 t_2}{t_3}\right)}.
\end{eqnarray}
The volume function is then
\begin{equation}
V=-\frac{18-4 {b_2}}{({b_2}-3) ({b_2}+3) (-{b_1}+{b_2}-3)
	({b_1}+{b_2}-6)}.
\end{equation}
Minimizing $V$ yields $V_{\text{min}}=\frac{4}{225}(-27+7\sqrt{21})$ at $b_1=3/2$, $b_2=\frac{1}{2}(3-\sqrt{21})$. Thus, $a_\text{max}=(81+21\sqrt{21})/64$. Together with the superconformal conditions, we can solve for the R-charges of the bifundamentals. Then the R-charges of GLSM fields should satisfy
\begin{eqnarray}
&&\left(36 p_3+180 p_4\right) p_2^2+\left(36 p_3^2+72 p_4 p_3-72 p_3+180 p_4^2-360
p_4\right) p_2\nonumber\\
&&=-36 p_4 p_3^2-36 p_4^2 p_3+72 p_4 p_3-7 \sqrt{21}-27
\end{eqnarray}
constrained by $\sum\limits_{i=1}^4p_i=2$ and $0<p_i<2$, with others vanishing.

\subsection{Polytope 21: SPP$/\mathbb{Z}_3$ (1,0,0,2)}\label{p21}
The polytope is
\begin{equation}
	\tikzset{every picture/.style={line width=0.75pt}} 
.\label{p21p}
\end{equation}
The brane tiling and the corrresponding quiver are
\begin{equation}
\includegraphics[width=4cm]{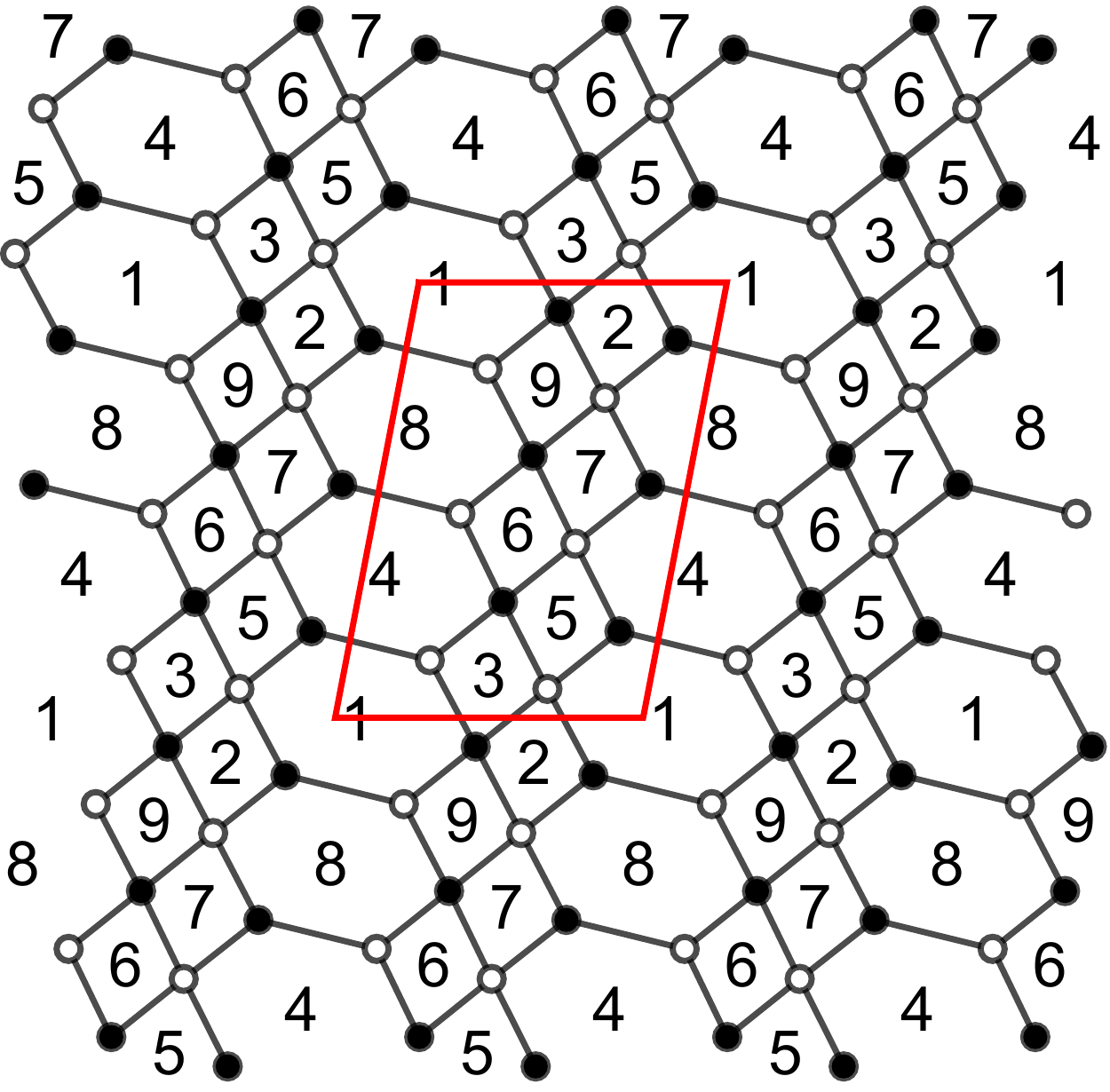};
\includegraphics[width=4cm]{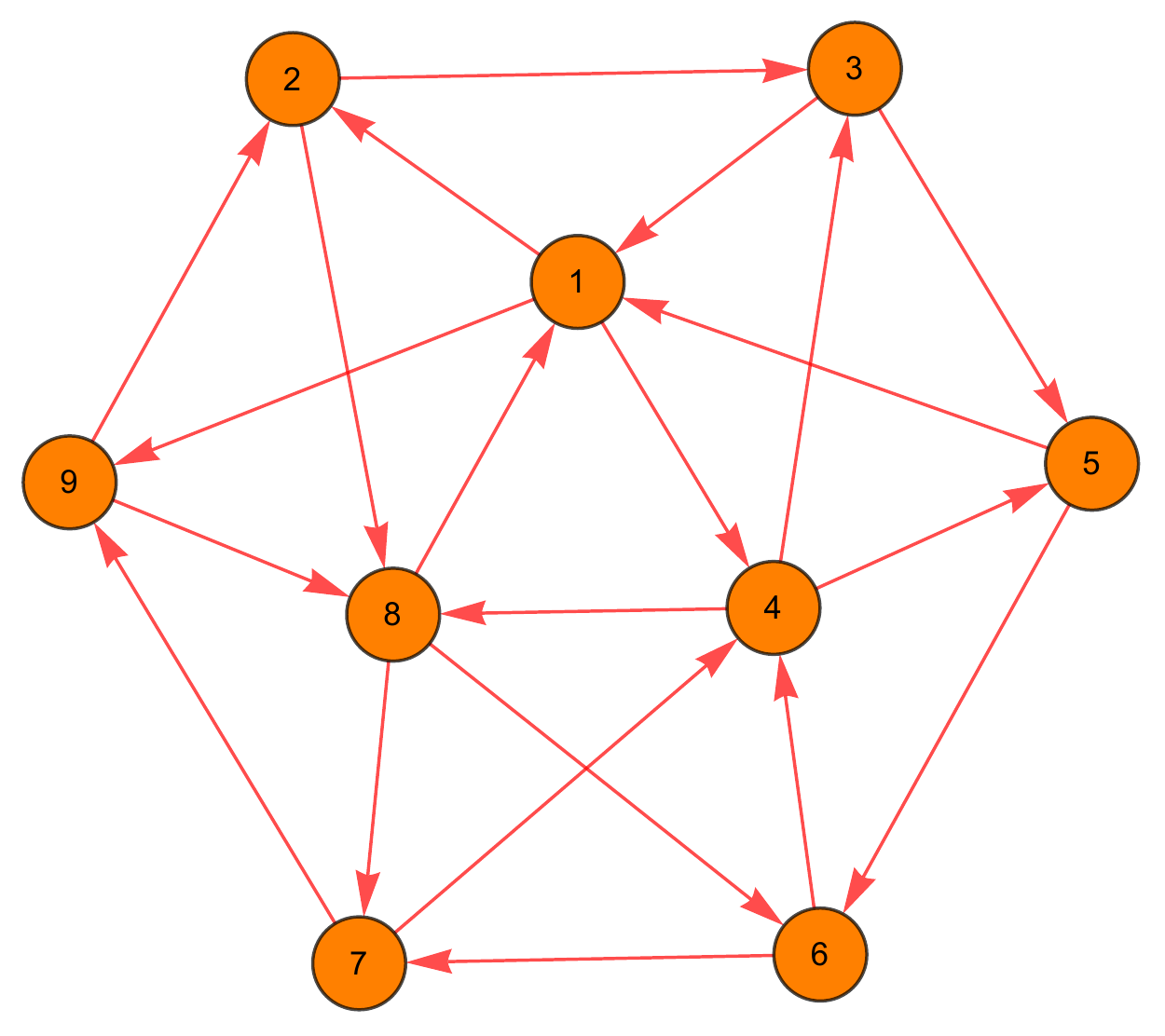}.
\end{equation}
The superpotential is
\begin{eqnarray}
W&=&X_{14}X_{43}X_{31}+X_{23}X_{35}X_{51}X_{12}+X_{48}X_{86}X_{64}+X_{56}X_{67}X_{74}X_{45}\nonumber\\
&&+X_{81}X_{19}X_{98}+X_{79}X_{92}X_{28}X_{87}-X_{19}X_{92}X_{23}X_{31}-X_{28}X_{81}X_{12}\nonumber\\
&&-X_{43}X_{35}X_{56}X_{64}-X_{51}X_{14}X_{45}-X_{86}X_{67}X_{79}X_{98}-X_{74}X_{48}X_{87}.
\end{eqnarray}
The number of perfect matchings is $c=36$, which leads to gigantic $P$, $Q_t$ and $G_t$. Hence, we will not list them here. The GLSM fields associated to each point are shown in (\ref{p21p}), where
\begin{eqnarray}
&&q=\{q_1,q_2\},\ r=\{r_1,\dots,r_{15}\},\ s=\{s_1,\dots,s_{9}\},\nonumber\\
&&t=\{t_1,t_{3}\},\ u=\{u_1,\dots,u_{3}\}.
\end{eqnarray}
The mesonic symmetry reads U(1)$^2\times$U(1)$_\text{R}$ and the baryonic symmetry reads U(1)$^4_\text{h}\times$U(1)$^4$, where the subscripts ``R'' and ``h'' indicate R- and hidden symmetries respectively.

The Hilbert series of the toric cone is
\begin{eqnarray}
HS&=&\frac{1}{(1-t_2) (1-t_1 t_2) \left(1-\frac{t_3}{t_1 t_2^2}\right)}+\frac{1}{(1-t_2)
	\left(1-\frac{t_1 t_2^2}{t_3}\right) \left(1-\frac{t_3^2}{t_1
		t_2^3}\right)}\nonumber\\
	&&+\frac{1}{\left(1-\frac{1}{t_2}\right)
	\left(1-\frac{t_3^2}{t_1}\right) \left(1-\frac{t_1 t_2}{t_3}\right)}+\frac{1}{(1-t_2)
	\left(1-\frac{1}{t_1 t_2}\right) (1-t_1 t_3)}\nonumber\\
&&+\frac{1}{(1-t_1) \left(1-\frac{1}{t_1
		t_2}\right) (1-t_2 t_3)}+\frac{1}{(1-t_1 t_3) (1-t_2 t_3) \left(1-\frac{1}{t_1 t_2
		t_3}\right)}\nonumber\\
	&&+\frac{1}{\left(1-\frac{t_1}{t_3}\right) (1-t_2 t_3) \left(1-\frac{t_3}{t_1
		t_2}\right)}+\frac{1}{\left(1-\frac{1}{t_1}\right) \left(1-\frac{1}{t_2}\right)
	(1-t_1 t_2 t_3)}\nonumber\\
&&+\frac{1}{\left(1-\frac{1}{t_2}\right) (1-t_1 t_2)
	\left(1-\frac{t_3}{t_1}\right)}.
\end{eqnarray}
The volume function is then
\begin{equation}
V=-\frac{3 ({b_1}-15)}{({b_1}-6) ({b_1}+3) ({b_2}+3) ({b_1}+3
	{b_2}-6)}.
\end{equation}
Minimizing $V$ yields $V_{\text{min}}=2\sqrt{3}/27$ at $b_1=3(2-\sqrt{3})$, $b_2=(\sqrt{3}-3)/2$. Thus, $a_\text{max}=(9+\sqrt{3})/8$. Together with the superconformal conditions, we can solve for the R-charges of the bifundamentals. Then the R-charges of GLSM fields should satisfy
\begin{eqnarray}
&&\left(162 p_2+81 p_3\right) p_4^2+\left(162 p_2^2+162 p_3 p_2-324 p_2+81 p_3^2-162
p_3\right) p_4\nonumber\\
&&=-81 p_3 p_2^2-81 p_3^2 p_2+162 p_3 p_2-4 \sqrt{3}-36
\end{eqnarray}
constrained by $\sum\limits_{i=1}^4p_i=2$ and $0<p_i<2$, with others vanishing.

\subsection{Polytope 22: $\mathcal{C}/(\mathbb{Z}_3\times\mathbb{Z}_2)$ (1,0,0,2)(0,1,1,0)}\label{p22}
The polytope is
\begin{equation}
	\tikzset{every picture/.style={line width=0.75pt}} 
.\label{p22p}
\end{equation}
The brane tiling and the corrresponding quiver are
\begin{equation}
\includegraphics[width=4cm]{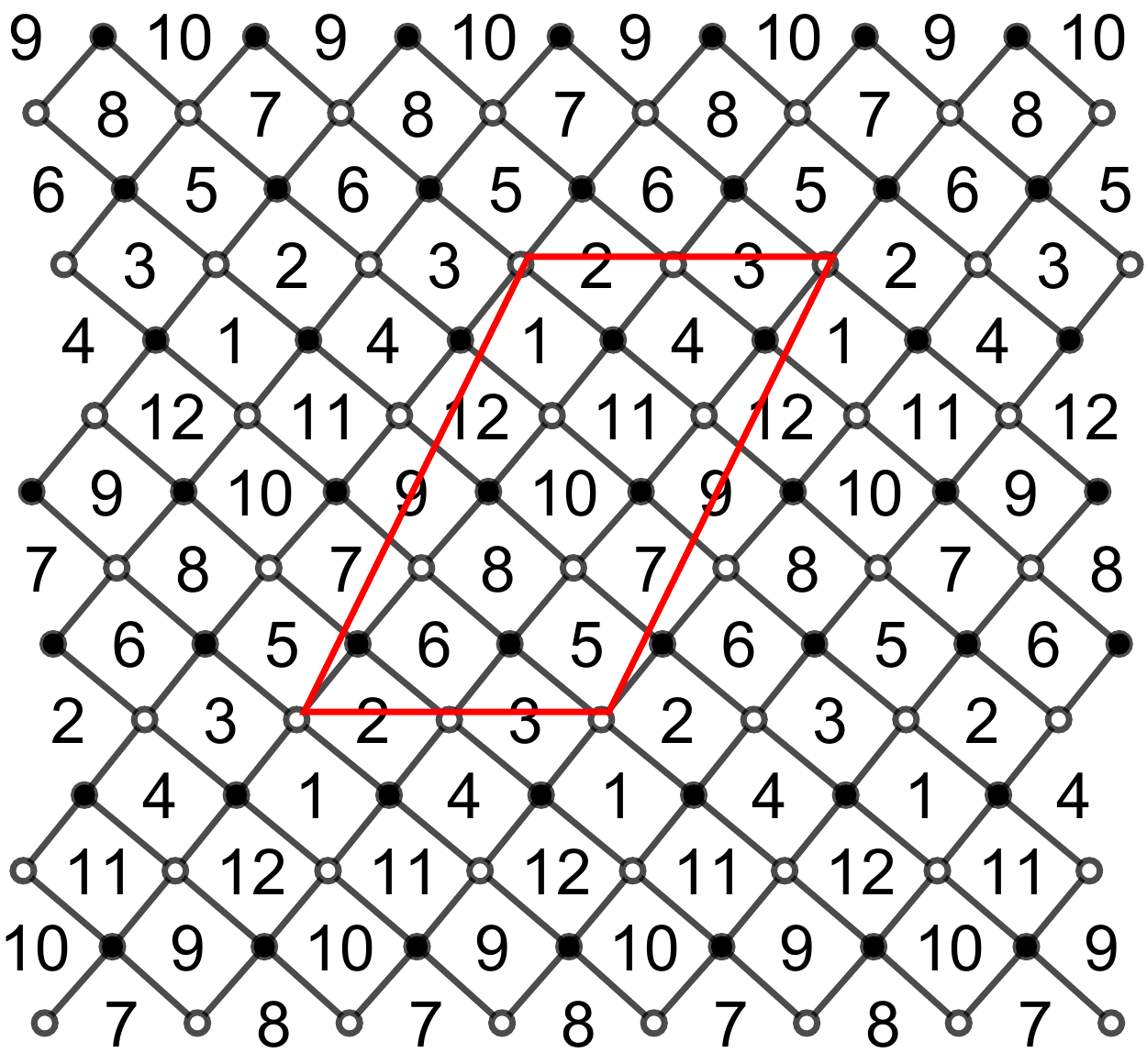};
\includegraphics[width=4cm]{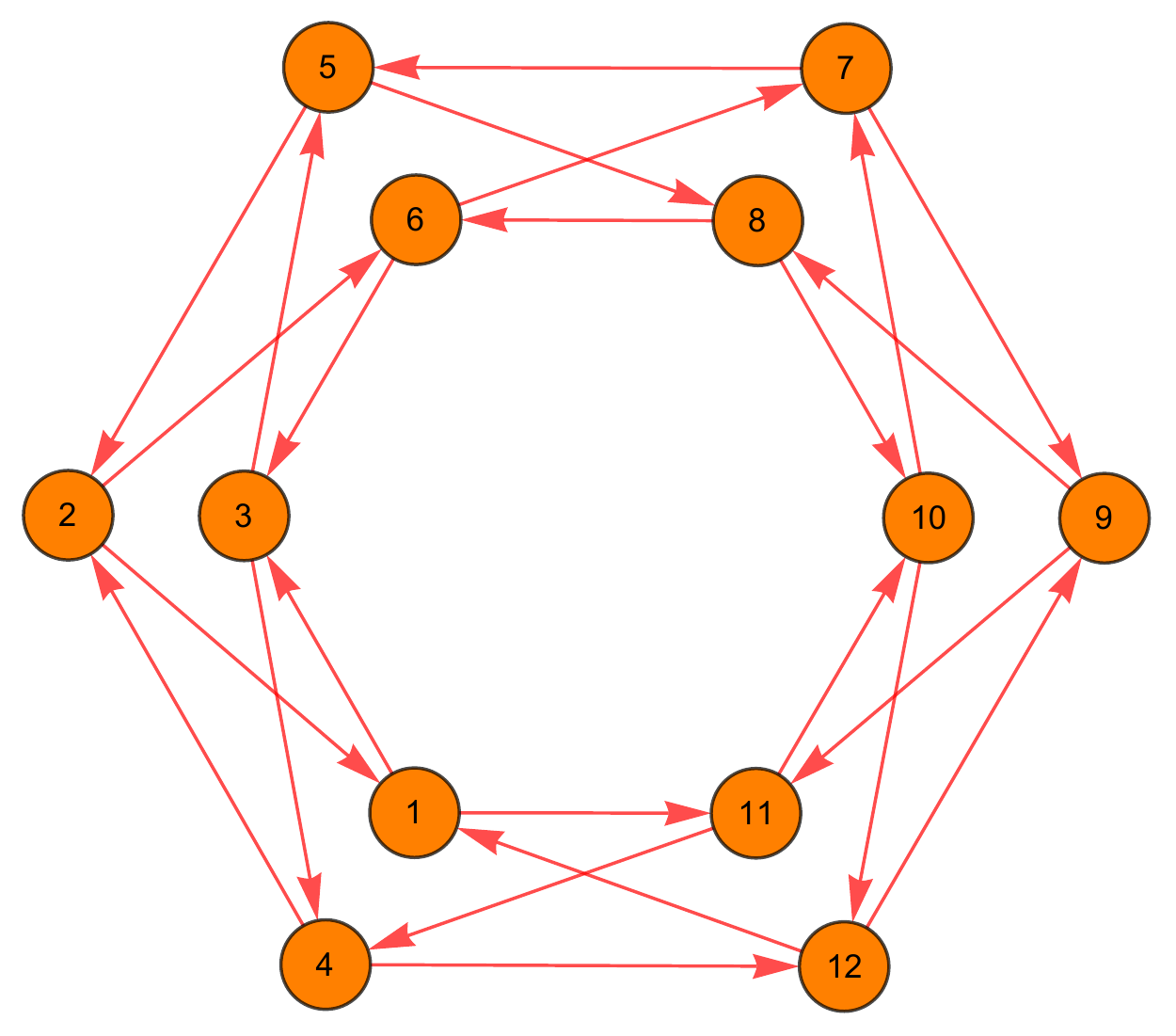}.
\end{equation}
The superpotential is
\begin{eqnarray}
W&=&X_{12,1}X_{1,11}X_{11,10}X_{10,12}+X_{11,4}X_{4,12}X_{12,9}X_{9,11}+X_{3,5}X_{5,2}X_{2,1}X_{1,3}+X_{2,6}X_{6,3}X_{3,4}X_{4,2}\nonumber\\
&&+X_{8,10}X_{10,7}X_{7,5}X_{5,8}+X_{7,9}X_{9,8}X_{8,6}X_{6,7}-X_{1,11}X_{11,4}X_{4,2}X_{2,1}-X_{4,12}X_{12,1}X_{1,3}X_{3,4}\nonumber\\
&&-X_{5,2}X_{2,6}X_{6,7}X_{7,5}-X_{6,3}X_{3,5}X_{5,8}X_{8,6}-X_{10,7}X_{7,9}X_{9,11}X_{11,10}-X_{9,8}X_{8,10}X_{10,12}X_{12,9}.\nonumber\\
\end{eqnarray}
The number of perfect matchings is $c=80$, which leads to gigantic $P$, $Q_t$ and $G_t$. Hence, we will not list them here. The GLSM fields associated to each point are shown in (\ref{p22p}), where
\begin{eqnarray}
&&q=\{q_1,q_2\},\ r=\{r_1,\dots,r_{30}\},\ u=\{u_1,\dots,u_{3}\},\ v=\{v_1,\dots,v_{3}\},\nonumber\\
&&t=\{t_1,t_{2}\},\ s=\{s_1,\dots,s_{30}\},\ w=\{w_1,\dots,w_{3}\},\ x=\{x_1,\dots,x_{3}\}.
\end{eqnarray}
The mesonic symmetry reads U(1)$^2\times$U(1)$_\text{R}$ and the baryonic symmetry reads U(1)$^4_\text{h}\times$U(1)$^7$, where the subscripts ``R'' and ``h'' indicate R- and hidden symmetries respectively.

The Hilbert series of the toric cone is
\begin{eqnarray}
HS&=&\frac{1}{\left(1-\frac{1}{t_2}\right) \left(1-\frac{t_3^2}{t_1}\right)
	\left(1-\frac{t_1 t_2}{t_3}\right)}+\frac{1}{\left(1-\frac{t_3^2}{t_1}\right)
	\left(1-\frac{t_3}{t_2}\right) \left(1-\frac{t_1 t_2}{t_3^2}\right)}\nonumber\\
&&+\frac{1}{(1-t_2)
	\left(1-\frac{t_1}{t_3}\right) \left(1-\frac{t_3^2}{t_1 t_2}\right)}+\frac{1}{(1-t_2)
	\left(1-\frac{1}{t_1 t_2}\right) (1-t_1 t_3)}\nonumber\\
&&+\frac{1}{\left(1-\frac{1}{t_1}\right)
	(1-t_1 t_2) \left(1-\frac{t_3}{t_2}\right)}+\frac{1}{(1-t_1) (1-t_2)
	\left(1-\frac{t_3}{t_1 t_2}\right)}\nonumber\\
&&+\frac{1}{(1-t_1) \left(1-\frac{1}{t_1 t_2}\right)
	(1-t_2 t_3)}+\frac{1}{(1-t_1 t_3) (1-t_2 t_3) \left(1-\frac{1}{t_1 t_2
		t_3}\right)}\nonumber\\
	&&+\frac{1}{\left(1-\frac{t_1}{t_3}\right) (1-t_2 t_3) \left(1-\frac{t_3}{t_1
		t_2}\right)}+\frac{1}{\left(1-\frac{1}{t_1}\right) \left(1-\frac{1}{t_2}\right)
	(1-t_1 t_2 t_3)}\nonumber\\
&&+\frac{1}{\left(1-\frac{1}{t_2}\right) (1-t_1 t_2)
	\left(1-\frac{t_3}{t_1}\right)}+\frac{1}{\left(1-\frac{t_3}{t_1}\right)
	\left(1-\frac{t_3}{t_2}\right) \left(1-\frac{t_1 t_2}{t_3}\right)}.
\end{eqnarray}
The volume function is then
\begin{equation}
V=\frac{18}{({b_1}-6) ({b_1}+3) ({b_2}-3) ({b_2}+3)}.
\end{equation}
Minimizing $V$ yields $V_{\text{min}}=8/81$ at $b_1=3/2$, $b_2=0$. Thus, $a_\text{max}=81/32$. Together with the superconformal conditions, we can solve for the R-charges of the bifundamentals, which are $X_I=1/2$ for any $I$, viz, for all the bifundamentals. Hence, the R-charges of GLSM fields are $p_i=1/2$ with others vanishing.

\subsection{Polytope 23: $L^{1,3,2}$}\label{p23}
The polytope is
\begin{equation}
	\tikzset{every picture/.style={line width=0.75pt}} 
.\label{p23p}
\end{equation}
The brane tiling and the corrresponding quiver are
\begin{equation}
\includegraphics[width=4cm]{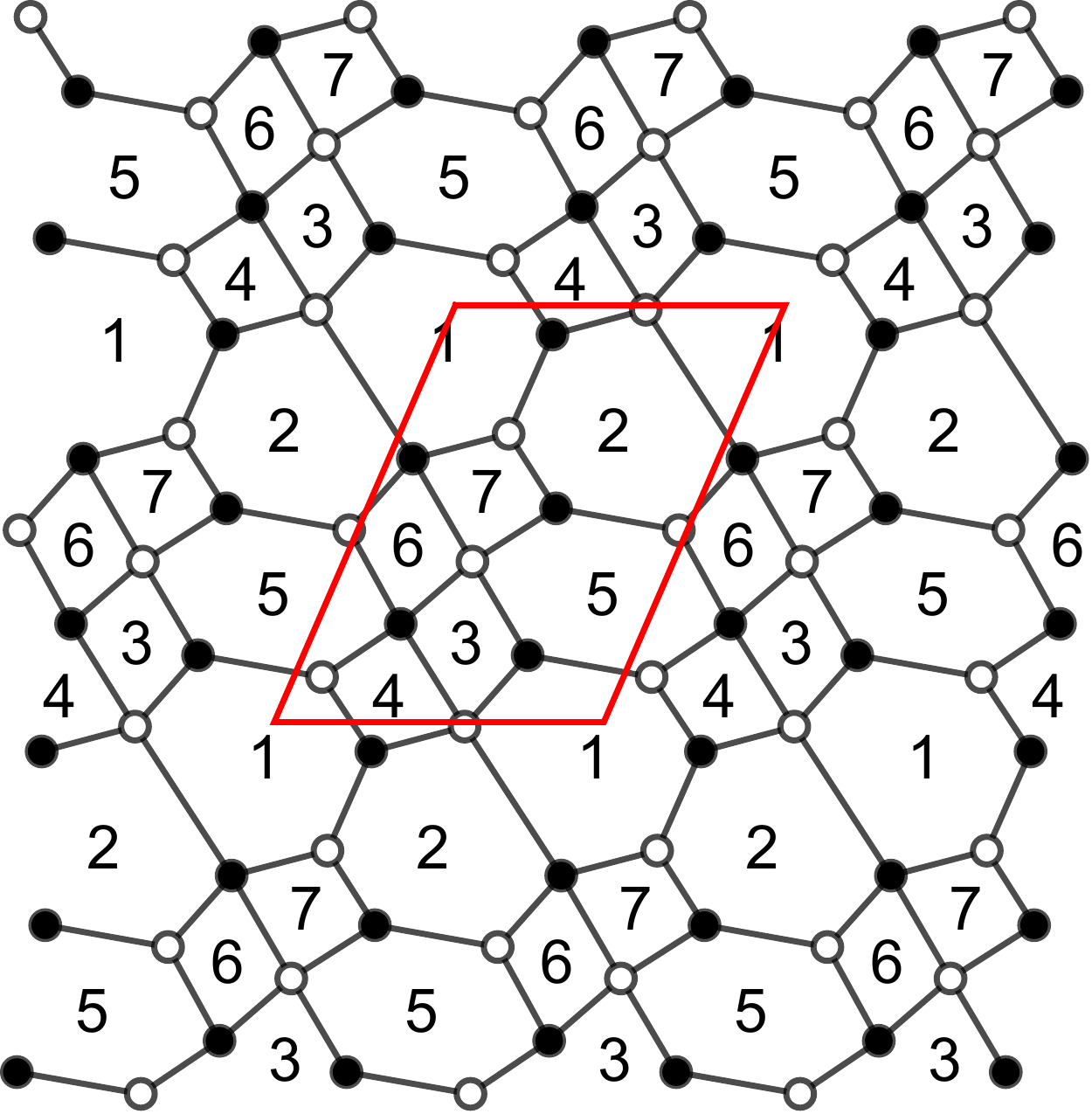};
\includegraphics[width=4cm]{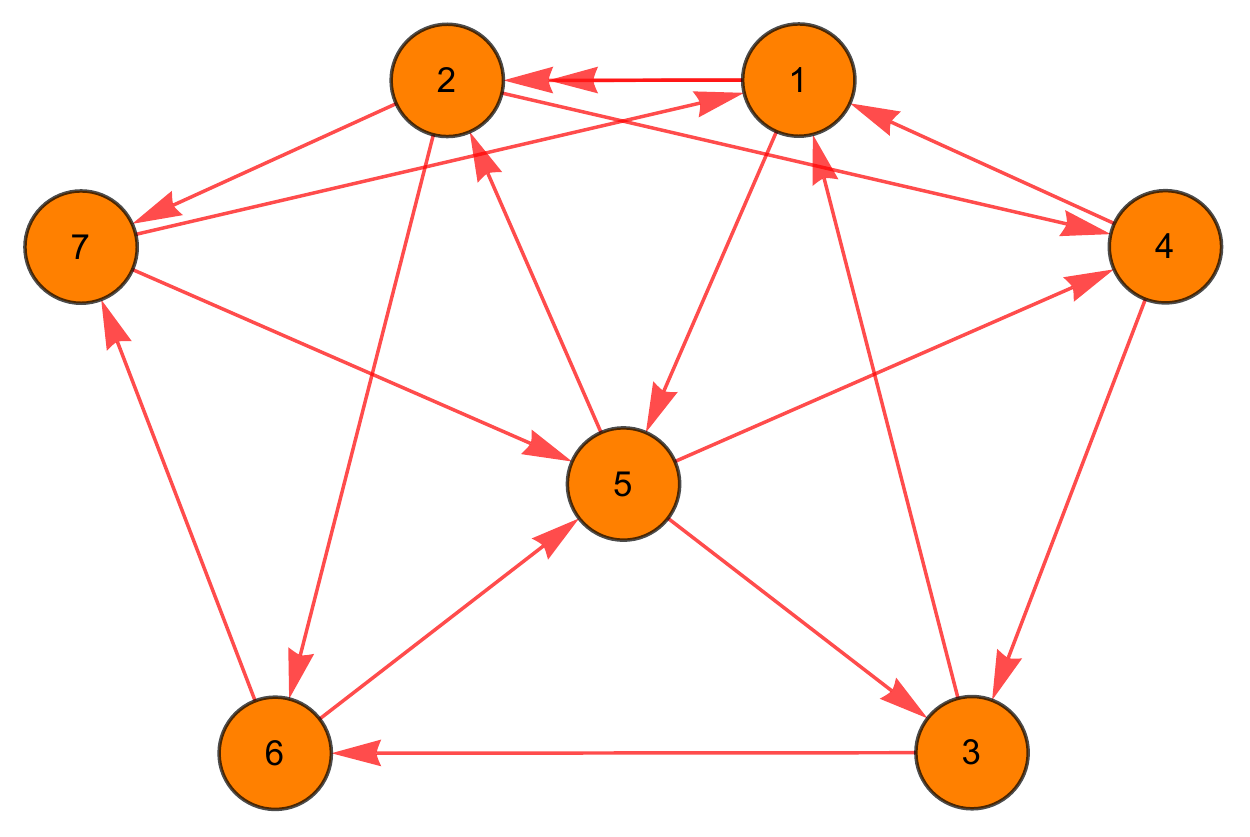}.
\end{equation}
The superpotential is
\begin{eqnarray}
W&=&X_{15}X_{54}X_{41}+X_{24}X_{43}X_{31}X^1_{12}+X_{36}X_{67}X_{75}X_{53}+X_{52}X_{26}X_{65}+X_{71}X^2_{12}X_{27}\nonumber\\
&&-X_{31}X_{15}X_{53}-X^2_{12}X_{24}X_{41}-X_{54}X_{43}X_{36}X_{65}-X_{75}X_{52}X_{27}-X_{26}X_{67}X_{71}X^1_{12}.\nonumber\\
\end{eqnarray}
The perfect matching matrix is
\begin{equation}
P=\left(
\tiny{
}
\right).
\end{equation}
From $G_t$, we can get the GLSM fields associated to each point as shown in (\ref{p23p}), where
\begin{equation}
q=\{q_1,q_2\},\ r=\{r_1,\dots,r_{7}\},\ s=\{s_1,\dots,s_{7}\}.
\end{equation}
From $Q_t$ (and $Q_F$), the mesonic symmetry reads U(1)$^2\times$U(1)$_\text{R}$ and the baryonic symmetry reads U(1)$^4_\text{h}\times$U(1)$^2$, where the subscripts ``R'' and ``h'' indicate R- and hidden symmetries respectively.

The Hilbert series of the toric cone is
\begin{eqnarray}
HS&=&\frac{1}{\left(1-\frac{1}{t_1}\right) \left(1-\frac{t_2}{t_1}\right)
	\left(1-\frac{t_1^2 t_3}{t_2}\right)}+\frac{1}{(1-t_2) \left(1-\frac{t_1
		t_2}{t_3}\right) \left(1-\frac{t_3^2}{t_1
		t_2^2}\right)}\nonumber\\
	&&+\frac{1}{\left(1-\frac{1}{t_2}\right)
	\left(1-\frac{t_1}{t_3}\right) \left(1-\frac{t_2 t_3^2}{t_1}\right)}+\frac{1}{(1-t_1)
	(1-t_2) \left(1-\frac{t_3}{t_1 t_2}\right)}\nonumber\\
&&+\frac{1}{(1-t_1) \left(1-\frac{1}{t_1
		t_2}\right) (1-t_2 t_3)}+\frac{1}{\left(1-\frac{1}{t_1}\right)
	\left(1-\frac{t_1}{t_2}\right) (1-t_2 t_3)}\nonumber\\
&&+\frac{1}{\left(1-\frac{1}{t_2}\right)
	(1-t_1 t_2) \left(1-\frac{t_3}{t_1}\right)}.
\end{eqnarray}
The volume function is then
\begin{equation}
V=-\frac{4 {b_1}-7 {b_2}-69}{({b_2}+3) (-2 {b_1}+{b_2}-3)
	(-{b_1}+{b_2}+6) ({b_1}+2 {b_2}-6)}.
\end{equation}
Minimizing $V$ yields $V_{\text{min}}=0.165004$ at $b_1=1.201482$, $b_2=-0.491432$. Thus, $a_\text{max}=1.515115$. Together with the superconformal conditions, we can solve for the R-charges of the bifundamentals. Then the R-charges of GLSM fields should satisfy
\begin{eqnarray}
&&\left(0.50625 p_3+0.675 p_4\right) p_2^2+(0.50625 p_3^2+1.6875 p_4
p_3-1.0125 p_3+0.675 p_4^2-1.35 p_4) p_2\nonumber\\
&&=-0.84375 p_4p_3^2-0.84375 p_4^2 p_3+1.6875 p_4 p_3-0.303023
\end{eqnarray}
constrained by $\sum\limits_{i=1}^4p_i=2$ and $0<p_i<2$, with others vanishing.

\subsection{Polytope 24: $\mathcal{C}/\mathbb{Z}_4$ (0,1,2,1)}\label{p24}
The polytope is
\begin{equation}
	\tikzset{every picture/.style={line width=0.75pt}} 
.\label{p24p}
\end{equation}
The brane tiling and the corrresponding quiver are
\begin{equation}
\includegraphics[width=4cm]{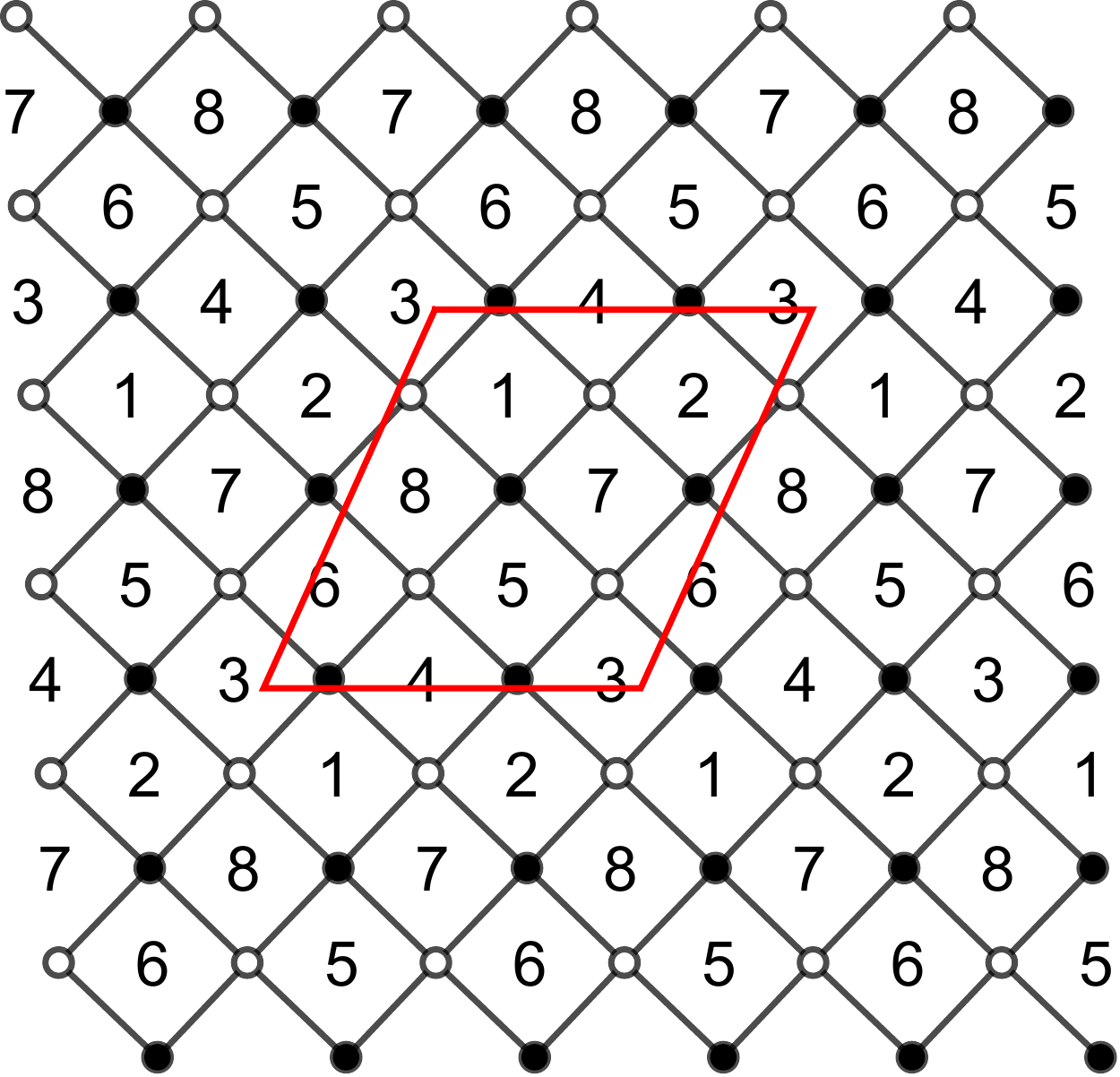};
\includegraphics[width=4cm]{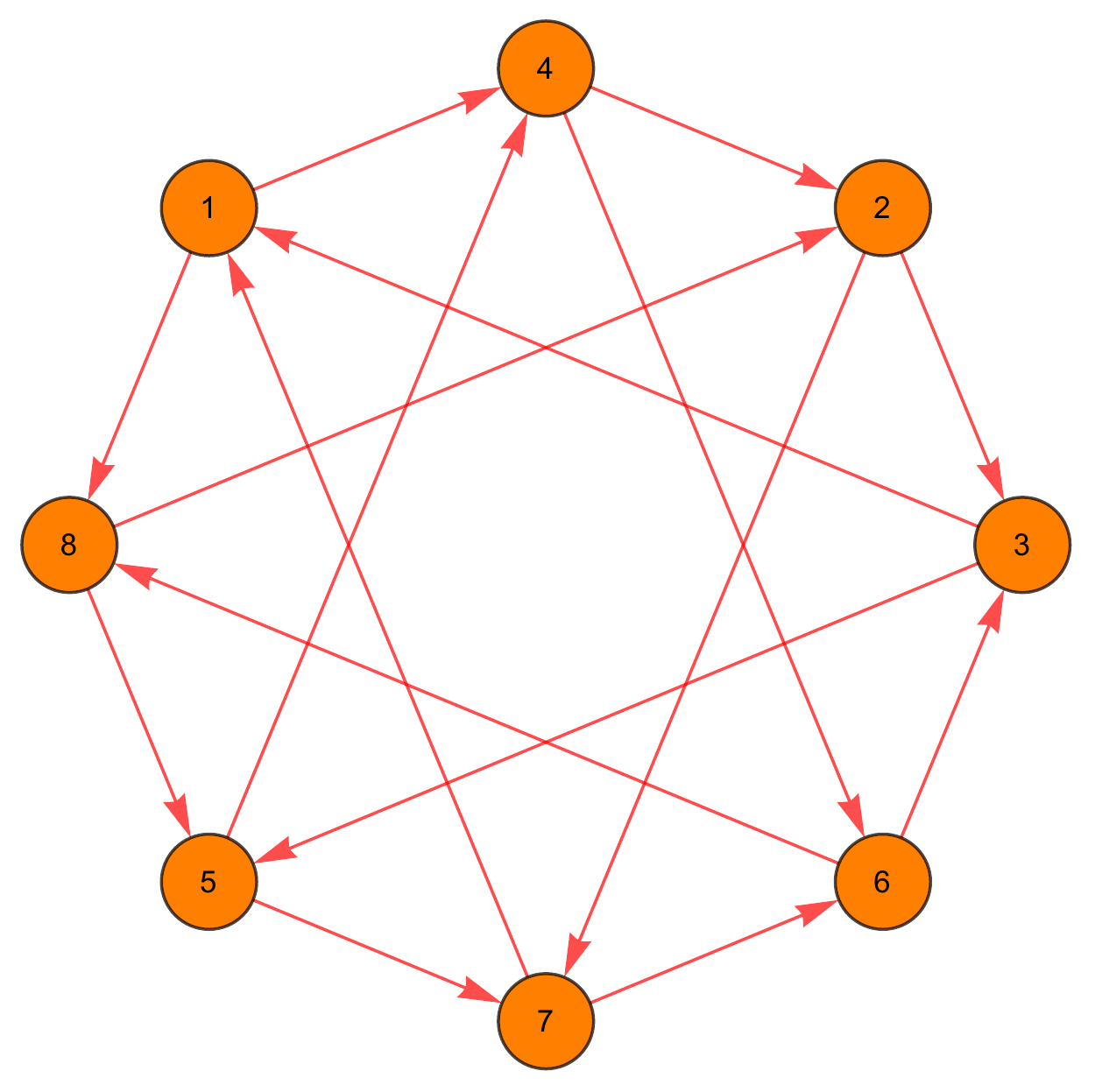}.
\end{equation}
The superpotential is
\begin{eqnarray}
W&=&X_{23}X_{31}X_{18}X_{82}+X_{14}X_{42}X_{27}X_{71}+X_{57}X_{76}X_{63}X_{35}+X_{68}X_{85}X_{54}X_{46}\nonumber\\
&&-X_{31}X_{14}X_{46}X_{63}-X_{42}X_{23}X_{35}X_{54}-X_{85}X_{57}X_{71}X_{18}-X_{76}X_{68}X_{82}X_{27}.\nonumber\\
\end{eqnarray}
The perfect matching matrix is
\begin{equation}
P=\left(
\tiny{
}
\right).
\end{equation}
From $G_t$, we can get the GLSM fields associated to each point as shown in (\ref{p24p}), where
\begin{equation}
q=\{q_1,q_2\},\ r=\{r_1,\dots,r_{8}\},\ s=\{s_1,\dots,s_{8}\},\ t=\{t_1,t_2\}.
\end{equation}
From $Q_t$ (and $Q_F$), the mesonic symmetry reads U(1)$^2\times$U(1)$_\text{R}$ and the baryonic symmetry reads U(1)$^4_\text{h}\times$U(1)$^3$, where the subscripts ``R'' and ``h'' indicate R- and hidden symmetries respectively.

The Hilbert series of the toric cone is
\begin{eqnarray}
HS&=&\frac{1}{\left(1-\frac{t_1}{t_3}\right) \left(1-\frac{t_1}{t_2 t_3}\right)
	\left(1-\frac{t_2 t_3^3}{t_1^2}\right)}+\frac{1}{\left(1-\frac{1}{t_1}\right)
	\left(1-\frac{t_2}{t_1}\right) \left(1-\frac{t_1^2 t_3}{t_2}\right)}\nonumber\\
&&+\frac{1}{(1-t_1)
	(1-t_2) \left(1-\frac{t_3}{t_1 t_2}\right)}+\frac{1}{\left(1-\frac{1}{t_1}\right)
	\left(1-\frac{t_1}{t_2}\right) (1-t_2 t_3)}\nonumber\\
&&+\frac{1}{\left(1-\frac{t_3}{t_1}\right)
	(1-t_2 t_3) \left(1-\frac{t_1}{t_2 t_3}\right)}+\frac{1}{(1-t_1)
	\left(1-\frac{1}{t_2}\right) \left(1-\frac{t_2
		t_3}{t_1}\right)}\nonumber\\
	&&+\frac{1}{\left(1-\frac{t_3}{t_1}\right)
	\left(1-\frac{t_3}{t_2}\right) \left(1-\frac{t_1
		t_2}{t_3}\right)}+\frac{1}{\left(1-\frac{t_1}{t_3}\right)
	\left(1-\frac{t_3}{t_2}\right) \left(1-\frac{t_2 t_3}{t_1}\right)}.
\end{eqnarray}
The volume function is then
\begin{equation}
V=\frac{48}{({b_2}-3) ({b_2}+3) (-2 {b_1}+{b_2}-3) (-2
	{b_1}+{b_2}+9)}.
\end{equation}
Minimizing $V$ yields $V_{\text{min}}=4/27$ at $b_1=3/2$, $b_2=0$. Thus, $a_\text{max}=27/16$. Together with the superconformal conditions, we can solve for the R-charges of the bifundamentals, which are $X_I=1/2$ for any $I$, viz, for all the bifundamentals. Hence, the R-charges of GLSM fields are $p_i=1/2$ with others vanishing.

\section{Sixteen Pentagons}\label{pentagons}

For brevity, we will use $K^{a,b,c,d}$ to denote a special family of cones. In analogy to defining $X^{p,q}$ from unhiggsing $Y^{p,q}$ and $Y^{p,q-1}$ in \cite{Hanany:2005hq}, $K^{a,b,c,d}$ corresponds to the toric diagram
\begin{equation}
	\tikzset{every picture/.style={line width=0.75pt}} 
,
\end{equation}
where $bm+ck=1$ and $b\geq d$, such that it can be blown down to $L^{a,b,c}$ (and more if $m=0$) \cite{Franco:2005sm}. Here, we will drop the condition that $a,c\leq b$ inherited from $L^{a,b,c}$ since for instance, if $a>b$, we could write $L^{b,a,c}$. Also, when $m=0$, for simplicity, let us forget about the condition that gcd($a,b,c,a+b-c$)=1 (and so forth), which makes the baryonic U(1) action specified by such GLSM charges effective, since we still have other higgsed singularities among these $L$'s. Then in particular, for example, we have $K^{p+q-1,p-q+1,p,1}=X^{p,q}$.

\subsection{Polytope 25: $X^{3,2}$}\label{p25}
The polytope is
\begin{equation}
	\tikzset{every picture/.style={line width=0.75pt}} 
.\label{p25p}
\end{equation}
The brane tiling and the corrresponding quiver are
\begin{equation}
\includegraphics[width=4cm]{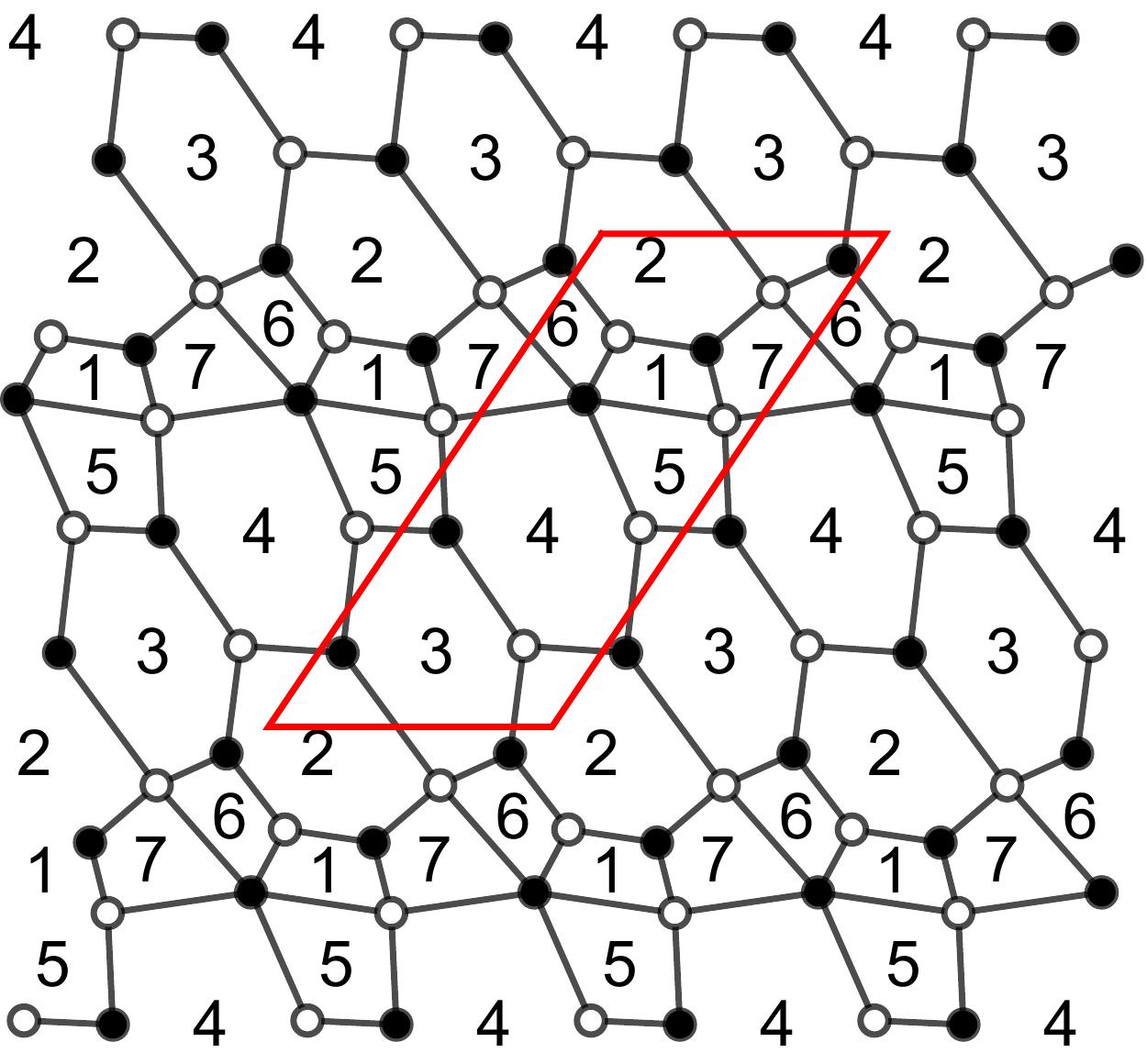};
\includegraphics[width=4cm]{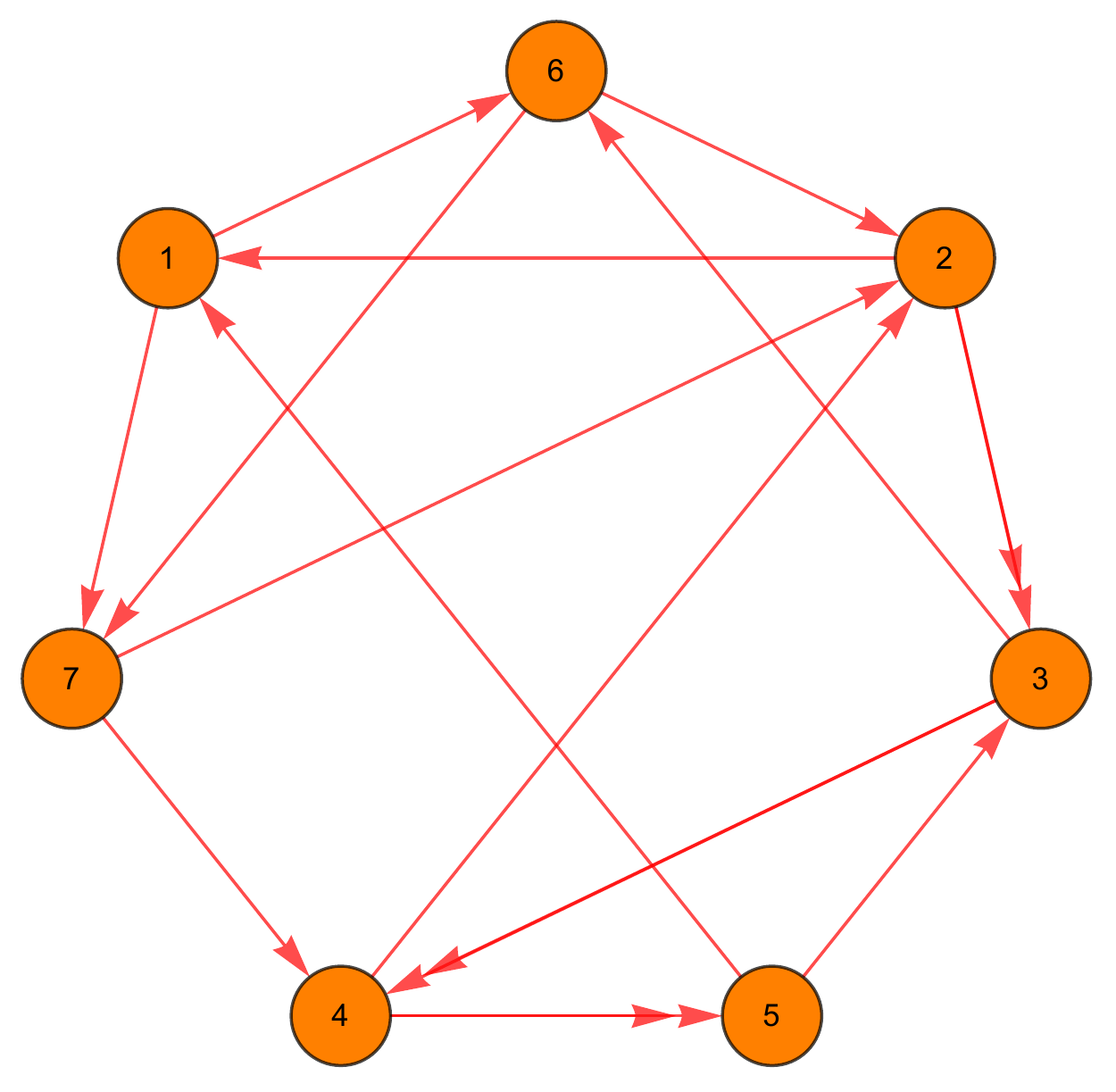}.
\end{equation}
The superpotential is
\begin{eqnarray}
W&=&X_{17}X_{74}X^2_{45}X_{51}+X_{53}X^2_{34}X^1_{45}+X^1_{34}X_{42}X^2_{23}+X_{36}X_{67}X_{72}X^1_{23}\nonumber\\
&&+X_{21}X_{16}X_{62}-X_{16}X_{67}X_{74}X^1_{45}X_{51}-X^2_{45}X_{53}X^1_{34}-X^2_{34}X_{42}X^1_{23}\nonumber\\
&&-X^2_{23}X_{36}X_{62}-X_{72}X_{21}X_{17}.
\end{eqnarray}
The perfect matching matrix is
\begin{equation}
P=\left(
\tiny{
}
\right).
\end{equation}
From $G_t$, we can get the GLSM fields associated to each point as shown in (\ref{p25p}), where
\begin{equation}
r=\{r_1,\dots,r_{9}\},\ s=\{s_1,\dots,s_{7}\}.
\end{equation}
From $Q_t$ (and $Q_F$), the mesonic symmetry reads U(1)$^2\times$U(1)$_\text{R}$ and the baryonic symmetry reads U(1)$^4_\text{h}\times$U(1)$^2$, where the subscripts ``R'' and ``h'' indicate R- and hidden symmetries respectively.

The Hilbert series of the toric cone is
\begin{eqnarray}
HS&=&\frac{1}{(1-t_2) \left(1-\frac{t_1 t_2}{t_3}\right) \left(1-\frac{t_3^2}{t_1
		t_2^2}\right)}+\frac{1}{\left(1-\frac{1}{t_2}\right) \left(1-\frac{t_1}{t_2
		t_3}\right) \left(1-\frac{t_2^2
		t_3^2}{t_1}\right)}\nonumber\\
	&&+\frac{1}{\left(1-\frac{1}{t_1}\right) (1-t_2) \left(1-\frac{t_1
		t_3}{t_2}\right)}+\frac{1}{(1-t_1 t_3) (1-t_2 t_3) \left(1-\frac{1}{t_1 t_2
		t_3}\right)}\nonumber\\
	&&+\frac{1}{(1-t_1) \left(1-\frac{1}{t_2}\right) \left(1-\frac{t_2
		t_3}{t_1}\right)}+\frac{1}{\left(1-\frac{1}{t_1}\right)
	\left(1-\frac{1}{t_2}\right) (1-t_1 t_2 t_3)}\nonumber\\
&&+\frac{1}{(1-t_1) (1-t_2)
	\left(1-\frac{t_3}{t_1 t_2}\right)}.
\end{eqnarray}
The volume function is then
\begin{equation}
V=-\frac{{b_1}^2-2 {b_1} (4 {b_2}+15)+4 \left({b_2}^2-6
	{b_2}-45\right)}{({b_1}+3) ({b_2}+3) ({b_1}-{b_2}+3) ({b_1}+2
	{b_2}-6) ({b_1}-2 ({b_2}+3))}.
\end{equation}
Minimizing $V$ yields $V_{\text{min}}=0.172260$ at $b_1=0.746501$, $b_2=-0.198279$. Thus, $a_\text{max}=1.451295$. Together with the superconformal conditions, we can solve for the R-charges of the bifundamentals. Then the R-charges of GLSM fields should satisfy
\begin{eqnarray}
&&\left(3.75 p_2+1.875 p_4+9.375 p_5\right) p_3^2+(3.75 p_2^2+7.5 p_4
p_2+1.875 p_5 p_2-7.5 p_2+1.875 p_4^2\nonumber\\
&&+9.375 p_5^2-3.75 p_4+11.25 p_4
p_5-18.75 p_5) p_3=-3.75 p_4 p_2^2-7.5 p_5 p_2^2-3.75 p_4^2 p_2\nonumber\\
&&-7.5
p_5^2 p_2+7.5 p_4 p_2-11.25 p_4 p_5 p_2+15 p_5 p_2-5.625 p_4 p_5^2-5.625p_4^2 p_5+11.25 p_4 p_5-3.2251\nonumber\\
\end{eqnarray}
constrained by $\sum\limits_{i=1}^5p_i=2$ and $0<p_i<2$, with others vanishing.

\subsection{Polytope 26: $X^{3,1}$}\label{p26}
The polytope is
\begin{equation}
	\tikzset{every picture/.style={line width=0.75pt}} 
.\label{p26p}
\end{equation}
The brane tiling and the corrresponding quiver are
\begin{equation}
\includegraphics[width=4cm]{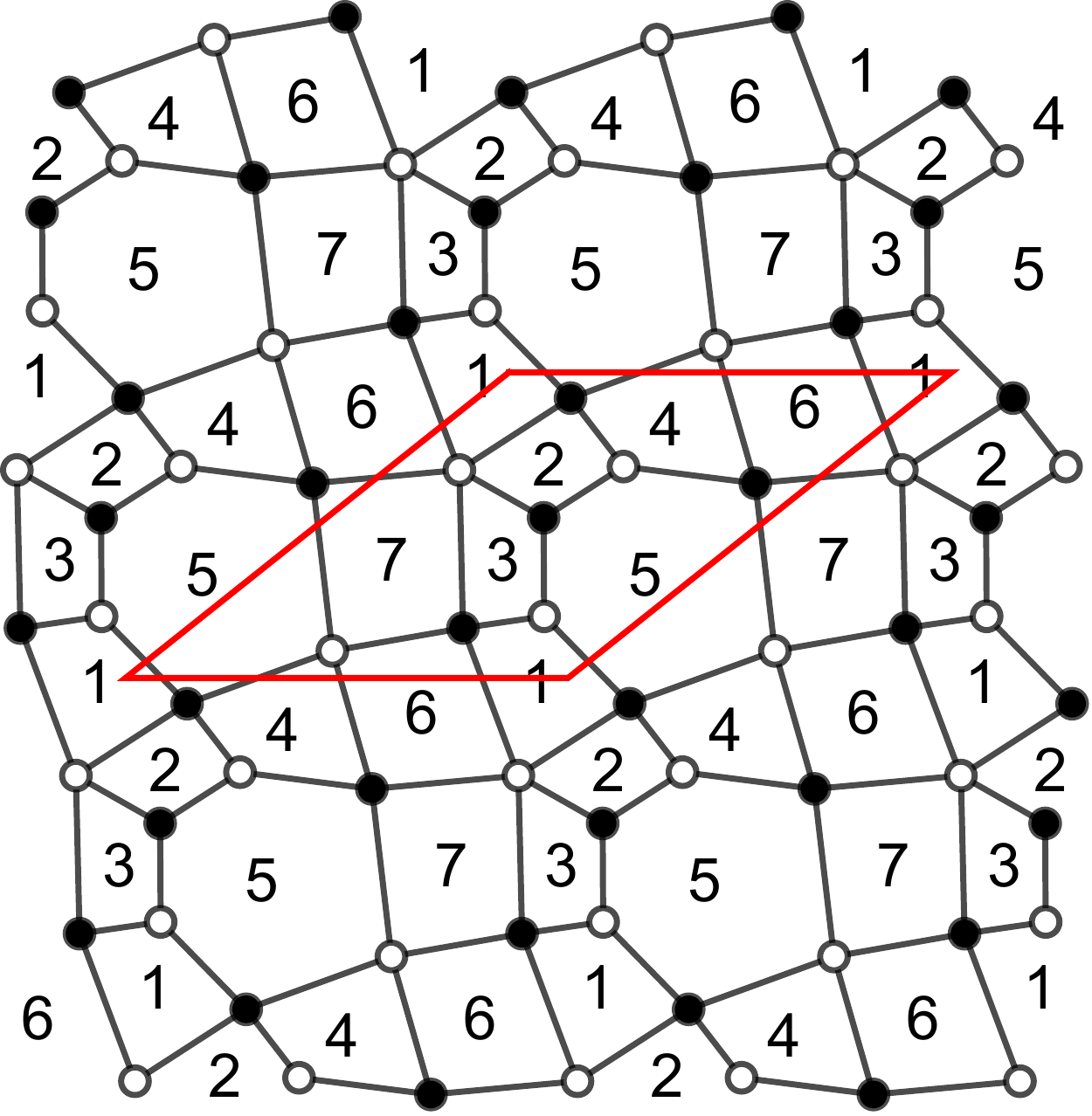};
\includegraphics[width=4cm]{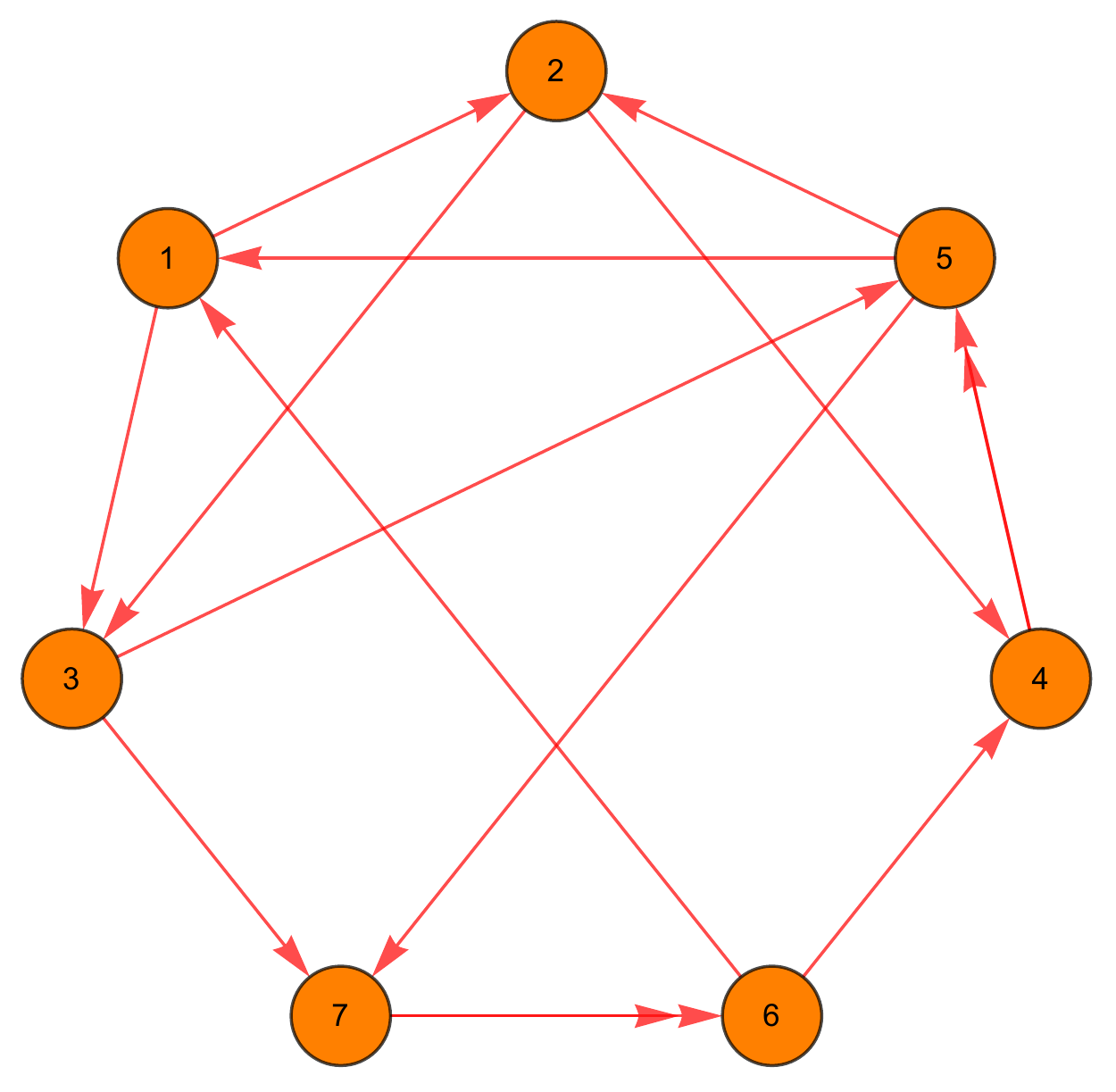}.
\end{equation}
The superpotential is
\begin{eqnarray}
W&=&X_{61}X_{12}X_{23}X_{37}X^1_{76}+X_{35}X_{51}X_{13}+X_{24}X^2_{45}X_{52}+X_{57}X^2_{76}X_{64}X^1_{45}\nonumber\\
&&-X_{23}X_{35}X_{52}-X_{51}X_{12}X_{24}X^1_{45}-X^2_{45}X_{57}X^1_{76}X_{64}-X^2_{76}X_{61}X_{13}X_{37}.
\end{eqnarray}
The perfect matching matrix is
\begin{equation}
P=\left(
\tiny{
}
\right).
\end{equation}
From $G_t$, we can get the GLSM fields associated to each point as shown in (\ref{p26p}), where
\begin{equation}
r=\{r_1,\dots,r_{7}\},\ s=\{s_1,\dots,s_{8}\}.
\end{equation}
From $Q_t$ (and $Q_F$), the mesonic symmetry reads U(1)$^2\times$U(1)$_\text{R}$ and the baryonic symmetry reads U(1)$^4_\text{h}\times$U(1)$^2$, where the subscripts ``R'' and ``h'' indicate R- and hidden symmetries respectively.

The Hilbert series of the toric cone is
\begin{eqnarray}
HS&=&\frac{1}{\left(1-\frac{1}{t_2}\right) \left(1-\frac{t_1}{t_2 t_3}\right)
	\left(1-\frac{t_2^2 t_3^2}{t_1}\right)}+\frac{1}{(1-t_2)
	\left(1-\frac{t_2}{t_1}\right) \left(1-\frac{t_1 t_3}{t_2^2}\right)}\nonumber\\
&&+\frac{1}{(1-t_2)
	\left(1-\frac{t_1}{t_3}\right) \left(1-\frac{t_3^2}{t_1 t_2}\right)}+\frac{1}{(1-t_1
	t_3) (1-t_2 t_3) \left(1-\frac{1}{t_1 t_2 t_3}\right)}\nonumber\\
&&+\frac{1}{(1-t_1)
	\left(1-\frac{1}{t_2}\right) \left(1-\frac{t_2
		t_3}{t_1}\right)}+\frac{1}{\left(1-\frac{1}{t_1}\right)
	\left(1-\frac{1}{t_2}\right) (1-t_1 t_2 t_3)}\nonumber\\
&&+\frac{1}{(1-t_2)
	\left(1-\frac{t_1}{t_2}\right) \left(1-\frac{t_3}{t_1}\right)}.
\end{eqnarray}
The volume function is then
\begin{equation}
V=-\frac{{b_1}^2-4 {b_1} ({b_2}+3)+4 {b_2}^2-30 {b_2}-207}{({b_1}+3)
	({b_2}+3) ({b_1}-2 {b_2}+3) ({b_1}+{b_2}-6) ({b_1}-2
	({b_2}+3))}.
\end{equation}
Minimizing $V$ yields $V_{\text{min}}=0.178752$ at $b_1=1.119414$, $b_2=-0.211974$. Thus, $a_\text{max}=1.398586$. Together with the superconformal conditions, we can solve for the R-charges of the bifundamentals. Then the R-charges of GLSM fields should satisfy
\begin{eqnarray}
&&\left(1.26563 p_2+1.26563 p_3+0.421875 p_5\right)p_4^2+(1.26563 p_2^2+2.53125 p_3 p_2+2.53125 p_5p_2\nonumber\\
&&-2.53125 p_2+1.26563 p_3^2+0.421875 p_5^2-2.53125p_3+2.53125 p_3 p_5-0.84375 p_5) p_4\nonumber\\
&&=-1.26563 p_3p_2^2-1.26563 p_5 p_2^2-1.26563 p_3^2 p_2-1.26563p_5^2 p_2+2.53125 p_3 p_2-3.375 p_3 p_5 p_2\nonumber\\
&&+2.53125p_5 p_2-1.6875 p_3 p_5^2-1.6875 p_3^2 p_5+3.375 p_3p_5-0.699293
\end{eqnarray}
constrained by $\sum\limits_{i=1}^5p_i=2$ and $0<p_i<2$, with others vanishing.

\subsection{Polytope 27: PdP$_{4c}$ (2)}\label{p27}
The polytope is\footnote{For pseudo del Pezzos \cite{Feng:2002fv}, our nomenclature follows the spirit of \cite{Hanany:2012hi,Hanany:2012vc,He:2017gam}. Hence, the labelling of PdP$_4$ starts from $c$ in this paper. Moreover, by PdP$_n$ ($m$), we mean that this comes from dP$_m$ blown up at ($n-m$) generic points where $m$ is chosen to be the largest possible number.}
\begin{equation}
	\tikzset{every picture/.style={line width=0.75pt}} 
.\label{p27p}
\end{equation}
The brane tiling and the corrresponding quiver are
\begin{equation}
\includegraphics[width=4cm]{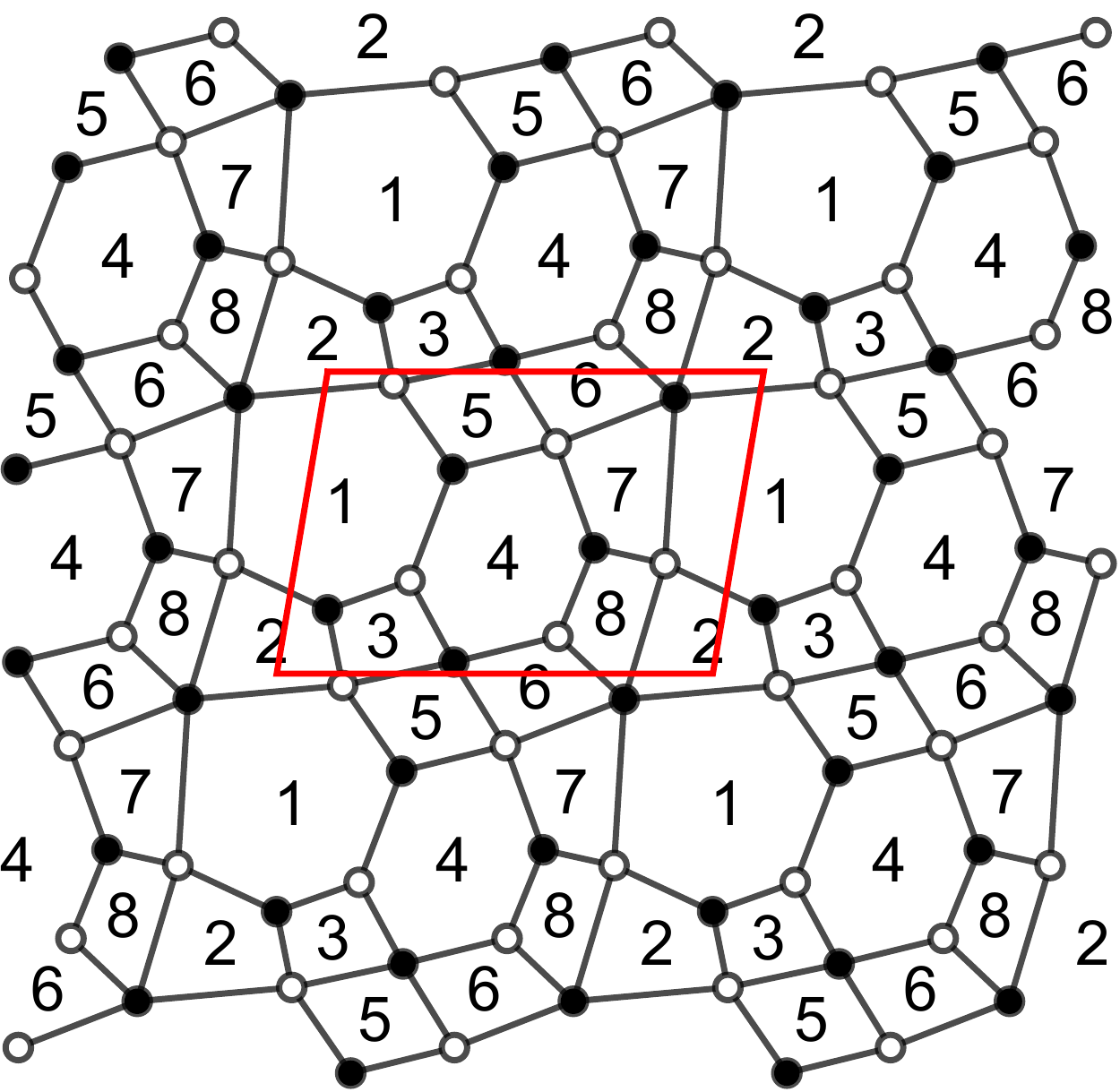};
\includegraphics[width=4cm]{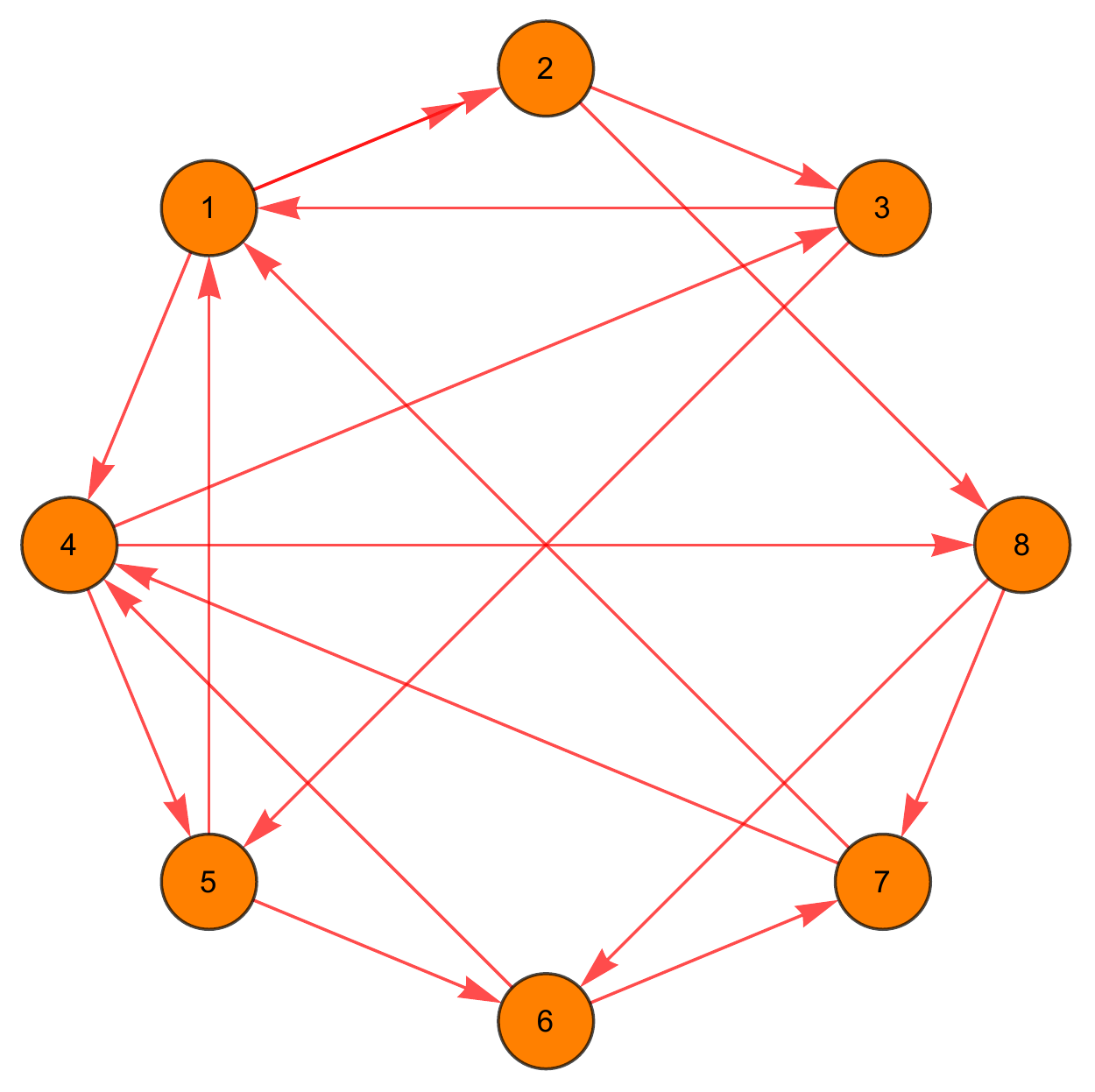}.
\end{equation}
The superpotential is
\begin{eqnarray}
W&=&X_{23}X_{35}X_{51}X^1_{12}+X_{14}X_{43}X_{31}+X_{56}X_{67}X_{74}X_{45}+X_{48}X_{86}X_{64}\nonumber\\
&&+X_{71}X^2_{12}X_{28}X_{87}-X^2_{12}X_{23}X_{31}-X_{43}X_{35}X_{56}X_{64}-X_{51}X_{14}X_{45}\nonumber\\
&&-X_{86}X_{67}X_{71}X^1_{12}X_{28}-X_{74}X_{48}X_{87}.
\end{eqnarray}
The perfect matching matrix is
\begin{equation}
P=\left(
\tiny{
}
\right).
\end{equation}
From $G_t$, we can get the GLSM fields associated to each point as shown in (\ref{p27p}), where
\begin{equation}
q=\{q_1,q_2\},\ r=\{r_1,\dots,r_{8}\},\ s=\{s_1,\dots,s_{11}\}.
\end{equation}
From $Q_t$ (and $Q_F$), the mesonic symmetry reads U(1)$^2\times$U(1)$_\text{R}$ and the baryonic symmetry reads U(1)$^4_\text{h}\times$U(1)$^3$, where the subscripts ``R'' and ``h'' indicate R- and hidden symmetries respectively.

The Hilbert series of the toric cone is
\begin{eqnarray}
HS&=&\frac{1}{(1-t_2) \left(1-\frac{t_1 t_2}{t_3}\right) \left(1-\frac{t_3^2}{t_1
		t_2^2}\right)}+\frac{1}{\left(1-\frac{1}{t_2}\right)
	\left(1-\frac{t_1}{t_3}\right) \left(1-\frac{t_2 t_3^2}{t_1}\right)}\nonumber\\
&&+\frac{1}{(1-t_1)
	(1-t_2) \left(1-\frac{t_3}{t_1 t_2}\right)}+\frac{1}{\left(1-\frac{1}{t_1}\right)
	(1-t_2) \left(1-\frac{t_1 t_3}{t_2}\right)}\nonumber\\
&&+\frac{1}{\left(1-\frac{1}{t_1}\right)
	\left(1-\frac{t_1}{t_2}\right) (1-t_2 t_3)}+\frac{1}{\left(1-\frac{t_3}{t_1}\right)
	(1-t_2 t_3) \left(1-\frac{t_1}{t_2 t_3}\right)}\nonumber\\
&&+\frac{1}{(1-t_1)
	\left(1-\frac{1}{t_2}\right) \left(1-\frac{t_2
		t_3}{t_1}\right)}+\frac{1}{\left(1-\frac{1}{t_2}\right)
	\left(1-\frac{t_2}{t_1}\right) (1-t_1 t_3)}.
\end{eqnarray}
The volume function is then
\begin{equation}
V=-\frac{2 {b_1}^2-4 {b_1} ({b_2}+6)+2 {b_2}^2-3
	{b_2}-171}{({b_1}+3) ({b_2}+3) ({b_1}-{b_2}-6)
	({b_1}-{b_2}+3) ({b_1}+2 {b_2}-6)}.
\end{equation}
Minimizing $V$ yields $V_{\text{min}}=0.155420$ at $b_1=0.933751$, $b_2=-0.449691$. Thus, $a_\text{max}=1.608545$. Together with the superconformal conditions, we can solve for the R-charges of the bifundamentals. Then the R-charges of GLSM fields should satisfy
\begin{eqnarray}
&&\left(0.50625 p_2+0.50625 p_3+0.675 p_4\right) p_5^2+(0.50625 p_2^2+1.0125 p_3 p_2+0.675 p_4 p_2-1.0125 p_2\nonumber\\
&&+0.50625 p_3^2+0.675p_4^2-1.0125 p_3+1.35 p_3 p_4-1.35 p_4)p_5=-0.50625 p_3 p_2^2-0.3375 p_4 p_2^2\nonumber\\
&&-0.50625 p_3^2 p_2-0.3375 p_4^2p_2+1.0125 p_3 p_2-0.675 p_3 p_4 p_2+0.675 p_4 p_2-0.3375 p_3p_4^2\nonumber\\
&&-0.3375 p_3^2 p_4+0.675 p_3 p_4-0.321709
\end{eqnarray}
constrained by $\sum\limits_{i=1}^5p_i=2$ and $0<p_i<2$, with others vanishing.

\subsection{Polytope 28: PdP$_{4d}$ (2)}\label{p28}
The polytope is
\begin{equation}
	\tikzset{every picture/.style={line width=0.75pt}} 
.\label{p28p}
\end{equation}
The brane tiling and the corrresponding quiver are
\begin{equation}
\includegraphics[width=4cm]{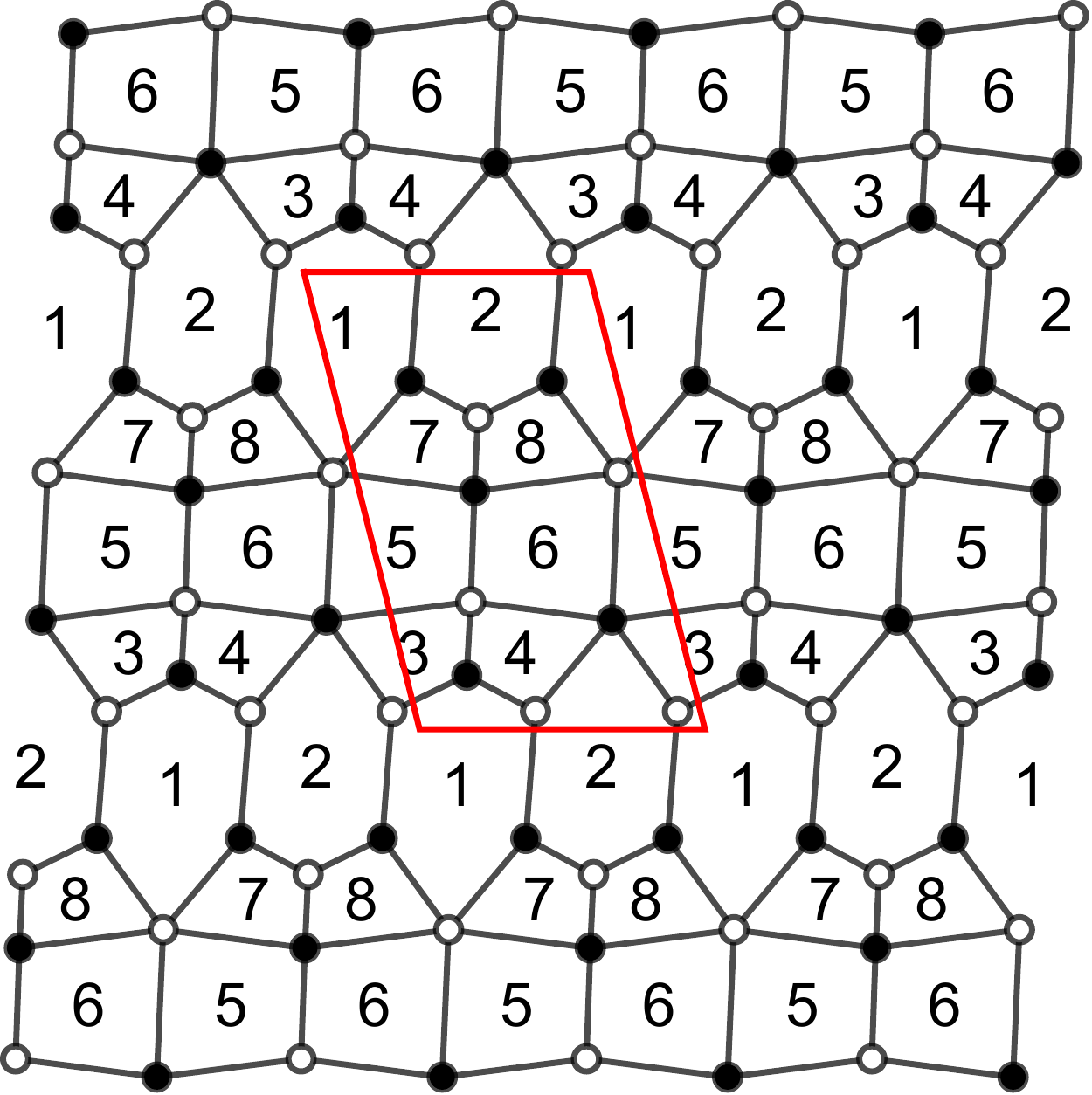};
\includegraphics[width=4cm]{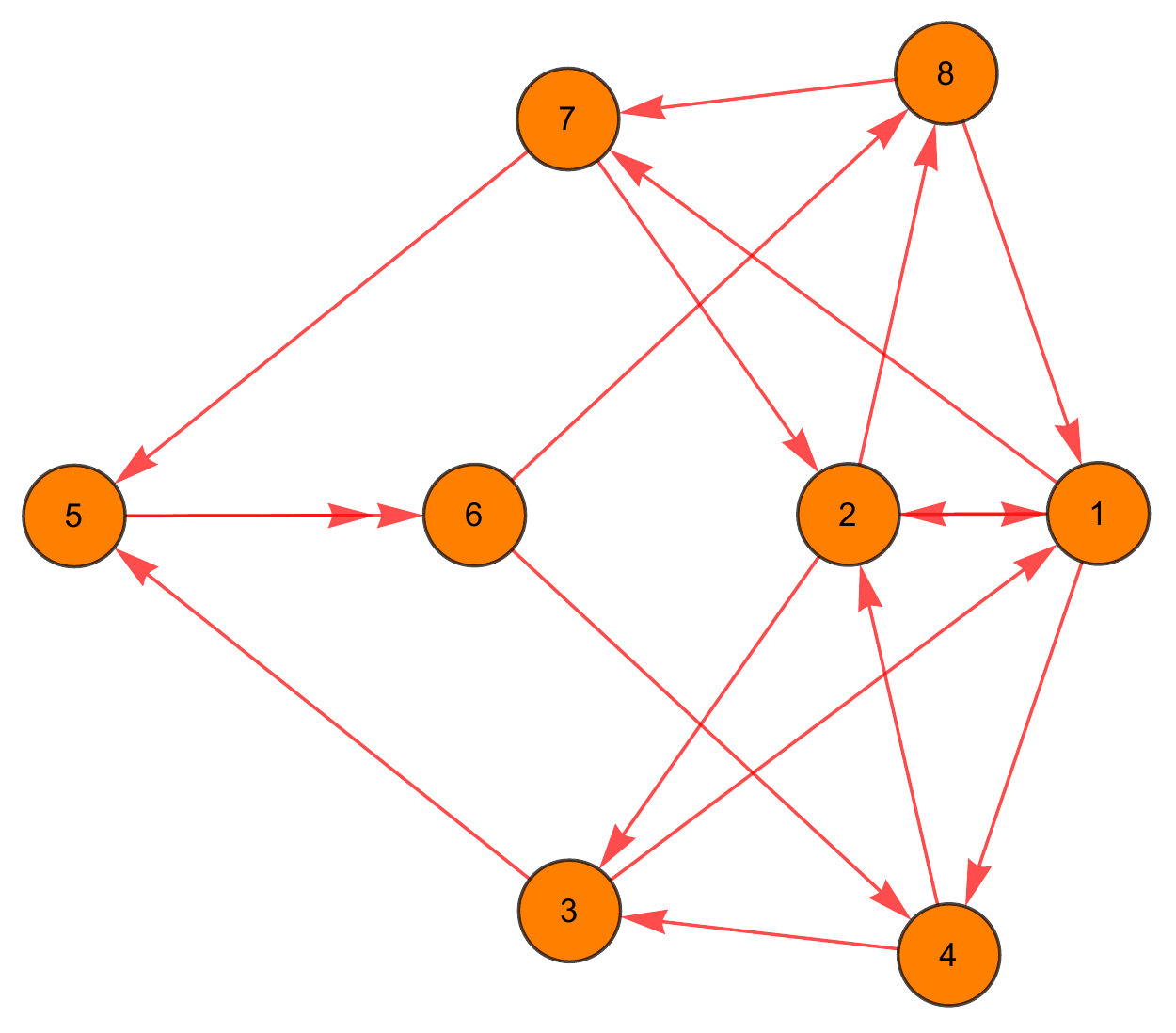}.
\end{equation}
The superpotential is
\begin{eqnarray}
W&=&X_{23}X_{31}X_{12}+X_{14}X_{42}X_{21}+X_{35}X^1_{56}X_{64}X_{43}+X_{68}X_{81}X_{17}X_{75}X^2_{56}\nonumber\\
&&+X_{72}X_{28}X_{87}-X_{31}X_{14}X_{43}-X_{42}X_{23}X_{35}X^2_{56}X_{64}-X^1_{56}X_{68}X_{87}X_{75}\nonumber\\
&&-X_{17}X_{72}X_{21}-X_{28}X_{81}X_{12}.
\end{eqnarray}
The perfect matching matrix is
\begin{equation}
P=\left(
\tiny{
}
\right).
\end{equation}
From $G_t$, we can get the GLSM fields associated to each point as shown in (\ref{p28p}), where
\begin{equation}
q=\{q_1,q_2\},\ r=\{r_1,\dots,r_{9}\},\ s=\{s_1,\dots,s_{9}\}.
\end{equation}
From $Q_t$ (and $Q_F$), the mesonic symmetry reads U(1)$^2\times$U(1)$_\text{R}$ and the baryonic symmetry reads U(1)$^4_\text{h}\times$U(1)$^3$, where the subscripts ``R'' and ``h'' indicate R- and hidden symmetries respectively.

The Hilbert series of the toric cone is
\begin{eqnarray}
HS&=&\frac{1}{(1-t_2) \left(1-\frac{t_2}{t_1}\right) \left(1-\frac{t_1
		t_3}{t_2^2}\right)}+\frac{1}{(1-t_2) \left(1-\frac{t_1}{t_3}\right)
	\left(1-\frac{t_3^2}{t_1 t_2}\right)}\nonumber\\
&&+\frac{1}{\left(1-\frac{1}{t_2}\right)
	\left(1-\frac{t_1}{t_3}\right) \left(1-\frac{t_2
		t_3^2}{t_1}\right)}+\frac{1}{\left(1-\frac{1}{t_2}\right)
	\left(1-\frac{t_2}{t_1}\right) (1-t_1 t_3)}\nonumber\\
&&+\frac{1}{\left(1-\frac{1}{t_1}\right)
	\left(1-\frac{t_1}{t_2}\right) (1-t_2 t_3)}+\frac{1}{(1-t_1)
	\left(1-\frac{1}{t_2}\right) \left(1-\frac{t_2 t_3}{t_1}\right)}\nonumber\\
&&+\frac{1}{(1-t_2)
	\left(1-\frac{t_1}{t_2}\right)
	\left(1-\frac{t_3}{t_1}\right)}+\frac{1}{\left(1-\frac{t_3}{t_1}\right) (1-t_2 t_3)
	\left(1-\frac{t_1}{t_2 t_3}\right)}.
\end{eqnarray}
The volume function is then
\begin{equation}
V=-\frac{2 \left({b_1}^2-{b_1} ({b_2}+3)+{b_2}^2-3
	{b_2}-99\right)}{({b_1}+3) ({b_2}+3) ({b_1}-2 {b_2}+3)
	({b_1}-{b_2}-6) ({b_1}+{b_2}-6)}.
\end{equation}
Minimizing $V$ yields $V_{\text{min}}=0.158756$ at $b_1=1.266149$, $b_2=-0.467702$. Thus, $a_\text{max}=1.574744$. Together with the superconformal conditions, we can solve for the R-charges of the bifundamentals. Then the R-charges of GLSM fields should satisfy
\begin{eqnarray}
&&\left(1.26563 p_3+843750. p_4+1.6875 p_5\right)p_2^2+(1.26563 p_3^2+1.6875 p_4 p_3+2.53125 p_5
p_3\nonumber\\
&&-2.53125 p_3+0.84375 p_4^2+1.6875 p_5^2-1.6875p_4+3.375 p_4 p_5-3.375 p_5) p_2\nonumber\\
&&=-0.84375 p_4 p_3^2-0.84375p_5 p_3^2-0.84375 p_4^2 p_3-0.84375 p_5^2 p_3+1.6875 p_4 p_3-1.6875 p_4 p_5 p_3\nonumber\\
&&+1.6875 p_5 p_3-1.6875p_4 p_5^2-1.6875 p_4^2 p_5+3.375 p_4 p_5-0.787372
\end{eqnarray}
constrained by $\sum\limits_{i=1}^5p_i=2$ and $0<p_i<2$, with others vanishing.

\subsection{Polytope 29: PdP$_{5b}$ (2)}\label{p29}
The polytope is\footnote{In \cite{Hanany:2012hi,Hanany:2012vc,He:2017gam}, there is only one PdP$_5$ (hence without a further alphabet subscript). We will regard it as $5a$, and this polygon is $5b$.}
\begin{equation}
	\tikzset{every picture/.style={line width=0.75pt}} 
.\label{p29p}
\end{equation}
The brane tiling and the corrresponding quiver are
\begin{equation}
\includegraphics[width=4cm]{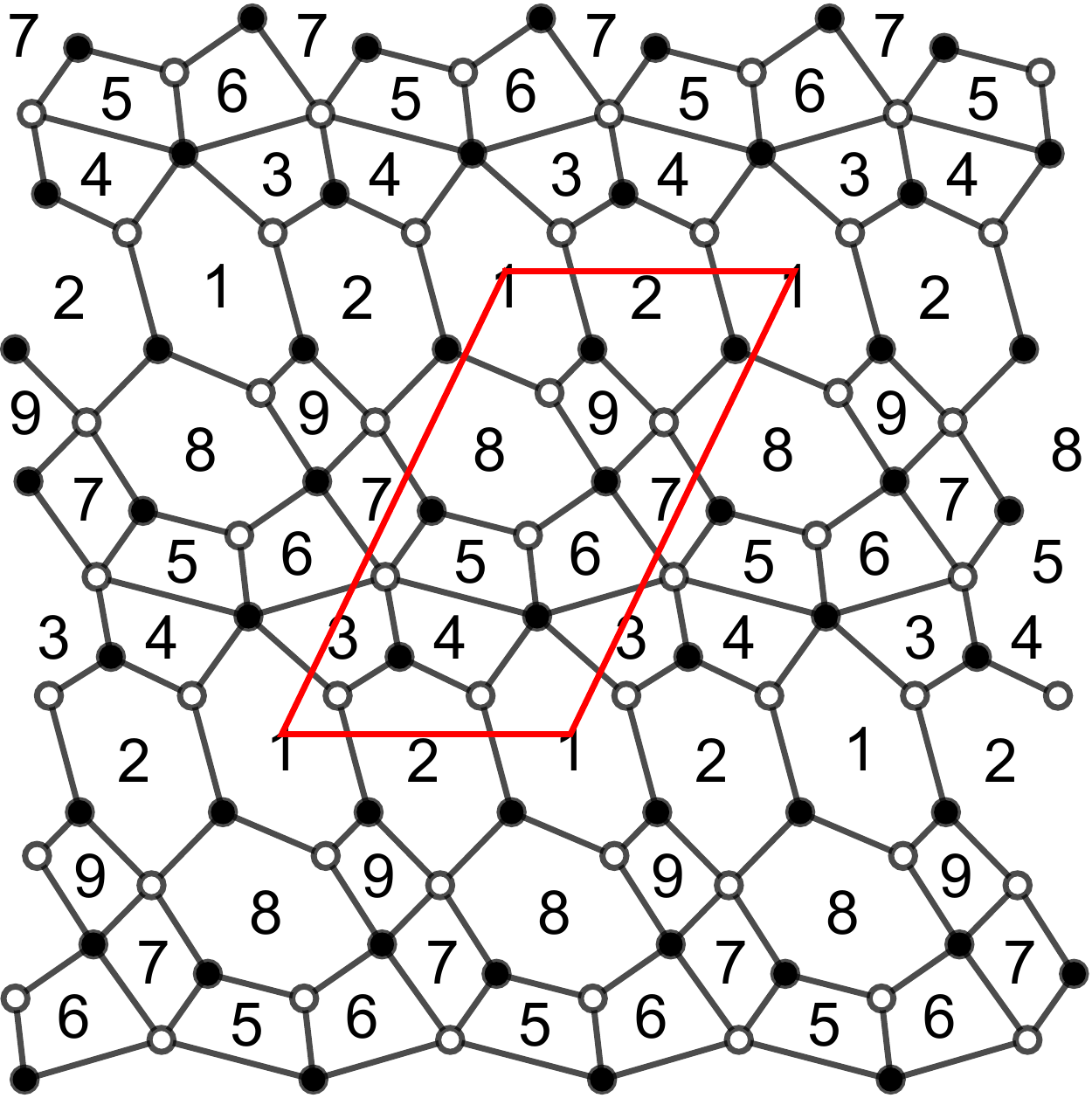};
\includegraphics[width=4cm]{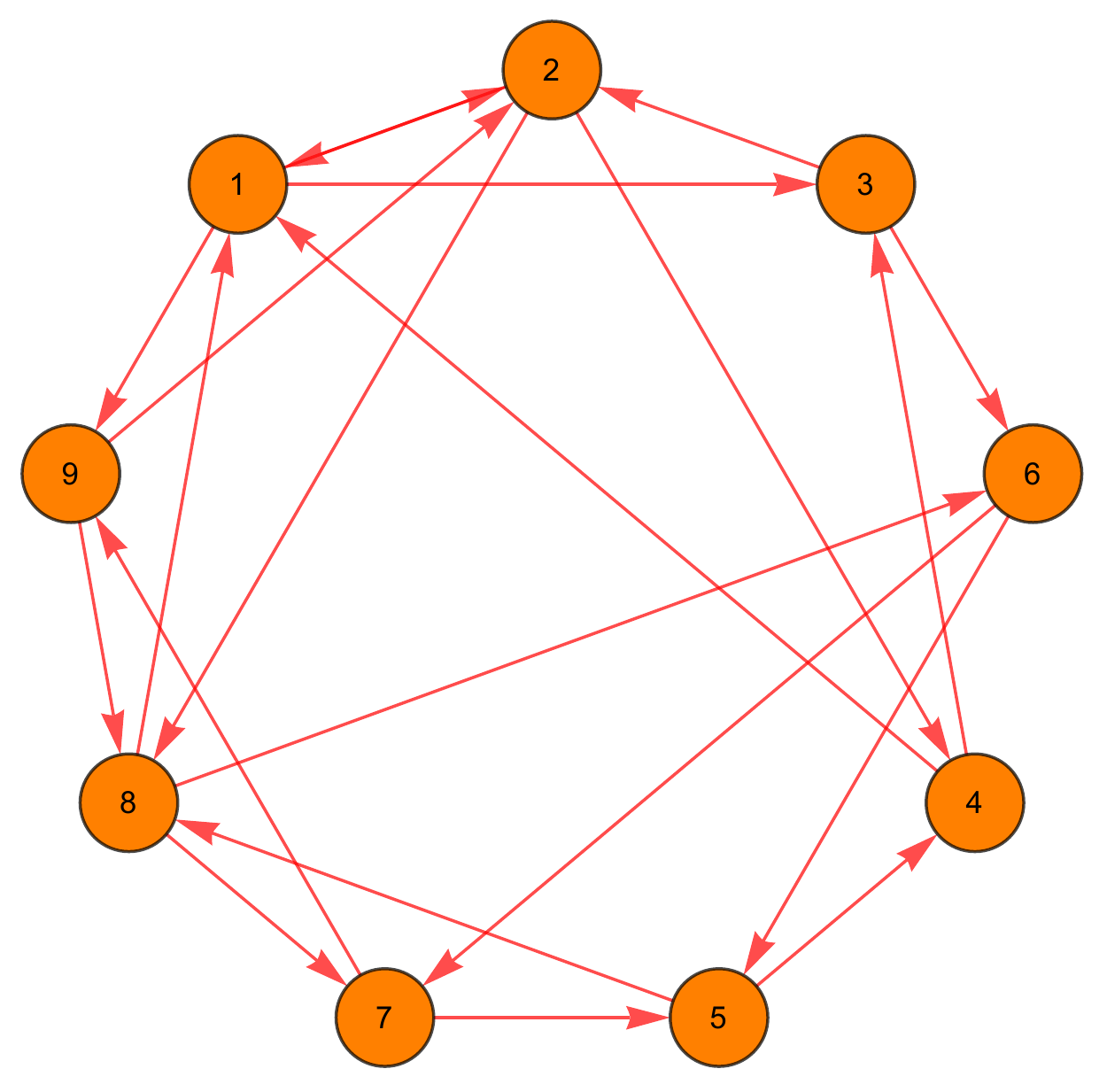}.
\end{equation}
The superpotential is
\begin{eqnarray}
W&=&X_{13}X_{32}X_{21}+X_{24}X_{41}X_{12}+X_{58}X_{86}X_{65}+X_{67}X_{75}X_{54}X_{43}X_{36}\nonumber\\
&&+X_{81}X_{19}X_{98}+X_{92}X_{28}X_{87}X_{79}-X_{41}X_{13}X_{36}X_{65}X_{54}-X_{32}X_{24}X_{43}\nonumber\\
&&-X_{75}X_{58}X_{87}-X_{86}X_{67}X_{79}X_{98}-X_{28}X_{81}X_{12}-X_{19}X_{92}X_{21}.
\end{eqnarray}
The number of perfect matchings is $c=33$, which leads to gigantic $P$, $Q_t$ and $G_t$. Hence, we will not list them here. The GLSM fields associated to each point are shown in (\ref{p28p}), where
\begin{eqnarray}
q=\{q_1,\dots,q_3\},\ r=\{r_1,\dots,r_{10}\},\ s=\{s_1,\dots,s_{12}\},\ t=\{t_1,\dots,t_{3}\}.
\end{eqnarray}
The mesonic symmetry reads U(1)$^2\times$U(1)$_\text{R}$ and the baryonic symmetry reads U(1)$^4_\text{h}\times$U(1)$^4$, where the subscripts ``R'' and ``h'' indicate R- and hidden symmetries respectively.

The Hilbert series of the toric cone is
\begin{eqnarray}
HS&=&\frac{1}{(1-t_2) \left(1-\frac{t_1 t_2}{t_3}\right) \left(1-\frac{t_3^2}{t_1
		t_2^2}\right)}+\frac{1}{\left(1-\frac{t_3^2}{t_1}\right) (1-t_2 t_3)
	\left(1-\frac{t_1}{t_2 t_3^2}\right)}\nonumber\\
&&+\frac{1}{\left(1-\frac{1}{t_2}\right)
	\left(1-\frac{t_1}{t_3}\right) \left(1-\frac{t_2 t_3^2}{t_1}\right)}+\frac{1}{(1-t_1)
	(1-t_2) \left(1-\frac{t_3}{t_1 t_2}\right)}\nonumber\\
&&+\frac{1}{\left(1-\frac{1}{t_1}\right)
	(1-t_2) \left(1-\frac{t_1 t_3}{t_2}\right)}+\frac{1}{\left(1-\frac{1}{t_1}\right)
	\left(1-\frac{t_1}{t_2}\right) (1-t_2 t_3)}\nonumber\\
&&+\frac{1}{\left(1-\frac{t_3}{t_1}\right)
	(1-t_2 t_3) \left(1-\frac{t_1}{t_2 t_3}\right)}+\frac{1}{(1-t_1)
	\left(1-\frac{1}{t_2}\right) \left(1-\frac{t_2
		t_3}{t_1}\right)}\nonumber\\
	&&+\frac{1}{\left(1-\frac{1}{t_2}\right)
	\left(1-\frac{t_2}{t_1}\right) (1-t_1 t_3)}.
\end{eqnarray}
The volume function is then
\begin{equation}
V=-\frac{3 \left({b_1}^2-6 {b_1}+6 ({b_2}-9)\right)}{({b_1}-6) ({b_1}+3)
	({b_2}+3) ({b_1}-{b_2}+3) ({b_1}+2 {b_2}-6)}.
\end{equation}
Minimizing $V$ yields $V_{\text{min}}=0.136079$ at $b_1=1.322699$, $b_2=-0.700670$. Thus, $a_\text{max}=1.837168$. Together with the superconformal conditions, we can solve for the R-charges of the bifundamentals. Then the R-charges of GLSM fields should satisfy
\begin{eqnarray}
&&\left(1.26563 p_2+1.26563 p_4+1.26563 p_5\right)p_3^2+(1.26563 p_2^2+2.53125 p_4 p_2+4.21875 p_5p_2\nonumber\\
&&-2.53125 p_2+1.26563 p_4^2+1.26563p_5^2-2.53125p_4+1.6875p_4p_5-2.53125p_5)p_3\nonumber\\
&&=-1.26563p_4p_2^2-2.10938p_5p_2^2-1.26563p_4^2p_2-2.10938p_5^2p_2+2.53125p_4p_2\nonumber\\
&&-1.6875p_4p_5p_2+4.21875p_5p_2-0.84375p_4p_5^2-0.84375p_4^2p_5+1.6875p_4 p_5-0.918584\nonumber\\
\end{eqnarray}
constrained by $\sum\limits_{i=1}^5p_i=2$ and $0<p_i<2$, with others vanishing.

\subsection{Polytope 30: PdP$_{6a}$ (2)}\label{p30}
The polytope is
\begin{equation}
\tikzset{every picture/.style={line width=0.75pt}} 
.\label{p30p}
\end{equation}
The brane tiling and the corrresponding quiver are
\begin{equation}
\includegraphics[width=4cm]{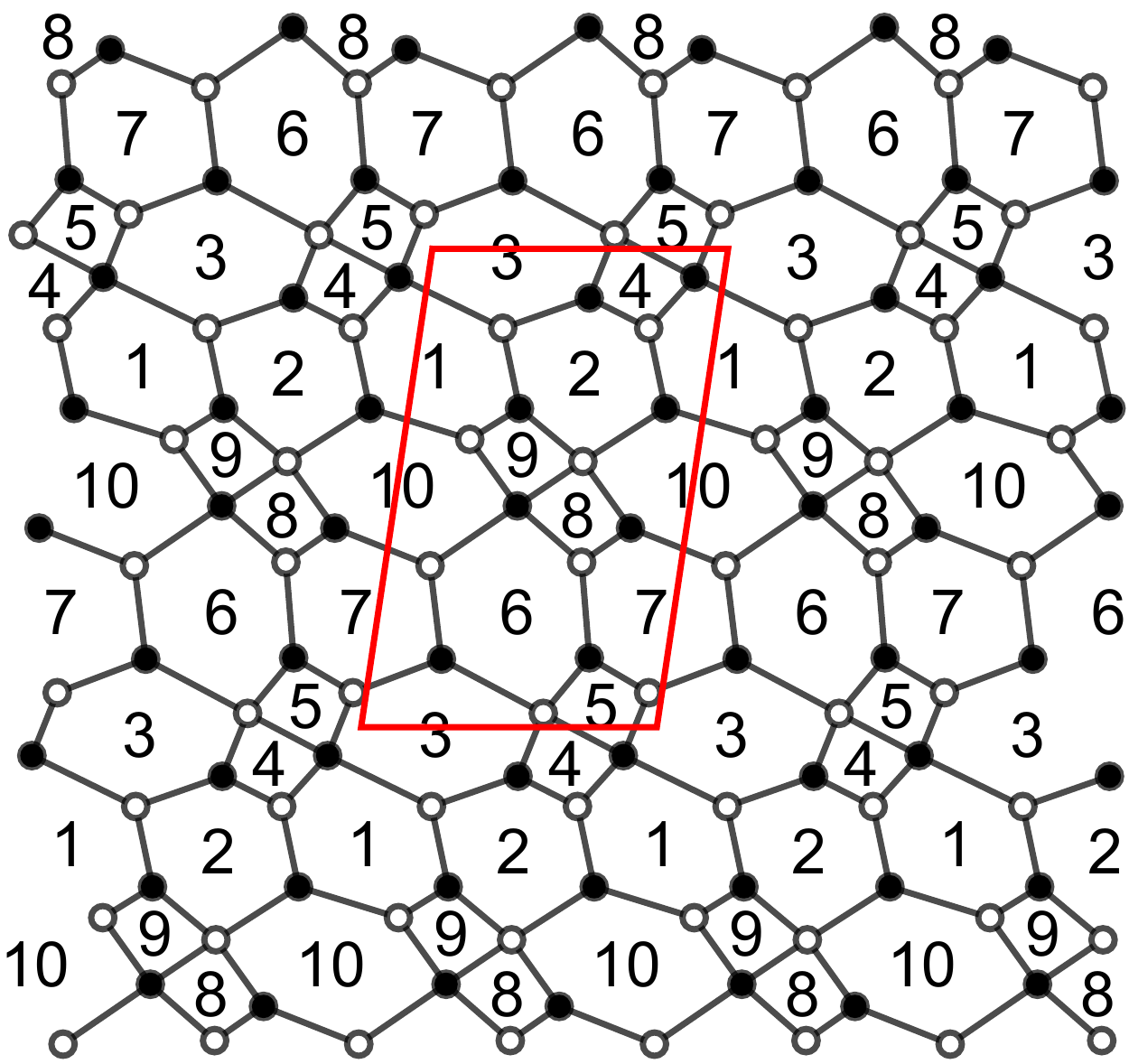};
\includegraphics[width=4cm]{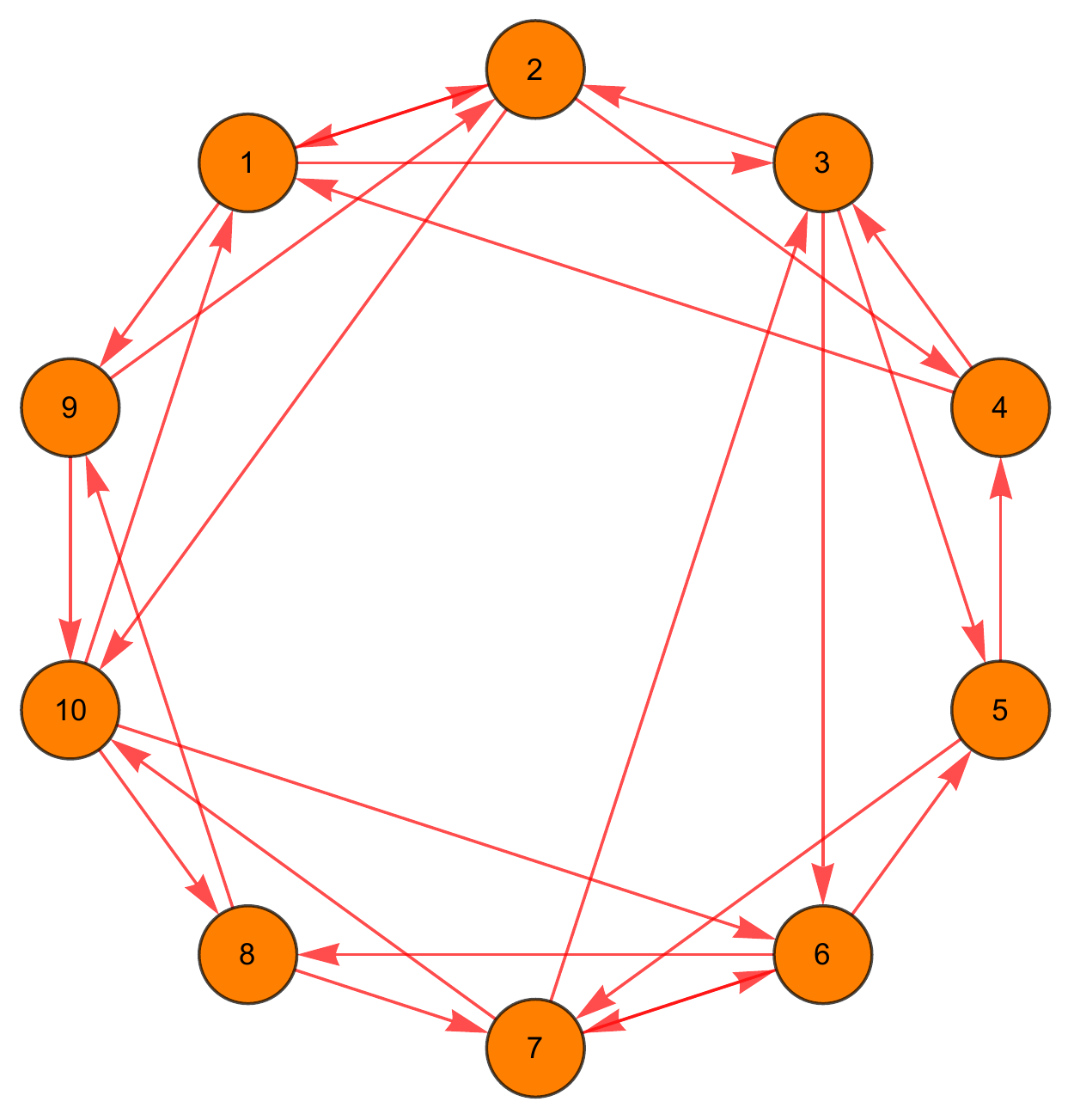}.
\end{equation}
The superpotential is
\begin{eqnarray}
W&=&X_{13}X_{32}X_{21}+X_{24}X_{41}X_{12}+X_{36}X_{65}X_{54}X_{43}+X_{57}X_{73}X_{35}\nonumber\\
&&+X_{68}X_{87}X_{76}+X_{7,10}X_{10,6}X_{67}+X_{10,1}X_{19}X_{9,10}+X_{92}X_{2,10}X_{10,8}X_{89}\nonumber\\
&&-X_{2,10}X_{10,1}X_{12}-X_{41}X_{13}X_{35}X_{54}-X_{32}X_{24}X_{43}-X_{65}X_{57}X_{76}\nonumber\\
&&-X_{73}X_{36}X_{67}-X_{87}X_{7,10}X_{10,8}-X_{10,6}X_{68}X_{89}X_{9,10}-X_{19}X_{92}X_{21}.
\end{eqnarray}
The number of perfect matchings is $c=53$, which leads to gigantic $P$, $Q_t$ and $G_t$. Hence, we will not list them here. The GLSM fields associated to each point are shown in (\ref{p30p}), where
\begin{eqnarray}
&&q=\{q_1,\dots,q_3\},\ r=\{r_1,\dots,r_{17}\},\ t=\{t_1,\dots,t_{6}\}\nonumber\\
&&u=\{u_1,\dots,u_3\},\ s=\{s_1,\dots,s_{17}\}.
\end{eqnarray}
The mesonic symmetry reads U(1)$^2\times$U(1)$_\text{R}$ and the baryonic symmetry reads U(1)$^4_\text{h}\times$U(1)$^5$, where the subscripts ``R'' and ``h'' indicate R- and hidden symmetries respectively.

The Hilbert series of the toric cone is
\begin{eqnarray}
HS&=&\frac{1}{(1-t_2) \left(1-\frac{t_1 t_2}{t_3}\right) \left(1-\frac{t_3^2}{t_1
		t_2^2}\right)}+\frac{1}{\left(1-\frac{1}{t_2}\right)
	\left(1-\frac{t_3^2}{t_1}\right) \left(1-\frac{t_1
		t_2}{t_3}\right)}\nonumber\\
	&&+\frac{1}{\left(1-\frac{t_1}{t_3^2}\right) (1-t_2 t_3)
	\left(1-\frac{t_3^2}{t_1 t_2}\right)}+\frac{1}{(1-t_1) (1-t_2) \left(1-\frac{t_3}{t_1
		t_2}\right)}\nonumber\\
	&&+\frac{1}{\left(1-\frac{1}{t_1}\right) (1-t_2) \left(1-\frac{t_1
		t_3}{t_2}\right)}+\frac{1}{(1-t_1) \left(1-\frac{1}{t_1 t_2}\right) (1-t_2
	t_3)}\nonumber\\
&&+\frac{1}{(1-t_1 t_3) (1-t_2 t_3) \left(1-\frac{1}{t_1 t_2
		t_3}\right)}+\frac{1}{\left(1-\frac{t_1}{t_3}\right) (1-t_2 t_3) \left(1-\frac{t_3}{t_1
		t_2}\right)}\nonumber\\
	&&+\frac{1}{\left(1-\frac{1}{t_1}\right) \left(1-\frac{1}{t_2}\right)
	(1-t_1 t_2 t_3)}+\frac{1}{\left(1-\frac{1}{t_2}\right) (1-t_1 t_2)
	\left(1-\frac{t_3}{t_1}\right)}.
\end{eqnarray}
The volume function is then
\begin{equation}
V=-\frac{4 {b_1}^2+4 {b_1} ({b_2}-3)-2 {b_2}^2+39
	{b_2}-153}{({b_1}+3) ({b_2}+3) ({b_1}-{b_2}+3)
	({b_1}+{b_2}-6) ({b_1}+2 {b_2}-6)}.
\end{equation}
Minimizing $V$ yields $V_{\text{min}}=0.116367$ at $b_1=1.939465$, $b_2=-0.878930$. Thus, $a_\text{max}=2.148375$. Together with the superconformal conditions, we can solve for the R-charges of the bifundamentals. Then the R-charges of GLSM fields should satisfy
\begin{eqnarray}
&&\left(2.25p_2+11.25p_3+6.75p_5\right)p_4^2+(2.25p_2^2+4.5p_3p_2+4.5p_5p_2-4.5p_2+11.25p_3^2\nonumber\\
&&+6.75p_5^2-22.5p_3+22.5p_3p_5-13.5p_5)p_4=-9.p_3p_2^2-6.75p_5p_2^2-9p_3^2p_2-6.75p_5^2p_2\nonumber\\
&&+18 p_3p_2-13.5 p_3 p_5 p_2+13.5 p_5 p_2-6.75 p_3 p_5^2-6.75 p_3^2 p_5+13.5 p_3p_5-5.729\nonumber\\
\end{eqnarray}
constrained by $\sum\limits_{i=1}^5p_i=2$ and $0<p_i<2$, with others vanishing.

\subsection{Polytope 31: $K^{2,5,1,4}$}\label{p31}
The polytope is
\begin{equation}
\tikzset{every picture/.style={line width=0.75pt}} 
.\label{p31p}
\end{equation}
The brane tiling and the corrresponding quiver are
\begin{equation}
\includegraphics[width=4cm]{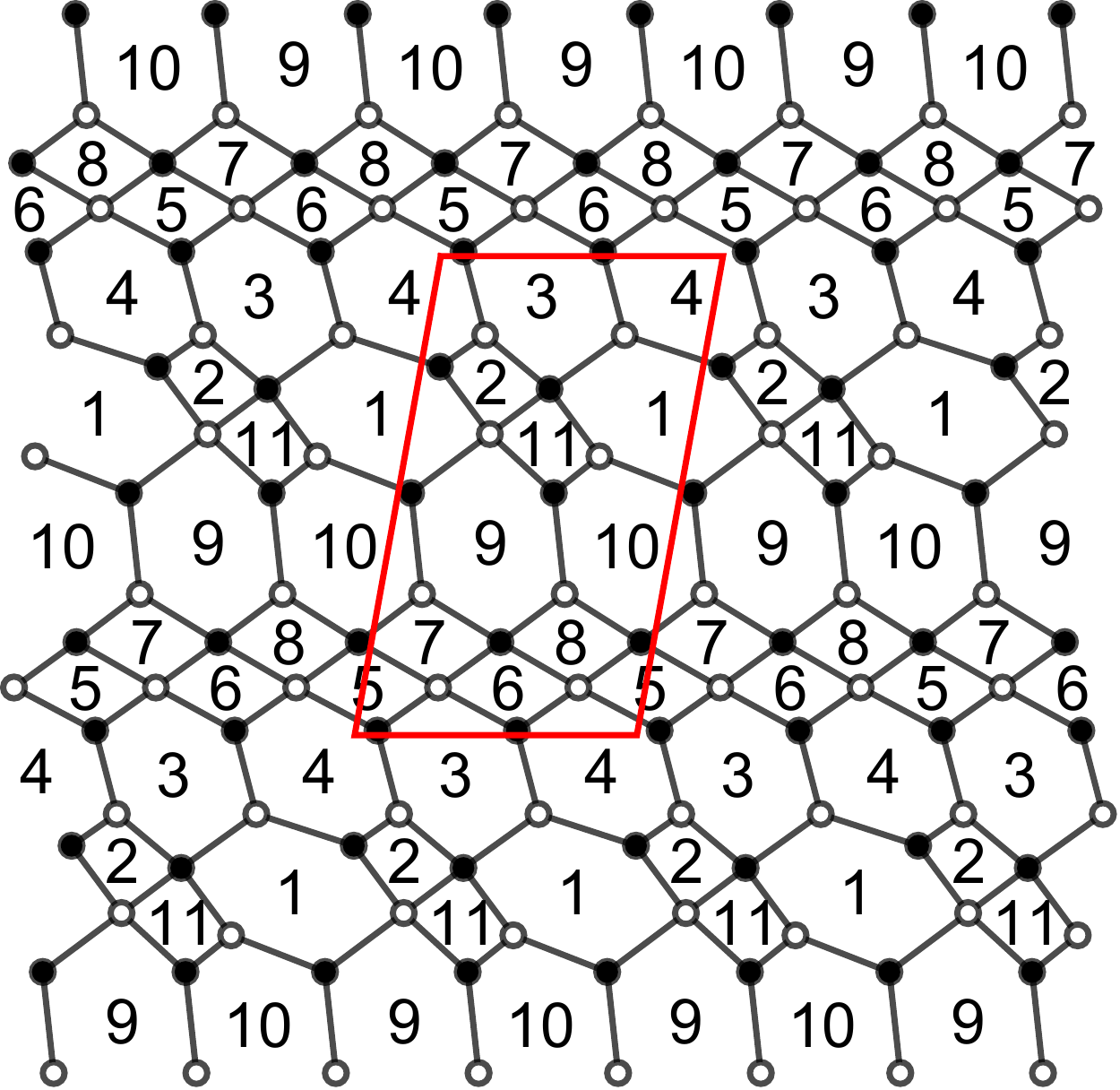};
\includegraphics[width=4cm]{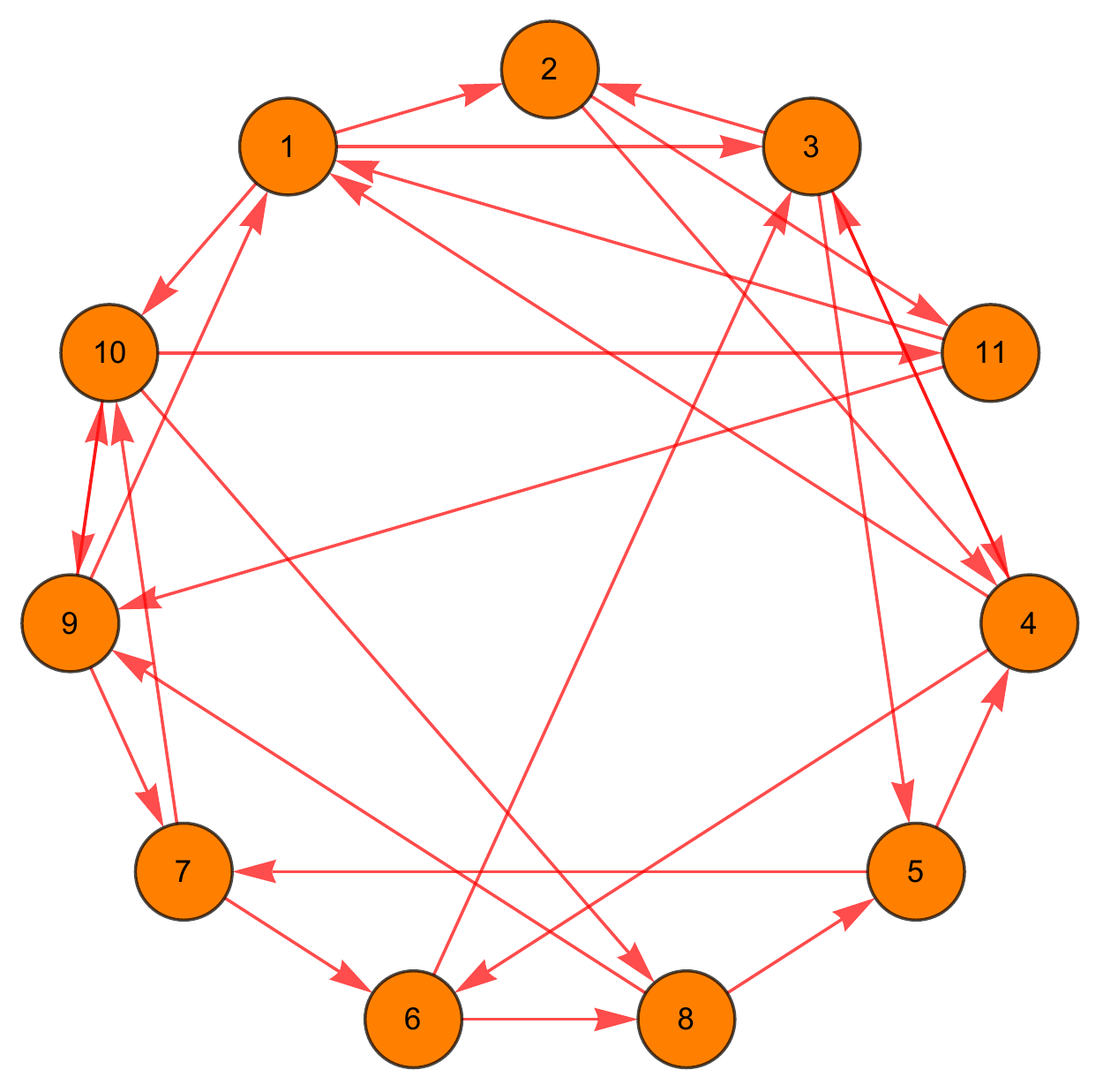}.
\end{equation}
The superpotential is
\begin{eqnarray}
W&=&X_{1,3}X_{3,4}X_{4,1}+X_{2,4}X_{4,3}X_{3,2}+X_{3,5}X_{5,7}X_{7,6}X_{6,3}+X_{4,6}X_{6,8}X_{8,5}X_{5,4}\nonumber\\
&&+X_{8,9}X_{9,10}X_{10,8}+X_{7,10}X_{10,9}X_{9,7}+X_{9,1}X_{1,2}X_{2,11}X_{11,9}+X_{10,11}X_{11,1}X_{1,10}\nonumber\\
&&-X_{1,2}X_{2,4}X_{4,1}-X_{3,4}X_{4,6}X_{6,3}-X_{4,3}X_{3,5}X_{5,4}-X_{10,8}X_{8,5}X_{5,7}X_{7,10}\nonumber\\
&&-X_{76}X_{68}X_{89}X_{97}-X_{9,10}X_{10,11}X_{11,9}-X_{10,9}X_{9,1}X_{1,10}-X_{11,1}X_{1,3}X_{3,2}X_{2,11}.\nonumber\\
\end{eqnarray}
The number of perfect matchings is $c=66$, which leads to gigantic $P$, $Q_t$ and $G_t$. Hence, we will not list them here. The GLSM fields associated to each point are shown in (\ref{p31p}), where
\begin{eqnarray}
&&q=\{q_1,q_2\},\ r=\{r_1,\dots,r_{25}\},\ s=\{s_1,\dots,s_{20}\}\nonumber\\
&&t=\{t_1,\dots,t_4\},\ u=\{u_1,\dots,u_6\},\ v=\{v_1,\dots,v_{4}\}.
\end{eqnarray}
The mesonic symmetry reads U(1)$^2\times$U(1)$_\text{R}$ and the baryonic symmetry reads U(1)$^4_\text{h}\times$U(1)$^6$, where the subscripts ``R'' and ``h'' indicate R- and hidden symmetries respectively.

The Hilbert series of the toric cone is
\begin{eqnarray}
HS&=&\frac{1}{\left(1-\frac{1}{t_2}\right) \left(1-\frac{t_1}{t_2 t_3}\right)
	\left(1-\frac{t_2^2 t_3^2}{t_1}\right)}+\frac{1}{\left(1-\frac{t_3}{t_2}\right)
	\left(1-\frac{t_1}{t_2 t_3^2}\right) \left(1-\frac{t_2^2
		t_3^2}{t_1}\right)}\nonumber\\
	&&+\frac{1}{(1-t_2) \left(1-\frac{t_3^2}{t_1}\right)
	\left(1-\frac{t_1}{t_2 t_3}\right)}+\frac{1}{\left(1-\frac{t_1}{t_3^2}\right)
	\left(1-\frac{t_3}{t_2}\right) \left(1-\frac{t_2
		t_3^2}{t_1}\right)}\nonumber\\
	&&+\frac{1}{\left(1-\frac{1}{t_2}\right)
	\left(1-\frac{t_2}{t_1}\right) (1-t_1 t_3)}+\frac{1}{(1-t_1)
	\left(1-\frac{t_2}{t_1}\right)
	\left(1-\frac{t_3}{t_2}\right)}\nonumber\\
&&+\frac{1}{\left(1-\frac{1}{t_1}\right) (1-t_2)
	\left(1-\frac{t_1 t_3}{t_2}\right)}+\frac{1}{\left(1-\frac{1}{t_1}\right)
	\left(1-\frac{t_1}{t_2}\right) (1-t_2 t_3)}\nonumber\\
&&+\frac{1}{(1-t_1)
	\left(1-\frac{1}{t_2}\right) \left(1-\frac{t_2
		t_3}{t_1}\right)}+\frac{1}{\left(1-\frac{t_1}{t_3}\right)
	\left(1-\frac{t_3}{t_2}\right) \left(1-\frac{t_2 t_3}{t_1}\right)}\nonumber\\
&&+\frac{1}{(1-t_2)
	\left(1-\frac{t_1}{t_2}\right) \left(1-\frac{t_3}{t_1}\right)}.
\end{eqnarray}
The volume function is then
\begin{equation}
V=-\frac{2 \left({b_2}^2-3 {b_2}-36\right)-3 {b_1} ({b_2}+5)}{({b_1}+3)
	({b_2}-3) ({b_2}+3) ({b_1}-{b_2}+3) ({b_1}-2 ({b_2}+3))}.
\end{equation}
Minimizing $V$ yields $V_{\text{min}}=0.106224$ at $b_1=2.907158$, $b_2=0.685037$. Thus, $a_\text{max}=2.353517$. Together with the superconformal conditions, we can solve for the R-charges of the bifundamentals. Then the R-charges of GLSM fields should satisfy
\begin{eqnarray}
&&\left(5.0625p_2+0.84375p_4+1.6875p_5\right)p_3^2+(5.0625p_2^2+8.4375p_4p_2+3.375p_5p_2\nonumber\\
&&-10.125p_2+0.84375p_4^2+1.6875p_5^2-1.6875p_4+1.6875p_4p_5-3.375p_5)p_3=-4.21875p_4p_2^2\nonumber\\
&&-3.375p_5p_2^2-4.21875p_4^2p_2-3.375p_5^2p_2+8.4375p_4p_2-1.6875p_4p_5p_2+6.75p_5p_2\nonumber\\
&&-0.84375p_4 p_5^2-0.84375 p_4^2 p_5+1.6875 p_4 p_5-2.35352
\end{eqnarray}
constrained by $\sum\limits_{i=1}^5p_i=2$ and $0<p_i<2$, with others vanishing.

\subsection{Polytope 32: $K^{2,5,1,3}$}\label{p32}
The polytope is
\begin{equation}
\tikzset{every picture/.style={line width=0.75pt}} 
.\label{p32p}
\end{equation}
The brane tiling and the corrresponding quiver are
\begin{equation}
\includegraphics[width=4cm]{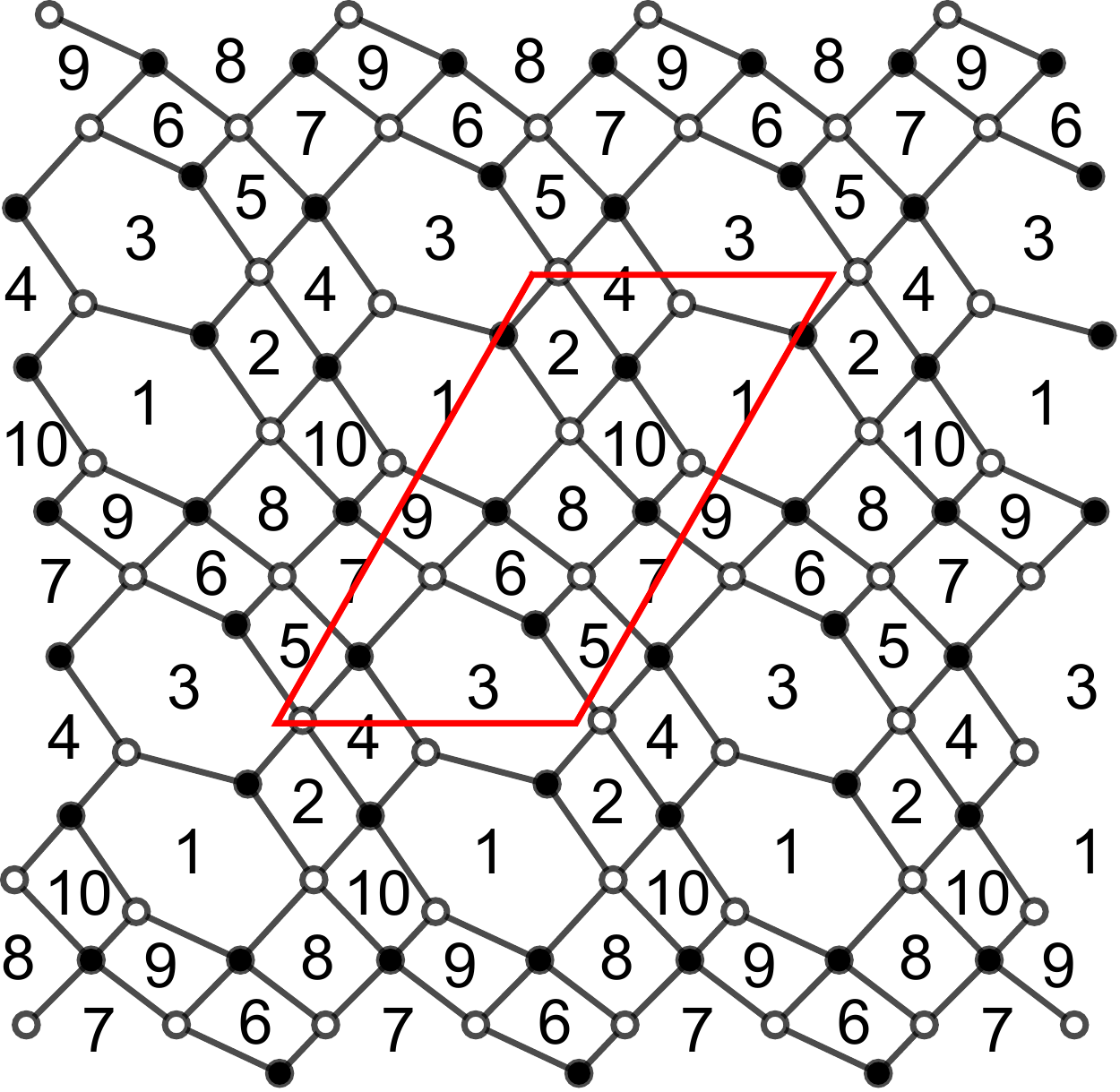};
\includegraphics[width=4cm]{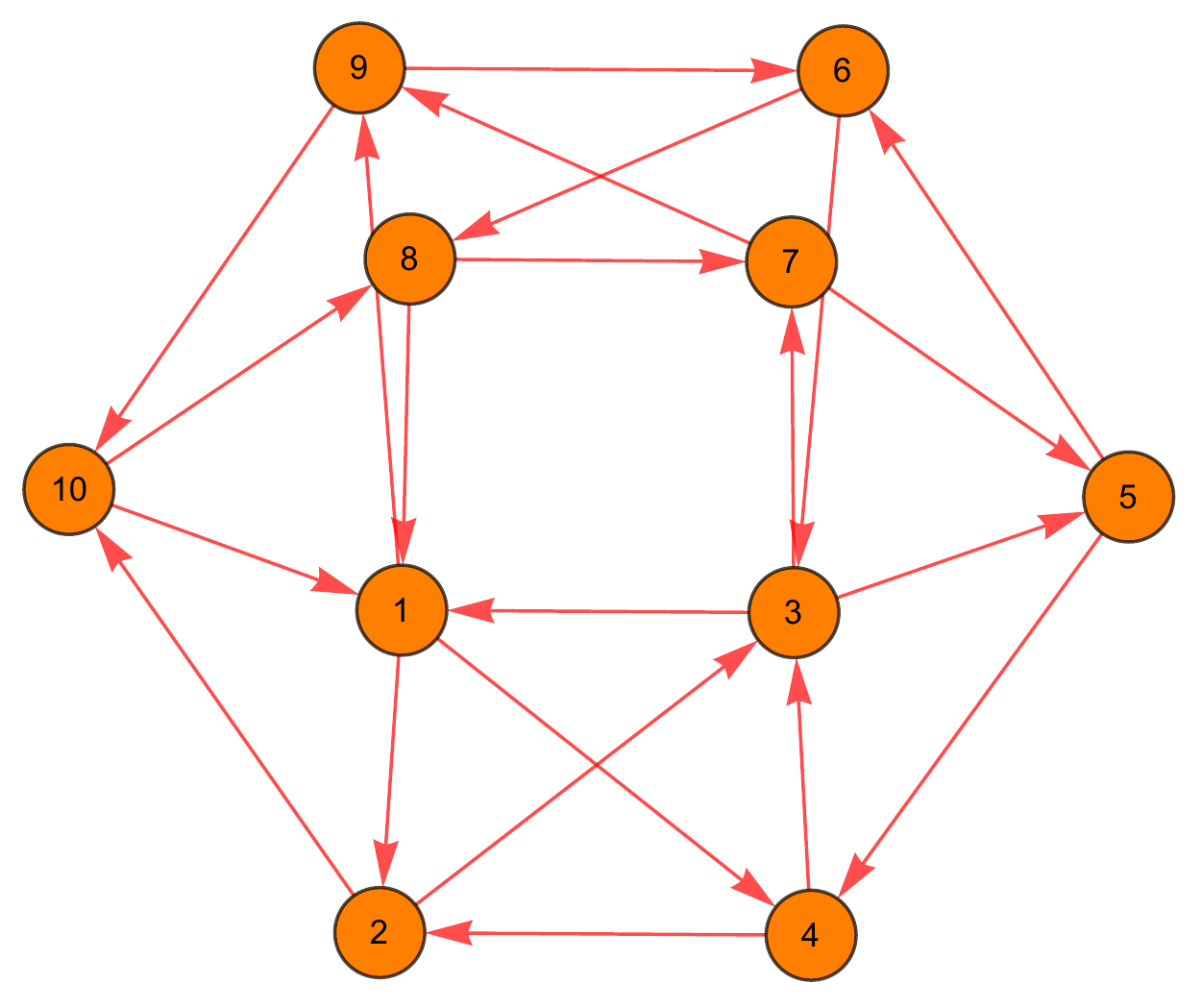}.
\end{equation}
The superpotential is
\begin{eqnarray}
W&=&X_{12}X_{2,10}X_{10,8}X_{81}+X_{14}X_{43}X_{31}+X_{23}X_{35}X_{54}X_{42}+X_{56}X_{68}X_{87}X_{75}\nonumber\\
&&+X_{37}X_{79}X_{96}X_{63}+X_{9,10}X_{10,1}X_{19}-X_{10,1}X_{14}X_{42}X_{2,10}-X_{12}X_{23}X_{31}\nonumber\\
&&-X_{43}X_{37}X_{75}X_{54}-X_{35}X_{56}X_{63}-X_{96}X_{68}X_{81}X_{19}-X_{87}X_{79}X_{9,10}X_{10,8}.\nonumber\\
\end{eqnarray}
The number of perfect matchings is $c=46$, which leads to gigantic $P$, $Q_t$ and $G_t$. Hence, we will not list them here. The GLSM fields associated to each point are shown in (\ref{p32p}), where
\begin{eqnarray}
&&q=\{q_1,q_2\},\ r=\{r_1,\dots,r_{20}\},\ s=\{s_1,\dots,s_{13}\}\nonumber\\
&&t=\{t_1,\dots,t_3\},\ u=\{u_1,\dots,u_3\}.
\end{eqnarray}
The mesonic symmetry reads U(1)$^2\times$U(1)$_\text{R}$ and the baryonic symmetry reads U(1)$^4_\text{h}\times$U(1)$^5$, where the subscripts ``R'' and ``h'' indicate R- and hidden symmetries respectively.

The Hilbert series of the toric cone is
\begin{eqnarray}
HS&=&\frac{1}{\left(1-\frac{1}{t_2}\right) \left(1-\frac{t_1}{t_2 t_3}\right)
	\left(1-\frac{t_2^2 t_3^2}{t_1}\right)}+\frac{1}{\left(1-\frac{t_3}{t_2}\right)
	\left(1-\frac{t_1}{t_2 t_3^2}\right) \left(1-\frac{t_2^2
		t_3^2}{t_1}\right)}\nonumber\\
	&&+\frac{1}{(1-t_2) \left(1-\frac{t_2}{t_1}\right) \left(1-\frac{t_1
		t_3}{t_2^2}\right)}+\frac{1}{\left(1-\frac{t_3^2}{t_1}\right)
	\left(1-\frac{t_3}{t_2}\right) \left(1-\frac{t_1 t_2}{t_3^2}\right)}\nonumber\\
&&+\frac{1}{(1-t_2)
	\left(1-\frac{t_1}{t_3}\right) \left(1-\frac{t_3^2}{t_1
		t_2}\right)}+\frac{1}{\left(1-\frac{t_1}{t_3^2}\right)
	\left(1-\frac{t_3}{t_2}\right) \left(1-\frac{t_2 t_3^2}{t_1}\right)}\nonumber\\
&&+\frac{1}{(1-t_1
	t_3) (1-t_2 t_3) \left(1-\frac{1}{t_1 t_2 t_3}\right)}+\frac{1}{(1-t_1)
	\left(1-\frac{1}{t_2}\right) \left(1-\frac{t_2
		t_3}{t_1}\right)}\nonumber\\
	&&+\frac{1}{\left(1-\frac{1}{t_1}\right)
	\left(1-\frac{1}{t_2}\right) (1-t_1 t_2 t_3)}+\frac{1}{(1-t_2)
	\left(1-\frac{t_1}{t_2}\right) \left(1-\frac{t_3}{t_1}\right)}.
\end{eqnarray}
The volume function is then
\begin{equation}
V=-\frac{-2 {b_1} ({b_2}+6)+4 {b_2}^2-90}{({b_1}+3) ({b_2}-3)
	({b_2}+3) ({b_1}-2 {b_2}+3) ({b_1}-2 ({b_2}+3))}.
\end{equation}
Minimizing $V$ yields $V_{\text{min}}=0.121782$ at $b_1=3.092671$, $b_2=0.479773$. Thus, $a_\text{max}=2.052849$. Together with the superconformal conditions, we can solve for the R-charges of the bifundamentals. Then the R-charges of GLSM fields should satisfy
\begin{eqnarray}
&&\left(1.6875p_2+0.28125p_4+0.84375p_5\right)p_3^2+(1.6875p_2^2+2.8125p_4p_2+1.6875p_5p_2\nonumber\\
&&-3.375p_2+0.28125p_4^2+0.84375p_5^2-0.5625p_4+1.125p_4p_5-1.6875p_5)p_3=-1.40625p_4p_2^2\nonumber\\
&&-0.84375p_5p_2^2-1.40625p_4^2p_2-0.84375p_5^2p_2+2.8125p_4p_2-1.125p_4p_5p_2+1.6875p_5p_2\nonumber\\
&&-0.5625 p_4 p_5^2-0.5625 p_4^2 p_5+1.125 p_4p_5-0.684283
\end{eqnarray}
constrained by $\sum\limits_{i=1}^5p_i=2$ and $0<p_i<2$, with others vanishing.

\subsection{Polytope 33: $K^{2,5,1,2}$}\label{p33}
The polytope is
\begin{equation}
\tikzset{every picture/.style={line width=0.75pt}} 
.\label{p33p}
\end{equation}
The brane tiling and the corrresponding quiver are
\begin{equation}
\includegraphics[width=4cm]{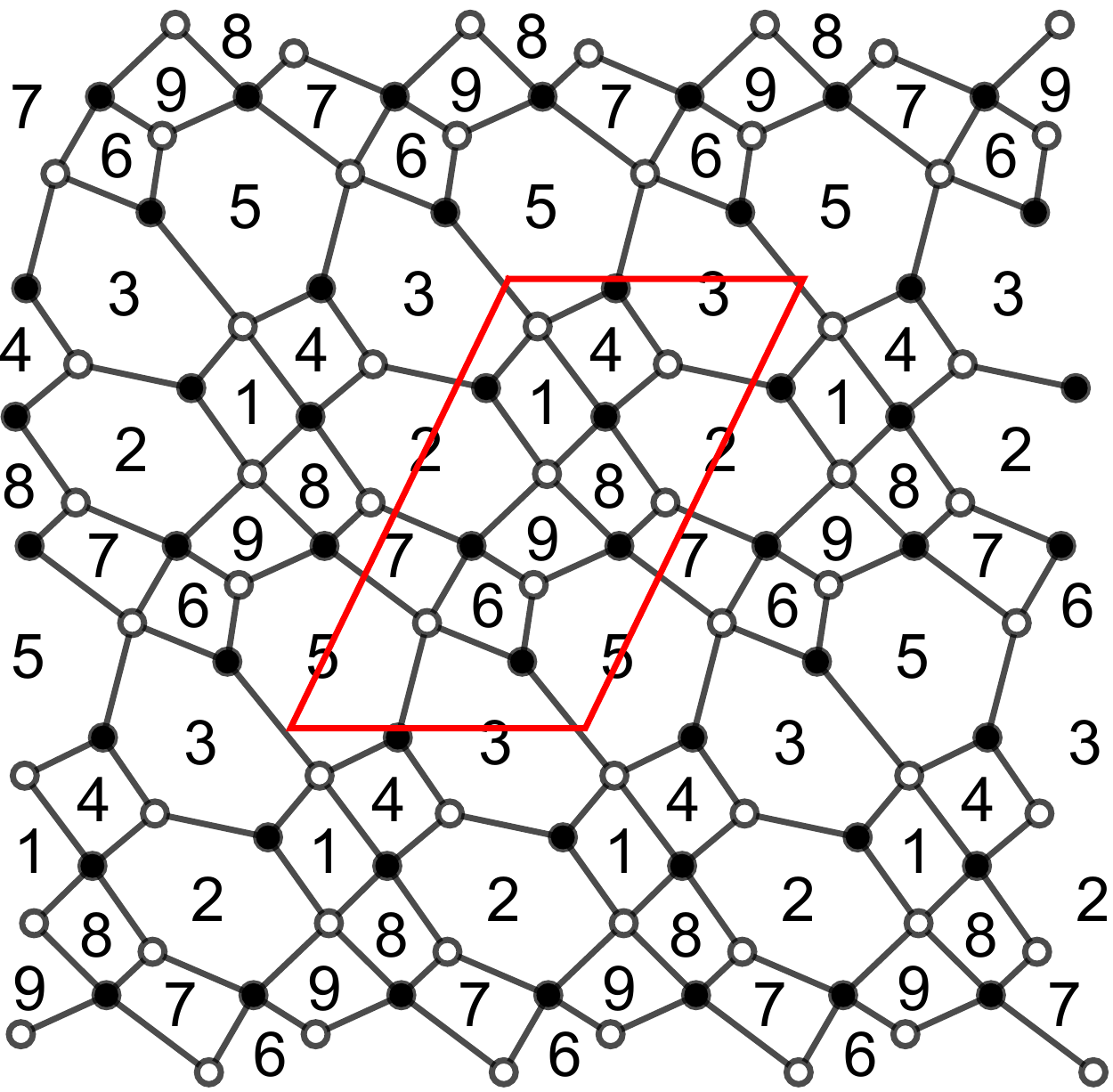};
\includegraphics[width=4cm]{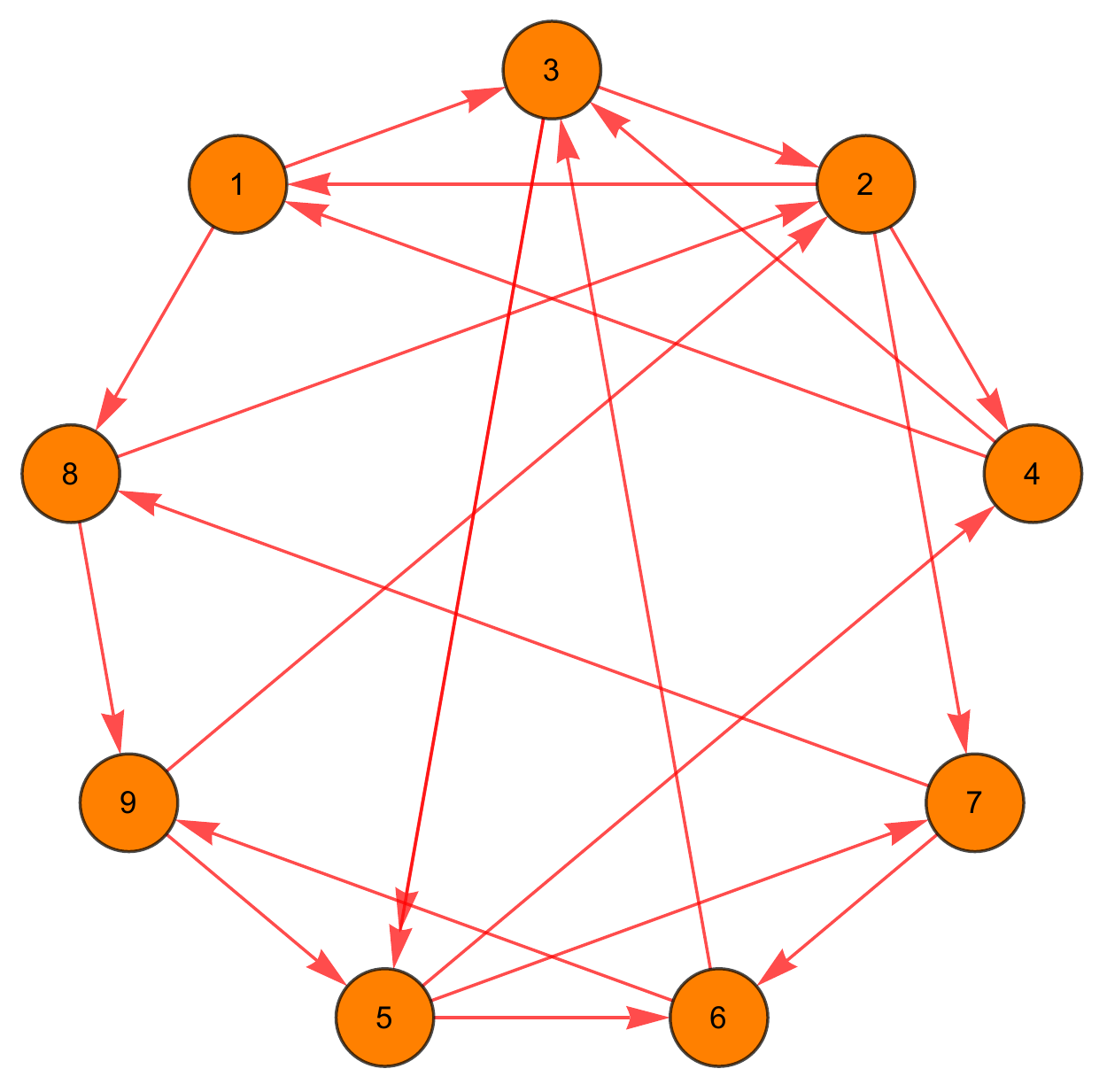}.
\end{equation}
The superpotential is
\begin{eqnarray}
W&=&X_{13}X^1_{35}X_{54}X_{41}+X_{24}X_{43}X_{32}+X^2_{35}X_{57}X_{76}X_{63}+X_{56}X_{69}X_{95}\nonumber\\
&&+X_{78}X_{82}X_{27}+X_{89}X_{92}X_{21}X_{18}-X_{82}X_{24}X_{41}X_{18}-X_{21}X_{13}X_{32}\nonumber\\
&&-X_{43}X^2_{35}X_{54}-X^1_{35}X_{56}X_{63}-X_{76}X_{69}X_{92}X_{27}-X_{95}X_{57}X_{78}X_{89}.
\end{eqnarray}
The number of perfect matchings is $c=36$, which leads to gigantic $P$, $Q_t$ and $G_t$. Hence, we will not list them here. The GLSM fields associated to each point are shown in (\ref{p33p}), where
\begin{eqnarray}
&&q=\{q_1,q_2\},\ r=\{r_1,\dots,r_{17}\},\ s=\{s_1,\dots,s_{10}\},\ t=\{t_1,t_2\}.
\end{eqnarray}
The mesonic symmetry reads U(1)$^2\times$U(1)$_\text{R}$ and the baryonic symmetry reads U(1)$^4_\text{h}\times$U(1)$^4$, where the subscripts ``R'' and ``h'' indicate R- and hidden symmetries respectively.

The Hilbert series of the toric cone is
\begin{eqnarray}
HS&=&\frac{1}{\left(1-\frac{1}{t_2}\right) \left(1-\frac{t_1}{t_2 t_3}\right)
	\left(1-\frac{t_2^2 t_3^2}{t_1}\right)}+\frac{1}{\left(1-\frac{t_3}{t_2}\right)
	\left(1-\frac{t_1}{t_2 t_3^2}\right) \left(1-\frac{t_2^2
		t_3^2}{t_1}\right)}\nonumber\\
	&&+\frac{1}{(1-t_2) \left(1-\frac{t_1}{t_2^2}\right)
	\left(1-\frac{t_2 t_3}{t_1}\right)}+\frac{1}{(1-t_2) \left(1-\frac{t_2^2}{t_1}\right)
	\left(1-\frac{t_1 t_3}{t_2^3}\right)}\nonumber\\
&&+\frac{1}{(1-t_2)
	\left(1-\frac{t_3^2}{t_1}\right) \left(1-\frac{t_1}{t_2
		t_3}\right)}+\frac{1}{\left(1-\frac{t_1}{t_3^2}\right)
	\left(1-\frac{t_3}{t_2}\right) \left(1-\frac{t_2
		t_3^2}{t_1}\right)}\nonumber\\
	&&+\frac{1}{\left(1-\frac{1}{t_1}\right)
	\left(1-\frac{t_1}{t_2}\right) (1-t_2 t_3)}+\frac{1}{(1-t_1)
	\left(1-\frac{1}{t_2}\right) \left(1-\frac{t_2
		t_3}{t_1}\right)}\nonumber\\
	&&+\frac{1}{\left(1-\frac{1}{t_2}\right)
	\left(1-\frac{t_2}{t_1}\right) (1-t_1 t_3)}.
\end{eqnarray}
The volume function is then
\begin{equation}
V=-\frac{6 \left({b_2}^2+{b_2}-18\right)-{b_1} ({b_2}+9)}{({b_1}+3)
	({b_2}-3) ({b_2}+3) ({b_1}-3 {b_2}+3) ({b_1}-2 ({b_2}+3))}.
\end{equation}
Minimizing $V$ yields $V_{\text{min}}=0.135851$ at $b_1=2.974853$, $b_2=0.227507$. Thus, $a_\text{max}=1.840251$. Together with the superconformal conditions, we can solve for the R-charges of the bifundamentals. Then the R-charges of GLSM fields should satisfy
\begin{eqnarray}
&&\left(1.125p_2+1.6875p_4+0.5625p_5\right)p_3^2+(1.125p_2^2+3.375p_4p_2+1.125p_5p_2\nonumber\\
&&-2.25p_2+1.6875p_4^2+0.5625p_5^2-3.375p_4+1.125p_4p_5-1.125p_5)p_3=-1.125p_4p_2^2\nonumber\\
&&-0.5625p_5p_2^2-1.125p_4^2p_2-0.5625p_5^2p_2+2.25p_4p_2-1.125p_4p_5p_2+1.125p_5p_2\nonumber\\
&&-0.28125p_4p_5^2-0.28125p_4^2 p_5+0.5625 p_4 p_5-0.613417
\end{eqnarray}
constrained by $\sum\limits_{i=1}^5p_i=2$ and $0<p_i<2$, with others vanishing.

\subsection{Polytope 34: $K^{2,5,1,1}$}\label{p34}
The polytope is
\begin{equation}
\tikzset{every picture/.style={line width=0.75pt}} 
.\label{p34p}
\end{equation}
The brane tiling and the corrresponding quiver are
\begin{equation}
\includegraphics[width=4cm]{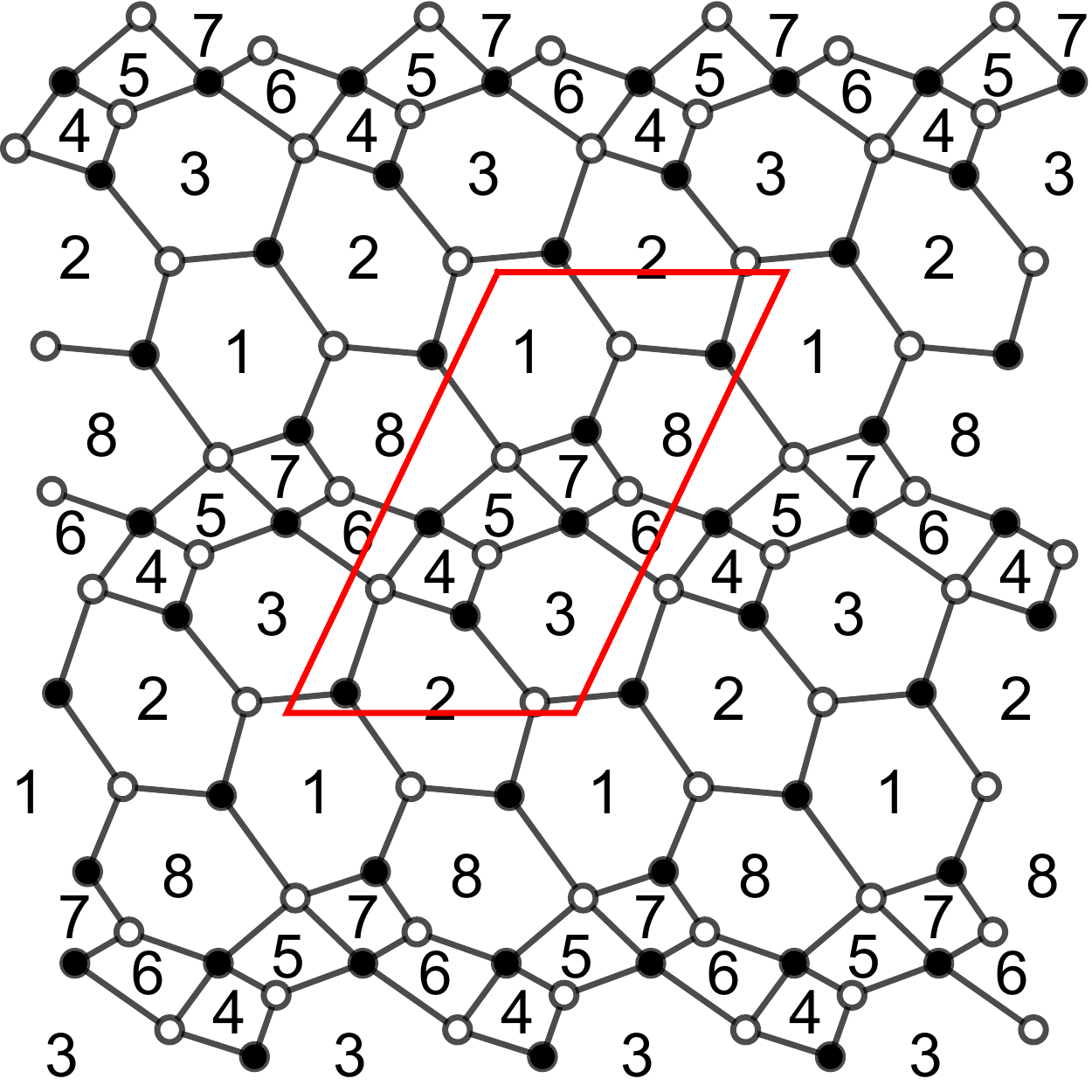};
\includegraphics[width=4cm]{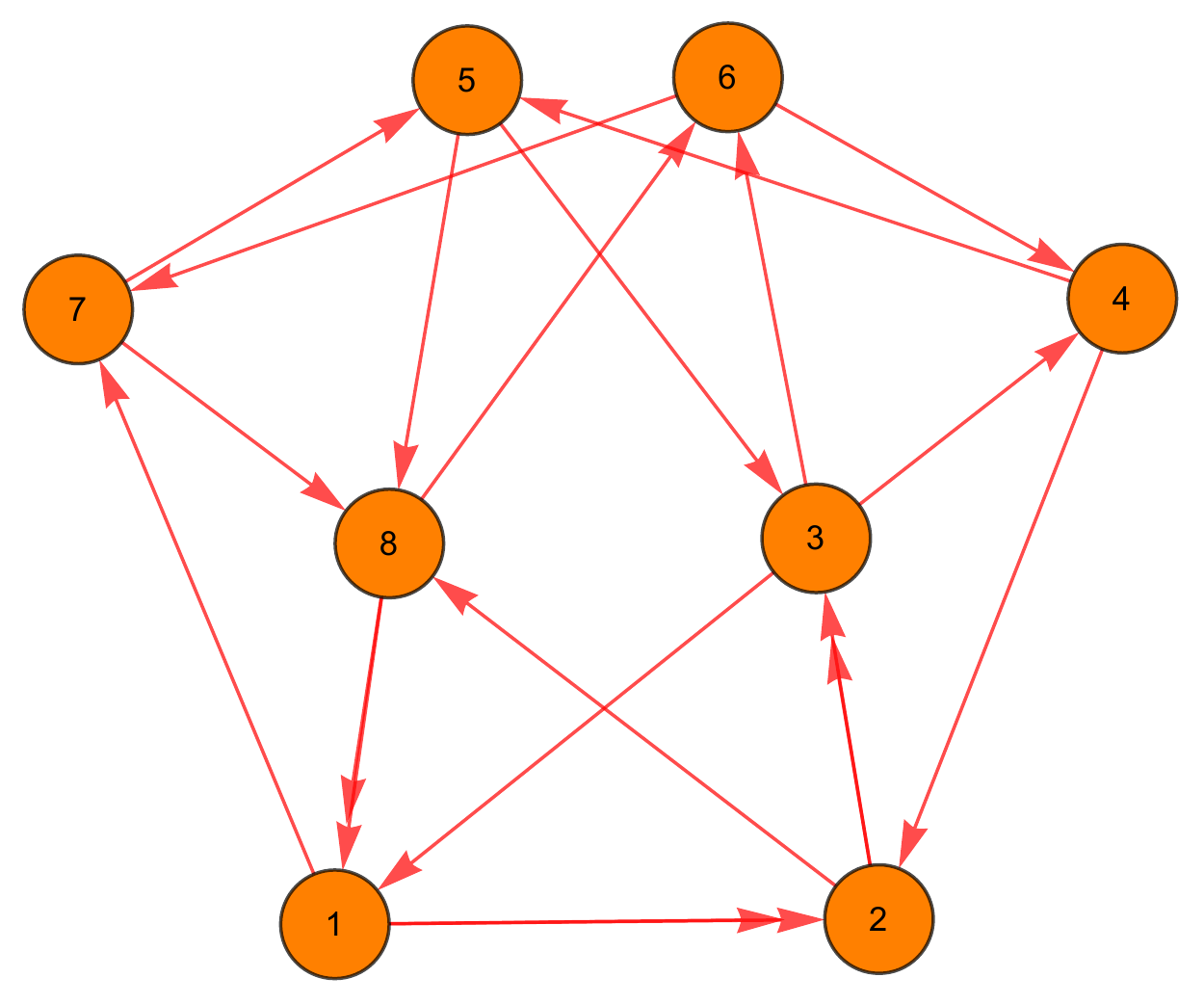}.
\end{equation}
The superpotential is
\begin{eqnarray}
W&=&X^1_{12}X^2_{23}X_{31}+X^1_{23}X_{36}X_{64}X_{42}+X_{34}X_{45}X_{53}+X_{67}X_{78}X_{86}\nonumber\\
&&+X_{58}X^2_{81}X_{17}X_{75}+X^1_{81}X^2_{12}X_{28}-X_{31}X^2_{12}X^1_{23}-X^2_{23}X_{34}X_{42}\nonumber\\
&&-X_{64}X_{45}X_{58}X_{86}-X_{53}X_{36}X_{67}X_{75}-X_{78}X^1_{81}X_{17}-X^2_{81}X^1_{12}X_{28}.
\end{eqnarray}
The perfect matching matrix is
\begin{equation}
P=\left(
\tiny{
}
\right).
\end{equation}
From $G_t$, we can get the GLSM fields associated to each point as shown in (\ref{p34p}), where
\begin{equation}
q=\{q_1,q_2\},\ r=\{r_1,\dots,r_{14}\},\ s=\{s_1,\dots,s_{9}\}.
\end{equation}
From $Q_t$ (and $Q_F$), the mesonic symmetry reads U(1)$^2\times$U(1)$_\text{R}$ and the baryonic symmetry reads U(1)$^4_\text{h}\times$U(1)$^3$, where the subscripts ``R'' and ``h'' indicate R- and hidden symmetries respectively.

The Hilbert series of the toric cone is
\begin{eqnarray}
HS&=&\frac{1}{(1-t_2) \left(1-\frac{t_1}{t_2^2 t_3}\right) \left(1-\frac{t_2
		t_3^2}{t_1}\right)}+\frac{1}{\left(1-\frac{1}{t_2}\right) \left(1-\frac{t_1}{t_2
		t_3}\right) \left(1-\frac{t_2^2
		t_3^2}{t_1}\right)}\nonumber\\
	&&+\frac{1}{\left(1-\frac{t_3}{t_2}\right) \left(1-\frac{t_1}{t_2
		t_3^2}\right) \left(1-\frac{t_2^2 t_3^2}{t_1}\right)}+\frac{1}{(1-t_2)
	\left(1-\frac{t_2^3}{t_1}\right) \left(1-\frac{t_1
		t_3}{t_2^4}\right)}\nonumber\\
	&&+\frac{1}{(1-t_2) \left(1-\frac{t_1}{t_2^3}\right)
	\left(1-\frac{t_2^2 t_3}{t_1}\right)}+\frac{1}{\left(1-\frac{1}{t_1}\right)
	\left(1-\frac{t_1}{t_2}\right) (1-t_2 t_3)}\nonumber\\
&&+\frac{1}{(1-t_1)
	\left(1-\frac{1}{t_2}\right) \left(1-\frac{t_2
		t_3}{t_1}\right)}+\frac{1}{\left(1-\frac{1}{t_2}\right)
	\left(1-\frac{t_2}{t_1}\right) (1-t_1 t_3)}.
\end{eqnarray}
The volume function is then
\begin{equation}
V=\frac{2 \left(3 {b_1}-4 {b_2}^2-6 {b_2}+63\right)}{({b_1}+3)
	({b_2}-3) ({b-2}+3) ({b_1}-4 {b_2}+3) ({b_1}-2 ({b_2}+3))}.
\end{equation}
Minimizing $V$ yields $V_{\text{min}}=(143+19\sqrt{57})/1944$ at $b_1=(9\sqrt{57}-57)/4$, $b_2=0$. Thus, $a_\text{max}=(-34749+4617\sqrt{57})/64$. Together with the superconformal conditions, we can solve for the R-charges of the bifundamentals. Then the R-charges of GLSM fields should satisfy
\begin{eqnarray}
&&\left(4p_2+4p_3+2p_4\right)p_5^2+(4p_2^2+8p_3p_2+8p_4p_2-8p_2+4p_3^2+2p_4^2-8p_3+8p_3p_4-4p_4)p_5\nonumber\\
&&=-4p_3p_2^2-12p_4p_2^2-4p_3^2p_2-12p_4^2p_2+8p_3p_2-24p_3p_4p_2+24p_4p_2-10p_3p_4^2\nonumber\\
&&-10 p_3^2 p_4+20 p_3 p_4-171\sqrt{57}+1287
\end{eqnarray}
constrained by $\sum\limits_{i=1}^5p_i=2$ and $0<p_i<2$, with others vanishing.

\subsection{Polytope 35: $K^{4,4,2,4}$}\label{p35}
The polytope is
\begin{equation}
\tikzset{every picture/.style={line width=0.75pt}} 
.\label{p35p}
\end{equation}
The brane tiling and the corrresponding quiver are
\begin{equation}
\includegraphics[width=4cm]{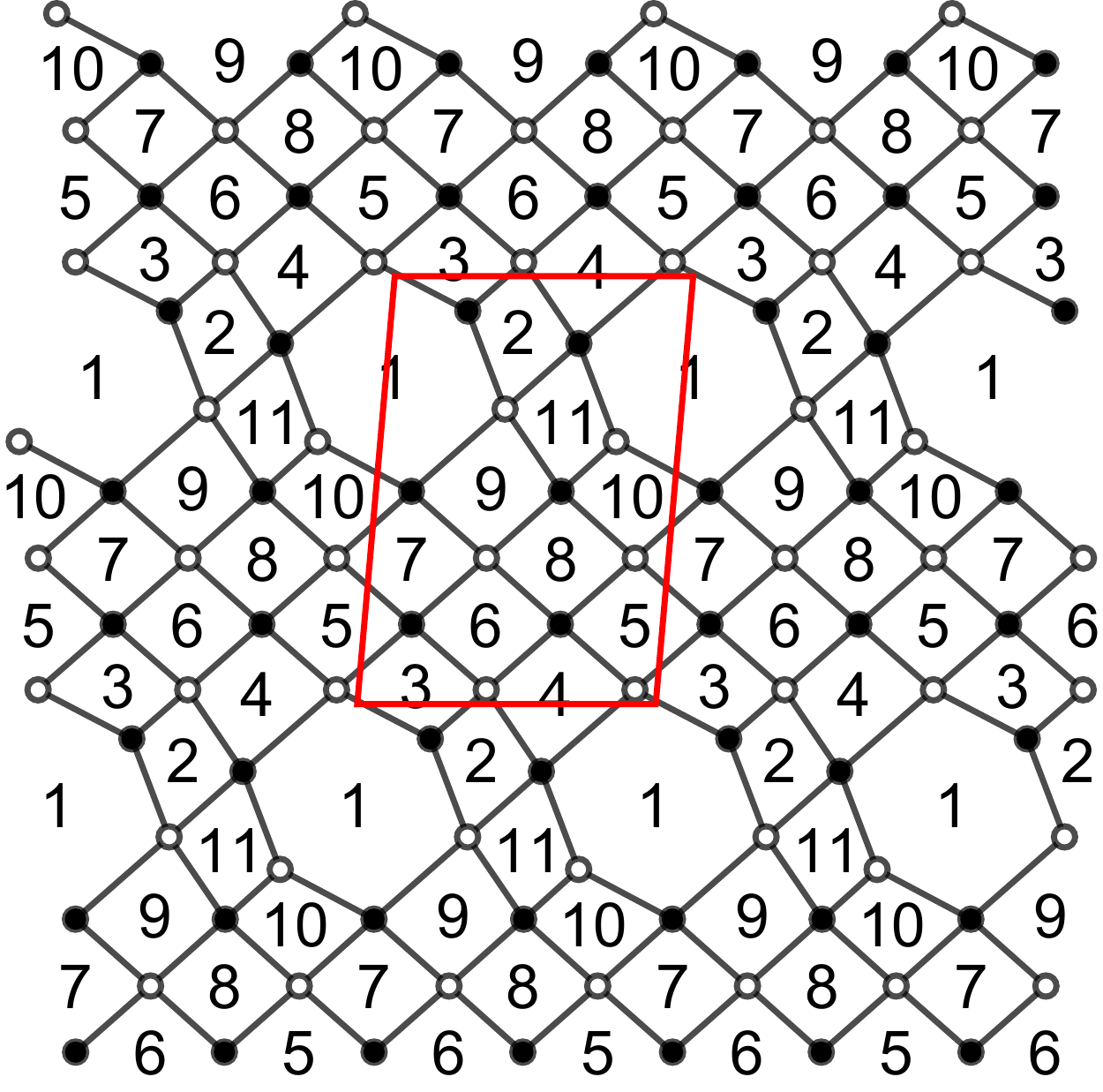};
\includegraphics[width=4cm]{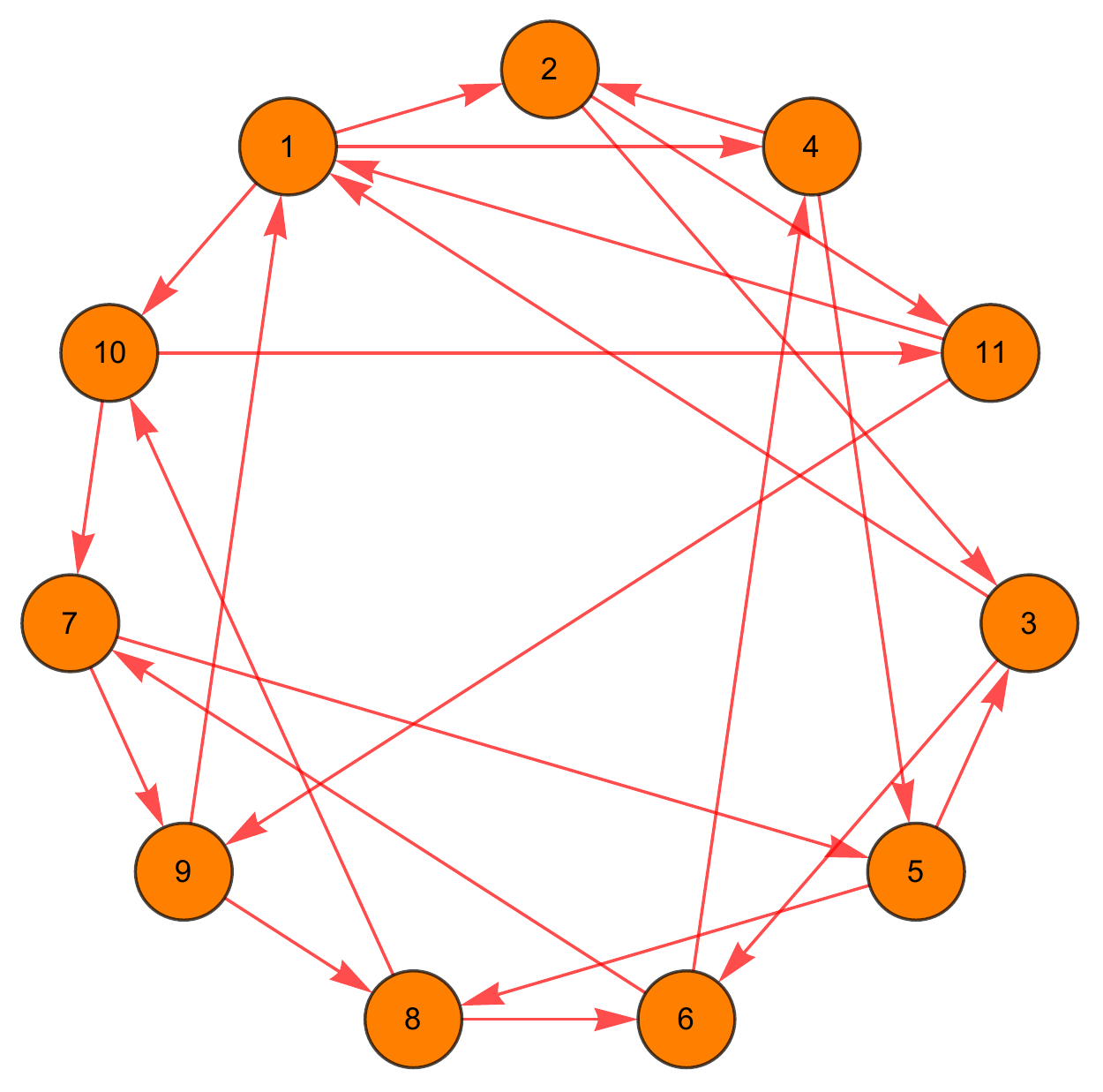}.
\end{equation}
The superpotential is
\begin{eqnarray}
W&=&X_{1,4}X_{4,5}X_{5,3}X_{3,1}+X_{2,3}X_{3,6}X_{6,4}X_{4,2}+X_{5,8}X_{8,10}X_{10,7}X_{7,5}+X_{6,7}X_{7,9}X_{9,8}X_{8,6}\nonumber\\
&&+X_{10,11}X_{11,1}X_{1,10}+X_{9,1}X_{1,2}X_{2,11}X_{11,9}-X_{1,2}X_{2,3}X_{3,1}-X_{2,11}X_{11,1}X_{1,4}X_{4,2}\nonumber\\
&&-X_{5,3}X_{3,6}X_{6,7}X_{7,5}-X_{6,4}X_{4,5}X_{5,8}X_{8,6}-X_{9,8}X_{8,10}X_{10,11}X_{11,9}-X_{10,7}X_{7,9}X_{9,1}X_{1,10}.\nonumber\\
\end{eqnarray}
The number of perfect matchings is $c=60$, which leads to gigantic $P$, $Q_t$ and $G_t$. Hence, we will not list them here. The GLSM fields associated to each point are shown in (\ref{p35p}), where
\begin{eqnarray}
&&q=\{q_1,q_2\},\ r=\{r_1,\dots,r_{25}\},\ u=\{u_1,\dots,u_{3}\},\nonumber\\
&&t=\{t_1,t_2\},\ s=\{s_1,\dots,s_{20}\},\ v=\{v_1,\dots,v_{3}\}.
\end{eqnarray}
The mesonic symmetry reads U(1)$^2\times$U(1)$_\text{R}$ and the baryonic symmetry reads U(1)$^4_\text{h}\times$U(1)$^6$, where the subscripts ``R'' and ``h'' indicate R- and hidden symmetries respectively.

The Hilbert series of the toric cone is
\begin{eqnarray}
HS&=&\frac{1}{\left(1-\frac{t_3^2}{t_1}\right) \left(1-\frac{t_3}{t_2}\right)
	\left(1-\frac{t_1 t_2}{t_3^2}\right)}+\frac{1}{(1-t_2) \left(1-\frac{t_1}{t_3}\right)
	\left(1-\frac{t_3^2}{t_1 t_2}\right)}\nonumber\\
&&+\frac{1}{\left(1-\frac{1}{t_2}\right)
	\left(1-\frac{t_1}{t_3}\right) \left(1-\frac{t_2
		t_3^2}{t_1}\right)}+\frac{1}{\left(1-\frac{t_1}{t_3^2}\right)
	\left(1-\frac{t_3}{t_2}\right) \left(1-\frac{t_2 t_3^2}{t_1}\right)}\nonumber\\
&&+\frac{1}{(1-t_1)
	(1-t_2) \left(1-\frac{t_3}{t_1 t_2}\right)}+\frac{1}{\left(1-\frac{1}{t_1}\right)
	(1-t_2) \left(1-\frac{t_1 t_3}{t_2}\right)}\nonumber\\
&&+\frac{1}{(1-t_1) \left(1-\frac{1}{t_1
		t_2}\right) (1-t_2 t_3)}+\frac{1}{(1-t_1 t_3) (1-t_2 t_3) \left(1-\frac{1}{t_1 t_2
		t_3}\right)}\nonumber\\
	&&+\frac{1}{\left(1-\frac{1}{t_1}\right) \left(1-\frac{1}{t_2}\right)
	(1-t_1 t_2 t_3)}+\frac{1}{\left(1-\frac{1}{t_2}\right) (1-t_1 t_2)
	\left(1-\frac{t_3}{t_1}\right)}\nonumber\\
&&+\frac{1}{\left(1-\frac{t_3}{t_1}\right)
	\left(1-\frac{t_3}{t_2}\right) \left(1-\frac{t_1 t_2}{t_3}\right)}.
\end{eqnarray}
The volume function is then
\begin{equation}
V=-\frac{-{b_1} ({b_2}+15)+{b_2}^2+3 {b_2}-72}{({b_1}+3) ({b_2}-3)
	({b_2}+3) ({b_1}-{b_2}-6) ({b_1}-{b_2}+3)}.
\end{equation}
Minimizing $V$ yields $V_{\text{min}}=0.112411$ at $b_1=2.224267$, $b_2=0.261487$. Thus, $a_\text{max}=2.223982$. Together with the superconformal conditions, we can solve for the R-charges of the bifundamentals. Then the R-charges of GLSM fields should satisfy
\begin{eqnarray}
&&\left(12.1849p_2+4.06163p_4+6.09245p_5\right)p_3^2+(12.1849p_2^2+16.2465p_4p_2+12.1849p_5p_2\nonumber\\
&&-24.3698p_2+4.06163p_4^2+6.09245p_5^2-8.12326p_4+4.06163p_4p_5-12.1849p_5)p_3\nonumber\\
&&=-8.12326p_4p_2^2-6.09245p_5p_2^2-8.12326p_4^2p_2-6.09245p_5^2p_2+16.2465p_4p_2-4.06163p_4p_5p_2\nonumber\\
&&+12.1849p_5p_2-2.03082 p_4p_5^2-2.03082p_4^2 p_5+4.06163 p_4 p_5-5.35289
\end{eqnarray}
constrained by $\sum\limits_{i=1}^5p_i=2$ and $0<p_i<2$, with others vanishing.

\subsection{Polytope 36: $K^{4,4,2,2}$}\label{p36}
The polytope is
\begin{equation}
\tikzset{every picture/.style={line width=0.75pt}} 
.\label{p36p}
\end{equation}
The brane tiling and the corrresponding quiver are
\begin{equation}
\includegraphics[width=4cm]{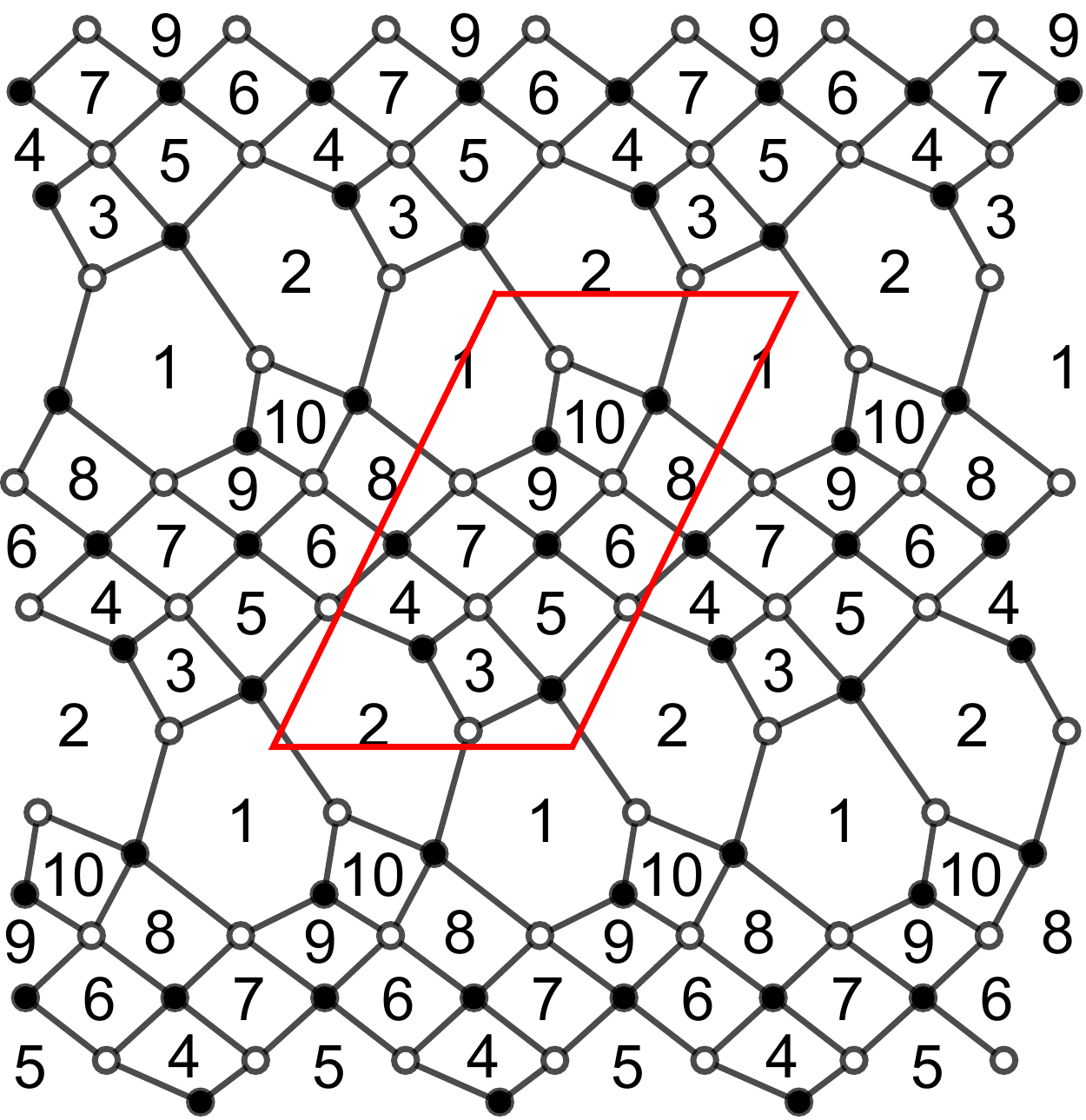};
\includegraphics[width=4cm]{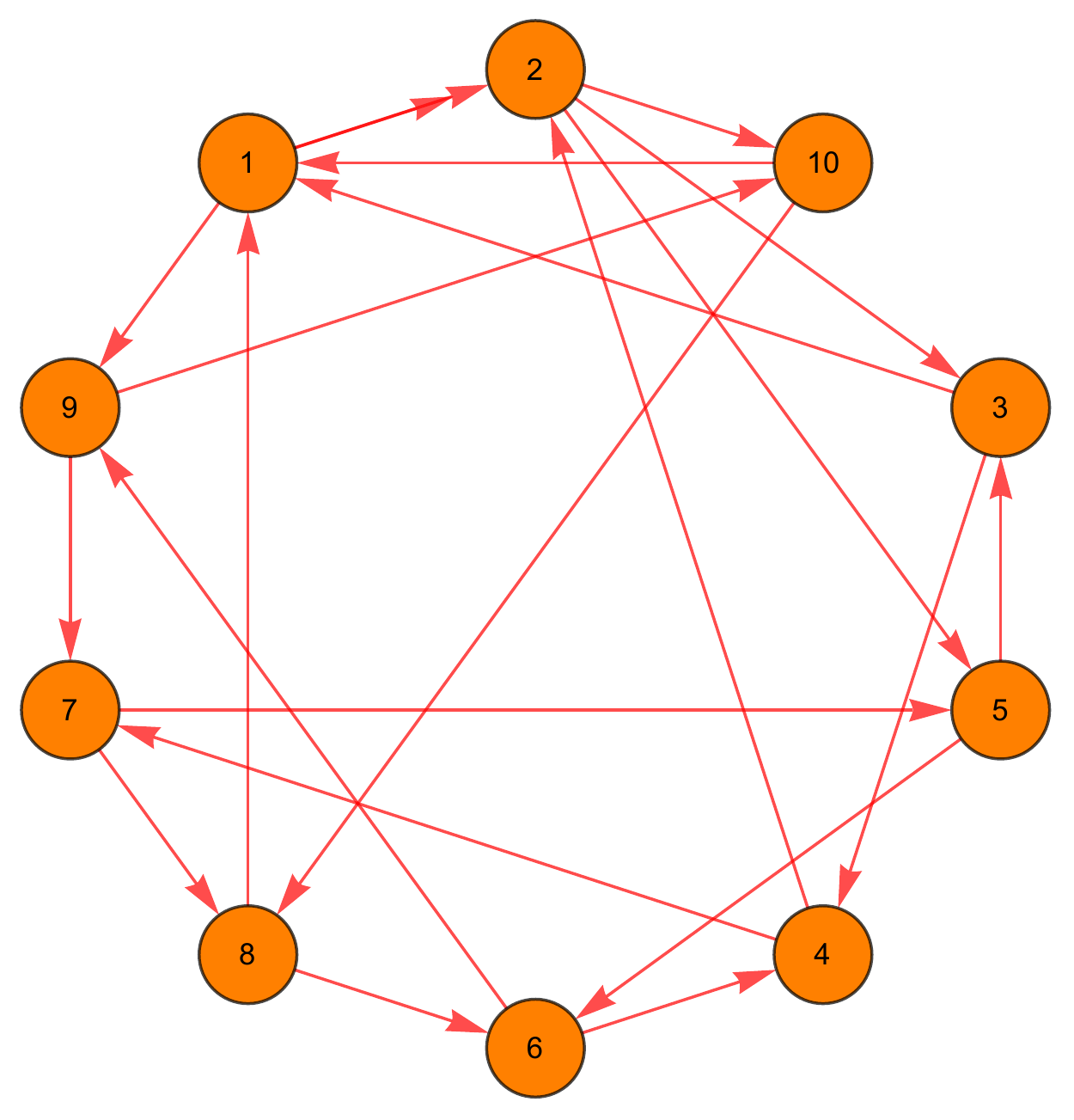}.
\end{equation}
The superpotential is
\begin{eqnarray}
W&=&X^1_{12}X_{23}X_{31}+X_{25}X_{56}X_{64}X_{42}+X_{47}X_{75}X_{53}X_{34}+X_{69}X_{9,10}X_{10,8}X_{86}\nonumber\\
&&+X_{78}X_{81}X_{19}X_{97}+X_{10,1}X^2_{12}X_{2,10}-X^2_{12}X_{25}X_{53}X_{31}-X_{23}X_{34}X_{42}\nonumber\\
&&-X_{75}X_{56}X_{69}X_{97}-X_{64}X_{47}X_{78}X_{86}-X_{10,8}X_{81}X^1_{12}X_{2,10}-X_{19}X_{9,10}X_{10,1}.\nonumber\\
\end{eqnarray}
The number of perfect matchings is $c=48$, which leads to gigantic $P$, $Q_t$ and $G_t$. Hence, we will not list them here. The GLSM fields associated to each point are shown in (\ref{p36p}), where
\begin{eqnarray}
&&q=\{q_1,q_2\},\ r=\{r_1,\dots,r_{21}\},\ u=\{u_1,u_{2}\},\nonumber\\
&&t=\{t_1,t_2\},\ s=\{s_1,\dots,s_{16}\}.
\end{eqnarray}
The mesonic symmetry reads U(1)$^2\times$U(1)$_\text{R}$ and the baryonic symmetry reads U(1)$^4_\text{h}\times$U(1)$^5$, where the subscripts ``R'' and ``h'' indicate R- and hidden symmetries respectively.

The Hilbert series of the toric cone is
\begin{eqnarray}
HS&=&\frac{1}{(1-t_2) \left(1-\frac{t_2}{t_1}\right) \left(1-\frac{t_1
		t_3}{t_2^2}\right)}+\frac{1}{(1-t_2) \left(1-\frac{t_3^2}{t_1}\right)
	\left(1-\frac{t_1}{t_2 t_3}\right)}\nonumber\\
&&+\frac{1}{\left(1-\frac{1}{t_2}\right)
	\left(1-\frac{t_1}{t_3}\right) \left(1-\frac{t_2
		t_3^2}{t_1}\right)}+\frac{1}{\left(1-\frac{t_1}{t_3^2}\right)
	\left(1-\frac{t_3}{t_2}\right) \left(1-\frac{t_2
		t_3^2}{t_1}\right)}\nonumber\\
	&&+\frac{1}{\left(1-\frac{1}{t_2}\right)
	\left(1-\frac{t_2}{t_1}\right) (1-t_1 t_3)}+\frac{1}{\left(1-\frac{1}{t_1}\right)
	\left(1-\frac{t_1}{t_2}\right) (1-t_2 t_3)}\nonumber\\
&&+\frac{1}{(1-t_1)
	\left(1-\frac{1}{t_2}\right) \left(1-\frac{t_2
		t_3}{t_1}\right)}+\frac{1}{\left(1-\frac{t_1}{t_3}\right)
	\left(1-\frac{t_3}{t_2}\right) \left(1-\frac{t_2 t_3}{t_1}\right)}\nonumber\\
&&+\frac{1}{(1-t_2)
	\left(1-\frac{t_1}{t_2}\right)
	\left(1-\frac{t_3}{t_1}\right)}+\frac{1}{\left(1-\frac{t_3}{t_1}\right) (1-t_2 t_3)
	\left(1-\frac{t_1}{t_2 t_3}\right)}.
\end{eqnarray}
The volume function is then
\begin{equation}
V=-\frac{2 \left(-6 {b_1}+{b_2}^2+6 {b_2}-45\right)}{({b_1}+3) ({b_2}-3)
	({b_2}+3) ({b_1}-2 {b_2}+3) ({b_1}-{b_2}-6)}.
\end{equation}
Minimizing $V$ yields $V_{\text{min}}=(59+11 \sqrt{33})/972$ at $b_1=(9 \sqrt{33}-33)/8$, $b_2=0$. Thus, $a_\text{max}=\frac{243}{512}(11\sqrt{33}-59)$. Together with the superconformal conditions, we can solve for the R-charges of the bifundamentals. Then the R-charges of GLSM fields should satisfy
\begin{eqnarray}
&&\left(19683p_2+6561p_4+13122p_5\right)p_3^2+(19683p_2^2+26244p_4p_2+26244p_5p_2-39366p_2\nonumber\\
&&+6561p_4^2+13122p_5^2-13122p_4+13122p_4p_5-26244p_5)p_3=-13122p_4p_2^2-6561p_5p_2^2\nonumber\\
&&-13122p_4^2p_2-6561p_5^2p_2+26244p_4p_2-13122p_4p_5p_2+13122p_5p_2-6561p_4p_5^2-6561p_4^2p_5\nonumber\\
&&+13122 p_4 p_5-4 \sqrt{33}-236
\end{eqnarray}
constrained by $\sum\limits_{i=1}^5p_i=2$ and $0<p_i<2$, with others vanishing.

\subsection{Polytope 37: $K^{2,4,1,3}$}\label{p37}
The polytope is
\begin{equation}
\tikzset{every picture/.style={line width=0.75pt}} 
.\label{p37p}
\end{equation}
The brane tiling and the corrresponding quiver are
\begin{equation}
\includegraphics[width=4cm]{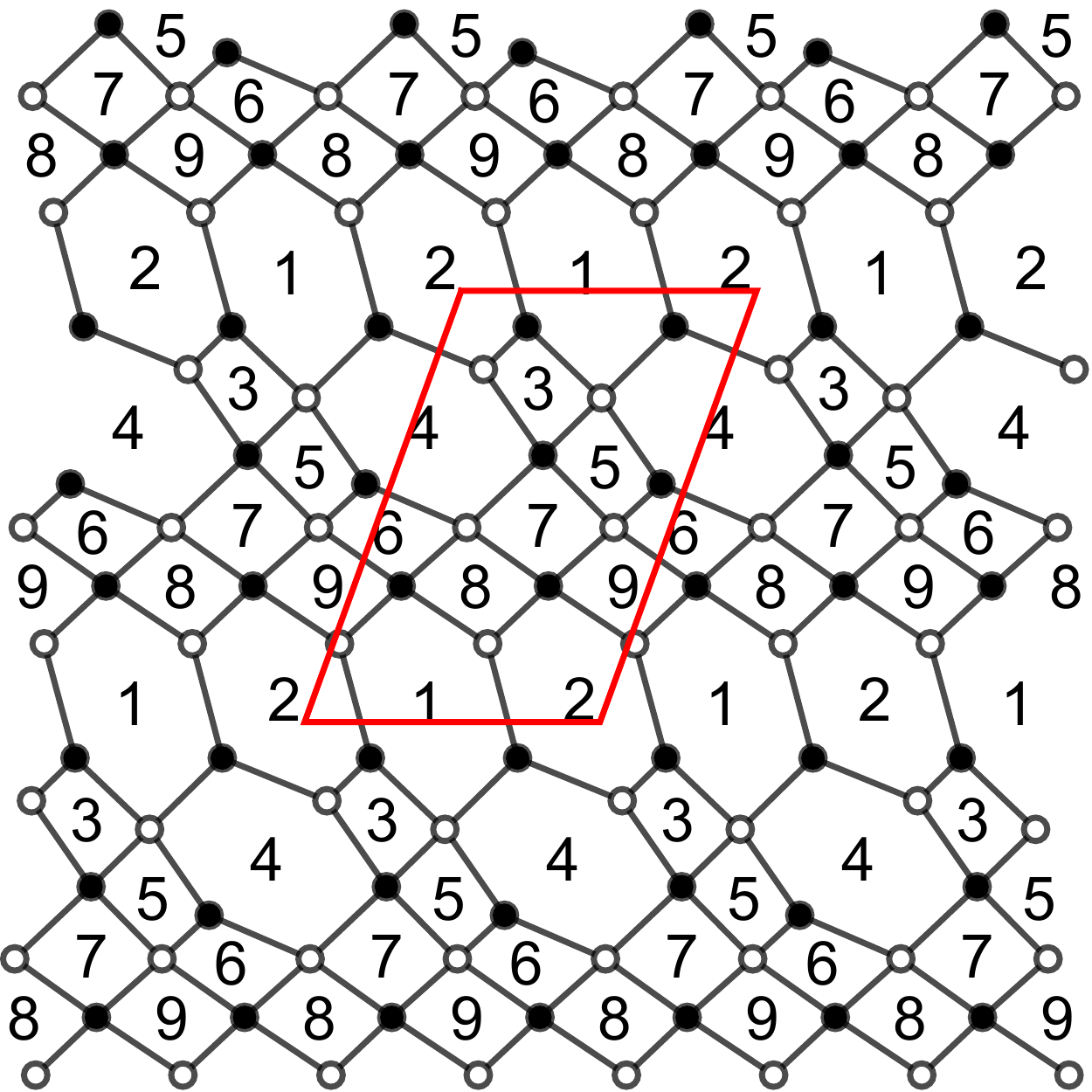};
\includegraphics[width=4cm]{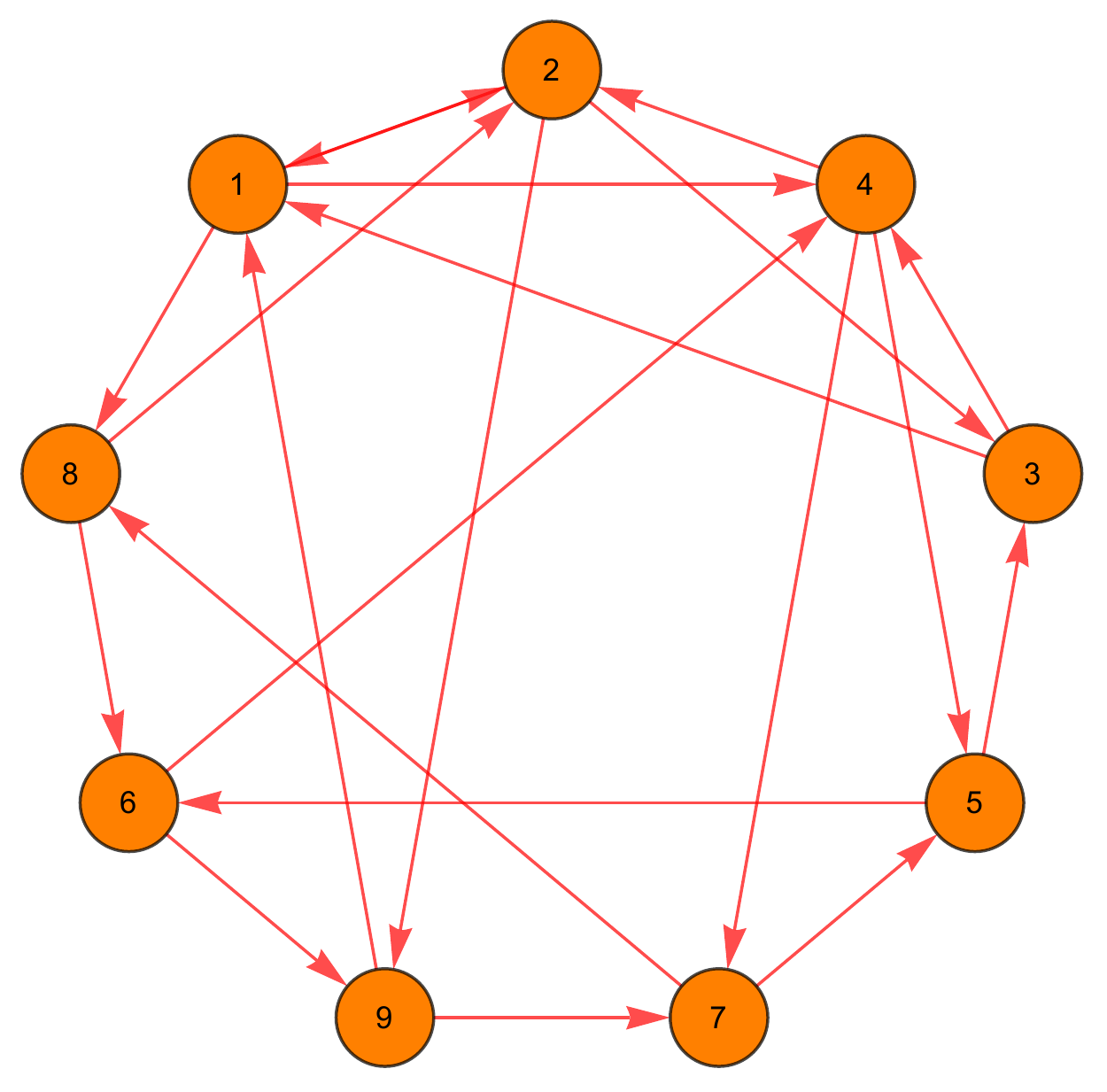}.
\end{equation}
The superpotential is
\begin{eqnarray}
W&=&X_{23}X_{34}X_{42}+X_{14}X_{45}X_{53}X_{31}+X_{47}X_{78}X_{86}X_{64}+X_{56}X_{69}X_{97}X_{75}\nonumber\\
&&+X_{82}X_{21}X_{18}+X_{91}X_{12}X_{29}-X_{12}X_{23}X_{31}-X_{21}X_{14}X_{42}\nonumber\\
&&-X_{34}X_{47}X_{75}X_{53}-X_{45}X_{56}X_{64}-X_{97}X_{78}X_{82}X_{29}-X_{86}X_{69}X_{91}X_{18}.
\end{eqnarray}
The number of perfect matchings is $c=34$, which leads to gigantic $P$, $Q_t$ and $G_t$. Hence, we will not list them here. The GLSM fields associated to each point are shown in (\ref{p37p}), where
\begin{eqnarray}
q=\{q_1,\dots,q_3\},\ r=\{r_1,\dots,r_{9}\},\ s=\{s_1,\dots,s_{14}\},\ t=\{t_1,\dots,t_3\}.
\end{eqnarray}
The mesonic symmetry reads U(1)$^2\times$U(1)$_\text{R}$ and the baryonic symmetry reads U(1)$^4_\text{h}\times$U(1)$^4$, where the subscripts ``R'' and ``h'' indicate R- and hidden symmetries respectively.

The Hilbert series of the toric cone is
\begin{eqnarray}
HS&=&\frac{1}{\left(1-\frac{t_1}{t_2^2 t_3}\right) \left(1-\frac{t_1}{t_2 t_3}\right)
	\left(1-\frac{t_2^3 t_3^3}{t_1^2}\right)}+\frac{1}{\left(1-\frac{t_3}{t_2}\right)
	\left(1-\frac{t_1}{t_2 t_3}\right) \left(1-\frac{t_2^2
		t_3}{t_1}\right)}\nonumber\\
	&&+\frac{1}{(1-t_1) (1-t_2) \left(1-\frac{t_3}{t_1
		t_2}\right)}+\frac{1}{\left(1-\frac{1}{t_1}\right) (1-t_2) \left(1-\frac{t_1
		t_3}{t_2}\right)}\nonumber\\
	&&+\frac{1}{(1-t_1 t_3) (1-t_2 t_3) \left(1-\frac{1}{t_1 t_2
		t_3}\right)}+\frac{1}{(1-t_1) \left(1-\frac{1}{t_2}\right) \left(1-\frac{t_2
		t_3}{t_1}\right)}\nonumber\\
	&&+\frac{1}{\left(1-\frac{1}{t_1}\right)
	\left(1-\frac{1}{t_2}\right) (1-t_1 t_2 t_3)}+\frac{1}{\left(1-\frac{t_3}{t_1}\right)
	\left(1-\frac{t_3}{t_2}\right) \left(1-\frac{t_1
		t_2}{t_3}\right)}\nonumber\\
	&&+\frac{1}{\left(1-\frac{t_1}{t_3}\right)
	\left(1-\frac{t_3}{t_2}\right) \left(1-\frac{t_2 t_3}{t_1}\right)}.
\end{eqnarray}
The volume function is then
\begin{equation}
V=-\frac{3 \left({b_2}^2-2 {b_2}-39\right)-4 {b_1} ({b_2}+6)}{({b_1}+3)
	({b_2}-3) ({b_2}+3) ({b_1}-{b_2}+3) (2 {b_1}-3 ({b_2}+3))}.
\end{equation}
Minimizing $V$ yields $V_{\text{min}}=0.133134$ at $b_1=1.844031$, $b_2=0.575732$. Thus, $a_\text{max}=1.877807$. Together with the superconformal conditions, we can solve for the R-charges of the bifundamentals. Then the R-charges of GLSM fields should satisfy
\begin{eqnarray}
&&\left(9.05965p_3+3.01988p_4+3.01988p_5\right)p_2^2+(9.05965p_3^2+9.05965p_4p_3+18.1193p_5p_3\nonumber\\
&&-8.1193p_3+3.01988p_4^2+3.01988p_5^2-6.03977p_4+6.03977p_4p_5-6.03977p_5)p_2\nonumber\\
&&=-4.52983p_4p_3^2-9.05965p_5p_3^2-4.52983p_4^2p_3-9.05965p_5^2p_3+9.05965p_4p_3-9.05965p_4p_5p_3\nonumber\\
&&+18.1193p_5p_3-1.50994 p_4 p_5^2-1.50994p_4^2p_5+3.01988p_4 p_5-3.36045
\end{eqnarray}
constrained by $\sum\limits_{i=1}^5p_i=2$ and $0<p_i<2$, with others vanishing.

\subsection{Polytope 38: $K^{2,4,1,2}$}\label{p38}
The polytope is
\begin{equation}
\tikzset{every picture/.style={line width=0.75pt}} 
.\label{p38p}
\end{equation}
The brane tiling and the corrresponding quiver are
\begin{equation}
\includegraphics[width=4cm]{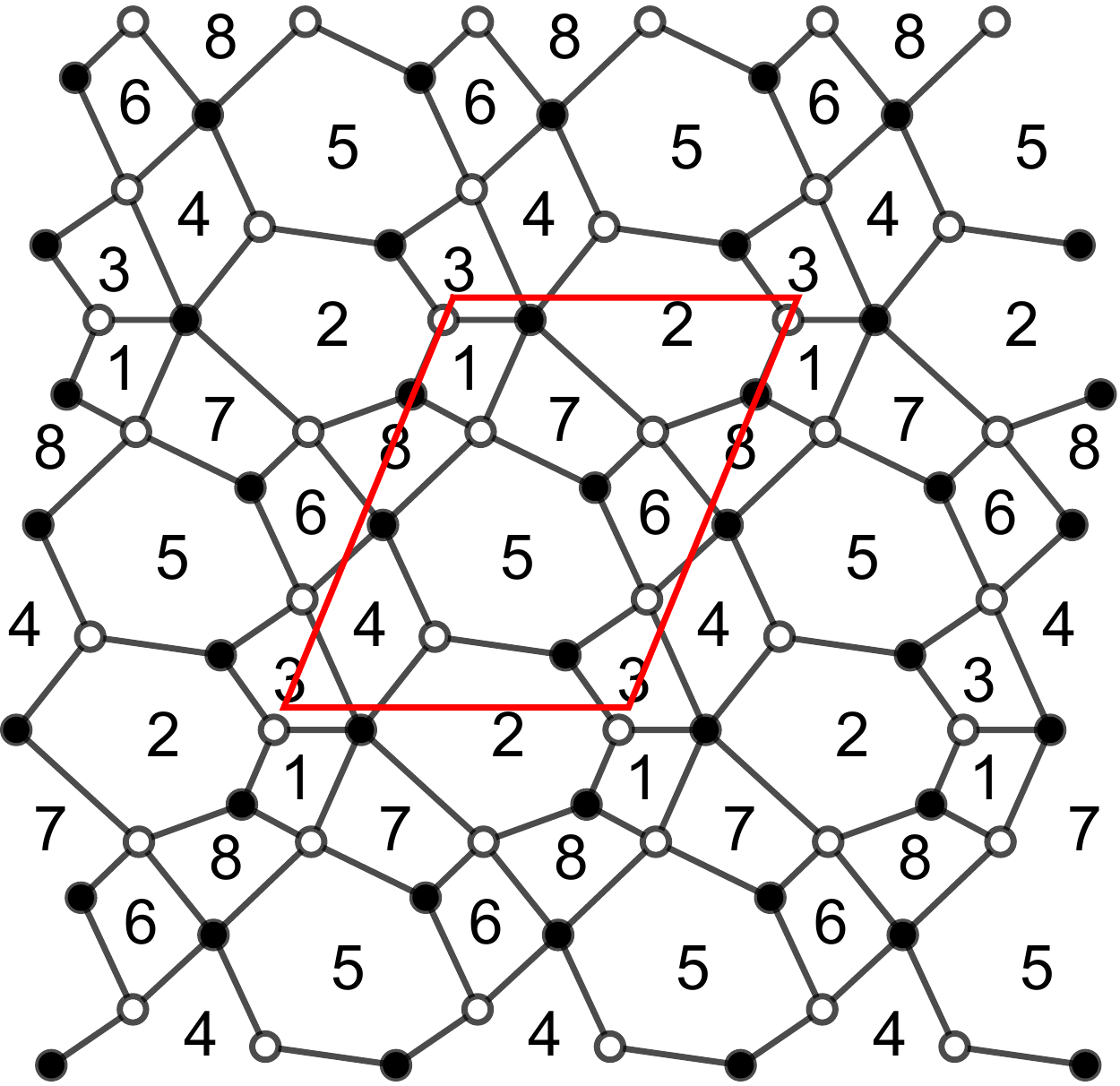};
\includegraphics[width=4cm]{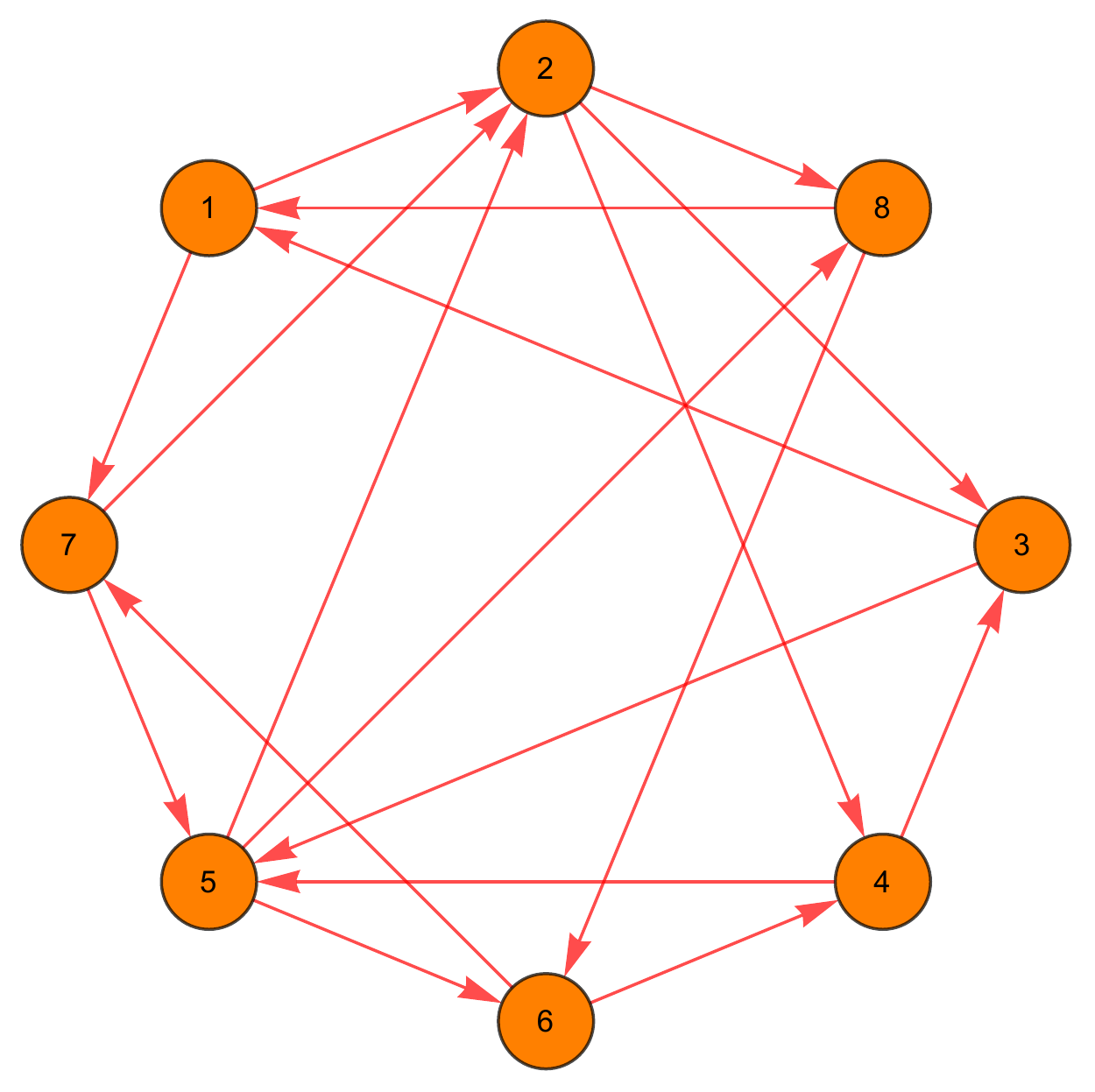}.
\end{equation}
The superpotential is
\begin{eqnarray}
W&=&X_{12}X_{23}X_{31}+X_{24}X_{45}X_{52}+X_{35}X_{56}X_{64}X_{43}+X_{58}X_{81}X_{17}X_{75}\nonumber\\
&&+X_{67}X_{72}X_{28}X_{86}-X_{17}X_{72}X_{24}X_{43}X_{31}-X_{23}X_{35}X_{52}-X_{45}X_{58}X_{86}X_{64}\nonumber\\
&&-X_{56}X_{67}X_{75}-X_{28}X_{81}X_{12}.
\end{eqnarray}
The perfect matching matrix is
\begin{equation}
P=\left(
\tiny{
}
\right).
\end{equation}
From $G_t$, we can get the GLSM fields associated to each point as shown in (\ref{p38p}), where
\begin{equation}
q=\{q_1,q_2\},\ r=\{r_1,\dots,r_{8}\},\ s=\{s_1,\dots,s_{9}\}.
\end{equation}
From $Q_t$ (and $Q_F$), the mesonic symmetry reads U(1)$^2\times$U(1)$_\text{R}$ and the baryonic symmetry reads U(1)$^4_\text{h}\times$U(1)$^3$, where the subscripts ``R'' and ``h'' indicate R- and hidden symmetries respectively.

The Hilbert series of the toric cone is
\begin{eqnarray}
HS&=&\frac{1}{\left(1-\frac{t_1}{t_2^2 t_3}\right) \left(1-\frac{t_1}{t_2 t_3}\right)
	\left(1-\frac{t_2^3 t_3^3}{t_1^2}\right)}+\frac{1}{(1-t_2)
	\left(1-\frac{t_2}{t_1}\right) \left(1-\frac{t_1
		t_3}{t_2^2}\right)}\nonumber\\
	&&+\frac{1}{\left(1-\frac{t_3}{t_2}\right) \left(1-\frac{t_1}{t_2
		t_3}\right) \left(1-\frac{t_2^2 t_3}{t_1}\right)}+\frac{1}{(1-t_1 t_3) (1-t_2 t_3)
	\left(1-\frac{1}{t_1 t_2 t_3}\right)}\nonumber\\
&&+\frac{1}{(1-t_1) \left(1-\frac{1}{t_2}\right)
	\left(1-\frac{t_2 t_3}{t_1}\right)}+\frac{1}{\left(1-\frac{t_1}{t_3}\right)
	\left(1-\frac{t_3}{t_2}\right) \left(1-\frac{t_2
		t_3}{t_1}\right)}\nonumber\\
&&+\frac{1}{\left(1-\frac{1}{t_1}\right)
	\left(1-\frac{1}{t_2}\right) (1-t_1 t_2 t_3)}+\frac{1}{(1-t_2)
	\left(1-\frac{t_1}{t_2}\right) \left(1-\frac{t_3}{t_1}\right)}.
\end{eqnarray}
The volume function is then
\begin{equation}
V=\frac{6 \left({b_2}^2+{b_2}-24\right)-2 {b_1} ({b_2}+9)}{({b_1}+3)
	({b_2}-3) ({b_2}+3) ({b_1}-2 {b_2}+3) (2 {b_1}-3 ({b_2}+3))}.
\end{equation}
Minimizing $V$ yields $V_{\text{min}}=0.154554$ at $b_1=1.904961$, $b_2=0.289299$. Thus, $a_\text{max}=1.617558$. Together with the superconformal conditions, we can solve for the R-charges of the bifundamentals. Then the R-charges of GLSM fields should satisfy
\begin{eqnarray}
&&\left(0.5625p_3+0.28125p_4+0.5625p_5\right)p_2^2+(0.5625p_3^2+0.28125p_4p_3+1.125p_5p_3\nonumber\\
&&-1.125p_3+0.28125p_4^2+0.5625p_5^2-0.5625p_4+0.5625p_4p_5-1.125p_5)p_2=-0.140625p_4p_3^2\nonumber\\
&&-0.28125p_5p_3^2-0.140625p_4^2p_3-0.28125p_5^2p_3+0.28125p_4p_3-0.28125p_4p_5p_3+0.5625p_5p_3\nonumber\\
&&-0.28125 p_4 p_5^2-0.28125 p_4^2 p_5+0.5625 p_4 p_5-0.269593
\end{eqnarray}
constrained by $\sum\limits_{i=1}^5p_i=2$ and $0<p_i<2$, with others vanishing.

\subsection{Polytope 39: $K^{2,4,1,1}$}\label{p39}
The polytope is
\begin{equation}
\tikzset{every picture/.style={line width=0.75pt}} 
.\label{p39p}
\end{equation}
The brane tiling and the corrresponding quiver are
\begin{equation}
\includegraphics[width=4cm]{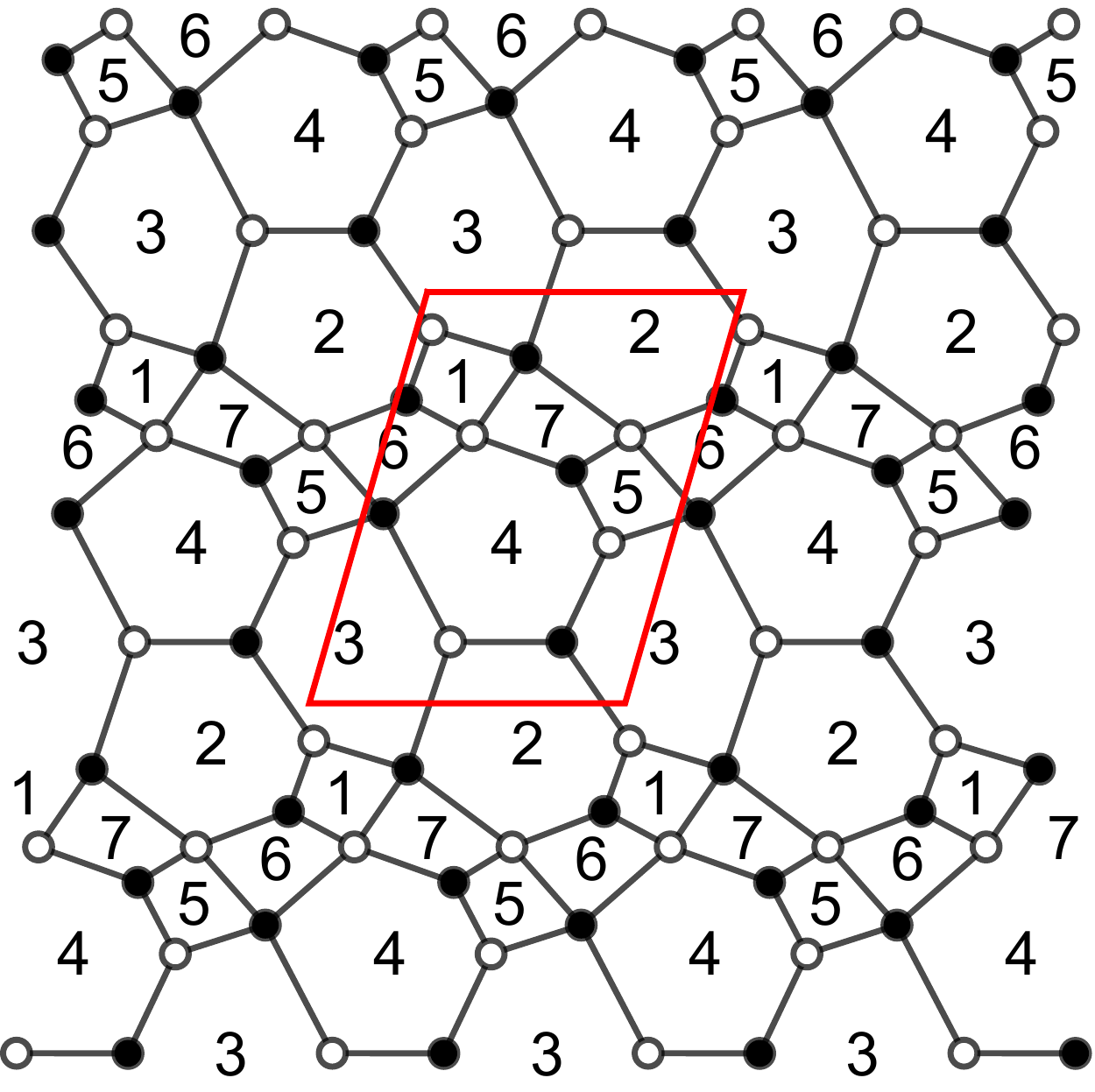};
\includegraphics[width=4cm]{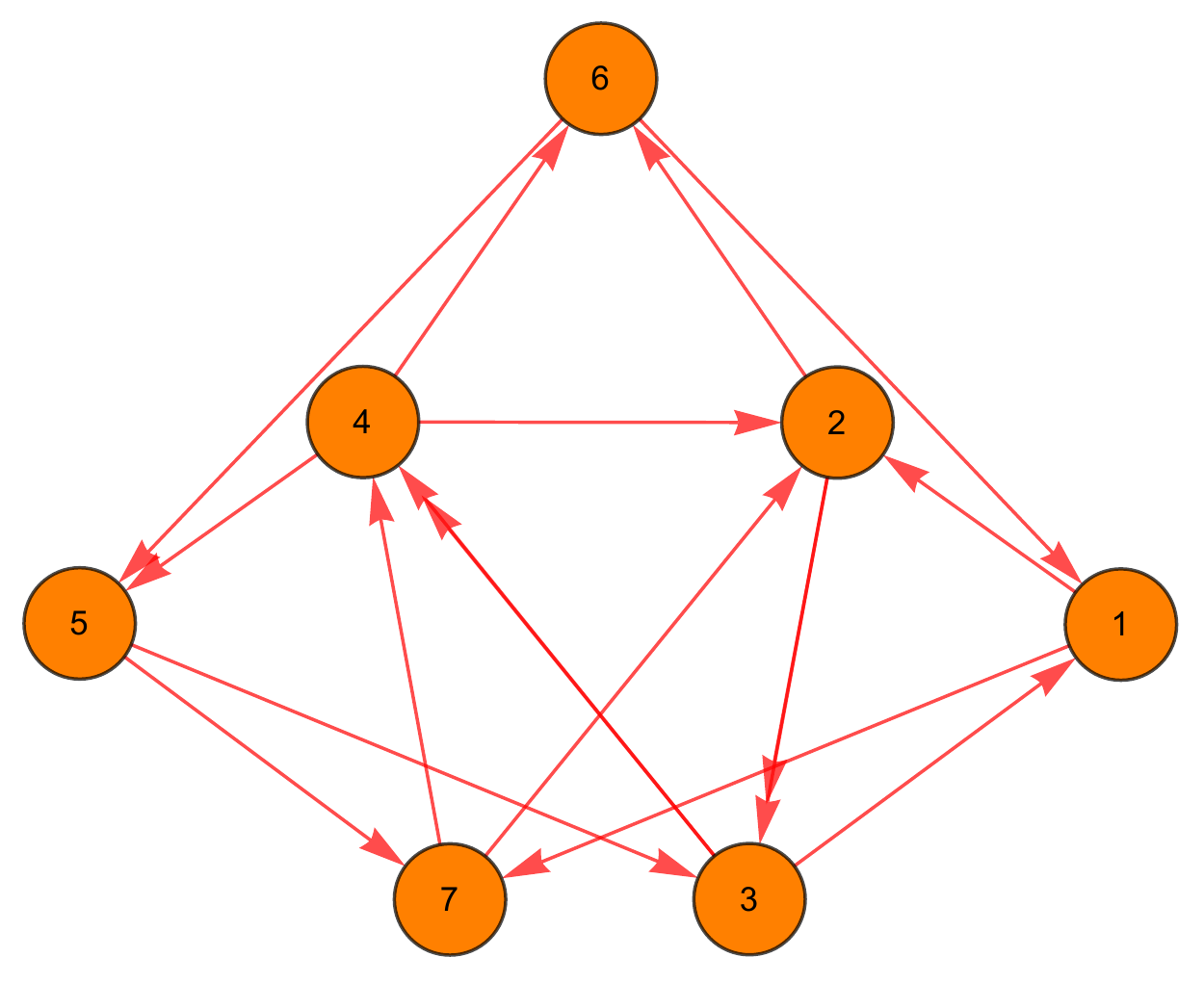}.
\end{equation}
The superpotential is
\begin{eqnarray}
W&=&X_{12}X^1_{23}X_{31}+X^2_{23}X^1_{34}X_{42}+X^2_{34}X_{45}X_{53}+X_{57}X_{72}X_{26}X_{65}\nonumber\\
&&+X_{46}X_{61}X_{17}X_{74}-X_{17}X_{72}X^2_{23}X_{31}-X^1_{23}X^2_{34}X_{42}-X^1_{34}X_{46}X_{65}X_{53}\nonumber\\
&&-X_{45}X_{57}X_{74}-X_{26}X_{61}X_{12}.
\end{eqnarray}
The perfect matching matrix is
\begin{equation}
P=\left(
\tiny{
}
\right).
\end{equation}
From $G_t$, we can get the GLSM fields associated to each point as shown in (\ref{p39p}), where
\begin{equation}
r=\{r_1,\dots,r_{7}\},\ s=\{s_1,\dots,s_{8}\}.
\end{equation}
From $Q_t$ (and $Q_F$), the mesonic symmetry reads U(1)$^2\times$U(1)$_\text{R}$ and the baryonic symmetry reads U(1)$^4_\text{h}\times$U(1)$^2$, where the subscripts ``R'' and ``h'' indicate R- and hidden symmetries respectively.

The Hilbert series of the toric cone is
\begin{eqnarray}
HS&=&\frac{1}{\left(1-\frac{t_1}{t_2^2 t_3}\right) \left(1-\frac{t_1}{t_2 t_3}\right)
	\left(1-\frac{t_2^3 t_3^3}{t_1^2}\right)}+\frac{1}{(1-t_2)
	\left(1-\frac{t_1}{t_2^2}\right) \left(1-\frac{t_2
		t_3}{t_1}\right)}\nonumber\\
	&&+\frac{1}{\left(1-\frac{t_3}{t_2}\right) \left(1-\frac{t_1}{t_2
		t_3}\right) \left(1-\frac{t_2^2 t_3}{t_1}\right)}+\frac{1}{(1-t_2)
	\left(1-\frac{t_2^2}{t_1}\right) \left(1-\frac{t_1
		t_3}{t_2^3}\right)}\nonumber\\
	&&+\frac{1}{\left(1-\frac{1}{t_1}\right)
	\left(1-\frac{t_1}{t_2}\right) (1-t_2 t_3)}+\frac{1}{(1-t_1)
	\left(1-\frac{1}{t_2}\right) \left(1-\frac{t_2
		t_3}{t_1}\right)}\nonumber\\
	&&+\frac{1}{\left(1-\frac{1}{t_2}\right)
	\left(1-\frac{t_2}{t_1}\right) (1-t_1 t_3)}.
\end{eqnarray}
The volume function is then
\begin{equation}
V=\frac{3 \left(4 {b_1}-3 {b_2}^2-6 {b_2}+57\right)}{({b_1}+3)
	({b_2}-3) ({b_2}+3) ({b_1}-3 {b_2}+3) (2 {b_1}-3 ({b_2}+3))}.
\end{equation}
Minimizing $V$ yields $V_{\text{min}}=(347+29\sqrt{145})/4050$ at $b_1=(15\sqrt{145}-153)/16$, $b_2=0$. Thus, $a_\text{max}=\frac{675}{1024}(29\sqrt{145}-347)$. Together with the superconformal conditions, we can solve for the R-charges of the bifundamentals. Then the R-charges of GLSM fields should satisfy
\begin{eqnarray}
&&\left(64p_3+128p_4+64p_5\right)p_2^2+(64p_3^2+64p_4p_3+128p_5p_3-128p_3+128p_4^2+64p_5^2\nonumber\\
&&-256p_4+256p_4p_5-128p_5)p_2=-32p_4p_3^2-64p_5p_3^2-32p_4^2p_3-64p_5^2p_3+64p_4p_3\nonumber\\
&&-64p_4p_5p_3+128 p_5 p_3-96 p_4 p_5^2-96 p_4^2p_5+192 p_4 p_5-725 \sqrt{145}+8675
\end{eqnarray}
constrained by $\sum\limits_{i=1}^5p_i=2$ and $0<p_i<2$, with others vanishing.

\subsection{Polytope 40: $K^{4,3,2,2}$}\label{p40}
The polytope is
\begin{equation}
\tikzset{every picture/.style={line width=0.75pt}} 
.\label{p40p}
\end{equation}
The brane tiling and the corrresponding quiver are
\begin{equation}
\includegraphics[width=4cm]{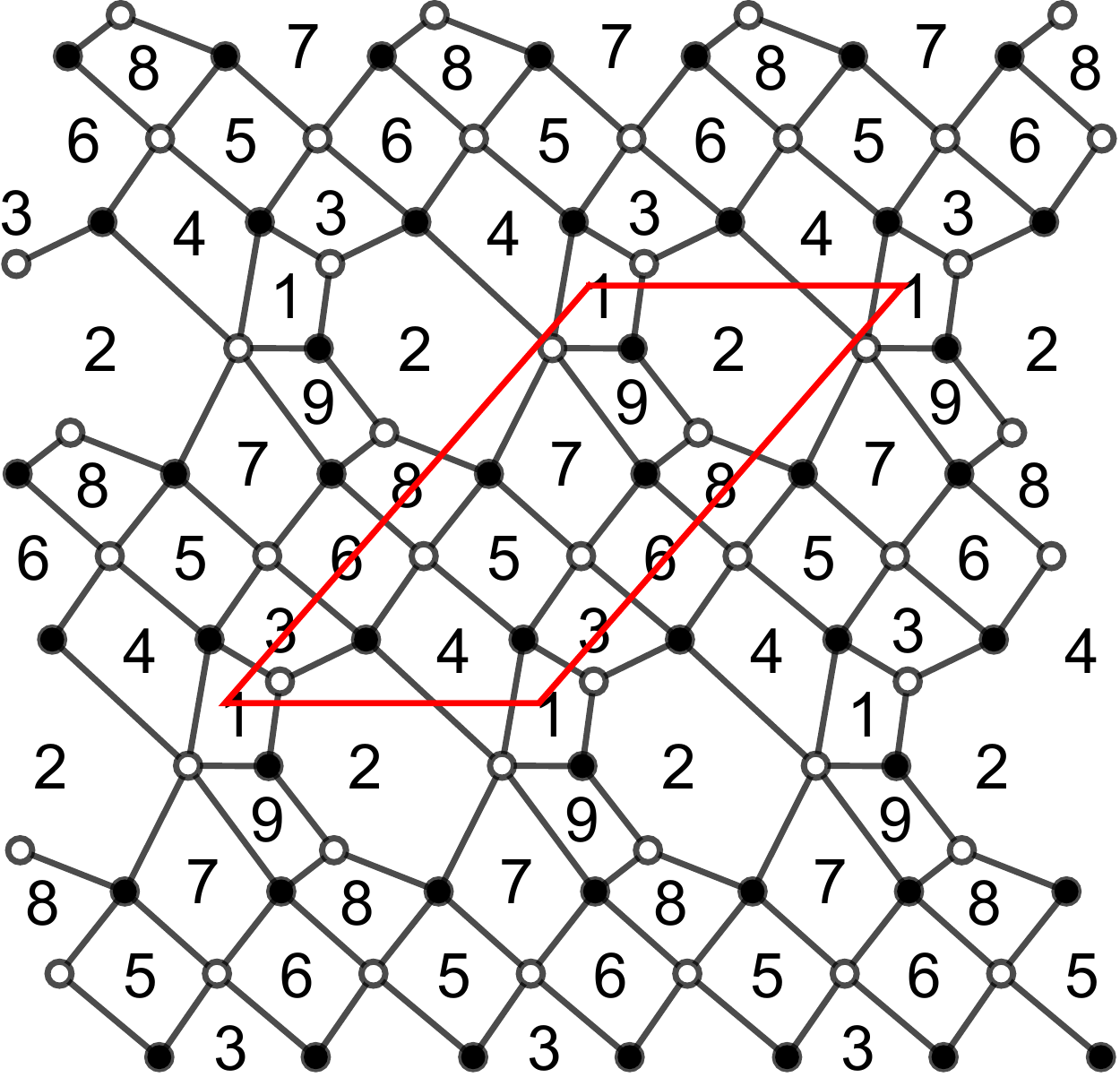};
\includegraphics[width=4cm]{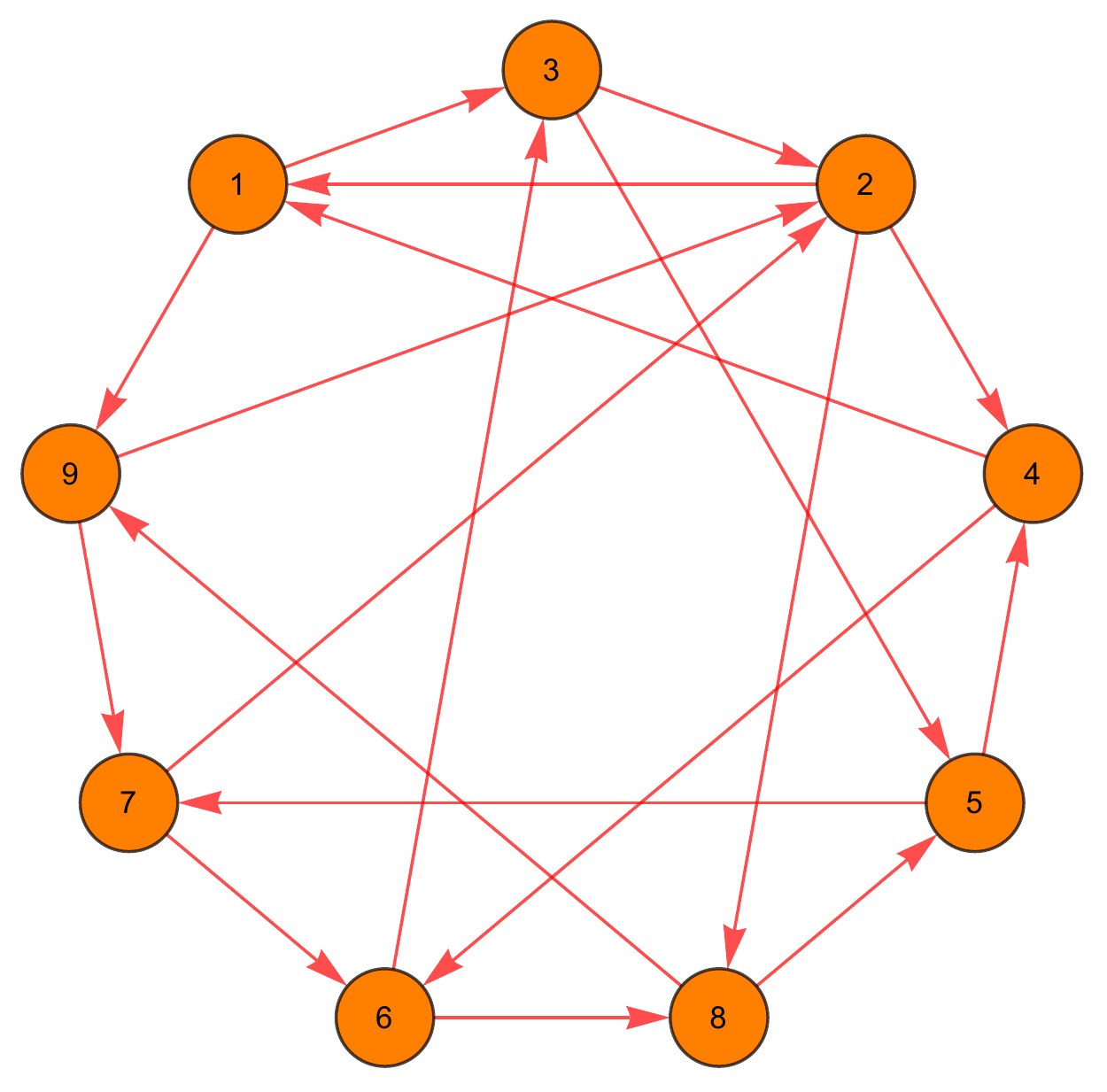}.
\end{equation}
The superpotential is
\begin{eqnarray}
W&=&X_{13}X_{32}X_{21}+X_{24}X_{41}X_{19}X_{97}X_{72}+X_{35}X_{57}X_{76}X_{63}+X_{46}X_{68}X_{85}X_{54}\nonumber\\
&&+X_{89}X_{92}X_{28}-X_{41}X_{13}X_{35}X_{54}-X_{32}X_{24}X_{46}X_{63}-X_{85}X_{57}X_{72}X_{28}\nonumber\\
&&-X_{76}X_{68}X_{89}X_{97}-X_{92}X_{21}X_{19}.
\end{eqnarray}
The number of perfect matchings is $c=32$, which leads to gigantic $P$, $Q_t$ and $G_t$. Hence, we will not list them here. The GLSM fields associated to each point are shown in (\ref{p40p}), where
\begin{eqnarray}
q=\{q_1,q_2\},\ r=\{r_1,\dots,r_{9}\},\ s=\{s_1,\dots,s_{14}\},\ t=\{t_1,t_2\}.
\end{eqnarray}
The mesonic symmetry reads U(1)$^2\times$U(1)$_\text{R}$ and the baryonic symmetry reads U(1)$^4_\text{h}\times$U(1)$^4$, where the subscripts ``R'' and ``h'' indicate R- and hidden symmetries respectively.

The Hilbert series of the toric cone is
\begin{eqnarray}
HS&=&\frac{1}{\left(1-\frac{t_1}{t_3}\right) \left(1-\frac{t_1}{t_2 t_3}\right)
	\left(1-\frac{t_2 t_3^3}{t_1^2}\right)}+\frac{1}{\left(1-\frac{1}{t_2}\right)
	\left(1-\frac{t_2}{t_1}\right) (1-t_1 t_3)}\nonumber\\
&&+\frac{1}{(1-t_1)
	\left(1-\frac{t_2}{t_1}\right)
	\left(1-\frac{t_3}{t_2}\right)}+\frac{1}{\left(1-\frac{1}{t_1}\right) (1-t_2)
	\left(1-\frac{t_1 t_3}{t_2}\right)}\nonumber\\
&&+\frac{1}{\left(1-\frac{1}{t_1}\right)
	\left(1-\frac{t_1}{t_2}\right) (1-t_2 t_3)}+\frac{1}{(1-t_1)
	\left(1-\frac{1}{t_2}\right) \left(1-\frac{t_2
		t_3}{t_1}\right)}\nonumber\\
&&+\frac{1}{\left(1-\frac{t_1}{t_3}\right)
	\left(1-\frac{t_3}{t_2}\right) \left(1-\frac{t_2 t_3}{t_1}\right)}+\frac{1}{(1-t_2)
	\left(1-\frac{t_1}{t_2}\right)
	\left(1-\frac{t_3}{t_1}\right)}\nonumber\\
&&+\frac{1}{\left(1-\frac{t_3}{t_1}\right) (1-t_2 t_3)
	\left(1-\frac{t_1}{t_2 t_3}\right)}.
\end{eqnarray}
The volume function is then
\begin{equation}
V=-\frac{-24 {b_1}+{b_2}^2+12 {b_2}-117}{({b_1}+3) ({b_2}-3)
	({b_2}+3) ({b_1}-{b_2}+3) (2 {b_1}-{b_2}-9)}.
\end{equation}
Minimizing $V$ yields $V_{\text{min}}=(83+13\sqrt{65})/1350$ at $b_1=(15\sqrt{65}-81)/32$, $b_2=0$. Thus, $a_\text{max}=\frac{675}{8192}(13\sqrt{65}-83)$. Together with the superconformal conditions, we can solve for the R-charges of the bifundamentals. Then the R-charges of GLSM fields should satisfy
\begin{eqnarray}
&&\left(1280p_2+512p_4+768p_5\right)p_3^2+(1280p_2^2+1536p_4p_2+1536p_5p_2-2560p_2+512p_4^2\nonumber\\
&&+768p_5^2-1024p_4+512p_4p_5-1536p_5)p_3=-768p_4p_2^2-512p_5p_2^2-768p_4^2p_2-512p_5^2p_2\nonumber\\
&&+1536 p_4 p_2-512 p_4 p_5 p_2+1024 p_5p_2-256 p_4 p_5^2-256 p_4^2 p_5+512 p_4 p_5-325 \sqrt{65}+2075\nonumber\\
\end{eqnarray}
constrained by $\sum\limits_{i=1}^5p_i=2$ and $0<p_i<2$, with others vanishing.

\section{Five Hexagons}\label{hexagons}

\subsection{Polytope 41: PdP$_{4e}$ (3)}\label{p41}
The polytope is
\begin{equation}
\tikzset{every picture/.style={line width=0.75pt}} 
.\label{p41p}
\end{equation}
The brane tiling and the corrresponding quiver are
\begin{equation}
\includegraphics[width=4cm]{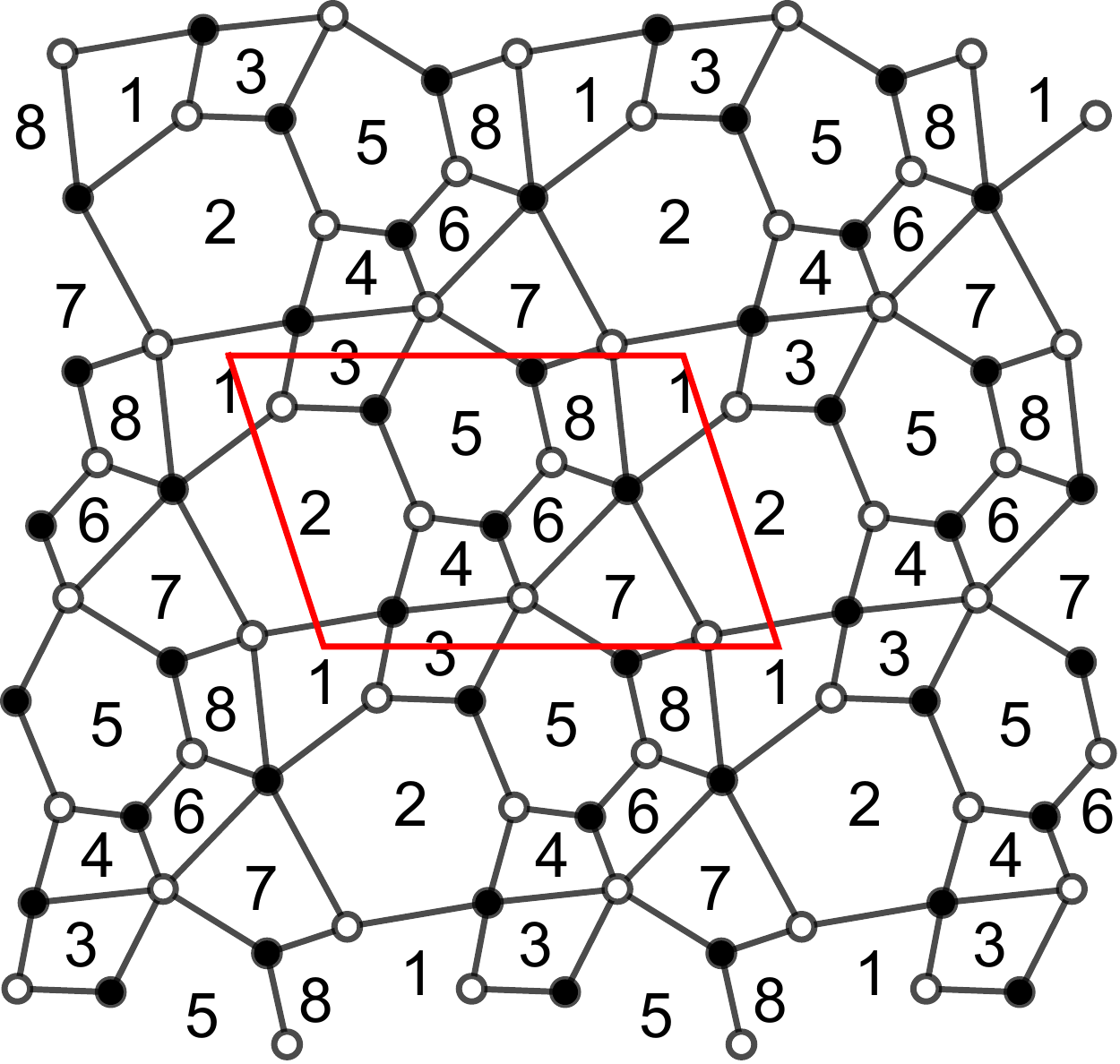};
\includegraphics[width=4cm]{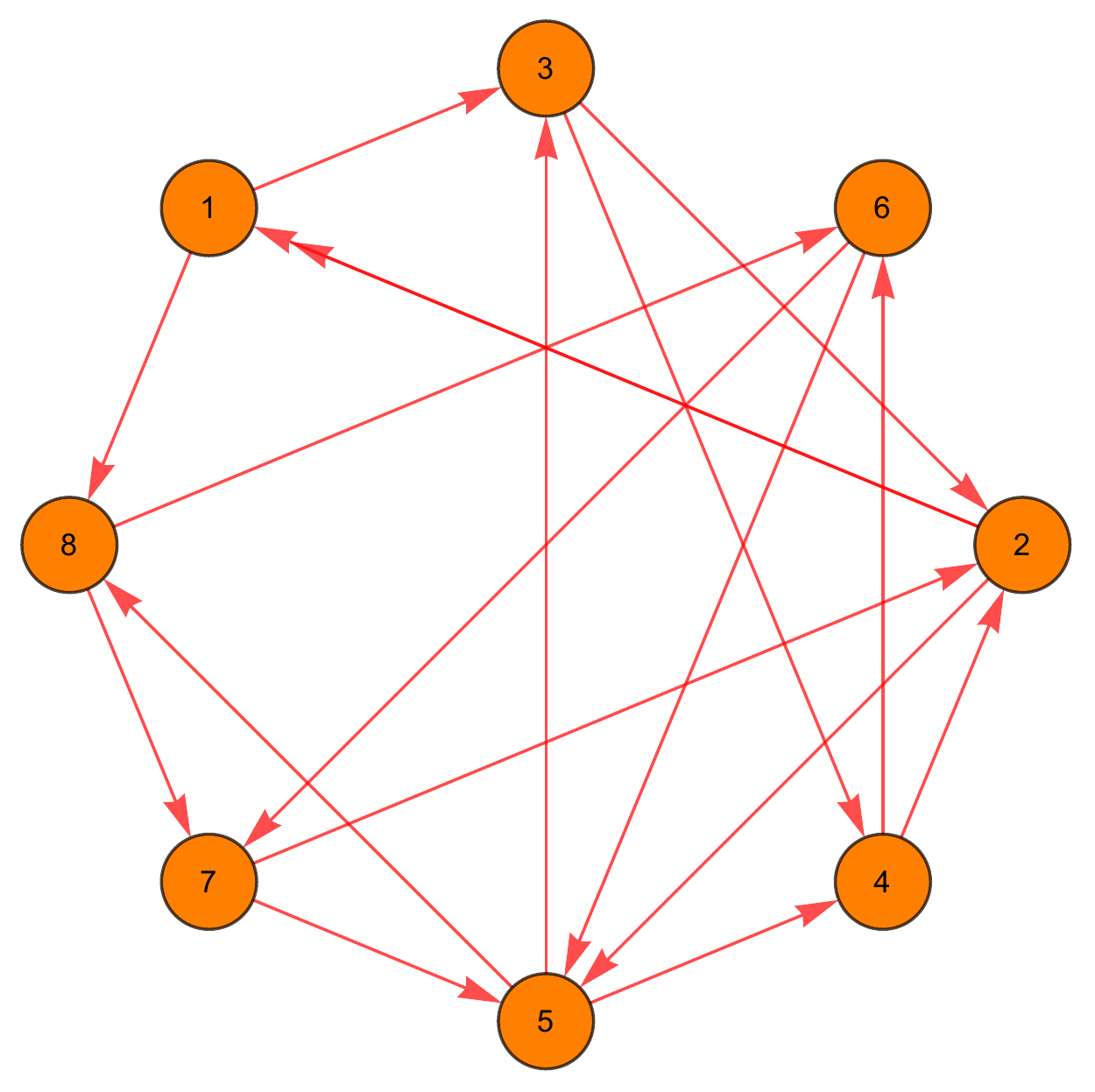}.
\end{equation}
The superpotential is
\begin{eqnarray}
W&=&X_{72}X^1_{21}X_{18}X_{87}+X_{13}X_{32}X^2_{21}+X_{25}X_{54}X_{42}+X_{46}X_{67}X_{75}X_{53}X_{34}+X_{58}X_{86}X_{65}\nonumber\\
&&-X_{86}X_{67}X_{72}X^2_{21}X_{18}-X^1_{21}X_{13}X_{34}X_{42}-X_{32}X_{25}X_{53}-X_{54}X_{46}X_{65}-X_{75}X_{58}X_{87}.\nonumber\\
\end{eqnarray}
The perfect matching matrix is
\begin{equation}
P=\left(
\tiny{
}
\right).
\end{equation}
From $G_t$, we can get the GLSM fields associated to each point as shown in (\ref{p41p}), where
\begin{equation}
r=\{r_1,\dots,r_{7}\},\ s=\{s_1,\dots,s_{11}\}.
\end{equation}
From $Q_t$ (and $Q_F$), the mesonic symmetry reads U(1)$^2\times$U(1)$_\text{R}$ and the baryonic symmetry reads U(1)$^4_\text{h}\times$U(1)$^3$, where the subscripts ``R'' and ``h'' indicate R- and hidden symmetries respectively.

The Hilbert series of the toric cone is
\begin{eqnarray}
HS&=&\frac{1}{\left(1-\frac{1}{t_2}\right) \left(1-\frac{t_1}{t_2 t_3}\right)
	\left(1-\frac{t_2^2 t_3^2}{t_1}\right)}+\frac{1}{(1-t_2)
	\left(1-\frac{t_1}{t_3}\right) \left(1-\frac{t_3^2}{t_1
		t_2}\right)}\nonumber\\
	&&+\frac{1}{\left(1-\frac{1}{t_2}\right) \left(1-\frac{t_2}{t_1}\right)
	(1-t_1 t_3)}+\frac{1}{(1-t_1) \left(1-\frac{t_2}{t_1}\right)
	\left(1-\frac{t_3}{t_2}\right)}\nonumber\\
&&+\frac{1}{\left(1-\frac{1}{t_1}\right) (1-t_2)
	\left(1-\frac{t_1 t_3}{t_2}\right)}+\frac{1}{\left(1-\frac{1}{t_1}\right)
	\left(1-\frac{t_1}{t_2}\right) (1-t_2 t_3)}\nonumber\\
&&+\frac{1}{(1-t_1)
	\left(1-\frac{1}{t_2}\right) \left(1-\frac{t_2 t_3}{t_1}\right)}+\frac{1}{(1-t_2)
	\left(1-\frac{t_1}{t_2}\right) \left(1-\frac{t_3}{t_1}\right)}.
\end{eqnarray}
The volume function is then
\begin{equation}
V=\frac{6 {b_1}^2-{b_1} \left(6 {b_2}+-4 {b_2}^2 +72\right)-2
	{b_2}^3+27 {b_2}^2+36 {b_2}-513}{({b_1}+3) ({b_2}-3)
	({b_2}+3) ({b_1}-{b_2}+3) ({b_1}+{b_2}-6) ({b_1}-2
	({b_2}+3))}.
\end{equation}
Minimizing $V$ yields $V_{\text{min}}=0.160827$ at $b_1=0.979128$, $b_2=0$. Thus, $a_\text{max}=1.554465$. Together with the superconformal conditions, we can solve for the R-charges of the bifundamentals. Then the R-charges of GLSM fields should satisfy
\begin{eqnarray}
&&p_2(3.375p_3p_6+1.125p_4p_6+4.5p_5p_6+1.6875p_3^2+0.5625p_4^2+1.6875p_5^2-3.375p_3\nonumber\\
&&+1.125p_3p_4-1.125p_4+3.375p_3p_5+1.125p_4p_5-3.375p_5+2.25p_6^2-4.5p_6)+p_2^2(1.6875p_3\nonumber\\
&&+0.5625p_4+1.6875p_5+2.25p_6)=-1.125p_3p_6^2-1.125p_4p_6^2-1.125p_5p_6^2-1.125p_3^2p_6\nonumber\\
&&-1.125p_4^2p_6-1.125p_5^2p_6+2.25p_3p_6-2.25p_3p_4p_6+2.25p_4p_6-3.375p_3p_5p_6-\nonumber\\
&&2.25p_4p_5p_6+2.25p_5p_6-0.5625p_3p_4^2-1.6875p_3p_5^2-1.125p_4p_5^2-0.5625p_3^2p_4\nonumber\\
&&+1.125p_3p_4-1.6875p_3^2p_5-1.125p_4^2p_5+3.375p_3p_5-2.25p_3p_4p_5+2.25p_4p_5-1.03631\nonumber\\
\end{eqnarray}
constrained by $\sum\limits_{i=1}^6p_i=2$ and $0<p_i<2$, with others vanishing.

\subsection{Polytope 42: PdP$_{5c}$ (3)}\label{p42}
The polytope is
\begin{equation}
\tikzset{every picture/.style={line width=0.75pt}} 
.\label{p42p}
\end{equation}
The brane tiling and the corrresponding quiver are
\begin{equation}
\includegraphics[width=4cm]{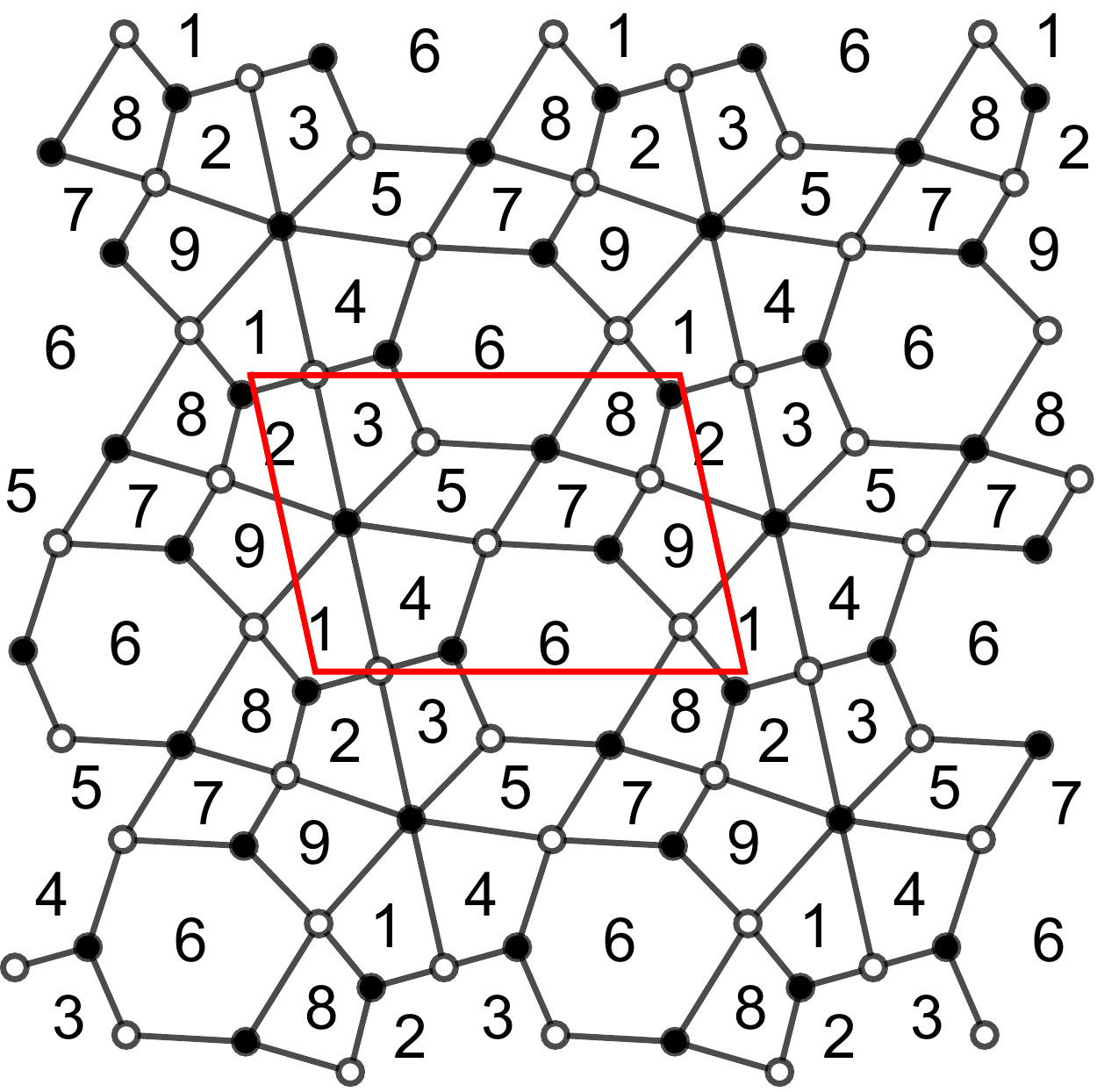};
\includegraphics[width=4cm]{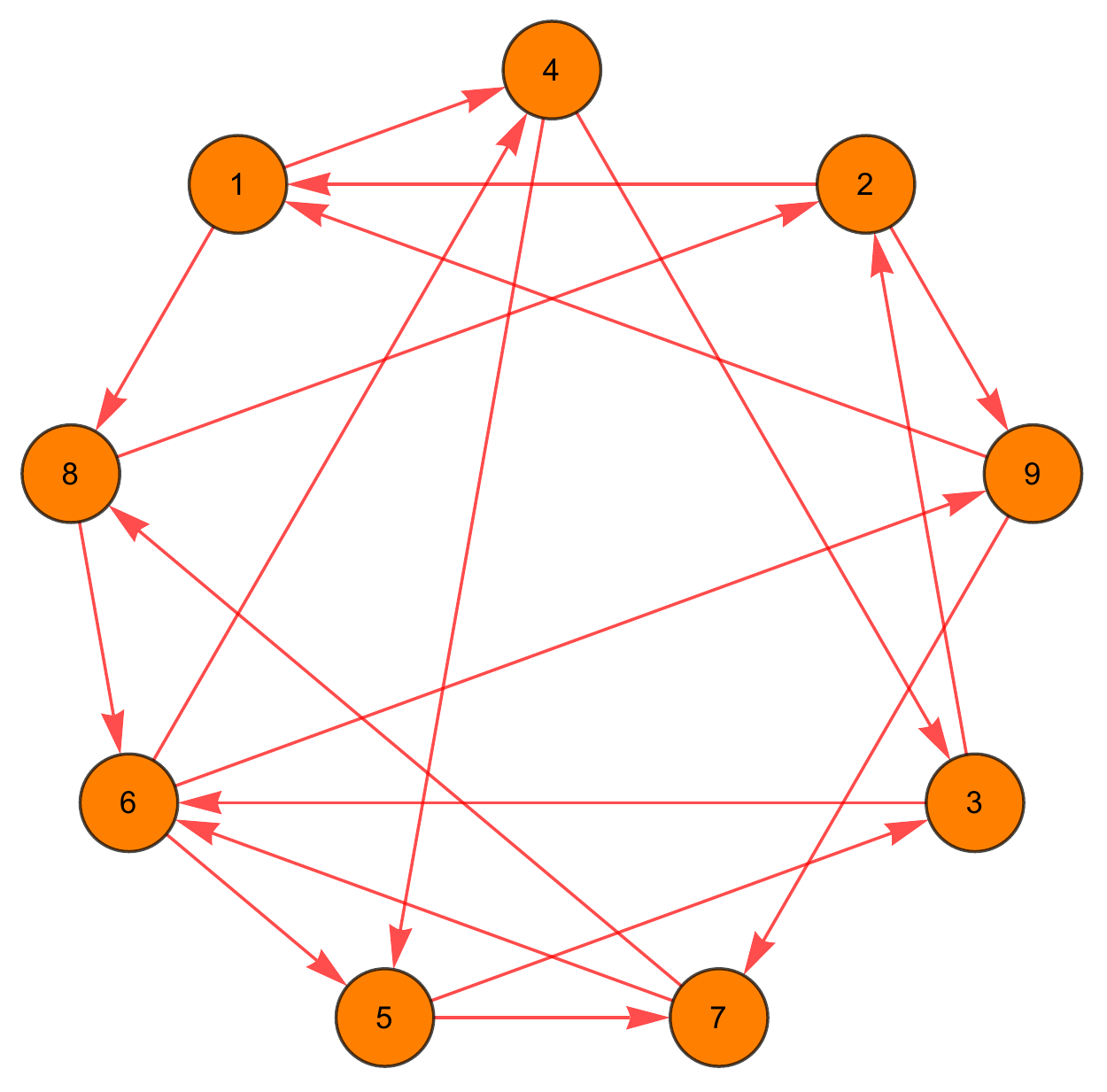}.
\end{equation}
The superpotential is
\begin{eqnarray}
W&=&X_{21}X_{14}X_{43}X_{32}+X_{36}X_{65}X_{53}+X_{45}X_{57}X_{76}X_{64}+X_{69}X_{91}X_{18}X_{86}\nonumber\\
&&+X_{78}X_{82}X_{29}X_{97}-X_{18}X_{82}X_{21}-X_{29}X_{91}X_{14}X_{45}X_{53}X_{32}-X_{43}X_{36}X_{64}\nonumber\\
&&-X_{65}X_{57}X_{78}X_{86}-X_{76}X_{69}X_{97}.
\end{eqnarray}
The perfect matching matrix is
\begin{equation}
P=\left(
\tiny{
}
\right).
\end{equation}
From $G_t$, we can get the GLSM fields associated to each point as shown in (\ref{p42p}), where
\begin{equation}
q=\{q_1,q_2\},\ r=\{r_1,\dots,r_{10}\},\ s=\{s_1,\dots,s_{12}\}.
\end{equation}
From $Q_t$ (and $Q_F$), the mesonic symmetry reads U(1)$^2\times$U(1)$_\text{R}$ and the baryonic symmetry reads U(1)$^4_\text{h}\times$U(1)$^4$, where the subscripts ``R'' and ``h'' indicate R- and hidden symmetries respectively.

The Hilbert series of the toric cone is
\begin{eqnarray}
HS&=&\frac{1}{\left(1-\frac{1}{t_2}\right) \left(1-\frac{t_1}{t_2t_3}\right)
	\left(1-\frac{t_2^2t_3^2}{t_1}\right)}+\frac{1}{(1-t_2)
	\left(1-\frac{t_3^2}{t_1}\right) \left(1-\frac{t_1}{t_2t_3}\right)}\nonumber\\
	&&+\frac{1}{\left(1-\frac{1}{t_2}\right) \left(1-\frac{t_2}{t_3}\right)
	(1-t_1t_3)}+\frac{1}{(1-t_1) \left(1-\frac{t_2}{t_1}\right)
	\left(1-\frac{t_3}{t_2}\right)}\nonumber\\
&&+\frac{1}{\left(1-\frac{1}{t_1}\right) (1-t_2)
	\left(1-\frac{t_1t_3}{t_2}\right)}+\frac{1}{\left(1-\frac{1}{t_1}\right)
	\left(1-\frac{t_1}{t_2}\right) (1-t_2t_3)}\nonumber\\
&&+\frac{1}{(1-t_1)
	\left(1-\frac{1}{t_2}\right) \left(1-\frac{t_2t_3}{t_1}\right)}+\frac{1}{\left(1-\frac{t_1}{t_3}\right)
	\left(1-\frac{t_3}{t_2}\right) \left(1-\frac{t_2t_3}{t_1}\right)}\nonumber\\
&&+\frac{1}{(1-t_2)
	\left(1-\frac{t_1}{t_2}\right) \left(1-\frac{t_3}{t_1}\right)}.
\end{eqnarray}
The volume function is then
\begin{equation}
V=\frac{{b_1}^2 (({b_2}+9))-18 {b_1} ({b_2}+3)+18 \left({b_2}^2-2
	{b_2}-27\right)}{({b_1}-6) ({b_1}+3) ({b_2}-3) ({b_2}+3)
	({b_1}-{b_2}+3) ({b_1}-2 ({b_2}+3))}.
\end{equation}
Minimizing $V$ yields $V_{\text{min}}=0.145643$ at $b_1=1.383054$, $b_2=0.258873$. Thus, $a_\text{max}=1.716526$. Together with the superconformal conditions, we can solve for the R-charges of the bifundamentals. Then the R-charges of GLSM fields should satisfy
\begin{eqnarray}
&&\left(1.26563p_2+0.421875p_4+1.26563p_5+2.10938p_6\right)p_3^2+(1.26563p_2^2+1.6875p_4p_2\nonumber\\
&&+2.53125p_5p_2+4.21875p_6p_2-2.53125p_2+0.421875p_4^2+1.26563p_5^2+2.10938p_6^2\nonumber\\
&&-0.84375p_4+0.84375p_4p_5-2.53125p_5+1.6875p_4p_6+2.53125p_5p_6-4.21875p_6)p_3\nonumber\\
&&=-0.84375p_4p_2^2-1.26563p_5p_2^2-1.6875p_6p_2^2-0.84375p_4^2p_2-1.26563p_5^2p_2-1.6875p_6^2p_2\nonumber\\
&&+1.6875p_4p_2-0.84375p_4p_5p_2+2.53125p_5p_2-1.6875p_4p_6p_2-2.53125p_5p_6p_2+3.375p_6p_2\nonumber\\
&&-0.421875p_4p_5^2-0.84375p_4p_6^2-0.421875p_5p_6^2-0.421875p_4^2p_5+0.84375p_4p_5\nonumber\\
&&-0.84375p_4^2p_6-0.421875p_5^2p_6+1.6875 p_4 p_6-0.84375 p_4 p_5 p_6+0.84375 p_5 p_6-0.858263\nonumber\\
\end{eqnarray}
constrained by $\sum\limits_{i=1}^6p_i=2$ and $0<p_i<2$, with others vanishing.

\subsection{Polytope 43: PdP$_{6b}$ (3)}\label{p43}
The polytope is
\begin{equation}
\tikzset{every picture/.style={line width=0.75pt}} 
.\label{p43p}
\end{equation}
The brane tiling and the corrresponding quiver are
\begin{equation}
\includegraphics[width=4cm]{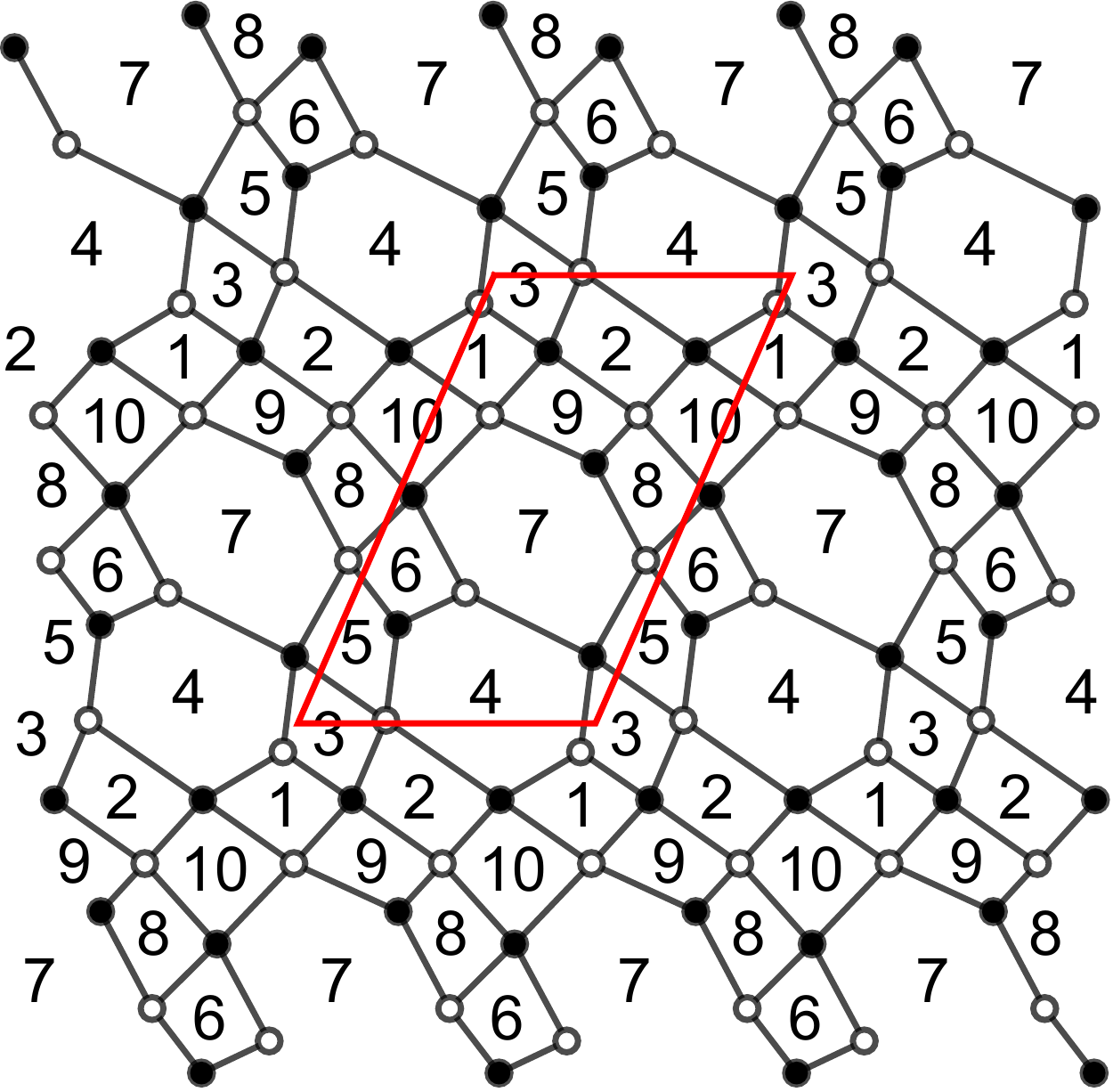};
\includegraphics[width=4cm]{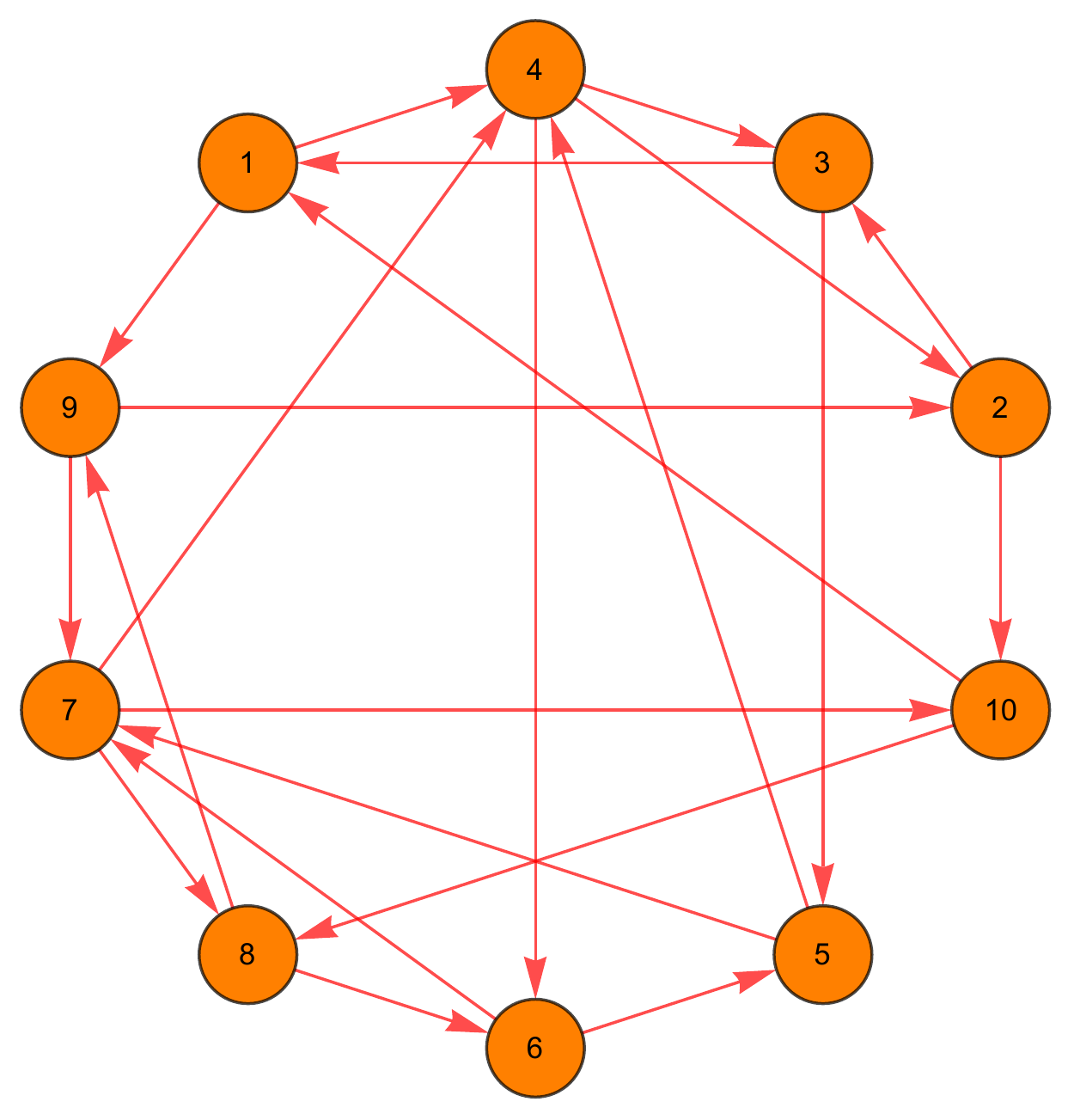}.
\end{equation}
The superpotential is
\begin{eqnarray}
W&=&X_{14}X_{43}X_{31}+X_{23}X_{35}X_{54}X_{42}+X_{46}X_{67}X_{74}+X_{57}X_{78}X_{86}X_{65}\nonumber\\
&&X_{7,10}X_{10,1}X_{19}X_{97}+X_{92}X_{2,10}X_{10,8}X_{89}-X_{19}X_{92}X_{23}X_{31}-X_{2,10}X_{10,1}X_{1,4}X_{4,2}\nonumber\\
&&X_{43}X_{35}X_{57}X_{74}-X_{54}X_{46}X_{65}-X_{67}X_{7,10}X_{10,8}X_{86}-X_{78}X_{89}X_{97}.
\end{eqnarray}
The number of perfect matchings is $c=46$, which leads to gigantic $P$, $Q_t$ and $G_t$. Hence, we will not list them here. The GLSM fields associated to each point are shown in (\ref{p43p}), where
\begin{eqnarray}
q=\{q_1,\dots,q_3\},\ r=\{r_1,\dots,r_{18}\},\ s=\{s_1,\dots,s_{16}\},\ t=\{t_1,\dots,t_3\}.
\end{eqnarray}
The mesonic symmetry reads U(1)$^2\times$U(1)$_\text{R}$ and the baryonic symmetry reads U(1)$^4_\text{h}\times$U(1)$^5$, where the subscripts ``R'' and ``h'' indicate R- and hidden symmetries respectively.

The Hilbert series of the toric cone is
\begin{eqnarray}
HS&=&\frac{1}{\left(1-\frac{1}{t_2}\right) \left(1-\frac{t_1}{t_2 t_3}\right)
	\left(1-\frac{t_2^2 t_3^2}{t_1}\right)}+\frac{1}{(1-t_2)
	\left(1-\frac{t_3^2}{t_1}\right) \left(1-\frac{t_1}{t_2
		t_3}\right)}\nonumber\\
	&&+\frac{1}{\left(1-\frac{t_1}{t_3^2}\right)
	\left(1-\frac{t_3}{t_2}\right) \left(1-\frac{t_2
		t_3^2}{t_1}\right)}+\frac{1}{\left(1-\frac{1}{t_2}\right)
	\left(1-\frac{t_2}{t_1}\right) (1-t_1 t_3)}\nonumber\\
&&+\frac{1}{(1-t_1)
	\left(1-\frac{t_2}{t_1}\right)
	\left(1-\frac{t_3}{t_2}\right)}+\frac{1}{\left(1-\frac{1}{t_1}\right) (1-t_2)
	\left(1-\frac{t_1 t_3}{t_2}\right)}\nonumber\\
&&+\frac{1}{\left(1-\frac{1}{t_1}\right)
	\left(1-\frac{t_1}{t_2}\right) (1-t_2 t_3)}+\frac{1}{(1-t_1)
	\left(1-\frac{1}{t_2}\right) \left(1-\frac{t_2
		t_3}{t_1}\right)}\nonumber\\
	&&+\frac{1}{\left(1-\frac{t_1}{t_3}\right)
	\left(1-\frac{t_3}{t_2}\right) \left(1-\frac{t_2 t_3}{t_1}\right)}+\frac{1}{(1-t_2)
	\left(1-\frac{t_1}{t_2}\right) \left(1-\frac{t_3}{t_1}\right)}.
\end{eqnarray}
The volume function is then
\begin{equation}
V=\frac{2 {b_1}^2 ({b_2}+6)-2 {b_1} \left(2 {b_2}^2+15
	{b_2}+18\right)+2 {b_2}^3+9 {b_2}^2-108 {b_2}-459}{({b_1}+3)
	({b_2}-3) ({b_2}+3) ({b_1}-{b_2}-6) ({b_1}-{b_2}+3)
	({b_1}-2 ({b_2}+3))}.
\end{equation}
Minimizing $V$ yields $V_{\text{min}}=0.126977$ at $b_1=2.020709$, $b_2=0.520709$. Thus, $a_\text{max}=1.968861$. Together with the superconformal conditions, we can solve for the R-charges of the bifundamentals. Then the R-charges of GLSM fields should satisfy
\begin{eqnarray}
&&(0.5625p_3+1.125p_4+0.5625p_5+1.6875p_6)p_2^2+(0.5625p_3^2+1.6875p_4p_3+1.125p_5p_3\nonumber\\
&&+2.8125p_6p_3-1.125p_3+1.125p_4^2+0.5625p_5^2+1.6875p_6^2-2.25p_4+0.5625p_4p_5-1.125p_5\nonumber\\
&&+2.25p_4p_6+1.125p_5p_6-3.375p_6)p_2=-0.84375p_4p_3^2-0.28125p_5p_3^2-1.40625p_6p_3^2\nonumber\\
&&-0.84375p_4^2p_3-0.28125p_5^2p_3-1.40625p_6^2p_3+1.6875p_4p_3-0.5625p_4p_5p_3+0.5625p_5p_3\nonumber\\
&&-1.6875p_4p_6p_3-1.125p_5p_6p_3+2.8125p_6p_3-0.28125p_4p_5^2-0.28125p_4p_6^2-0.5625p_5p_6^2\nonumber\\
&&-0.28125p_4^2p_5+0.5625p_4p_5-0.28125p_4^2p_6-0.5625p_5^2p_6+0.5625p_4p_6-0.5625p_4p_5p_6\nonumber\\
&&+1.125 p_5 p_6-0.656287
\end{eqnarray}
constrained by $\sum\limits_{i=1}^6p_i=2$ and $0<p_i<2$, with others vanishing.

\subsection{Polytope 44: PdP$_{4f}$ (2)}\label{p44}
The polytope is
\begin{equation}
\tikzset{every picture/.style={line width=0.75pt}} 
.\label{p44p}
\end{equation}
The brane tiling and the corrresponding quiver are
\begin{equation}
\includegraphics[width=4cm]{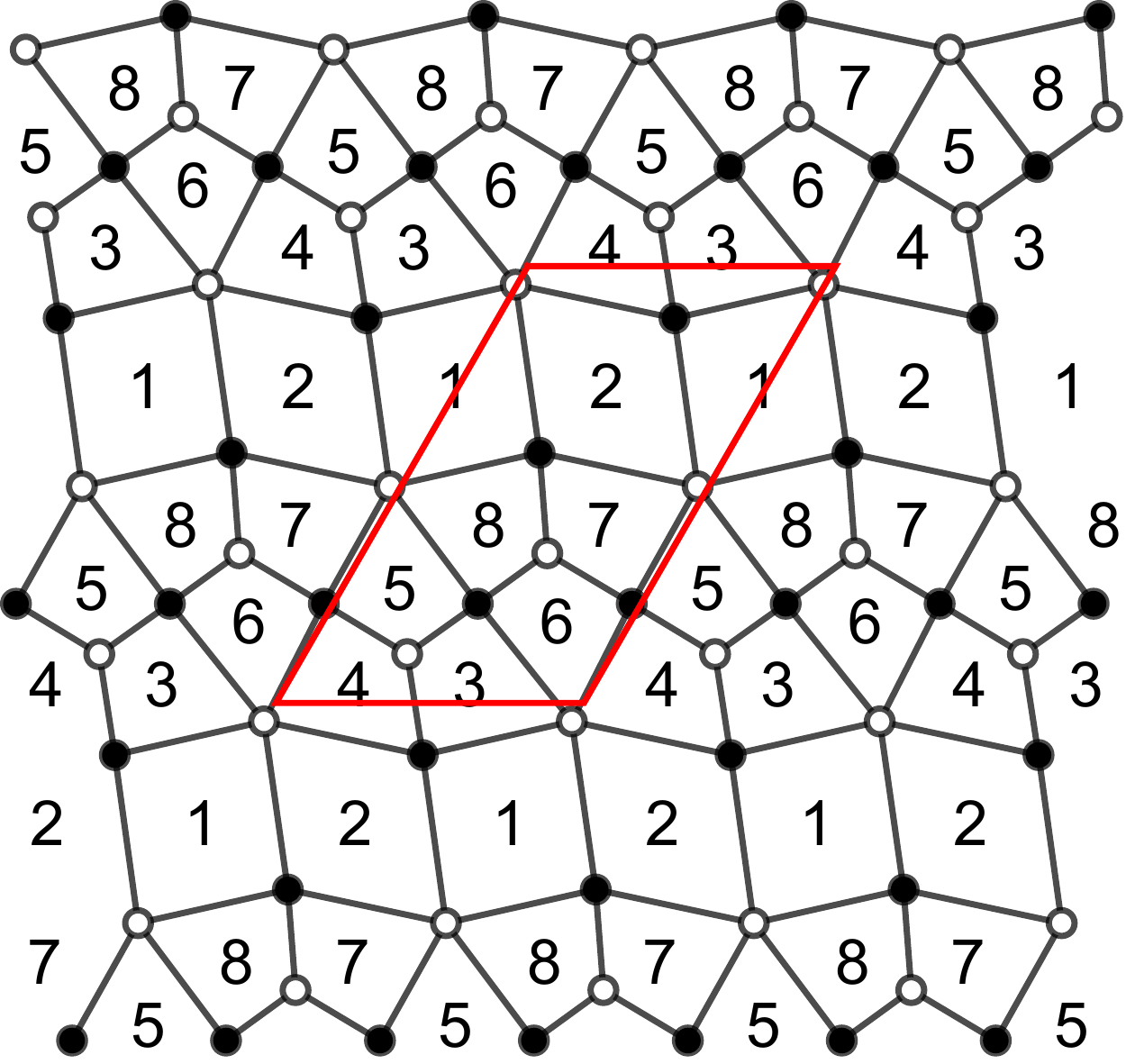};
\includegraphics[width=4cm]{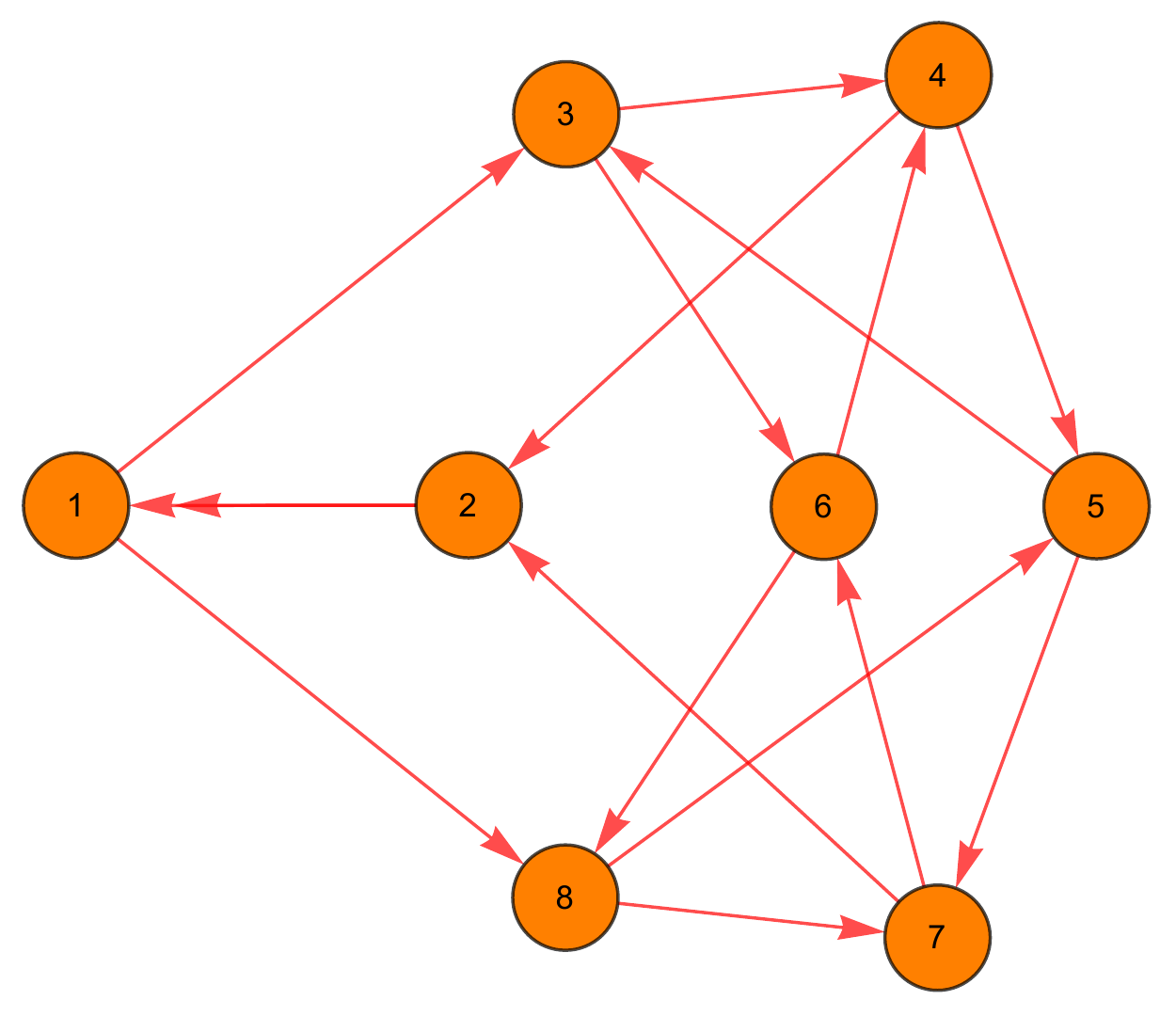}.
\end{equation}
The superpotential is
\begin{eqnarray}
W&=&X_{13}X_{36}X_{64}X_{42}X^1_{21}+X_{45}X_{53}X_{34}+X_{68}X_{87}X_{76}+X_{72}X^2_{21}X_{18}X_{85}X_{57}\nonumber\\
&&-X^2_{21}X_{13}X_{34}X_{42}-X_{53}X_{36}X_{68}X_{85}-X_{64}X_{45}X_{57}X_{76}-X_{87}X_{72}X^1_{21}X_{18}.\nonumber\\
\end{eqnarray}
The perfect matching matrix is
\begin{equation}
P=\left(
\tiny{
}
\right).
\end{equation}
From $G_t$, we can get the GLSM fields associated to each point as shown in (\ref{p44p}), where
\begin{equation}
r=\{r_1,\dots,r_{9}\},\ s=\{s_1,\dots,s_{9}\}.
\end{equation}
From $Q_t$ (and $Q_F$), the mesonic symmetry reads U(1)$^2\times$U(1)$_\text{R}$ and the baryonic symmetry reads U(1)$^4_\text{h}\times$U(1)$^3$, where the subscripts ``R'' and ``h'' indicate R- and hidden symmetries respectively.

The Hilbert series of the toric cone is
\begin{eqnarray}
HS&=&\frac{1}{\left(1-\frac{1}{t_2}\right) \left(1-\frac{t_1}{t_2 t_3}\right)
	\left(1-\frac{t_2^2 t_3^2}{t_1}\right)}+\frac{1}{(1-t_2)
	\left(1-\frac{t_2}{t_1}\right) \left(1-\frac{t_1 t_3}{t_2^2}\right)}\nonumber\\
&&+\frac{1}{(1-t_2)
	\left(1-\frac{t_3^2}{t_1}\right) \left(1-\frac{t_1}{t_2
		t_3}\right)}+\frac{1}{\left(1-\frac{1}{t_2}\right) \left(1-\frac{t_2}{t_1}\right)
	(1-t_1 t_3)}\nonumber\\
&&+\frac{1}{\left(1-\frac{1}{t_1}\right) \left(1-\frac{t_1}{t_2}\right)
	(1-t_2 t_3)}+\frac{1}{(1-t_1) \left(1-\frac{1}{t_2}\right) \left(1-\frac{t_2
		t_3}{t_1}\right)}\nonumber\\
	&&+\frac{1}{\left(1-\frac{t_1}{t_3}\right)
	\left(1-\frac{t_3}{t_2}\right) \left(1-\frac{t_2 t_3}{t_1}\right)}+\frac{1}{(1-t_2)
	\left(1-\frac{t_1}{t_2}\right) \left(1-\frac{t_3}{t_1}\right)}.
\end{eqnarray}
The volume function is then
\begin{equation}
V=\frac{6 \left({b_1}^2-2 {b_1} {b_2}-3{b_1}+6 {b_2}^2+3
	{b_2}-99\right)}{({b_1}-6) ({b_1}+3) ({b_2}-3) ({b_2}+3)
	({b_1}-2 {b_2}+3) ({b_1}-2 ({b_2}+3))}.
\end{equation}
Minimizing $V$ yields $V_{\text{min}}=40/243$ at $b_1=3/2$, $b_2=0$. Thus, $a_\text{max}=243/160$. Together with the superconformal conditions, we can solve for the R-charges of the bifundamentals. Then the R-charges of GLSM fields should satisfy
\begin{eqnarray}
&&(15p_3+5p_4+10p_5+15p_6)p_2^2+(15p_3^2+10p_4p_3+30p_5p_3+30p_6p_3-30p_3+5p_4^2+10p_5^2\nonumber\\
&&+15p_6^2-10p_4+10p_4p_5-20p_5+10p_4p_6+30p_5p_6-30p_6)p_2=-5p_4p_3^2-20p_5p_3^2-15p_6p_3^2\nonumber\\
&&-5p_4^2p_3-20p_5^2p_3-15p_6^2p_3+10p_4p_3-20p_4p_5p_3+40p_5p_3-20p_4p_6p_3-40p_5p_6p_3\nonumber\\
&&+30p_6p_3-10p_4p_5^2-10p_4p_6^2-10p_5p_6^2-10p_4^2p_5+20p_4p_5-10p_4^2p_6-10p_5^2p_6+20p_4p_6\nonumber\\
&&-20 p_4 p_5 p_6+20 p_5 p_6-9
\end{eqnarray}
constrained by $\sum\limits_{i=1}^6p_i=2$ and $0<p_i<2$, with others vanishing.

\subsection{Polytope 45: PdP$_{6c}$ (3)}\label{p45}
The polytope is
\begin{equation}
\tikzset{every picture/.style={line width=0.75pt}} 
.\label{p45p}
\end{equation}
The brane tiling and the corrresponding quiver are
\begin{equation}
\includegraphics[width=4cm]{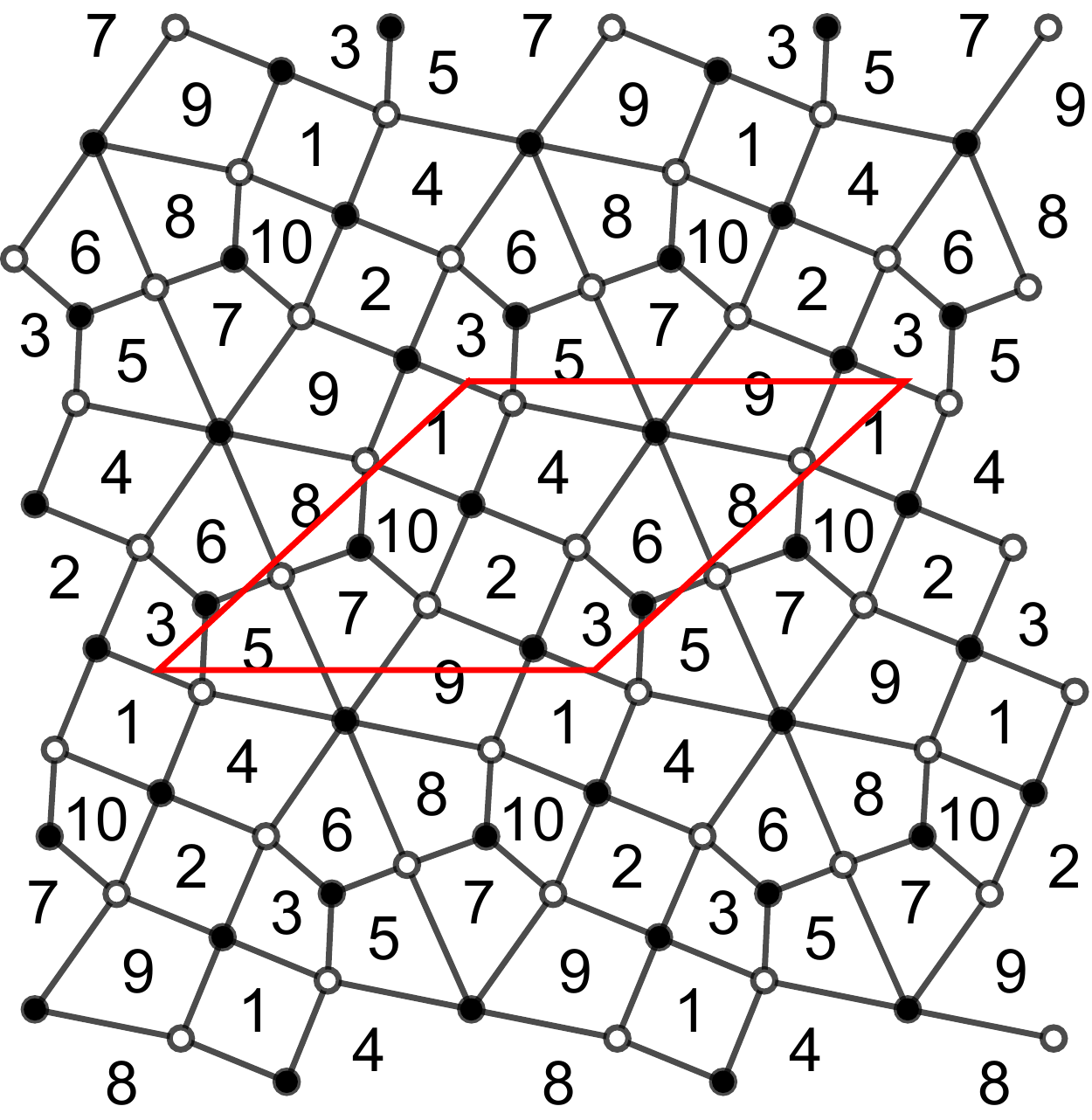};
\includegraphics[width=4cm]{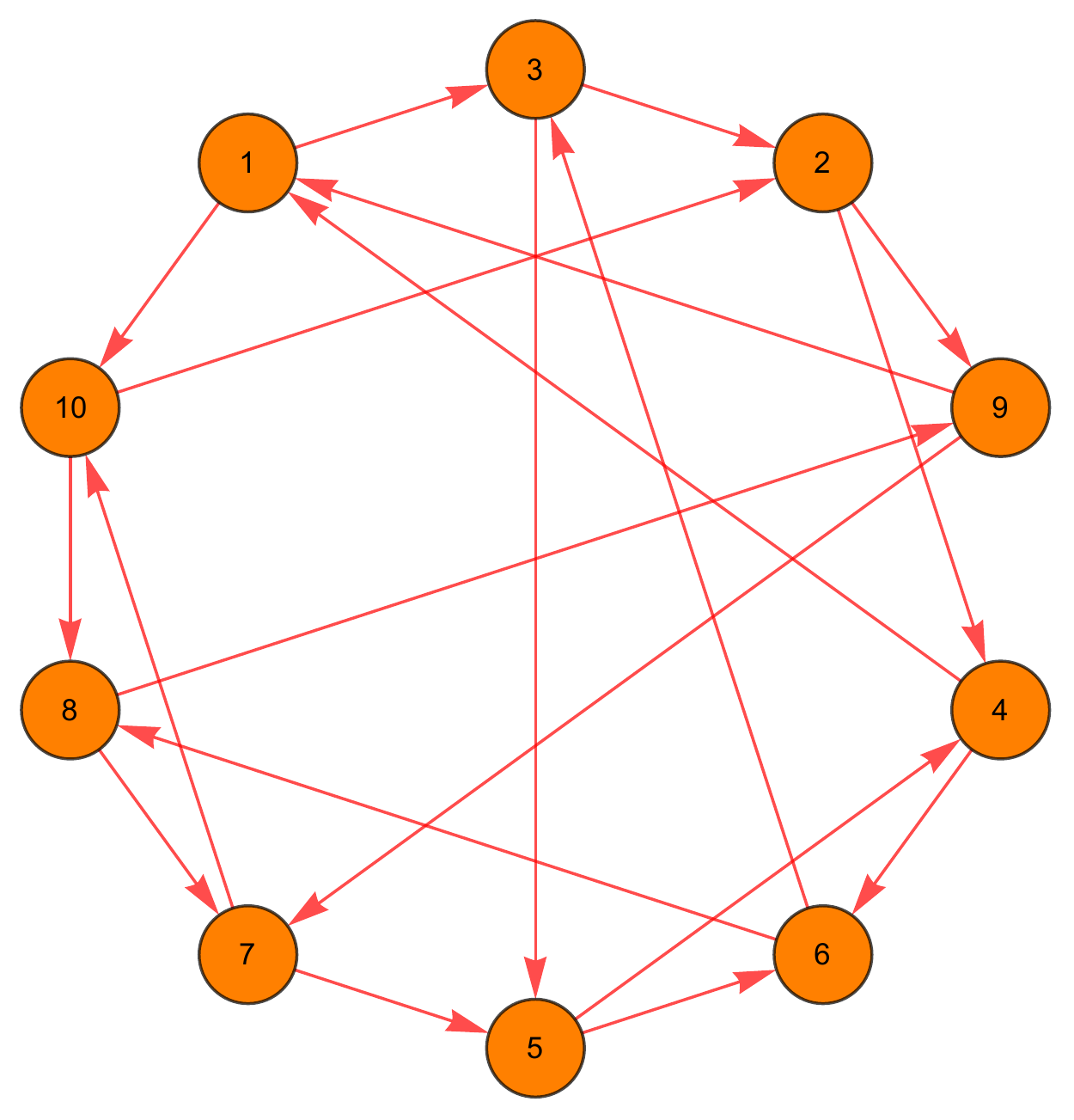}.
\end{equation}
The superpotential is
\begin{eqnarray}
W&=&X_{13}X_{35}X_{54}X_{41}+X_{46}X_{63}X_{32}X_{24}+X_{68}X_{87}X_{75}X_{56}+X_{89}X_{91}X_{1,10}X_{10,8}\nonumber\\
&&+X_{10,2}X_{29}X_{97}X_{7,10}-X_{29}X_{91}X_{13}X_{32}-X_{1,10}X_{10,2}X_{24}X_{41}-X_{63}X_{35}X_{56}\nonumber\\
&&X_{54}X_{46}X_{68}X_{89}X_{97}X_{75}-X_{87}X_{7,10}X_{10,8}.
\end{eqnarray}
The number of perfect matchings is $c=40$, which leads to gigantic $P$, $Q_t$ and $G_t$. Hence, we will not list them here. The GLSM fields associated to each point are shown in (\ref{p45p}), where
\begin{eqnarray}
q=\{q_1,q_2\},\ r=\{r_1,\dots,r_{15}\},\ s=\{s_1,\dots,s_{15}\},\ t=\{t_1,t_2\}.
\end{eqnarray}
The mesonic symmetry reads U(1)$^2\times$U(1)$_\text{R}$ and the baryonic symmetry reads U(1)$^4_\text{h}\times$U(1)$^5$, where the subscripts ``R'' and ``h'' indicate R- and hidden symmetries respectively.

The Hilbert series of the toric cone is
\begin{eqnarray}
HS&=&\frac{1}{(1-t_2) \left(1-\frac{t_3^2}{t_1}\right) \left(1-\frac{t_1}{t_2
		t_3}\right)}+\frac{1}{\left(1-\frac{1}{t_2}\right) \left(1-\frac{t_1}{t_3}\right)
	\left(1-\frac{t_2 t_3^2}{t_1}\right)}\nonumber\\
&&+\frac{1}{\left(1-\frac{1}{t_2}\right)
	\left(1-\frac{t_2}{t_1}\right) (1-t_1 t_3)}+\frac{1}{(1-t_1)
	\left(1-\frac{t_2}{t_1}\right)
	\left(1-\frac{t_3}{t_2}\right)}\nonumber\\
&&+\frac{1}{\left(1-\frac{1}{t_1}\right) (1-t_2)
	\left(1-\frac{t_1 t_3}{t_2}\right)}+\frac{1}{(1-t_1) \left(1-\frac{1}{t_1 t_2}\right)
	(1-t_2 t_3)}\nonumber\\
&&+\frac{1}{\left(1-\frac{1}{t_1}\right) \left(1-\frac{t_1}{t_2}\right)
	(1-t_2 t_3)}+\frac{1}{\left(1-\frac{t_1}{t_3}\right) \left(1-\frac{t_3}{t_2}\right)
	\left(1-\frac{t_2 t_3}{t_1}\right)}\nonumber\\
&&+\frac{1}{(1-t_2) \left(1-\frac{t_1}{t_2}\right)
	\left(1-\frac{t_3}{t_1}\right)}+\frac{1}{\left(1-\frac{1}{t_2}\right) (1-t_1 t_2)
	\left(1-\frac{t_3}{t_1}\right)}.
\end{eqnarray}
The volume function is then
\begin{equation}
V=\frac{3 \left(4 {b_1}^2-4 {b_1} ({b_2}+3)+3 \left({b_2}^2+2
	{b_2}-51\right)\right)}{({b_1}-6) ({b_1}+3) ({b_2}-3) ({b_2}+3)
	({b_1}-{b_2}-6) ({b_1}-{b_2}+3)}.
\end{equation}
Minimizing $V$ yields $V_{\text{min}}=32/243$ at $b_1=3/2$, $b_2=0$. Thus, $a_\text{max}=243/128$. Together with the superconformal conditions, we can solve for the R-charges of the bifundamentals. Then the R-charges of GLSM fields should satisfy
\begin{eqnarray}
&&(12p_3+16p_4+12p_5+16p_6)p_2^2+(12p_3^2+32p_4p_3+24p_5p_3+40p_6p_3-24p_3+16p_4^2+12p_5^2\nonumber\\
&&+16p_6^2-32p_4+16p_4p_5-24p_5+32p_4p_6+24p_5p_6-32p_6)p_2=-8p_4p_3^2-12p_5p_3^2-20p_6p_3^2\nonumber\\
&&-8p_4^2p_3-12p_5^2p_3-20p_6^2p_3+16p_4p_3-16p_4p_5p_3+24p_5p_3-32p_4p_6p_3-24p_5p_6p_3\nonumber\\
&&+40p_6p_3-8p_4p_5^2-16p_4p_6^2-4p_5p_6^2-8p_4^2p_5+16p_4p_5-16p_4^2p_6-4p_5^2p_6+32p_4p_6\nonumber\\
&&-16p_4 p_5 p_6+8 p_5 p_6-9
\end{eqnarray}
constrained by $\sum\limits_{i=1}^6p_i=2$ and $0<p_i<2$, with others vanishing.

\section{The Toric Variety $\widetilde{X(\Delta)}$}\label{XDelta}
Given a lattice polytope $\Delta$ of (complex) dimension $n$, besides the ($n+1$)-dimensional Calabi-Yau cone which is non-compact, we can also get a compact toric variety $X(\Delta)$ under the construction of inner normal fan $\Sigma(\Delta)$. Here, we give a quick review on the compact toric variety $X(\Delta)$. A detailed treatment can be found in \cite{fulton1993introduction,cox2011toric}.

To build $X(\Delta)$, we choose one interior point as the origin, then the fan $\Sigma(\Delta)$ is constructed out of cones having rays going through the vertices of each face with origin as the apex, viz,
\begin{equation}
	\Sigma(\Delta)=\left\{\text{pos}(F):F\in\text{Faces}(\Delta)\right\},
\end{equation}
where
\begin{equation}
	\text{pos}(F)=\left\{\sum_i\lambda_i\bm{v}_i: \bm{v}_i\in F,\lambda_i\geq0\right\}
\end{equation}
is the positive hull of the $n$-cone over face $F$. For instance, choosing the left interior point as the origin, the polygon (\ref{p22p}) in \S\ref{p22}, $\mathcal{C}/(\mathbb{Z}_3\times\mathbb{Z}_2)$ (1,0,0,2)(0,1,1,0), has the toric variety
\begin{equation}
\includegraphics[width=5cm]{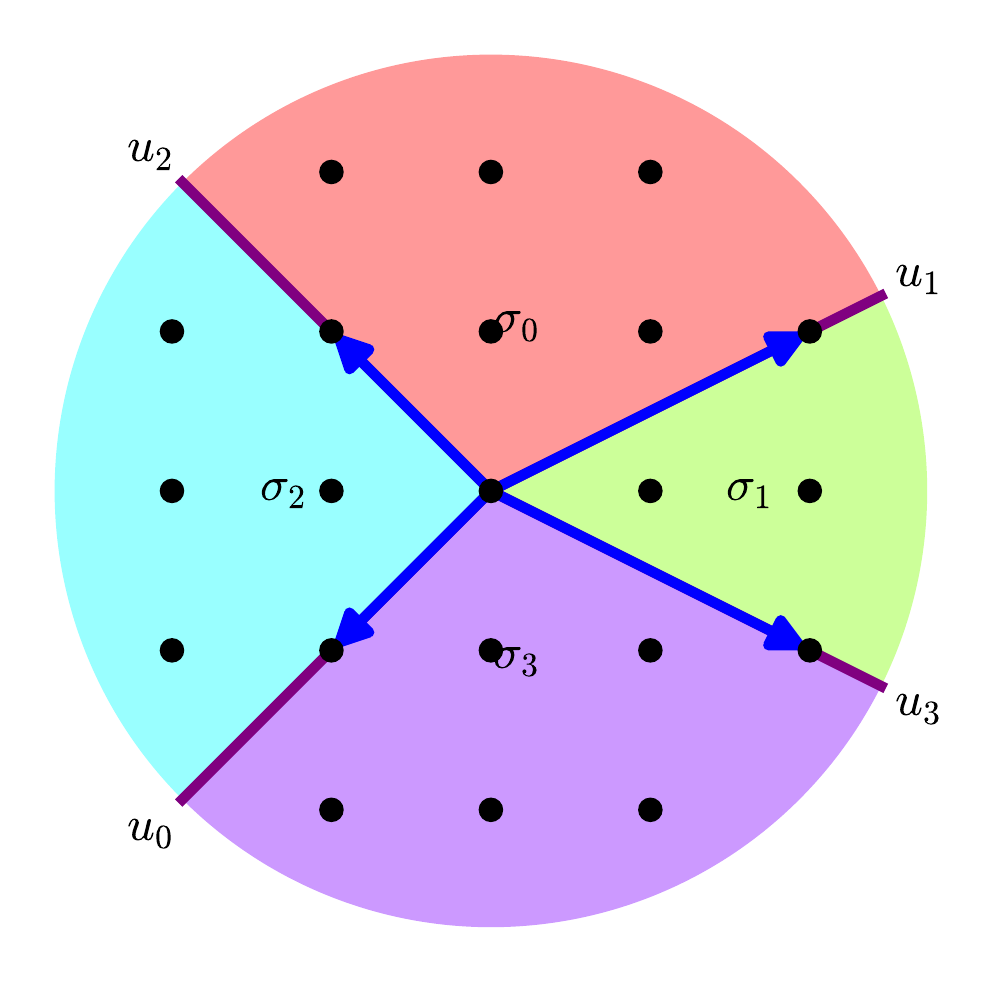}\label{fan1}
\end{equation}
with the cones $\sigma_i$ as affine patches.

However, such $X(\Delta)$ may not be smooth. In fact, the toric variety built from (\ref{fan1}) is not smooth. This is solved by the following definition:
\begin{definition}
	The polytope and the corresponding fan are \emph{regular} if every cone in the fan has generators that form part of a $\mathbb{Z}$-basis.
\end{definition}
The regularity can be determined by the determinant of \emph{all} $n$-tuple vectors of each cone. If all the determinants are $\pm$1, then we have a regular polytope and a regular fan. With regularity, we have \cite{cox2011toric}
\begin{theorem}
	The toric variety $X(\Delta)$ is smooth iff $\Delta$ is regular.
\end{theorem}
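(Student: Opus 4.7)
The plan is to reduce the global statement to an affine-local one, since the toric variety $X(\Delta)$ is covered by the affine open patches $U_\sigma = \mathrm{Spec}_{\max} \mathbb{C}[\sigma^\vee \cap M]$ indexed by the cones $\sigma \in \Sigma(\Delta)$, and smoothness is a Zariski-local property. Concretely, $X(\Delta)$ is smooth iff each $U_\sigma$ is smooth, and $\Delta$ is regular iff each maximal cone $\sigma$ is generated by part of a $\mathbb{Z}$-basis of $N$ (with the same then being true of all its faces, since faces of a regular cone are regular). So the theorem reduces to the claim: for a strongly convex rational polyhedral cone $\sigma \subset N_{\mathbb{R}}$, the affine toric variety $U_\sigma$ is smooth if and only if $\sigma$ is regular.

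First I would dispatch the easy direction. Suppose $\sigma = \mathrm{pos}(v_1,\dots,v_k)$ where $v_1,\dots,v_k$ can be completed to a $\mathbb{Z}$-basis $v_1,\dots,v_n$ of $N$. Applying the induced $GL_n(\mathbb{Z})$ change of coordinates, we may assume $v_i = e_i$ for $i \leq k$, so $\sigma$ is the first $k$-orthant inside $\mathbb{R}^n$. Then $\sigma^\vee \cap M$ is freely generated as a semigroup by $e_1^*, \dots, e_k^*, \pm e_{k+1}^*, \dots, \pm e_n^*$, so $\mathbb{C}[\sigma^\vee \cap M] \cong \mathbb{C}[x_1,\dots,x_k, y_{k+1}^{\pm 1}, \dots, y_n^{\pm 1}]$ and $U_\sigma \cong \mathbb{C}^k \times (\mathbb{C}^*)^{n-k}$, which is visibly smooth.

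The harder direction (smooth $\Rightarrow$ regular) is where I expect the main obstacle; I would argue by looking at the distinguished point. Without loss of generality take $\sigma$ to be of maximal dimension (if $\dim \sigma < n$, $U_\sigma$ contains a torus factor and we can quotient it out, reducing to the top-dimensional sub-cone in $N/(\sigma \cap (-\sigma))$). Then the torus-fixed point $x_\sigma \in U_\sigma$ corresponds to the maximal monomial ideal $\mathfrak{m} = (\chi^m : m \in \sigma^\vee \cap M \setminus \{0\})$. The Zariski tangent space $\mathfrak{m}/\mathfrak{m}^2$ has a natural $\mathbb{C}$-basis given by the Hilbert basis of $\sigma^\vee \cap M$, i.e.\ the finite set of semigroup-irreducible elements. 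Smoothness at $x_\sigma$ forces $\dim \mathfrak{m}/\mathfrak{m}^2 = n$, so the Hilbert basis has exactly $n$ elements; call them $u_1,\dots,u_n$. Since they must generate the full lattice $M$ (the cone is top-dimensional, so $\sigma^\vee \cap M$ spans $M$), and $n = \mathrm{rank}\, M$, they form a $\mathbb{Z}$-basis of $M$. Dualising, the ray generators of $\sigma$ are the dual basis elements in $N$, hence form a $\mathbb{Z}$-basis of $N$, i.e.\ $\sigma$ is regular.

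Finally, I would package both implications together: smoothness of $X(\Delta)$ propagates to each $U_\sigma$, which by the two directions above is equivalent to regularity of every $\sigma \in \Sigma(\Delta)$, and this in turn is the definition of regularity of $\Delta$ (noting again that regularity is inherited by faces, so it suffices to check maximal cones). The main technical subtlety is the tangent space computation at the fixed point and the claim that the Hilbert basis of $\sigma^\vee \cap M$ yields a $\mathbb{C}$-basis of $\mathfrak{m}/\mathfrak{m}^2$; this follows from the fact that a monomial $\chi^m$ lies in $\mathfrak{m}^2$ iff $m$ can be written as a sum of two non-zero elements of $\sigma^\vee \cap M$, so the cosets of the irreducible elements form a basis.
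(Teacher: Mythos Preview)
Your argument is correct and is essentially the standard proof one finds in the cited reference \cite{cox2011toric}. The paper itself does not prove this theorem; it simply states it with a citation to Cox--Little--Schenck, so there is no ``paper's own proof'' to compare against beyond the textbook one, which is exactly the affine-patch reduction and tangent-space computation you have given.

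One minor remark on presentation: in the smooth $\Rightarrow$ regular direction you assert that the Hilbert basis $u_1,\dots,u_n$ generates $M$ as a group. It is worth making explicit why $\sigma^\vee \cap M$ generates $M$ as a group when $\sigma$ is full-dimensional and strongly convex (so that $\sigma^\vee$ has non-empty interior), since this is the step that lets you conclude the $u_i$ form a $\mathbb{Z}$-basis rather than merely a $\mathbb{Q}$-basis. You also silently use that $\mathrm{pos}(u_1,\dots,u_n) = \sigma^\vee$ in order to dualise back to $\sigma$; this follows because the Hilbert basis generates the full semigroup $\sigma^\vee \cap M$, hence spans the rational cone $\sigma^\vee$. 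Both points are routine, but spelling them out would close the only visible seams in the write-up.
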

For example, in (\ref{fan1}), det($u_0$,$u_2$)=$-2$, and therefore the corresponding toric variety is singular. Nevertheless, we can always resolve the singularities via triangulations of the polytope. For reflexive polytopes, FRS triangulations are considered\cite{He:2017gam,Altman:2014bfa}, where
\begin{itemize}
	\item ``Fine'' stands for all the lattice points of the polytope involved in the triangulation;
	\item ``Regular'' stands for the polytope being regular;
	\item ``Star'' stands for the origin being the apex of all the triangulated cones.
\end{itemize}
Now that we are dealing with polygons having two interior points, F and S can not be simultaneously satisfied. Hence, we will drop the condition F, and contemplate RS triangulations. Under such triangulations, we get a complete resolution, $\widetilde{X(\Delta)}$, of $X(\Delta)$. For instance, (\ref{fan1}) can be resolved to
\begin{equation}
\includegraphics[width=5cm]{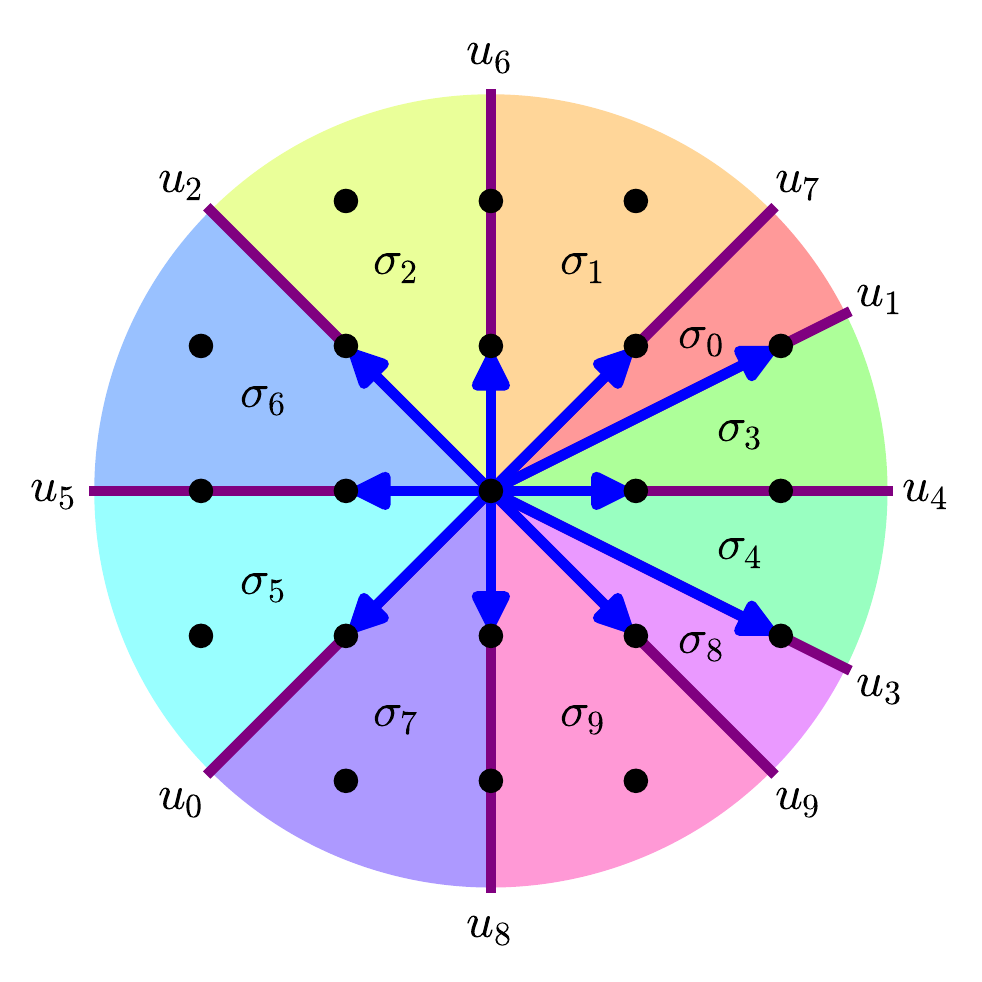},
\end{equation}
which is complete and smooth.

\subsection{The Two Interior Points as Origins}\label{twointpts}
From \cite{2004math......5448N}, we know that $X(\Delta)$'s constructed from reflexive polytopes are Gorenstein Fano, i.e., its anticanonical divisor $K_X$ is Cartier and ample. However, as we have two interior points here, $X(\Delta)$ does not hold this property any more. Actually, since we have two choices of the origin, we can build two compact toric varieties, which may or may not be the same\footnote{Notice that even though we have this choice on the level of the toric 2-fold, the affine 3-fold is the same and hence the gauge theories are the same.}.

For the two $\widetilde{X(\Delta)}$'s built from $\Delta$ to coincide, it is necessary for them to have the same Euler number. As we will discuss in \S\ref{topo}, the Euler number of $\widetilde{X(\Delta)}$ equals to the number of triangles under the triangulation, viz, the number of two-dimensional cones. Hence, this can be checked by counting the numbers of triangles under triangulations. After complete resolutions, we find that there are only 12 polygons that have $\widetilde{X(\Delta)}$'s with different Euler numbers. In terms of the ordering in Appendix \ref{poly45}, they are (2), (4), (10), (12), (15), (18), (19), (23), (37), (38), (39) and (40).

As the two interior points is connected by a straight line, now for simplicity, let us call this line the ``spine'' of the polygon. Since the Euler number is related to triangulation, it is not hard to see that when we have zero or two perimeter points lying on the spine, the two Euler numbers are equal\footnote{Hence, none of the hexagons belongs to the 12 polygons as it has been proven in \cite{WeiDing} that the two interior points of a hexagon must lie on the same diagonal.}. On the other hand, if there is only one perimeter point on the spine, the two complete resolutions would yield different Euler numbers. This is because for these three points on the spine, if the interior point is in the middle (which we will refer to as the ``zeroth-grade'' point), the fan will have rays extending to both of the other two points on the spine. For the other interior point (which we will refer to as the ``first-grade'' point), the fan will only have one ray on the spine. Thus, the zeroth-/first-grade Euler numbers will differ by 1:
\begin{equation}
	\chi_0-\chi_1=1.
\end{equation}
As will be discussed in \S\ref{topo}, the first Chern numbers will then satisfy $C_{1,1}-C_{1,0}=1$ where $C_{1,i}$ denotes the first Chern number of $\widetilde{X_i(\Delta)}$ from the $i^\text{th}$-grade point\footnote{For polytopes with arbitrarily many interior points, the zeroth-grade points will be those which give the largest possible Euler number $n$ while the $m^\text{th}$-grade points will give Euler number ($n-m$).}.

For the remanining 33 polygons who have two zeroth-grade points, it turns out that not only the corresponding Chern numbers of $\widetilde{X(\Delta)}$'s, but also the two Chern classes (and hence the two Euler numbers) are \emph{equal}. For the 12 polygons with first-grade points, consider the complete resolution whose fan has the first-grade point as the apex. If we add another ray opposite to the original ray on the spine, i.e., we further resolve the complete smooth surface, then we will reach a new variety with Euler number $\chi_1'=\chi_1+1=\chi_0$. As a matter of fact, we find that the total Chern classes of $\widetilde{X_0(\Delta)}$ and $\widetilde{X_1'(\Delta)}$ are equal:
\begin{equation}
	c\left(\widetilde{X_1'}\right)=c\left(\widetilde{X_0}\right).
\end{equation}
As an example, the different resolutions of (\ref{p2p}) in \S\ref{p2} is depicted in Fig. \ref{example2}.
\begin{figure}[h]
	\centering
	\begin{subfigure}{0.3\textwidth}
		\includegraphics[width=5cm]{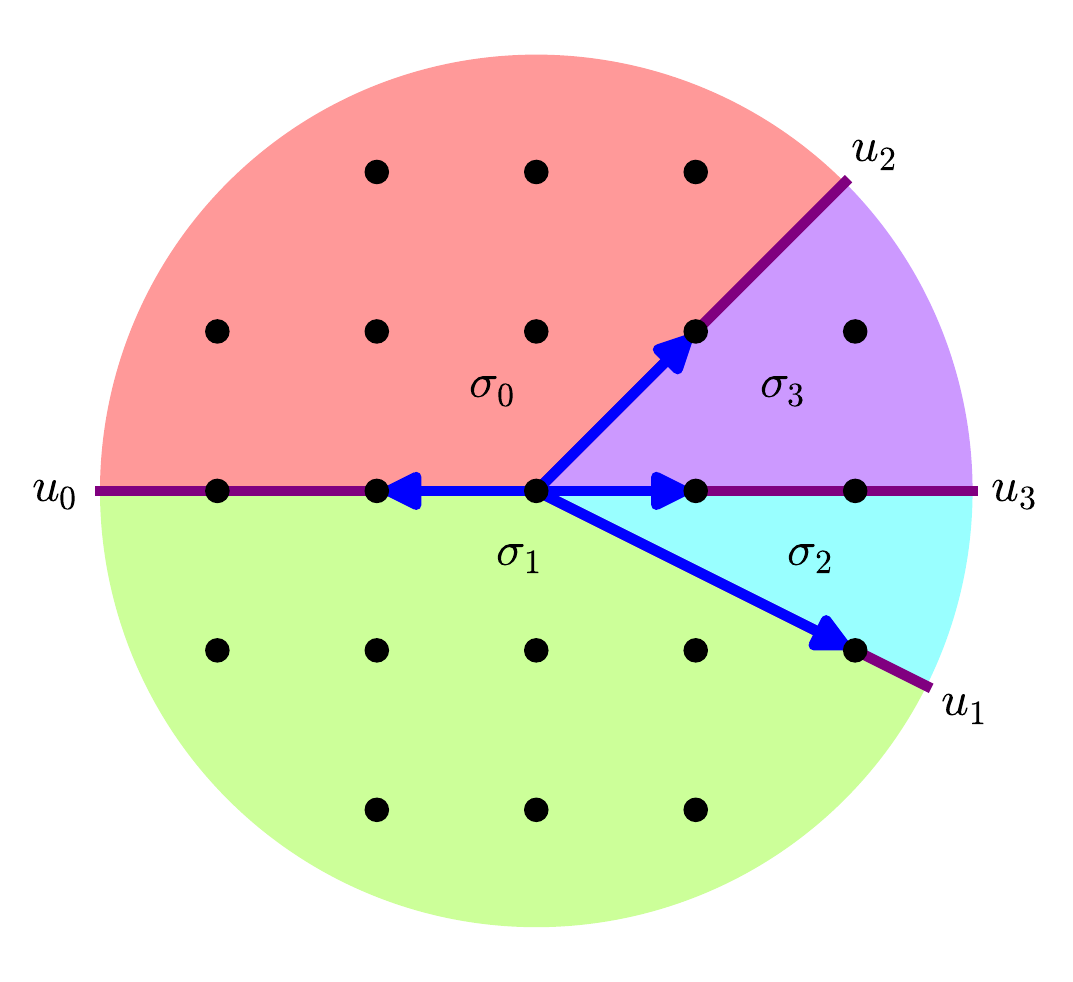}
		\subcaption{}
	\end{subfigure}
	\begin{subfigure}{0.3\textwidth}
		\includegraphics[width=5cm]{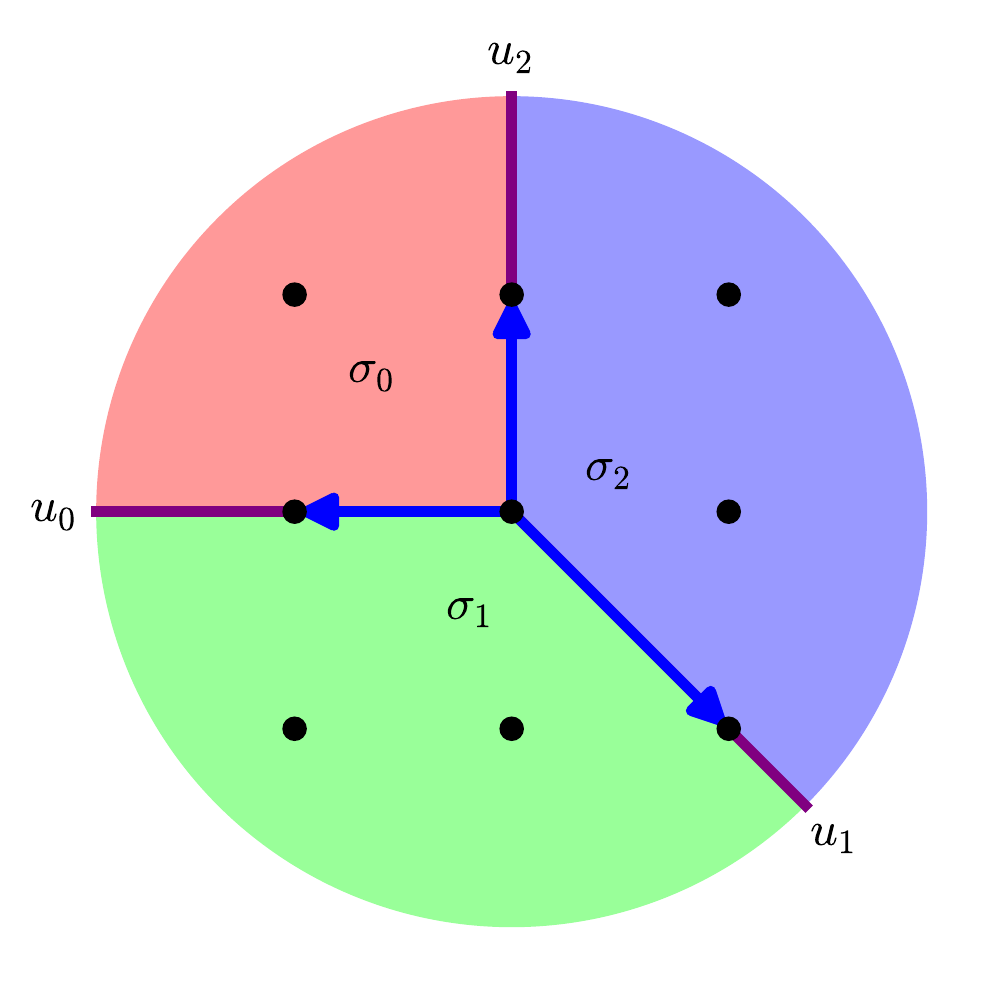}
		\subcaption{}
	\end{subfigure}
	\begin{subfigure}{0.3\textwidth}
		\includegraphics[width=5cm]{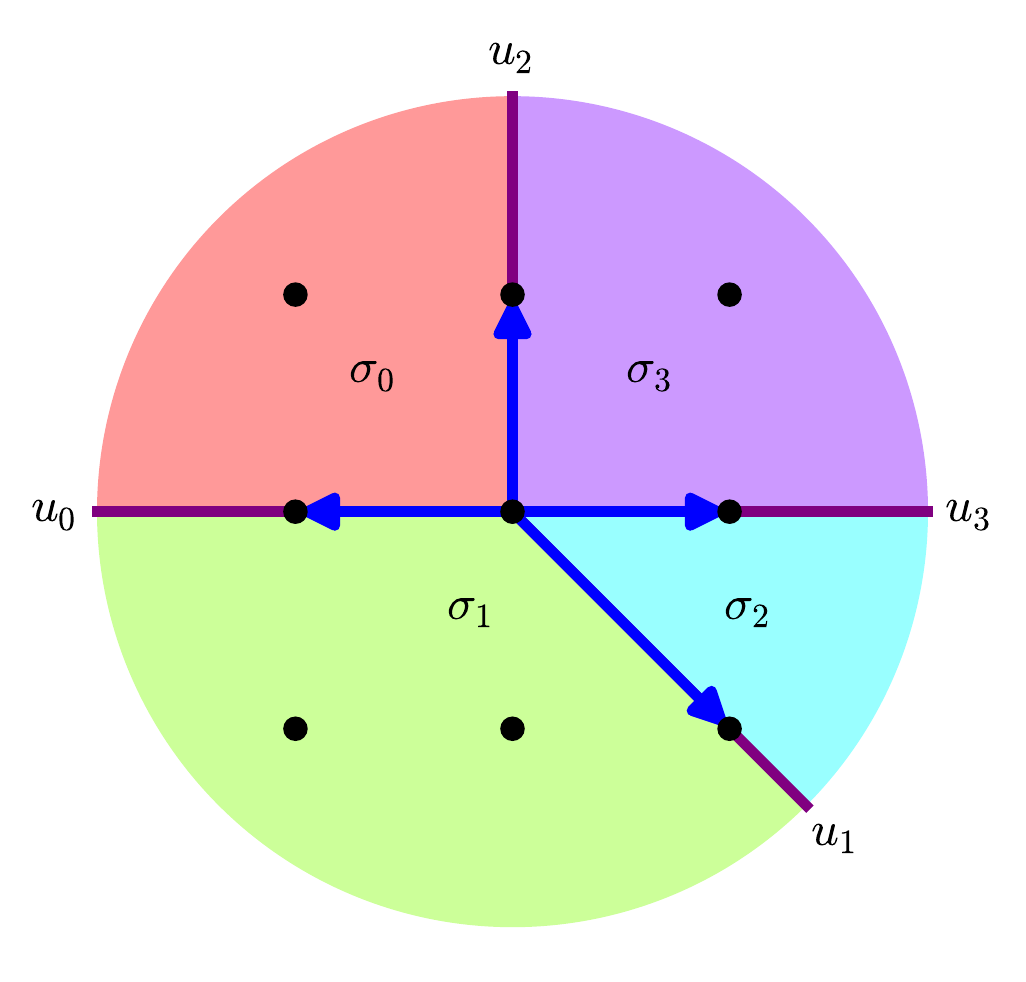}
		\subcaption{}
	\end{subfigure}
	\caption{(a) The complete resolution $\widetilde{X_0}$ is constructed from the zeroth-grade point. The Euler number $\chi_0$ is 4. (b) The toric variety $X_1$ is already smooth, viz, $X_1=\widetilde{X_1}$. The Euler number $\chi_1$ is 3. (c) We make a further blow-up on $X_1$ by adding the ray $u_3=(1,0)$. The new variety $\widetilde{X_1'}$ has Euler number $\chi_1'=4$.}\label{example2}
\end{figure}

It is worth noting that all the 12 polygons with first-grade points can be higgsed from a minimal parent theory which also has a first-grade point (and two zeroth-grade points). This minimal parent theory is
\begin{equation}
	\tikzset{every picture/.style={line width=0.75pt}} 
,\label{spineparent}
\end{equation}
where the blue lines indicate three of the higgsed polygons each from blowing down three points. The remaining 9 can be obtained from these three polygons. Notice that the first-grade point in (\ref{spineparent}) is always higgsed away, and one zeroth-grade point becomes a first-grade point after higgsing. Since these polygons form a poset, we can arrange them into a Hasse diagram\footnote{It is worth noting that recently Hasse diagrams has become a powerful tool to study various geometric spaces, along with magnetic quivers, in theories with 8 supercharges. See, for example, \cite{Cabrera:2016vvv,Cabrera:2017njm,Bourget:2019aer,Bourget:2019rtl,Cabrera:2019dob,Grimminger:2020dmg}.} as in Fig. \ref{hasse1}.
\begin{figure}[h]
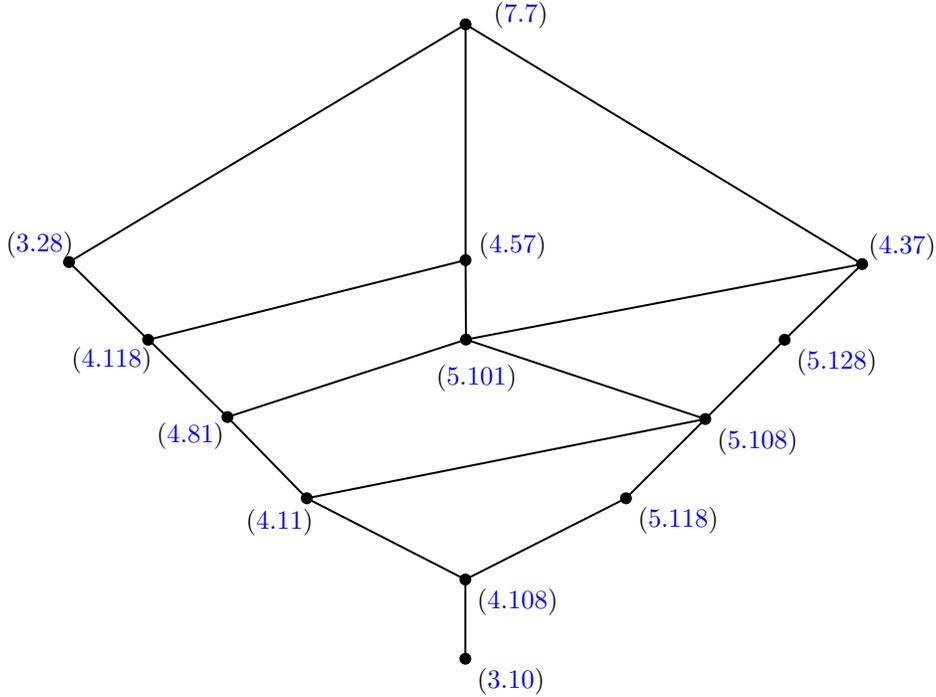

	\centering
	\tikzset{every picture/.style={line width=0.75pt}} 

	\caption{Each point in the Hasse diagram corresponds to a toric diagram, with the corresponding equation number as indicated. Going down along the lines in the Hasse diagram corresponds to the process of higgsing.}\label{hasse1}
\end{figure}

As the first-grade point trivially yields a different $\widetilde{X_1(\Delta)}$ from $\widetilde{X_0(\Delta)}$, we will consider $\widetilde{X_1'(\Delta)}$ which has an extra step of resolution when comparing the two compact smooth complete varieties built from each toric diagram. Since the characteristic classes are always the same for the two varieties, we need a new approach to distinguish them. Our strategy is the same as classifying inequivalent lattice polygons, that is, checking whether the two fans are related by SL(2,$\mathbb{Z}$) transformations (along with translations and reflections)\footnote{ More precisely, as the origin is always the apex of the cones, we have no translations here, and thus the transformations lie in SL(2,$\mathbb{Z}$)$\times\mathbb{Z}_2$.}. One way to see this is to tell whether the vectors ending on the each row/column are properly shifted. Another way is to consider the determinants since all the transformations have determinant $\pm$1 and all the 2$\times$2 matrices with determinant $\pm$1 is such a transformation. Then if we pick out any corresponding pairs of vectors from the two fans, the matrices they form should have the same determinant up to a sign.

It turns out that this can be directly read off from the symmetries of the toric diagrams since we only have one spine (which is a result of always having two interior points). Due to the existence of the unique spine, the vectors above and below the spine should be shifted along \emph{opposite} directions. However, as we are moving from one interior point to the other along the spine, the vectors above and below the spine would always be shifted along the \emph{same} direction. An example is illustrated in Fig. \ref{example2}(a,c).

Hence, reflection or rotation\footnote{Due to reflection, without loss of generality, rotation can be restricted to inversion, viz, rotation by $\pi$.} is necessary to make the two varieties coincide. As a result, \emph{the two $\widetilde{X(\Delta)}$'s are the same iff the lattice polygon (under certain SL(2,$\mathbb{Z}$) transformations) satisfies either of the following two: (1) axially symmetric with respect to the perpendicular bisector of the two interior points; (2) centrosymmetric}\footnote{These two properties then rule out all the toric diagrams with a first-grade point. Even though we further resolve them to make the Chern classes match, we still cannot have same toric varieties.}. Therefore, only 8 out of the 45 toric diagrams give rise to two same $\widetilde{X}(\Delta)$'s. In terms of the ordering in Appendix \ref{poly45}, they are (14), (20), (22), (24), (26), (43), (44) and (45).

Before moving on to the next subsection, let us briefly discuss the smoothness of $X(\Delta)$. Although it is not always the case, some $\Delta$'s still lead to smooth $X(\Delta)$. There are 9 such polygons. In terms of the ordering in Appendix \ref{poly45}, they are (2), (6), (7), (8), (18), (25), (26), (41) and (42). In particular, since (2) and (18) (that is, the toric diagrams in (\ref{p2p}) and (\ref{p18p}), the bottom two points in Fig. \ref{hasse1}) have both zeroth- and first-grade points, only the first-grade points in both of the cases can give smooth varieties directly. The other 7 toric diagrams can all give rise to two smooth complete surfaces without any further resolutions. It is straightforward that all the perimeter points need to be corner points for $X(\Delta)$ to be smooth. If the toric diagram has a first-grade point as well, then the zeroth-grade point cannot yield a smooth $X(\Delta)$.

\subsection{Minimized Volumes and Topological Quantities}\label{topo}
As we have obtained the volume data of the 45 cases in \S\ref{triangles}-\S\ref{hexagons}, we plot $1/V_\text{min}$ against the number of lattice points $N$ in Fig. \ref{vpts}.
\begin{figure}[h]
	\centering
	\begin{tikzpicture}
	\begin{axis}[ymin=4, ymax=14,
	width=0.75\textwidth,
	height=0.5\textwidth,
	ytick={0,2,...,12}, ytick align=inside, ytick pos=left,
	xtick={0,1,...,13}, xtick align=inside, xtick pos=left,
	xlabel=$N$,
	ylabel=$1/V_{\text{min}}$,
	grid=major,
	grid style={dashed, gray!30},
	legend pos=north west,
	legend style={draw=none}]
	\addplot+[
	orange, mark options={orange, scale=0.7},mark=*,
	only marks
	] table [x=x, y=y, col sep=comma] {
		x, y
		6, 6
		5, 5
		8, 8
		10,10
		12,12
	};
    \addlegendentry{Triangles}
	\addplot+[
	blue, mark options={blue, scale=0.7},mark=*,
	only marks
	] table [x=x, y=y, col sep=comma] {
		x, y
		6, 5.386223055
		6, 5.135106022
		6, 5.0625
		6, 5.416149876
		7, 6.06045914
		7, 6.400286733
		8, 6.977332648
		8, 7.130064794
		8, 7.01198348
		8, 6.75
		9, 8.298892928
		9, 7.794228634
		10,8.520259177
		10,9.096907398
		10,8.883282469
		11,10.25856719
		12,10.39230969
		12,11.0771306
		12,10.125
	};
    \addlegendentry{Quadrilaterals}
	\addplot+[
	green, mark options={green, scale=0.7},mark=*,
	only marks
	] table [x=x, y=y, col sep=comma] {
		x, y
		7, 5.594343
		7, 5.805178219
		7, 5.817241718
		8, 6.298974527
		8, 6.434178355
		8, 6.47023047
		8, 6.786599228
		9, 7.348672462
		9, 7.361005808
		9, 7.511229288
		9, 7.188140499
		10,8.21139413
		10,8.593501594
		10,7.954812142
		11,8.895926555
		11,9.414068384
		
	};
    \addlegendentry{Pentagons}
    \addplot+[
    red, mark options={red, scale=0.7},mark=*,
    only marks
    ] table [x=x, y=y, col sep=comma] {
    	x, y
    	8, 6.217861429
    	8, 6.075
    	9, 6.866104104
    	10,7.875442009
    	10,7.59375
    };
    \addlegendentry{Hexagons}
    \addplot[orange,thick,domain=4.8:12.2] (x,x);
	\end{axis}
	\end{tikzpicture}
	\caption{The reciprocals of minimized volumes against the number of lattice points $N$. This is bounded by the straight line $1/V_\text{min}=N$ where the triangles live.}\label{vpts}
\end{figure}
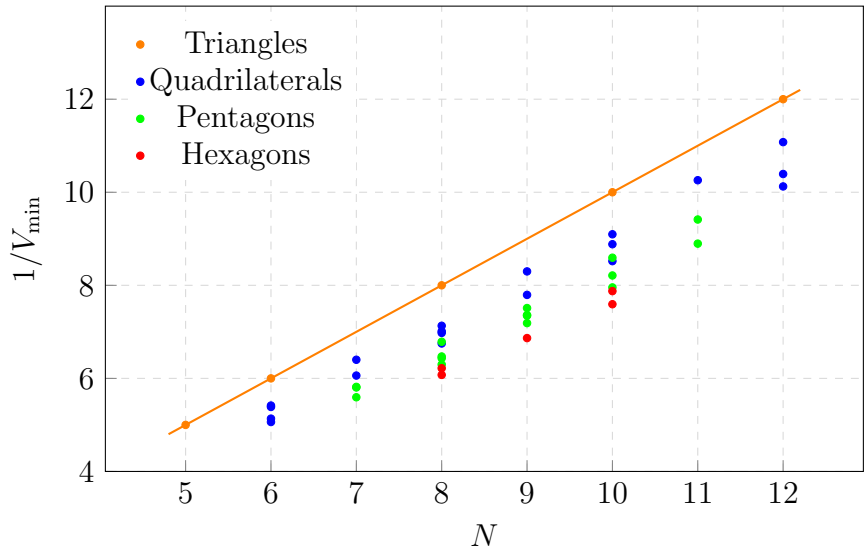

Now we would like to relate the minimized volume functions of Sasaki-Einstein manifolds to the topological quantities of $\widetilde{X(\Delta)}$'s. From \cite{fulton1993introduction,cox2011toric}, we have
\begin{theorem}
	For the smooth projective variety $\widetilde{X(\Delta)}$ of (complex) dimension $n$, the Betti numbers satisfy
	\begin{equation}
		b_{2k-1}=0,\ b_{2k}=\sum_{i=k}^n(-1)^{i-k}{i\choose k}d_{n-i},
	\end{equation}
	where $k=0,1,\dots,n$ and $d_j$ is the number of $j$-dimensional cones in $\widetilde{\Delta}$. As the Euler number $\chi=\sum_{i=0}^n(-1)^ib_i$, then
	\begin{equation}
		\chi=d_n.
	\end{equation}
\end{theorem}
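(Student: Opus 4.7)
The plan is to combine two standard ingredients from toric geometry: the torus-orbit stratification and the Białynicki--Birula cell decomposition of $\widetilde{X(\Delta)}$. First I would use the orbit-cone correspondence (see e.g.\ \cite{cox2011toric}): for each cone $\sigma \in \widetilde{\Delta}$ of dimension $i$ there is a $T$-orbit $O_\sigma \cong (\mathbb{C}^*)^{n-i}$, and $\widetilde{X(\Delta)} = \bigsqcup_\sigma O_\sigma$. This gives the Euler-number claim almost immediately: since the compactly-supported Euler characteristic is additive over locally closed stratifications and $\chi_c((\mathbb{C}^*)^{n-i}) = 0$ unless $n-i=0$, only the top-dimensional cones contribute and $\chi(\widetilde{X(\Delta)}) = d_n$.

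For the Betti-number statement I would next exploit the fact that $\widetilde{X(\Delta)}$, being smooth and complete (and in fact projective after a small perturbation of the fan), carries an action of a generic one-parameter subgroup $\lambda:\mathbb{C}^*\to T$ whose fixed locus is exactly the set of torus-fixed points, one for each $n$-dimensional cone $\tau\in\widetilde{\Delta}$. The Białynicki--Birula theorem then produces a decomposition of $\widetilde{X(\Delta)}$ into complex affine cells $C_\tau \cong \mathbb{C}^{k(\tau)}$, where $k(\tau)$ is the number of rays of $\tau$ lying in the open half-space $\{\langle\lambda,\cdot\rangle>0\}$. Because all cells are even-dimensional, the cellular chain complex has trivial boundary maps and
\begin{equation}
b_{2k-1}=0,\qquad b_{2k}=\#\{\tau\in\widetilde{\Delta}_n : k(\tau)=k\}=:h_k,
\end{equation}
i.e.\ the Betti numbers equal the components of the $h$-vector of $\widetilde{\Delta}$.

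The remaining step is purely combinatorial: I must show the $h$-vector and $f$-vector ($f_j := d_j$) of the simplicial complete fan $\widetilde{\Delta}$ are related by the generating-function identity
\begin{equation}
\sum_{k=0}^{n} h_k\,t^k \;=\; \sum_{i=0}^{n} f_{n-i}\,(t-1)^{i},
\end{equation}
which is classical for simplicial polytopal fans. Expanding $(t-1)^i$ binomially and reading off coefficients gives $f_{n-i} = \sum_{k\geq i}\binom{k}{i} h_k$. Möbius inversion against the lower-triangular matrix $\binom{k}{i}$ then inverts this relation into
\begin{equation}
h_k \;=\; \sum_{i=k}^{n}(-1)^{i-k}\binom{i}{k}\,f_{n-i},
\end{equation}
which is precisely the claimed formula for $b_{2k}$. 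Setting $k=0$ recovers $\chi=\sum_k b_{2k}=\sum_i(-1)^i d_{n-i}$, and a short combinatorial check (or simply the Euler-stratification argument above) collapses this alternating sum to $d_n$.

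The main obstacle is the combinatorial step relating $h_k$ to the $f_{n-i}$; the geometric inputs (orbit decomposition, Białynicki--Birula, additivity of $\chi_c$) are essentially off-the-shelf once smoothness and completeness of $\widetilde{X(\Delta)}$ from \S\ref{XDelta} are invoked. One subtlety worth flagging is that Białynicki--Birula as usually stated requires projectivity; when $\widetilde{X(\Delta)}$ is only complete, one may either use a small projective modification or appeal to the equivariant formality of complete smooth toric varieties to obtain the same vanishing of odd Betti numbers and identification of $b_{2k}$ with $h_k$.
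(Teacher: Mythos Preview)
The paper does not prove this theorem; it simply quotes it from \cite{fulton1993introduction,cox2011toric} and immediately uses it. So there is no ``paper's own proof'' to compare against beyond the cited references.

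Your argument is essentially the standard textbook proof found in those references (Fulton \S4.5, Cox--Little--Schenck \S12.3): orbit decomposition for $\chi$, Bia\l{}ynicki--Birula for the vanishing of odd Betti numbers and the identification $b_{2k}=h_k$, and then the $h$-vector/$f$-vector relation for simplicial fans. One small remark: the generating-function identity you write already gives $h_k=\sum_{i\ge k}(-1)^{i-k}\binom{i}{k}f_{n-i}$ directly by expanding $(t-1)^i$ and reading off the $t^k$ coefficient, so the detour through the inverse relation and M\"obius inversion is unnecessary. The projectivity caveat you flag is handled in the references exactly as you suggest, and since the theorem as stated assumes $\widetilde{X(\Delta)}$ is projective, it is not an issue here.
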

This verifies our statement that the Euler number is the number of triangles under the triangulation used in \S\ref{twointpts}. Then
\begin{corollary}
	For the lattice polygons, we have
	\begin{equation}
		b_0=b_4=1,\ b_1=b_3=0,\ b_2=d_1-2d_0=d_1-2=\chi-2.
	\end{equation}
	Since $b_k=\sum_{i=0}^kh^{i,k-i}$, we get
	\begin{eqnarray}
		\chi&=&\sum_{r,s}(-1)^{r+s}h^{r,s}\nonumber\\
		&=&h^{2,2}+h^{2,0}+h^{1,1}+h^{0,2}+h^{0,0}\nonumber\\
		&=&2+2h^{2,0}+h^{1,1}.
	\end{eqnarray}
\end{corollary}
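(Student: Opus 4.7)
The plan is to specialize the theorem above to complex dimension $n=2$ and read off each Betti number directly, then feed the results into the Hodge decomposition of a smooth projective surface.

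First, I would invoke the Betti number formula of the theorem. The vanishing of odd Betti numbers immediately gives $b_1 = b_3 = 0$. For the even ones, taking $k=0,1,2$ with $n=2$ yields $b_0 = d_2 - d_1 + d_0$, $b_2 = d_1 - 2 d_0$, and $b_4 = d_0$. Since the only $0$-dimensional cone of the fan $\Sigma(\Delta)$ is the apex $\{0\}$, I have $d_0 = 1$, which at once gives $b_4 = 1$ and $b_2 = d_1 - 2$. To close the loop I would argue, either from connectedness of $\widetilde{X(\Delta)}$ (forcing $b_0 = 1$ and hence $d_2 - d_1 + 1 = 1$) or from completeness of the fan in dimension $2$ (each ray lies on the boundary of exactly two maximal cones and vice versa), that $d_1 = d_2$. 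Combined with $\chi = d_n = d_2$ from the theorem, this produces $b_2 = \chi - 2$.

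Second, for the Hodge content, I would use the Hodge decomposition $b_k = \sum_{r+s=k} h^{r,s}$ and $\chi = \sum_{r,s}(-1)^{r+s} h^{r,s}$, available because $\widetilde{X(\Delta)}$ is smooth projective. The vanishings $b_1 = b_3 = 0$ and non-negativity of Hodge numbers force $h^{1,0} = h^{0,1} = h^{2,1} = h^{1,2} = 0$, while $h^{0,0} = h^{2,2} = 1$ comes from $\widetilde{X(\Delta)}$ being a connected compact complex surface (equivalently, $b_0 = b_4 = 1$ from the first step). Substituting these values into the Euler characteristic sum and collapsing $h^{2,0} = h^{0,2}$ by complex-conjugation symmetry $h^{p,q}=h^{q,p}$ produces $\chi = 2 + 2 h^{2,0} + h^{1,1}$, exactly as claimed.

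There is essentially no obstacle beyond careful bookkeeping: the corollary is a direct specialization, and the only nontrivial geometric input is that the RS-triangulated fan used throughout \S\ref{XDelta} is complete (guaranteeing $d_1 = d_2$) and that the resulting variety is smooth projective (guaranteeing the Hodge decomposition). Both properties are built into the setup of complete resolutions described earlier, so the proof reduces to substitution and symmetry.
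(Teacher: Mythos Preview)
Your proposal is correct and follows exactly the route the paper implicitly intends: the corollary is stated without a separate proof precisely because it is a direct specialization of the preceding theorem to $n=2$, together with the standard Hodge decomposition for smooth projective surfaces. Your bookkeeping on $d_0=1$ and $d_1=d_2$ (from completeness of the fan) is the natural way to fill in the steps the paper leaves to the reader.
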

In fact, we find that the dimension of the K\"{a}hler cone over $\widetilde{X(\Delta)}$ is always $\chi-2$. Thus,
\begin{equation}
	h^{2,2}=h^{0,0}=1,\ h^{2,0}=h^{0,2}=0,\ h^{1,1}=\chi-2.
\end{equation}
The vanishing $h^{2,0}$($h^{0,2}$) shows that there is no global sections to the (anti-)canonical bundle. Then the only remaining interesting Hodge number $h^{1,1}$ is determined by the Euler number. As we are now going to see, the (first) Chern number is also determined by the Euler number.

For surfaces, we have two Chern numbers: $C_1=\int_{\widetilde{X}}c_1^2$ and $C_2=\int_{\widetilde{X}}c_2=\chi$. In Fig. \ref{vchernchi}, we plot $1/V_\text{min}$ against the first and second Chern numbers respectively, following the strategy of \cite{He:2017gam}.
\begin{figure}[h]
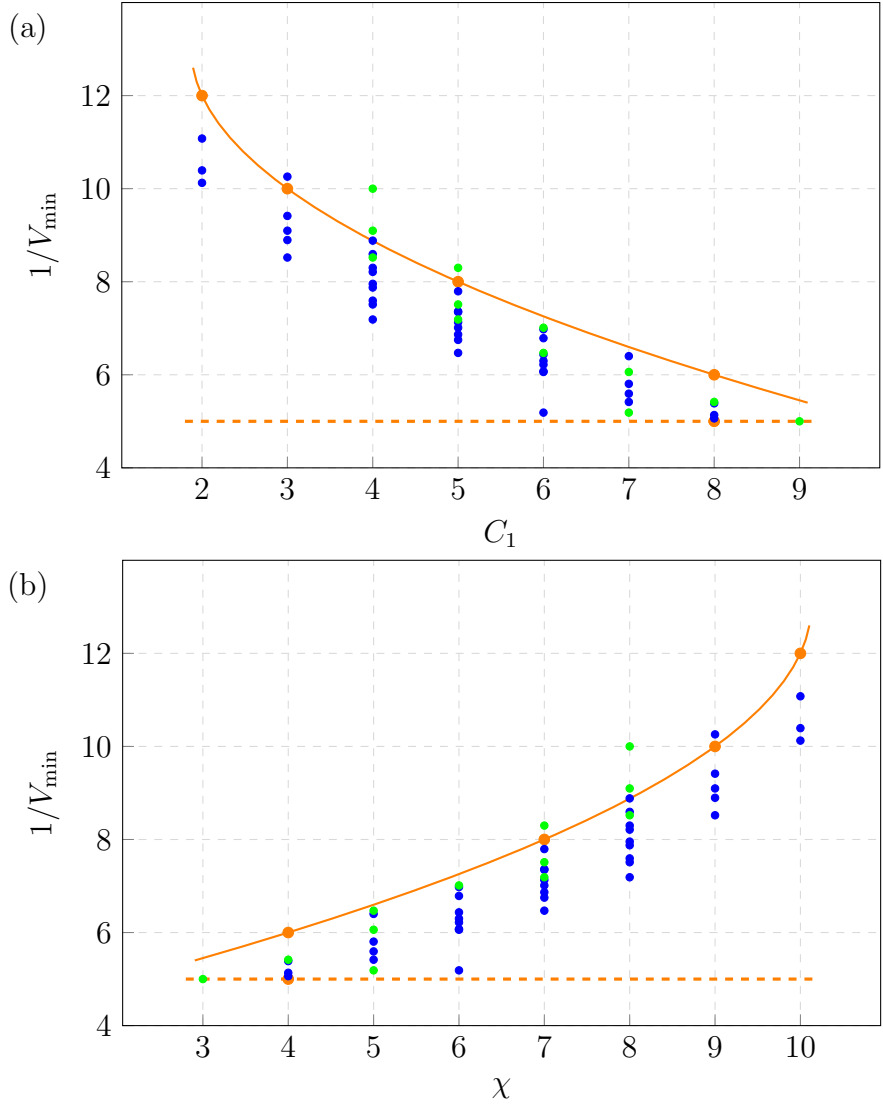

	\centering

	\caption{The green points correspond to $\widetilde{X(\Delta)}$ built from first-grade points. The varieties (from zeroth-grade points) of triangles are in orange.}\label{vchernchi}
\end{figure}
First of all, putting the two graphs together, we can see that the two sets of points are symmetric with respect to $x=6$. Indeed, we find
\begin{proposition}
	For a smooth complete toric surface $\widetilde{X}$, we have
	\begin{equation}
		C_1+\chi=12.\label{twelve}
	\end{equation}
\end{proposition}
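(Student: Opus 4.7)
The plan is to recognise (\ref{twelve}) as an immediate consequence of Noether's formula for a smooth projective surface, namely
\[
\chi(\mathcal{O}_{\widetilde{X}}) \;=\; \frac{1}{12}\left(\int_{\widetilde{X}} c_1^2 + \int_{\widetilde{X}} c_2\right) \;=\; \frac{C_1+\chi}{12},
\]
combined with the fact that a smooth complete toric surface is $\mathcal{O}$-rigid, i.e.\ $\chi(\mathcal{O}_{\widetilde{X}})=1$. With those two inputs, (\ref{twelve}) is instantaneous, so the real content is the computation of $\chi(\mathcal{O}_{\widetilde{X}})$.

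For the latter I would invoke the standard vanishing on smooth complete toric varieties: $h^{p,0}(\widetilde{X})=0$ for every $p\ge 1$ (see \cite{fulton1993introduction,cox2011toric}). This is essentially what was already tabulated in the excerpt just above the proposition: we observed that $b_1=b_3=0$ and $h^{2,0}=h^{0,2}=0$, so the Hodge decomposition $b_k=\sum_{i+j=k}h^{i,j}$ forces $h^{0,1}=h^{1,0}=0$ as well. Combined with $h^{0,0}=1$, this yields
\[
\chi(\mathcal{O}_{\widetilde{X}}) \;=\; h^{0,0}-h^{0,1}+h^{0,2} \;=\; 1,
\]
from which (\ref{twelve}) follows.

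A second, purely combinatorial route would be induction over toric blowups: every smooth complete toric surface is obtained from $\mathbb{P}^2$ or a Hirzebruch surface $\mathbb{F}_n$ ($n\ge 0$) by a finite sequence of toric blowups at torus-fixed points, each of which inserts one new ray in the fan, raises $\chi$ by $1$, and lowers $C_1$ by $1$, so $C_1+\chi$ is invariant. The base cases $\mathbb{P}^2$ ($C_1=9,\ \chi=3$) and $\mathbb{F}_n$ ($C_1=8,\ \chi=4$) both give $12$. The main obstacle in either approach is essentially the same, and it is not computational: one has to justify a general result of toric geometry — either the Hodge-type vanishing $h^{p,0}=0$ in the first route, or the structure theorem that any smooth complete $2$-dimensional fan is reached from that of $\mathbb{P}^2$ or $\mathbb{F}_n$ by star-subdivisions in the second. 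Both are classical, but they are what carries the content of the proposition; after that, the identity $12$ is just arithmetic.
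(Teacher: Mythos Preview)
Your first route is essentially the paper's own proof: the paper invokes HRR with $D=0$ (which is precisely Noether's formula on a surface) and then uses Demazure vanishing to get $\chi(\mathcal{O}_{\widetilde{X}})=1$, exactly as you do via the $h^{p,0}=0$ statement. Your second, combinatorial route via the minimal-model theorem for smooth complete toric surfaces and invariance of $C_1+\chi$ under blowup is a genuinely different argument that the paper does not give; it trades the cohomological input (Demazure/Hodge vanishing) for the structure theorem on $2$-dimensional fans, and has the virtue of being entirely elementary once that classification is granted.
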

To prove this, we need the Hirzebruch-Riemann-Roch (HRR) theorem \cite{cox2011toric}:
\begin{theorem}
	Let D be a divisor of $\widetilde{X}$ and $\mathcal{O}_{\widetilde{X}}(D)$ denote the sheaf of it, then
	\begin{equation}
		\chi\left(\mathcal{O}_{\widetilde{X}}(D)\right)=\int_{\widetilde{X}}\textup{ch}\left(\mathcal{O}_{\widetilde{X}}(D)\right)\textup{Td}\left(\widetilde{X}\right).
	\end{equation}
\end{theorem}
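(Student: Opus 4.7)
The plan is to reduce the statement to two more tractable pieces on a surface and then verify each for a smooth complete toric surface, where everything admits an explicit combinatorial description in terms of the fan $\Sigma$ and the torus-invariant data.

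First I would unpack the right-hand side on a surface. Since $\dim_{\mathbb{C}}\widetilde{X}=2$, the Chern character truncates as $\operatorname{ch}(\mathcal{O}_{\widetilde{X}}(D))=1+D+\tfrac{1}{2}D^{2}$, and the Todd class is $\operatorname{Td}(\widetilde{X})=1+\tfrac{1}{2}c_{1}+\tfrac{1}{12}(c_{1}^{2}+c_{2})$, where $c_{1}=-K_{\widetilde{X}}$ and $c_{2}$ is the top Chern class whose integral is $\chi$. Keeping only degree-$2$ contributions in the integral reduces the desired identity to the two-step form
\begin{equation}
\chi(\mathcal{O}_{\widetilde{X}}(D))=\tfrac{1}{2}D\cdot(D-K_{\widetilde{X}})+\chi(\mathcal{O}_{\widetilde{X}}),\qquad \chi(\mathcal{O}_{\widetilde{X}})=\tfrac{1}{12}\bigl(K_{\widetilde{X}}^{2}+\chi\bigr).
\end{equation}
The first equation is the classical Riemann--Roch formula for surfaces and the second is Noether's formula; together they are equivalent to the HRR claim.

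Next, I would establish the surface Riemann--Roch formula by the standard reduction through the Grothendieck group. Both sides are additive on short exact sequences: the left by the long exact sequence in cohomology, the right because $\operatorname{ch}$ is a ring homomorphism from $K(\widetilde{X})$ to $H^{*}(\widetilde{X},\mathbb{Q})$ and $D\cdot(D-K_{\widetilde{X}})$ is quadratic on $\operatorname{Pic}(\widetilde{X})$ in a way that behaves additively once one writes $D=D'-D''$ with $D',D''$ effective and applies the ideal sheaf sequence $0\to\mathcal{O}(-C)\to\mathcal{O}\to\mathcal{O}_{C}\to0$ on each effective piece, combined with the genus formula $2g_{C}-2=C\cdot(C+K_{\widetilde{X}})$ on the smooth curve obtained after resolving. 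The identity for $D=0$ is a tautology, so the induction on the number of effective pieces closes the argument.

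The main obstacle is Noether's formula $12\chi(\mathcal{O}_{\widetilde{X}})=K_{\widetilde{X}}^{2}+\chi$, because on a general surface it rests on Hodge theory or Hirzebruch's signature theorem. For a smooth complete toric surface, however, I would give a direct combinatorial verification. Write $\Sigma(1)=\{\rho_{1},\dots,\rho_{d}\}$ with primitive generators $u_{i}$, and recall that the torus-invariant divisors $D_{i}$ satisfy $-K_{\widetilde{X}}=\sum_{i}D_{i}$ with the standard relations $D_{i}\cdot D_{j}=1$ if $\rho_{i},\rho_{j}$ span a cone and $D_{i}^{2}=-b_{i}$ where $u_{i-1}+u_{i+1}=b_{i}u_{i}$. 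Since smoothness of the toric surface forces $h^{0}(\mathcal{O}_{\widetilde{X}})=1$ and $h^{i}(\mathcal{O}_{\widetilde{X}})=0$ for $i>0$ (toric varieties are rational, so all higher sheaf cohomologies of the structure sheaf vanish), one has $\chi(\mathcal{O}_{\widetilde{X}})=1$ automatically. It then suffices to show $K_{\widetilde{X}}^{2}+\chi=12$, which is exactly the identity $C_{1}+\chi=12$ established as the preceding proposition in the excerpt. The Euler characteristic side uses $\chi=d_{2}=d$ (every smooth complete toric surface is obtained from $\mathbb{P}^{2}$ or a Hirzebruch surface by a sequence of blow-ups, which change $(K^{2},\chi)$ by $(-1,+1)$ and hence preserve the sum).

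Finally, I would assemble the pieces: having both surface Riemann--Roch and Noether's formula, I substitute back into the expansion of $\int_{\widetilde{X}}\operatorname{ch}(\mathcal{O}_{\widetilde{X}}(D))\operatorname{Td}(\widetilde{X})$ to recover $\chi(\mathcal{O}_{\widetilde{X}}(D))$, completing the proof. The delicate point to be careful about is the interplay between the intersection-theoretic definition of $D^{2}$ and the cohomological Euler characteristic after blowing up singular strata; in the toric setting this is transparent because the resolution $\widetilde{X(\Delta)}$ used in \S\ref{twointpts} is built by adding rays, and each such modification acts predictably on both $K^{2}$ and $\chi$.
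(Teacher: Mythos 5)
The paper does not actually prove this statement: it is quoted as the Hirzebruch--Riemann--Roch theorem with a citation to Cox--Little--Schenck, and is then used as the black-box input for the proposition $C_1+\chi=12$ (by setting $D=0$ and invoking Demazure vanishing). Your proposal therefore attempts something the paper never does, and your overall route --- expanding the top-degree part of $\operatorname{ch}(\mathcal{O}_{\widetilde{X}}(D))\operatorname{Td}(\widetilde{X})$ to split HRR into surface Riemann--Roch plus Noether's formula, proving the former by additivity on short exact sequences together with adjunction and Riemann--Roch on curves, and verifying the latter directly for toric surfaces --- is a legitimate and more self-contained strategy.

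There is, however, one genuine logical flaw you must repair. You justify Noether's formula by appealing to ``the identity $C_1+\chi=12$ established as the preceding proposition in the excerpt.'' But in the paper that proposition is \emph{deduced from} the HRR theorem you are trying to prove, so as written your argument is circular. Your parenthetical remark already contains the fix: the classification of smooth complete toric surfaces as iterated torus-equivariant blow-ups of $\mathbb{P}^2$ or a Hirzebruch surface, together with the fact that a point blow-up changes $(K_{\widetilde{X}}^2,\chi)$ by $(-1,+1)$ and the base-case values $(9,3)$ and $(8,4)$, gives $K_{\widetilde{X}}^2+\chi=12$ independently of HRR. Promote that to the main argument and delete the reference to the proposition. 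With that change, and granting the standard inputs you invoke (rationality of toric varieties so that $\chi(\mathcal{O}_{\widetilde{X}})=1$, Bertini to write $D$ as a difference of smooth very ample divisors rather than merely effective ones, and Riemann--Roch on curves), the proof goes through for smooth complete toric surfaces, which is the only case the paper needs.
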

Therefore, we are able to prove (\ref{twelve}):
\begin{proof}
	Take $D=0$ such that ch$\left(\mathcal{O}_{\widetilde{X}}\right)=1$. Then by HRR theorem,
	\begin{eqnarray}
		\chi\left(\mathcal{O}_{\widetilde{X}}\right)&=&\int_{\widetilde{X}}\text{Td}\left(\widetilde{X}\right)\nonumber\\
		&=&\int_{\widetilde{X}}\left(1+\frac{1}{2}c_1+\frac{1}{12}(c_1^2+c_2)\right)\nonumber\\
		&=&0+\int_{\widetilde{X}}\left(\frac{1}{12}(c_1^2+c_2)\right)=\frac{1}{12}\left(C_1+\chi\right).
	\end{eqnarray}
	Since $X$ is smooth and complete, by Demazure vanishing \cite{cox2011toric},
	\begin{equation}
		\chi\left(\mathcal{O}_{\widetilde{X}}\right)=\text{dim}H^0\left({\widetilde{X}},\mathcal{O}_{\widetilde{X}}\right)-\text{dim}H^1\left({\widetilde{X}},\mathcal{O}_{\widetilde{X}}\right)+\text{dim}H^2\left({\widetilde{X}},\mathcal{O}_{\widetilde{X}}\right)=1-0+0=1.
	\end{equation}
	Thus, $C_1+\chi=12$.
\end{proof}
This would yield many other interesting identities. For instance,
\begin{corollary}
	For a smooth complete toric surface $\widetilde{X}$, we have
	\begin{equation}
	C_1-\chi+2=\int_{\widetilde{X}}\textup{ch}\left(\widetilde{X}\right)\textup{Td}\left(\widetilde{X}\right).
	\end{equation}
\end{corollary}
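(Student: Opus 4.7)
The plan is to apply the Hirzebruch--Riemann--Roch (HRR) theorem directly to the tangent bundle $T\widetilde{X}$, reading $\textup{ch}(\widetilde{X})$ as $\textup{ch}(T\widetilde{X})$ in line with the Nomenclature. HRR in its vector-bundle form reads
\begin{equation}
\chi(\widetilde{X},T\widetilde{X})=\int_{\widetilde{X}}\textup{ch}(T\widetilde{X})\,\textup{Td}(\widetilde{X}),
\end{equation}
so the whole task reduces to evaluating the right-hand side in terms of the Chern numbers $C_1=\int c_1^2$ and $\chi=\int c_2$, and then eliminating one of them using the identity $C_1+\chi=12$ just proven in the preceding Proposition.

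Concretely, the next step is to expand the two characteristic classes for a rank-two bundle on a surface,
\begin{equation}
\textup{ch}(T\widetilde{X})=2+c_1+\tfrac{1}{2}(c_1^{2}-2c_2),\qquad \textup{Td}(\widetilde{X})=1+\tfrac{1}{2}c_1+\tfrac{1}{12}(c_1^{2}+c_2),
\end{equation}
multiply them, and retain only the top-degree (degree-two) piece relevant for integration over the complex surface. A short computation gives $\textup{ch}\,\textup{Td}\big|_{\deg 2}=\tfrac{7}{6}c_1^{2}-\tfrac{5}{6}c_2$, and hence
\begin{equation}
\int_{\widetilde{X}}\textup{ch}(T\widetilde{X})\,\textup{Td}(\widetilde{X})=\tfrac{7}{6}C_1-\tfrac{5}{6}\chi.
\end{equation}
This step is entirely mechanical: the formulas for the Chern character of a rank-two bundle and the Todd class of a surface are standard, so I do not anticipate any real obstacle.

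The remaining work is to check the algebraic identity $\tfrac{7}{6}C_1-\tfrac{5}{6}\chi=C_1-\chi+2$. Rearranging, this is equivalent to $\tfrac{1}{6}(C_1+\chi)=2$, i.e. $C_1+\chi=12$, which is precisely the statement of the preceding Proposition. In other words, the corollary is nothing but the HRR evaluation of $\chi(\widetilde{X},T\widetilde{X})$, combined with the toric Noether-type identity $C_1+\chi=12$. The only conceptual subtlety, and thus the main thing to justify carefully, is the interpretation of $\textup{ch}(\widetilde{X})$ as $\textup{ch}(T\widetilde{X})$; once this convention is fixed, the proof is a two-line computation.
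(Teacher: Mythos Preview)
Your proof is correct and follows essentially the same route as the paper: expand $\textup{ch}(T\widetilde{X})\,\textup{Td}(\widetilde{X})$, integrate to get $\tfrac{7}{6}C_1-\tfrac{5}{6}\chi$, and then invoke $C_1+\chi=12$ to rewrite this as $C_1-\chi+2$. The only extra ingredient you add is the identification of the right-hand side with $\chi(\widetilde{X},T\widetilde{X})$ via HRR, which the paper does not make explicit but is a natural and helpful remark.
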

\begin{proof}
	We start from the RHS:
	\begin{eqnarray}
		\int_{\widetilde{X}}\text{ch}\left(\widetilde{X}\right)\text{Td}\left(\widetilde{X}\right)&=&\int_{\widetilde{X}}\left(2+c_1+\frac{1}{2}c_1^2-c_2\right)\left(1+\frac{1}{2}c_1+\frac{1}{12}(c_1^2+c_2)\right)\nonumber\\
		&=&\frac{7}{6}C_1-\frac{5}{6}\chi\nonumber\\
		&=&C_1+\frac{1}{6}C_1-\chi+\frac{1}{6}\chi\nonumber\\
		&=&C_1-\chi+2,
	\end{eqnarray}
	where in the last equality, we have used $C_1+\chi=12$.
\end{proof}
Henceforth, we will solely plot the graph of minimized volumes with Euler numbers as all the other topological quantities discussed here give no new information.

It is conjectured in \cite{He:2017gam} that the lower bound of minimized volumes is $1/\chi$, and the bound is saturated when $\mathcal{X}$ is an abelian orbifold of $\mathbb{C}^3$ for reflexive polytopes in any dimensions. However, as we can see from Fig. \ref{vchernchi}, $1/V_\text{min}$ can be greater than the Euler number. Furthermore, the volumes of triangles do not form a lower bound any more\footnote{However, we should emphasize that such bound may still be true for reflexive polytopes in any dimension, though we do not have available data to test this.}. There are two cases (13 and 17) that are above the orange curve even if we ignore the green points. Nevertheless, we still find the orange curve seems to follow some pattern. For reflexive cases, such curve would be $\chi=1/V_\text{min}$ as this is the bound mentioned above. For the cases with two interior points, the curve is
\begin{equation}
	\chi=\frac{1}{8}\left(14-\frac{1}{V_\text{min}}\right)\left(12-\frac{1}{V_\text{min}}\right)+2.
\end{equation}
We suspect that for polygons with arbitrarily many interior points, such curves would follow some specific pattern.

On the other hand, the upper bounds of minimized volumes for reflexive cases in any dimensions are fibrations of dP$_3$ \cite{He:2017gam}. Here, for polygons with two interior points, we find that the upper bound is $\mathbb{C}^3/\mathbb{Z}_5$ (1,2,2), which is the only $\mathbb{C}^3$ orbifold not on the orange curve.

It is conjectured in \cite{He:2017gam} that the bounds of the minimized volumes for toric CY $n$-folds $\mathcal{X}$ with reflexive ($n-1$)-dimensional polytopes as the toric diagrams are
\begin{equation}
	\frac{1}{\chi}\leq V_\text{min}\leq m_n\int c_1^{n-1},
\end{equation}
where $m_3\sim3^{-3}$, $m_4\sim4^{-4}$ and $m_n>m_{n+1}$. We have already seen that the first inequality does not hold for non-reflexive cases (while the second one still holds here). In Fig. \ref{vchi}, we plot the $\chi$-$1/V_\text{min}$ diagram again. It is obvious that the area bounded by $1/V_\text{min}=\chi/m_3$ and $1/V_\text{min}=(12-\chi)/m_3$ is much larger than the region where our data points live.
\begin{figure}[h]
	\centering
	\begin{tikzpicture}
	\begin{axis}[ymin=4, ymax=14,
	width=0.75\textwidth,
	height=0.5\textwidth,
	ytick={0,2,...,12}, ytick align=inside, ytick pos=left,
	xtick={0,1,...,10}, xtick align=inside, xtick pos=left,
	xlabel=$\chi$,
	ylabel=$1/V_{\text{min}}$,
	grid=major,
	grid style={dashed, gray!30}]
	\addplot+[
	violet, mark options={red, scale=0.7},mark=*,
	only marks
	] table [x=x, y=y, col sep=comma] {
		x, y
		10, 12
		9, 10.25856719
		8, 8.883282469
		7, 8
		6, 6.977332648
		5, 6.400286733
		4, 6
		9, 10
		7, 7.794228634
		6, 6.786599228
	};
	\addplot+[
	blue, mark options={blue, scale=0.7},mark=*,
	only marks
	] table [x=x, y=y, col sep=comma] {
		x, y
		4, 5
		4, 5.386223055
		4, 5.135106022
		4, 5.0625
		10, 10.39230969
		9, 8.520259177
		9, 9.096907398
		7, 7.130064794
		7, 7.01198348
		5, 5.416149876
		8, 8.298892928
		10, 11.0771306
		10, 10.125
		6, 6.06045914
		7, 6.75
		5, 5.805178219
		5, 5.594343
		6, 6.434178355
		6, 6.298974527
		7, 7.348672462
		8, 8.593501594
		9, 9.414068384
		8, 8.21139413
		7, 7.361005808
		9, 8.895926555
		8, 7.954812142
		8, 7.511229288
		7, 6.47023047
		6, 5.817241718
		8, 7.188140499
		6, 6.217861429
		7, 6.866104104
		8, 7.875442009
		6, 6.075
		8, 7.59375
	};
    \addplot+[
    green, mark options={green, scale=0.7},mark=*,
    only marks
    ] table [x=x, y=y, col sep=comma] {
    x, y
    3, 5
    4, 5.416149876
    5, 6.06045914
    6, 7.01198348
    7, 8.298892928
    8, 10
    7, 7.511229288
    6, 6.47023047
    5, 5.187241718
    7, 7.188140499
    8, 8.520259177
    8, 9.096907398
    };
	\end{axis}
	\end{tikzpicture}
	\caption{}\label{vchi}
\end{figure}
Hence, it is possible that we may extend the above conjecture to
\begin{equation}
\frac{1}{\chi}\leq V_\text{min}/m_n\leq\int c_1^{n-1}
\end{equation}
for non-reflexive polytopes\footnote{Since the bounded region here is too large, one may consider that we can refine such bounds. However, as $m_n$ grows for larger $n$'s, this might be the best bound for any dimensions. Anyway, the bounds of minimized volumes involving non-reflexive polytopes still require further study.}.

As aforementioned, we have two polytopes (13 and 17) that go beyond the bound of $\mathbb{C}^3$ orbifolds\footnote{We will still ignore the green points as they can be turned into non-green points with an extra blow-up.}. In fact, we find that the red points in Fig. \ref{vchi}, including the four $\mathbb{C}^3$ orbifolds and (13) and (17), live much closer to the upper bound in the diagram (lower bound of volumes) than to the other points.

Finally, we would also like to know whether the minimized volume of $Y$ with an arbitrary polytope $\Delta$ can be arbitrarily close to 0, viz, unbounded from above in the $\chi$-$1/V_\text{min}$ diagram. The answer is yes and can be seen from considering the orbifolds. We know that the volume of an orbifold is the volume of its parent divided by the order of the quotient group, regardless of the action:
\begin{equation}
	\text{vol}(M/\Gamma)=\frac{\text{vol}(M)}{|\Gamma|}.
\end{equation}
From \cite{Martelli:2005tp}, we know that the volume of a (finite) cone is proportional to the volume of the Sasaki-Einstein manifold. Then the minimized volume function should also follow\footnote{Since it should be clear, we will use the corresponding orbifold to denote the volume function of $Y$ in our notation.}
\begin{equation}
	V(M/\Gamma)=\frac{V(M)}{|\Gamma|}.
\end{equation}
For instance, this provides a quick way to see that $V_{\text{min}}(\mathbb{C}^n/\mathbb{Z}_n)=1/n$ as we have shown in \S\ref{volmin}. For the conifold $\mathcal{C}$, we have $V_{\text{min}}(\mathcal{C})=16/27$. Then we would expect the generalized conifolds (\ref{p8p}), (\ref{p22p}) and (\ref{p24p}) to give $V_{\text{min}}(\mathcal{C})/3$, $V_{\text{min}}(\mathcal{C})/6$ and $V_{\text{min}}(\mathcal{C})/4$ respectively. These are indeed the results we get in \S\ref{quadrilaterals}. Also, this does not depend on the orbifold action. The lattice rectangle of size $2\times1$ and the toric diagram of $F_0$ are both $\mathcal{C}$ quotiented by $\mathbb{Z}_2$, but with different actions. However, they both have $V_\text{min}=8/27$.

\section{Conclusions and Outlook}\label{conclusion}
In this paper, we focused on polygons with two interior points, which serve as the toric diagrams of certain toric CY$_3$ cones, as well as those of compact base surfaces. Using brane tilings, we found the quiver gauge theories associated to D3-branes probing these geometries. The volume functions of Sasaki-Einstein base manifolds were computed so as to get the R-charges of the fields via volume minimization. Compared to reflexive cases, there are much more quivers in the toric phases corresponding to one toric diagram. However, there is always one toric quiver which arises from each orbifold of $\mathbb{C}^3$.

We have also analyzed the minimized volumes in terms of the topological quantities of the compact toric varieties constructed from the polygons. To obtain the compact varieties, we made fans over the polytope followed by complete resolutions. However, unlike reflexive cases, we have two choices of origins here, which we called zeroth-grade and first-grade points. It turns out for most of the cases, the Chern numbers and even the Chern classes coincide for the two compact varieties. For those with first-grade points, they obviously do not have such property, but if we further resolve the smooth surface with a ray opposite to the existed ray along the spine, we found that the Chern numbers and classes are again the same for the two varieties. We have also argued that whether the two varieties are the same surface is completely determined by the symmetries of the polygon, namely whether it is axial symmetric or centrosymmetric.

We showed that all the relevant topological invariants, including Chern numbers, Betti numbers and Hodge numbers, are dependent to each other. Hence, all the non-trivial quantities can be expressed with Euler numbers, such as $C_1+\chi=12$ and $b_1=h^{1,1}=\chi-2$. Thus, we only need to consider the relation between $V_\text{min}$ and $\chi$. We plotted the diagram of $1/V_\text{min}$ against $\chi$. It turns out that the volume bounds relation from the reflexive cases does not hold for non-reflexive ones, and we have hinted at a generalized conjecture. Moreover, the minimized volumes of Sasaki-Einstein manifolds of $\mathbb{C}^3$ orbifolds do not form a lower bound anymore. However, the upper bound is still safe. Besides, by tracking the orbifold relation, we saw that the volumes coming from any polytopes can only be bounded by 0, viz, we can have toric diagrams giving as small volumes as we want.

There is still a lot to study for future works. First of all, we have solely considered 2d polygons with two interior points. This is quite a strict constraint which only gives us 45 inequivalent toric diagrams. However, as we can see, there are already a sea of toric quivers that we cannot even list all of them in this paper. If we wish to study the gauge theories from polytopes with more interior points and higher dimensions, we cannot search them one by one. For instance, the classification of 3d lattice polytopes with two interior points has been done in \cite{2016arXiv161208918B}, which gives 22673449 of them up to unimodular equivalence. Instead, a general method needs to be found to get a more detailed understanding of the theories. It would also be interesting to randomize over the space of toric diagrams and try volume-topolgy plots.

Besides, even just for the 45 polygons, not everything is studied. For instance, the specular duality for reflexive polygons is studied in \cite{Hanany:2012vc}. For reflexive cases, the specular dual of a reflexive toric diagram is still reflexive. Their brane tilings are both on the torus. However, for non-reflexive cases, although the brane tiling is still on the torus, the dual brane tiling is not on $\mathbb{T}^2$ anymore. If we go further, there are also cases that neither of the specular duals have brane tiling on $\mathbb{T}^2$. We wish to explore these in future.

On the geometry side, the study of compact toric varieties could also be extended to polytopes with more interior points and higher dimensions. We wish to understand minimized volumes via topological invariants for more general cases. In particular, we proposed an enlarged bound for volumes. Whether this is really a bound and whether this is the best bound still requires tests for general cases. However, as it would be impossible to deal with them case by case, new techniques may be necessary.

For reflexive polytopes of dimension $n$, besides the affine CY$_{n+1}$ cone which is non-compact, we know that compact smooth CY$_{n-1}$ can be constructed as hypersurfaces in $X(\Delta)$ from \cite{Batyrev_1982,Batyrev:1994pg,Kreuzer:1995cd,Kreuzer:1998vb,Kreuzer:2000xy,Kreuzer:2002uu}. However, for non-reflexive ploytopes, we do not have the defining polynomials any more. It would be interesting to study the hypersurfaces for such cases.

\section*{Acknowledgement}
We would like to thank Cyril Closset and Alexander Kasprzyk for useful comments. JB would like to thank Zhijie Ji for enjoyable discussions. JB is also grateful to Xiangnan Feng during the work. GBC would like to thank the BC family, in particular, Tracey Colverd for her excellent copyediting and support. YHH would like to thank STFC for grant ST/J00037X/1.

\appendix

\section{The 45 Lattice Polygons with Two Interior Points}\label{poly45}
	5 Triangles:
\vspace{.2cm}


	\caption{The red points correspond to regular Sasaki-Einstein manifolds while the quasi-regular ones are in orange. We omit the first-grade points in the plot.}\label{regY}
\end{figure}

\paragraph{Fine-tuning R-charges}
There are 22 Sasaki-Einstein manifolds that are neither regular nor quasi-regular. As a result, the expressions/equations to solve the R-charges in the main text are also in decimals which are not exact. Nevertheless, we can express the exact R-charges in terms of roots of some polynomials. As an example, the volume function discussed in \S\ref{p13} is reproduced here:
\begin{equation}
	V=\frac{2 (2 {b_1}+{b_2}+15)}{({b_2}+3) (-{b_1}+{b_2}-3)({b_1}+{b_2}+3) ({b_1}+2 {b_2}-6)}.
\end{equation}
This reaches the minimum when $b_1=x_0$ and $b_2=y_0$, where $x_0$ is the only positive root of the equation
\begin{equation}
	-1296 - 192 x + 100 x^2 + 21 x^3 + x^4=0
\end{equation}
and
\begin{equation}
	y_0=\frac{-86345699328+342641664x_0+4796983296x_0^2+342641664x_0^3}{20558499840+1713208320x_0}.
\end{equation}
Then the R-charges of the GLSM fields should satisfy
\begin{eqnarray}
	&&351+432p_3p_4-216p_3p_4^2+171y_0-3y_0^2-7y_0^3-90x_0-12x_0y_0+6x_0y_0^2+12x_0^2+4x_0^2y_0\nonumber\\
	&&+\frac{-6561-2916y_0+162y_0^2+108y_0^3-9y_0^4}{15+2x_0+y_0}=p_2^2(432p_3+108p_4)+p_2(-864p_3+432p_3^2\nonumber\\
	&&-216p_4+432p_3p_4+108p_4^2)
\end{eqnarray}
constrained by $\sum\limits_{i=1}^4p_i=2$ and $0<p_i<2$, with others vanishing.

\section{Higgsing the Parent Theory}\label{parent}
The Higgs mechanism states that by turning on a non-zero vev of a bifundamental and integrating out the quadratic mass terms in superpotential, we would get a theory with a different moduli space. This corresponds to removal of an edge in the brane tiling and merger of two gauge nodes in the quiver. In terms of toric diagrams, it is easy to identify the parent theories by blowing up/down points. For instance, (\ref{p5p}) is the parent of all the triangles and the pentagon (\ref{p35p}) is the parent of all the hexagons here. As a simple example, we consider higgsing (\ref{p2p}) to the theory of dP$_0$:
\begin{equation}
	\tikzset{every picture/.style={line width=0.75pt}} 
.
\end{equation}
The superpotential of the parent theory is
\begin{eqnarray}
W&=&X^1_{12}X_{25}X^2_{51}+X^2_{12}X^1_{23}X_{31}+X^2_{23}X^1_{34}X_{42}+X^2_{34}X^1_{45}X_{53}+X^2_{45}X^1_{51}X_{14}\nonumber\\
&&-X^2_{12}X_{25}X^1_{51}-X^1_{12}X^2_{23}X_{31}-X^1_{23}X^2_{34}X_{42}-X^1_{34}X^2_{45}X_{53}-X^1_{45}X^2_{51}X_{14}.
\end{eqnarray}
We first give a non-zero vev to $X_{53}$, viz, $\langle X_{53}\rangle=1$:
\begin{eqnarray}
	W&=&X^1_{12}X_{25}X^2_{51}+X^2_{12}X^1_{23}X_{31}+X^2_{23}X^1_{34}X_{42}+X^2_{34}X^1_{45}+X^2_{45}X^1_{51}X_{14}\nonumber\\
	&&-X^2_{12}X_{25}X^1_{51}-X^1_{12}X^2_{23}X_{31}-X^1_{23}X^2_{34}X_{42}-X^1_{34}X^2_{45}-X^1_{45}X^2_{51}X_{14}.
\end{eqnarray}
Integrating our the quadratic terms yields
\begin{eqnarray}
W&=&X^1_{12}X^3_{23}X^2_{31}+X^2_{12}X^1_{23}X^3_{31}+X_{42}X^2_{23}X^1_{31}X_{14}\nonumber\\
&&-X^2_{12}X^3_{23}X^1_{31}-X^1_{12}X^2_{23}X^3_{31}-X_{42}X^1_{23}X^2_{31}X_{14}.
\end{eqnarray}
Finally, by turning on a vev of $X_{42}$ such that $\langle X_{42}\rangle=1$, the superpotential becomes
\begin{eqnarray}
W&=&X^1_{12}X^3_{23}X^2_{31}+X^2_{12}X^1_{23}X^3_{31}+X^2_{23}X^1_{31}X^3_{12}\nonumber\\
&&-X^2_{12}X^3_{23}X^1_{31}-X^1_{12}X^2_{23}X^3_{31}-X^1_{23}X^2_{31}X^3_{12},
\end{eqnarray}
which is exactly the superpotential of the dP$_0$ theory. In terms of quivers, we have
\begin{equation}
	\tikzset{every picture/.style={line width=0.75pt}} 
.
\end{equation}

As a matter of fact, the 45 polygons can be higgsed from a same parent theory. This theory can be $\mathbb{C}^3/(\mathbb{Z}_6\times\mathbb{Z}_6)$ (1,0,5)(0,1,5) such that there is only one corresponding quiver in the toric phase. It is a huge quiver with 36 nodes and 108 bifundamentals. The R-charges of the bifundamentals are all $2/3$, and hence the three GLSM fields corresponding to the extremal points all have R-charge $2/3$, with others vanishing. If we only want the minimal parent toric diagram, then we would have $\mathcal{C}/(\mathbb{Z}_6\times\mathbb{Z}_2)$ (1,0,0,5)(0,1,1,0).

\section{More Toric Phases}\label{more}
Here, we list the toric quivers (other than those appeared in \S\ref{triangles}-\S\ref{hexagons}) and the corresponding superpotentials for some of the polytopes. Notice that we are not listing all the toric quivers here (especially for those in \S\ref{pentagons}-\S\ref{hexagons}) as this is exhaustive. These quivers can be obtained via Seiberg duality as discussed in \S\ref{tiling}. All the triangles only have one quiver in the toric phase (up to permutation equivalence). Different quivers of all the quadrilaterals are tabulated in Table \ref{phasesquad}.
\begin{center}

\end{center}

\addcontentsline{toc}{section}{References}
\bibliographystyle{utphys}
\bibliography{references}

\providecommand{\href}[2]{#2}\begingroup\raggedright\begin{thebibliography}{10}

\bibitem{Feng:2000mi}
B.~Feng, A.~Hanany, and Y.-H. He, ``{D-brane gauge theories from toric
  singularities and toric duality},''
  \href{http://dx.doi.org/10.1016/S0550-3213(00)00699-4}{{\em Nucl.\ Phys.\ B}
  {\bfseries 595} (2001) 165--200},
  \href{http://arxiv.org/abs/hep-th/0003085}{{\ttfamily arXiv:hep-th/0003085}}.

\bibitem{Feng:2004uq}
B.~Feng, Y.-H. He, and F.~Lam, ``{On correspondences between toric
  singularities and (p,q) webs},''
  \href{http://dx.doi.org/10.1016/j.nuclphysb.2004.08.048}{{\em Nucl. Phys.}
  {\bfseries B701} (2004) 334--356},
\href{http://arxiv.org/abs/hep-th/0403133}{{\ttfamily arXiv:hep-th/0403133
  [hep-th]}}.

\bibitem{Gulotta:2008ef}
D.~R. Gulotta, ``{Properly ordered dimers, R-charges, and an efficient inverse
  algorithm},'' \href{http://dx.doi.org/10.1088/1126-6708/2008/10/014}{{\em
  JHEP} {\bfseries 10} (2008) 014},
\href{http://arxiv.org/abs/0807.3012}{{\ttfamily arXiv:0807.3012 [hep-th]}}.

\bibitem{Feng:2001bn}
B.~Feng, A.~Hanany, Y.-H. He, and A.~M. Uranga, ``{Toric duality as Seiberg
  duality and brane diamonds},''
  \href{http://dx.doi.org/10.1088/1126-6708/2001/12/035}{{\em JHEP} {\bfseries
  12} (2001) 035},
\href{http://arxiv.org/abs/hep-th/0109063}{{\ttfamily arXiv:hep-th/0109063
  [hep-th]}}.

\bibitem{Maldacena:1997re}
J.~M. Maldacena, ``{The Large N limit of superconformal field theories and
  supergravity},'' \href{http://dx.doi.org/10.1023/A:1026654312961,
  10.4310/ATMP.1998.v2.n2.a1}{{\em Int. J. Theor. Phys.} {\bfseries 38} (1999)
  1113--1133}, \href{http://arxiv.org/abs/hep-th/9711200}{{\ttfamily
  arXiv:hep-th/9711200 [hep-th]}}.
[Adv. Theor. Math. Phys.2,231(1998)].

\bibitem{Acharya:1998db}
B.~S. Acharya, J.~M. Figueroa-O'Farrill, C.~M. Hull, and B.~J. Spence,
  ``{Branes at conical singularities and holography},''
  \href{http://dx.doi.org/10.4310/ATMP.1998.v2.n6.a2}{{\em Adv. Theor. Math.
  Phys.} {\bfseries 2} (1999) 1249--1286},
\href{http://arxiv.org/abs/hep-th/9808014}{{\ttfamily arXiv:hep-th/9808014
  [hep-th]}}.

\bibitem{Morrison:1998cs}
D.~R. Morrison and M.~R. Plesser, ``{Nonspherical horizons. 1.},''
  \href{http://dx.doi.org/10.4310/ATMP.1999.v3.n1.a1}{{\em Adv. Theor. Math.
  Phys.} {\bfseries 3} (1999) 1--81},
\href{http://arxiv.org/abs/hep-th/9810201}{{\ttfamily arXiv:hep-th/9810201
  [hep-th]}}.

\bibitem{Hanany:2005ve}
A.~Hanany and K.~D. Kennaway, ``{Dimer models and toric diagrams},''
\href{http://arxiv.org/abs/hep-th/0503149}{{\ttfamily arXiv:hep-th/0503149
  [hep-th]}}.

\bibitem{Franco:2005rj}
S.~Franco, A.~Hanany, K.~D. Kennaway, D.~Vegh, and B.~Wecht, ``{Brane dimers
  and quiver gauge theories},''
  \href{http://dx.doi.org/10.1088/1126-6708/2006/01/096}{{\em JHEP} {\bfseries
  01} (2006) 096},
\href{http://arxiv.org/abs/hep-th/0504110}{{\ttfamily arXiv:hep-th/0504110
  [hep-th]}}.

\bibitem{Feng:2005gw}
B.~Feng, Y.-H. He, K.~D. Kennaway, and C.~Vafa, ``{Dimer models from mirror
  symmetry and quivering amoebae},''
  \href{http://dx.doi.org/10.4310/ATMP.2008.v12.n3.a2}{{\em Adv. Theor. Math.
  Phys.} {\bfseries 12} no.~3, (2008) 489--545},
\href{http://arxiv.org/abs/hep-th/0511287}{{\ttfamily arXiv:hep-th/0511287
  [hep-th]}}.

\bibitem{1997AIHPB..33..591K}
R.~{Kenyon}, ``{Local statistics of lattice dimers},''
  \href{http://dx.doi.org/10.1016/S0246-0203(97)80106-9}{{\em Annales de
  L'Institut Henri Poincare Section (B) Probability and Statistics} {\bfseries
  33} no.~5, (Jan., 1997) 591--618},
  \href{http://arxiv.org/abs/math/0105054}{{\ttfamily arXiv:math/0105054
  [math.CO]}}.

\bibitem{2003math.....10326K}
R.~{Kenyon}, ``{An introduction to the dimer model},'' {\em arXiv Mathematics
  e-prints} (Oct., 2003) math/0310326,
  \href{http://arxiv.org/abs/math/0310326}{{\ttfamily arXiv:math/0310326
  [math.CO]}}.

\bibitem{He:2017gam}
Y.-H. He, R.-K. Seong, and S.-T. Yau, ``{Calabi–Yau Volumes and Reflexive
  Polytopes},'' \href{http://dx.doi.org/10.1007/s00220-018-3128-6}{{\em Commun.
  Math. Phys.} {\bfseries 361} no.~1, (2018) 155--204},
\href{http://arxiv.org/abs/1704.03462}{{\ttfamily arXiv:1704.03462 [hep-th]}}.

\bibitem{Aharony:2008ug}
O.~Aharony, O.~Bergman, D.~L. Jafferis, and J.~Maldacena, ``{N=6 superconformal
  Chern-Simons-matter theories, M2-branes and their gravity duals},''
  \href{http://dx.doi.org/10.1088/1126-6708/2008/10/091}{{\em JHEP} {\bfseries
  10} (2008) 091},
\href{http://arxiv.org/abs/0806.1218}{{\ttfamily arXiv:0806.1218 [hep-th]}}.

\bibitem{Bagger:2006sk}
J.~Bagger and N.~Lambert, ``{Modeling Multiple M2's},''
  \href{http://dx.doi.org/10.1103/PhysRevD.75.045020}{{\em Phys. Rev.}
  {\bfseries D75} (2007) 045020},
\href{http://arxiv.org/abs/hep-th/0611108}{{\ttfamily arXiv:hep-th/0611108
  [hep-th]}}.

\bibitem{Gustavsson:2007vu}
A.~Gustavsson, ``{Algebraic structures on parallel M2-branes},''
  \href{http://dx.doi.org/10.1016/j.nuclphysb.2008.11.014}{{\em Nucl. Phys.}
  {\bfseries B811} (2009) 66--76},
\href{http://arxiv.org/abs/0709.1260}{{\ttfamily arXiv:0709.1260 [hep-th]}}.

\bibitem{Hanany:2008cd}
A.~Hanany and A.~Zaffaroni, ``{Tilings, Chern-Simons Theories and M2 Branes},''
  \href{http://dx.doi.org/10.1088/1126-6708/2008/10/111}{{\em JHEP} {\bfseries
  10} (2008) 111},
\href{http://arxiv.org/abs/0808.1244}{{\ttfamily arXiv:0808.1244 [hep-th]}}.

\bibitem{Hanany:2008fj}
A.~Hanany, D.~Vegh, and A.~Zaffaroni, ``{Brane Tilings and M2 Branes},''
  \href{http://dx.doi.org/10.1088/1126-6708/2009/03/012}{{\em JHEP} {\bfseries
  03} (2009) 012},
\href{http://arxiv.org/abs/0809.1440}{{\ttfamily arXiv:0809.1440 [hep-th]}}.

\bibitem{Franco:2015tna}
S.~Franco, D.~Ghim, S.~Lee, R.-K. Seong, and D.~Yokoyama, ``{2d (0,2) Quiver
  Gauge Theories and D-Branes},''
  \href{http://dx.doi.org/10.1007/JHEP09(2015)072}{{\em JHEP} {\bfseries 09}
  (2015) 072},
\href{http://arxiv.org/abs/1506.03818}{{\ttfamily arXiv:1506.03818 [hep-th]}}.

\bibitem{Franco:2015tya}
S.~Franco, S.~Lee, and R.-K. Seong, ``{Brane Brick Models, Toric Calabi-Yau
  4-Folds and 2d (0,2) Quivers},''
  \href{http://dx.doi.org/10.1007/JHEP02(2016)047}{{\em JHEP} {\bfseries 02}
  (2016) 047},
\href{http://arxiv.org/abs/1510.01744}{{\ttfamily arXiv:1510.01744 [hep-th]}}.

\bibitem{Franco:2016qxh}
S.~Franco, S.~Lee, R.-K. Seong, and C.~Vafa, ``{Brane Brick Models in the
  Mirror},'' \href{http://dx.doi.org/10.1007/JHEP02(2017)106}{{\em JHEP}
  {\bfseries 02} (2017) 106},
\href{http://arxiv.org/abs/1609.01723}{{\ttfamily arXiv:1609.01723 [hep-th]}}.

\bibitem{Franco:2016nwv}
S.~Franco, S.~Lee, and R.-K. Seong, ``{Brane brick models and 2d (0, 2)
  triality},'' \href{http://dx.doi.org/10.1007/JHEP05(2016)020}{{\em JHEP}
  {\bfseries 05} (2016) 020},
\href{http://arxiv.org/abs/1602.01834}{{\ttfamily arXiv:1602.01834 [hep-th]}}.

\bibitem{Gadde:2013lxa}
A.~Gadde, S.~Gukov, and P.~Putrov, ``{(0, 2) trialities},''
  \href{http://dx.doi.org/10.1007/JHEP03(2014)076}{{\em JHEP} {\bfseries 03}
  (2014) 076},
\href{http://arxiv.org/abs/1310.0818}{{\ttfamily arXiv:1310.0818 [hep-th]}}.

\bibitem{Franco:2016tcm}
S.~Franco, S.~Lee, R.-K. Seong, and C.~Vafa, ``{Quadrality for Supersymmetric
  Matrix Models},'' \href{http://dx.doi.org/10.1007/JHEP07(2017)053}{{\em JHEP}
  {\bfseries 07} (2017) 053},
\href{http://arxiv.org/abs/1612.06859}{{\ttfamily arXiv:1612.06859 [hep-th]}}.

\bibitem{Hanany:2012hi}
A.~Hanany and R.-K. Seong, ``{Brane Tilings and Reflexive Polygons},''
  \href{http://dx.doi.org/10.1002/prop.201200008}{{\em Fortsch. Phys.}
  {\bfseries 60} (2012) 695--803},
\href{http://arxiv.org/abs/1201.2614}{{\ttfamily arXiv:1201.2614 [hep-th]}}.

\bibitem{Hanany:2012vc}
A.~Hanany and R.-K. Seong, ``{Brane Tilings and Specular Duality},''
  \href{http://dx.doi.org/10.1007/JHEP08(2012)107}{{\em JHEP} {\bfseries 08}
  (2012) 107},
\href{http://arxiv.org/abs/1206.2386}{{\ttfamily arXiv:1206.2386 [hep-th]}}.

\bibitem{WeiDing}
X.~Wei and R.~Ding, ``Lattice polygons with two interior lattice points,''
  \href{http://dx.doi.org/10.1134/S0001434612050343}{{\em Mathematical Notes}
  {\bfseries 91} (05, 2012) }.

\bibitem{Park:1999ep}
J.~Park, R.~Rabadan, and A.~M. Uranga, ``{Orientifolding the conifold},''
  \href{http://dx.doi.org/10.1016/S0550-3213(99)00700-2}{{\em Nucl. Phys.}
  {\bfseries B570} (2000) 38--80},
\href{http://arxiv.org/abs/hep-th/9907086}{{\ttfamily arXiv:hep-th/9907086
  [hep-th]}}.

\bibitem{Uranga:1998vf}
A.~M. Uranga, ``{Brane configurations for branes at conifolds},''
  \href{http://dx.doi.org/10.1088/1126-6708/1999/01/022}{{\em JHEP} {\bfseries
  01} (1999) 022},
\href{http://arxiv.org/abs/hep-th/9811004}{{\ttfamily arXiv:hep-th/9811004
  [hep-th]}}.

\bibitem{Gauntlett:2004zh}
J.~P. Gauntlett, D.~Martelli, J.~Sparks, and D.~Waldram, ``{Supersymmetric
  AdS(5) solutions of M theory},''
  \href{http://dx.doi.org/10.1088/0264-9381/21/18/005}{{\em Class. Quant.
  Grav.} {\bfseries 21} (2004) 4335--4366},
\href{http://arxiv.org/abs/hep-th/0402153}{{\ttfamily arXiv:hep-th/0402153
  [hep-th]}}.

\bibitem{Gauntlett:2004yd}
J.~P. Gauntlett, D.~Martelli, J.~Sparks, and D.~Waldram, ``{Sasaki-Einstein
  metrics on S**2 x S**3},''
  \href{http://dx.doi.org/10.4310/ATMP.2004.v8.n4.a3}{{\em Adv. Theor. Math.
  Phys.} {\bfseries 8} no.~4, (2004) 711--734},
\href{http://arxiv.org/abs/hep-th/0403002}{{\ttfamily arXiv:hep-th/0403002
  [hep-th]}}.

\bibitem{Benvenuti:2004dy}
S.~Benvenuti, S.~Franco, A.~Hanany, D.~Martelli, and J.~Sparks, ``{An Infinite
  family of superconformal quiver gauge theories with Sasaki-Einstein duals},''
  \href{http://dx.doi.org/10.1088/1126-6708/2005/06/064}{{\em JHEP} {\bfseries
  06} (2005) 064},
\href{http://arxiv.org/abs/hep-th/0411264}{{\ttfamily arXiv:hep-th/0411264
  [hep-th]}}.

\bibitem{Benvenuti:2004wx}
S.~Benvenuti, A.~Hanany, and P.~Kazakopoulos, ``{The Toric phases of the Y**p,q
  quivers},'' \href{http://dx.doi.org/10.1088/1126-6708/2005/07/021}{{\em JHEP}
  {\bfseries 07} (2005) 021},
\href{http://arxiv.org/abs/hep-th/0412279}{{\ttfamily arXiv:hep-th/0412279
  [hep-th]}}.

\bibitem{Franco:2005sm}
S.~Franco, A.~Hanany, D.~Martelli, J.~Sparks, D.~Vegh, and B.~Wecht, ``{Gauge
  theories from toric geometry and brane tilings},''
  \href{http://dx.doi.org/10.1088/1126-6708/2006/01/128}{{\em JHEP} {\bfseries
  01} (2006) 128},
\href{http://arxiv.org/abs/hep-th/0505211}{{\ttfamily arXiv:hep-th/0505211
  [hep-th]}}.

\bibitem{Hanany:2005hq}
A.~Hanany, P.~Kazakopoulos, and B.~Wecht, ``{A New infinite class of quiver
  gauge theories},''
  \href{http://dx.doi.org/10.1088/1126-6708/2005/08/054}{{\em JHEP} {\bfseries
  08} (2005) 054},
\href{http://arxiv.org/abs/hep-th/0503177}{{\ttfamily arXiv:hep-th/0503177
  [hep-th]}}.

\bibitem{Feng:2001xr}
B.~Feng, A.~Hanany, and Y.-H. He, ``{Phase structure of D-brane gauge theories
  and toric duality},''
  \href{http://dx.doi.org/10.1088/1126-6708/2001/08/040}{{\em JHEP} {\bfseries
  08} (2001) 040},
\href{http://arxiv.org/abs/hep-th/0104259}{{\ttfamily arXiv:hep-th/0104259
  [hep-th]}}.

\bibitem{Feng:2002fv}
B.~Feng, S.~Franco, A.~Hanany, and Y.-H. He, ``{UnHiggsing the del Pezzo},''
  \href{http://dx.doi.org/10.1088/1126-6708/2003/08/058}{{\em JHEP} {\bfseries
  08} (2003) 058},
\href{http://arxiv.org/abs/hep-th/0209228}{{\ttfamily arXiv:hep-th/0209228
  [hep-th]}}.

\bibitem{Feng:2002zw}
B.~Feng, S.~Franco, A.~Hanany, and Y.-H. He, ``{Symmetries of toric duality},''
  \href{http://dx.doi.org/10.1088/1126-6708/2002/12/076}{{\em JHEP} {\bfseries
  12} (2002) 076},
\href{http://arxiv.org/abs/hep-th/0205144}{{\ttfamily arXiv:hep-th/0205144
  [hep-th]}}.

\bibitem{Hanany:2010ne}
A.~Hanany and R.-K. Seong, ``{Symmetries of Abelian Orbifolds},''
  \href{http://dx.doi.org/10.1007/JHEP01(2011)027}{{\em JHEP} {\bfseries 01}
  (2011) 027},
\href{http://arxiv.org/abs/1009.3017}{{\ttfamily arXiv:1009.3017 [hep-th]}}.

\bibitem{Davey:2010px}
J.~Davey, A.~Hanany, and R.-K. Seong, ``{Counting Orbifolds},''
  \href{http://dx.doi.org/10.1007/JHEP06(2010)010}{{\em JHEP} {\bfseries 06}
  (2010) 010},
\href{http://arxiv.org/abs/1002.3609}{{\ttfamily arXiv:1002.3609 [hep-th]}}.

\bibitem{Davey:2009bp}
J.~Davey, A.~Hanany, and J.~Pasukonis, ``{On the Classification of Brane
  Tilings},'' \href{http://dx.doi.org/10.1007/JHEP01(2010)078}{{\em JHEP}
  {\bfseries 01} (2010) 078},
\href{http://arxiv.org/abs/0909.2868}{{\ttfamily arXiv:0909.2868 [hep-th]}}.

\bibitem{Franco:2017jeo}
S.~Franco, Y.-H. He, C.~Sun, and Y.~Xiao, ``{A Comprehensive Survey of Brane
  Tilings},'' \href{http://dx.doi.org/10.1142/S0217751X17501421}{{\em Int. J.
  Mod. Phys.} {\bfseries A32} no.~23n24, (2017) 1750142},
\href{http://arxiv.org/abs/1702.03958}{{\ttfamily arXiv:1702.03958 [hep-th]}}.

\bibitem{Closset:2018bjz}
C.~Closset, M.~Del~Zotto, and V.~Saxena, ``{Five-dimensional SCFTs and gauge
  theory phases: an M-theory/type IIA perspective},''
  \href{http://dx.doi.org/10.21468/SciPostPhys.6.5.052}{{\em SciPost Phys.}
  {\bfseries 6} no.~5, (2019) 052},
  \href{http://arxiv.org/abs/1812.10451}{{\ttfamily arXiv:1812.10451
  [hep-th]}}.

\bibitem{Closset:2019juk}
C.~Closset and M.~Del~Zotto, ``{On 5d SCFTs and their BPS quivers. Part I:
  B-branes and brane tilings},''
  \href{http://arxiv.org/abs/1912.13502}{{\ttfamily arXiv:1912.13502
  [hep-th]}}.

\bibitem{Saxena:2019wuy}
V.~Saxena, ``{Rank-two 5d SCFTs from M-theory at isolated toric singularities:
  a systematic study},'' \href{http://arxiv.org/abs/1911.09574}{{\ttfamily
  arXiv:1911.09574 [hep-th]}}.

\bibitem{fulton1993introduction}
W.~Fulton, {\em Introduction to Toric Varieties}.
\newblock Annals of mathematics studies. Princeton University Press, 1993.

\bibitem{cox2011toric}
D.~Cox, J.~Little, and H.~Schenck, {\em Toric Varieties}.
\newblock Graduate studies in mathematics. American Mathematical Soc., 2011.

\bibitem{kibble}
G.~S. Guralnik, C.~R. Hagen, and T.~W.~B. Kibble, ``Global conservation laws
  and massless particles,''
  \href{http://dx.doi.org/10.1103/PhysRevLett.13.585}{{\em Phys. Rev. Lett.}
  {\bfseries 13} (11, 1964) 585--587}.
  \url{https://link.aps.org/doi/10.1103/PhysRevLett.13.585}.

\bibitem{higgs}
P.~W. Higgs, ``Broken symmetries and the masses of gauge bosons,''
  \href{http://dx.doi.org/10.1103/PhysRevLett.13.508}{{\em Phys. Rev. Lett.}
  {\bfseries 13} (10, 1964) 508--509}.
  \url{https://link.aps.org/doi/10.1103/PhysRevLett.13.508}.

\bibitem{englert}
F.~Englert and R.~Brout, ``Broken symmetry and the mass of gauge vector
  mesons,'' \href{http://dx.doi.org/10.1103/PhysRevLett.13.321}{{\em Phys. Rev.
  Lett.} {\bfseries 13} (8, 1964) 321--323}.
  \url{https://link.aps.org/doi/10.1103/PhysRevLett.13.321}.

\bibitem{Yamazaki:2008bt}
M.~Yamazaki, ``{Brane Tilings and Their Applications},''
  \href{http://dx.doi.org/10.1002/prop.200810536}{{\em Fortsch. Phys.}
  {\bfseries 56} (2008) 555--686},
\href{http://arxiv.org/abs/0803.4474}{{\ttfamily arXiv:0803.4474 [hep-th]}}.

\bibitem{Bao:2020sqg}
J.~Bao, Y.-H. He, E.~Hirst, and S.~Pietromonaco, ``{Lectures on the Calabi-Yau
  Landscape},'' \href{http://arxiv.org/abs/2001.01212}{{\ttfamily
  arXiv:2001.01212 [hep-th]}}.

\bibitem{Witten:1993yc}
E.~Witten, ``{Phases of N=2 theories in two-dimensions},''
  \href{http://dx.doi.org/10.1016/0550-3213(93)90033-L}{{\em Nucl. Phys.}
  {\bfseries B403} (1993) 159--222},
  \href{http://arxiv.org/abs/hep-th/9301042}{{\ttfamily arXiv:hep-th/9301042
  [hep-th]}}.
[AMS/IP Stud. Adv. Math.1,143(1996)].

\bibitem{Seiberg:1994pq}
N.~Seiberg, ``{Electric - magnetic duality in supersymmetric nonAbelian gauge
  theories},'' \href{http://dx.doi.org/10.1016/0550-3213(94)00023-8}{{\em Nucl.
  Phys.} {\bfseries B435} (1995) 129--146},
\href{http://arxiv.org/abs/hep-th/9411149}{{\ttfamily arXiv:hep-th/9411149
  [hep-th]}}.

\bibitem{Beasley:2001zp}
C.~E. Beasley and M.~R. Plesser, ``{Toric duality is Seiberg duality},''
  \href{http://dx.doi.org/10.1088/1126-6708/2001/12/001}{{\em JHEP} {\bfseries
  12} (2001) 001},
\href{http://arxiv.org/abs/hep-th/0109053}{{\ttfamily arXiv:hep-th/0109053
  [hep-th]}}.

\bibitem{2011arXiv1102.4844M}
G.~{Musiker} and C.~{Stump}, ``{A compendium on the cluster algebra and quiver
  package in sage},'' {\em arXiv e-prints} (Feb., 2011) ,
  \href{http://arxiv.org/abs/1102.4844}{{\ttfamily arXiv:1102.4844 [math.CO]}}.

\bibitem{Hanany:2011bs}
A.~Hanany, Y.-H. He, V.~Jejjala, J.~Pasukonis, S.~Ramgoolam, and
  D.~Rodriguez-Gomez, ``{Invariants of Toric Seiberg Duality},''
  \href{http://dx.doi.org/10.1142/S0217751X12500029}{{\em Int. J. Mod. Phys.}
  {\bfseries A27} (2012) 1250002},
\href{http://arxiv.org/abs/1107.4101}{{\ttfamily arXiv:1107.4101 [hep-th]}}.

\bibitem{Franco:2003ja}
S.~Franco, A.~Hanany, Y.-H. He, and P.~Kazakopoulos, ``{Duality walls, duality
  trees and fractional branes},''
\href{http://arxiv.org/abs/hep-th/0306092}{{\ttfamily arXiv:hep-th/0306092
  [hep-th]}}.

\bibitem{Hanany:2012mb}
A.~Hanany, Y.-H. He, C.~Sun, and S.~Sypsas, ``{Superconformal Block Quivers,
  Duality Trees and Diophantine Equations},''
  \href{http://dx.doi.org/10.1007/JHEP11(2013)017}{{\em JHEP} {\bfseries 11}
  (2013) 017},
\href{http://arxiv.org/abs/1211.6111}{{\ttfamily arXiv:1211.6111 [hep-th]}}.

\bibitem{Forcella:2008bb}
D.~Forcella, A.~Hanany, Y.-H. He, and A.~Zaffaroni, ``{The Master Space of N=1
  Gauge Theories},''
  \href{http://dx.doi.org/10.1088/1126-6708/2008/08/012}{{\em JHEP} {\bfseries
  08} (2008) 012},
\href{http://arxiv.org/abs/0801.1585}{{\ttfamily arXiv:0801.1585 [hep-th]}}.

\bibitem{Forcella:2008eh}
D.~Forcella, A.~Hanany, Y.-H. He, and A.~Zaffaroni, ``{Mastering the Master
  Space},'' \href{http://dx.doi.org/10.1007/s11005-008-0255-6}{{\em Lett. Math.
  Phys.} {\bfseries 85} (2008) 163--171},
\href{http://arxiv.org/abs/0801.3477}{{\ttfamily arXiv:0801.3477 [hep-th]}}.

\bibitem{Martelli:2005tp}
D.~Martelli, J.~Sparks, and S.-T. Yau, ``{The Geometric dual of a-maximisation
  for Toric Sasaki-Einstein manifolds},''
  \href{http://dx.doi.org/10.1007/s00220-006-0087-0}{{\em Commun. Math. Phys.}
  {\bfseries 268} (2006) 39--65},
\href{http://arxiv.org/abs/hep-th/0503183}{{\ttfamily arXiv:hep-th/0503183
  [hep-th]}}.

\bibitem{Martelli:2006yb}
D.~Martelli, J.~Sparks, and S.-T. Yau, ``{Sasaki-Einstein manifolds and volume
  minimisation},'' \href{http://dx.doi.org/10.1007/s00220-008-0479-4}{{\em
  Commun. Math. Phys.} {\bfseries 280} (2008) 611--673},
\href{http://arxiv.org/abs/hep-th/0603021}{{\ttfamily arXiv:hep-th/0603021
  [hep-th]}}.

\bibitem{Benvenuti:2006qr}
S.~Benvenuti, B.~Feng, A.~Hanany, and Y.-H. He, ``{Counting BPS Operators in
  Gauge Theories: Quivers, Syzygies and Plethystics},''
  \href{http://dx.doi.org/10.1088/1126-6708/2007/11/050}{{\em JHEP} {\bfseries
  11} (2007) 050},
\href{http://arxiv.org/abs/hep-th/0608050}{{\ttfamily arXiv:hep-th/0608050
  [hep-th]}}.

\bibitem{Gubser:1998vd}
S.~S. Gubser, ``{Einstein manifolds and conformal field theories},''
  \href{http://dx.doi.org/10.1103/PhysRevD.59.025006}{{\em Phys. Rev.}
  {\bfseries D59} (1999) 025006},
\href{http://arxiv.org/abs/hep-th/9807164}{{\ttfamily arXiv:hep-th/9807164
  [hep-th]}}.

\bibitem{Intriligator:2003jj}
K.~A. Intriligator and B.~Wecht, ``{The Exact superconformal R symmetry
  maximizes a},'' \href{http://dx.doi.org/10.1016/S0550-3213(03)00459-0}{{\em
  Nucl. Phys.} {\bfseries B667} (2003) 183--200},
\href{http://arxiv.org/abs/hep-th/0304128}{{\ttfamily arXiv:hep-th/0304128
  [hep-th]}}.

\bibitem{Butti:2005vn}
A.~Butti and A.~Zaffaroni, ``{R-charges from toric diagrams and the equivalence
  of a-maximization and Z-minimization},''
  \href{http://dx.doi.org/10.1088/1126-6708/2005/11/019}{{\em JHEP} {\bfseries
  11} (2005) 019},
\href{http://arxiv.org/abs/hep-th/0506232}{{\ttfamily arXiv:hep-th/0506232
  [hep-th]}}.

\bibitem{Butti:2005ps}
A.~Butti and A.~Zaffaroni, ``{From toric geometry to quiver gauge theory: The
  Equivalence of a-maximization and Z-minimization},''
  \href{http://dx.doi.org/10.1002/prop.200510276}{{\em Fortsch. Phys.}
  {\bfseries 54} (2006) 309--316},
\href{http://arxiv.org/abs/hep-th/0512240}{{\ttfamily arXiv:hep-th/0512240
  [hep-th]}}.

\bibitem{Henningson:1998gx}
M.~Henningson and K.~Skenderis, ``{The Holographic Weyl anomaly},''
  \href{http://dx.doi.org/10.1088/1126-6708/1998/07/023}{{\em JHEP} {\bfseries
  07} (1998) 023}, \href{http://arxiv.org/abs/hep-th/9806087}{{\ttfamily
  arXiv:hep-th/9806087}}.

\bibitem{Freedman:1999gp}
D.~Freedman, S.~Gubser, K.~Pilch, and N.~Warner, ``{Renormalization group flows
  from holography supersymmetry and a c theorem},''
  \href{http://dx.doi.org/10.4310/ATMP.1999.v3.n2.a7}{{\em Adv. Theor. Math.
  Phys.} {\bfseries 3} (1999) 363--417},
  \href{http://arxiv.org/abs/hep-th/9904017}{{\ttfamily arXiv:hep-th/9904017}}.

\bibitem{Altman:2014bfa}
R.~Altman, J.~Gray, Y.-H. He, V.~Jejjala, and B.~D. Nelson, ``{A Calabi-Yau
  Database: Threefolds Constructed from the Kreuzer-Skarke List},''
  \href{http://dx.doi.org/10.1007/JHEP02(2015)158}{{\em JHEP} {\bfseries 02}
  (2015) 158},
\href{http://arxiv.org/abs/1411.1418}{{\ttfamily arXiv:1411.1418 [hep-th]}}.

\bibitem{2004math......5448N}
B.~{Nill}, ``{Gorenstein toric Fano varieties},'' {\em arXiv Mathematics
  e-prints} (May, 2004) , \href{http://arxiv.org/abs/math/0405448}{{\ttfamily
  arXiv:math/0405448 [math.AG]}}.

\bibitem{Cabrera:2016vvv}
S.~Cabrera and A.~Hanany, ``{Branes and the Kraft-Procesi Transition},''
  \href{http://dx.doi.org/10.1007/JHEP11(2016)175}{{\em JHEP} {\bfseries 11}
  (2016) 175}, \href{http://arxiv.org/abs/1609.07798}{{\ttfamily
  arXiv:1609.07798 [hep-th]}}.

\bibitem{Cabrera:2017njm}
S.~Cabrera and A.~Hanany, ``{Branes and the Kraft-Procesi transition: classical
  case},'' \href{http://dx.doi.org/10.1007/JHEP04(2018)127}{{\em JHEP}
  {\bfseries 04} (2018) 127}, \href{http://arxiv.org/abs/1711.02378}{{\ttfamily
  arXiv:1711.02378 [hep-th]}}.

\bibitem{Bourget:2019aer}
A.~Bourget, S.~Cabrera, J.~F. Grimminger, A.~Hanany, M.~Sperling, A.~Zajac, and
  Z.~Zhong, ``{The Higgs mechanism --- Hasse diagrams for symplectic
  singularities},'' \href{http://dx.doi.org/10.1007/JHEP01(2020)157}{{\em JHEP}
  {\bfseries 01} (2020) 157}, \href{http://arxiv.org/abs/1908.04245}{{\ttfamily
  arXiv:1908.04245 [hep-th]}}.

\bibitem{Bourget:2019rtl}
A.~Bourget, S.~Cabrera, J.~F. Grimminger, A.~Hanany, and Z.~Zhong, ``{Brane
  Webs and Magnetic Quivers for SQCD},''
  \href{http://dx.doi.org/10.1007/JHEP03(2020)176}{{\em JHEP} {\bfseries 03}
  (2020) 176}, \href{http://arxiv.org/abs/1909.00667}{{\ttfamily
  arXiv:1909.00667 [hep-th]}}.

\bibitem{Cabrera:2019dob}
S.~Cabrera, A.~Hanany, and M.~Sperling, ``{Magnetic quivers, Higgs branches,
  and 6d $ \mathcal{N} $ = (1, 0) theories --- orthogonal and symplectic gauge
  groups},'' \href{http://dx.doi.org/10.1007/JHEP02(2020)184}{{\em JHEP}
  {\bfseries 02} (2020) 184}, \href{http://arxiv.org/abs/1912.02773}{{\ttfamily
  arXiv:1912.02773 [hep-th]}}.

\bibitem{Grimminger:2020dmg}
J.~F. Grimminger and A.~Hanany, ``{Hasse Diagrams for $\mathbf{3d}$
  $\mathbf{\mathcal{N}=4}$ Quiver Gauge Theories -- Inversion and the full
  Moduli Space},'' \href{http://arxiv.org/abs/2004.01675}{{\ttfamily
  arXiv:2004.01675 [hep-th]}}.

\bibitem{2016arXiv161208918B}
G.~{Balletti} and A.~M. {Kasprzyk}, ``{Three-dimensional lattice polytopes with
  two interior lattice points},''
  \href{http://arxiv.org/abs/1612.08918}{{\ttfamily arXiv:1612.08918
  [math.CO]}}.

\bibitem{Batyrev_1982}
V.~V. Batyrev, ``{TOROIDAL} {FANO} 3-{FOLDS},''
  \href{http://dx.doi.org/10.1070/im1982v019n01abeh001404}{{\em Mathematics of
  the {USSR}-Izvestiya} {\bfseries 19} no.~1, (Feb, 1982) 13--25}.
  \url{https://doi.org/10.1070%2Fim1982v019n01abeh001404}.

\bibitem{Batyrev:1994pg}
V.~V. Batyrev and L.~A. Borisov, ``{On Calabi-Yau complete intersections in
  toric varieties},''
\href{http://arxiv.org/abs/alg-geom/9412017}{{\ttfamily arXiv:alg-geom/9412017
  [alg-geom]}}.

\bibitem{Kreuzer:1995cd}
M.~Kreuzer and H.~Skarke, ``{On the classification of reflexive polyhedra},''
  \href{http://dx.doi.org/10.1007/s002200050100}{{\em Commun. Math. Phys.}
  {\bfseries 185} (1997) 495--508},
\href{http://arxiv.org/abs/hep-th/9512204}{{\ttfamily arXiv:hep-th/9512204
  [hep-th]}}.

\bibitem{Kreuzer:1998vb}
M.~Kreuzer and H.~Skarke, ``{Classification of reflexive polyhedra in
  three-dimensions},'' \href{http://dx.doi.org/10.4310/ATMP.1998.v2.n4.a5}{{\em
  Adv. Theor. Math. Phys.} {\bfseries 2} (1998) 853--871},
\href{http://arxiv.org/abs/hep-th/9805190}{{\ttfamily arXiv:hep-th/9805190
  [hep-th]}}.

\bibitem{Kreuzer:2000xy}
M.~Kreuzer and H.~Skarke, ``{Complete classification of reflexive polyhedra in
  four-dimensions},'' \href{http://dx.doi.org/10.4310/ATMP.2000.v4.n6.a2}{{\em
  Adv. Theor. Math. Phys.} {\bfseries 4} (2002) 1209--1230},
\href{http://arxiv.org/abs/hep-th/0002240}{{\ttfamily arXiv:hep-th/0002240
  [hep-th]}}.

\bibitem{Kreuzer:2002uu}
M.~Kreuzer and H.~Skarke, ``{PALP: A Package for analyzing lattice polytopes
  with applications to toric geometry},''
  \href{http://dx.doi.org/10.1016/S0010-4655(03)00491-0}{{\em Comput. Phys.
  Commun.} {\bfseries 157} (2004) 87--106},
\href{http://arxiv.org/abs/math/0204356}{{\ttfamily arXiv:math/0204356
  [math.NA]}}.

\end{thebibliography}\endgroup

\end{document}